\newcommand{\Z}{\mathbb{Z}}
\newcommand{\N}{\mathbb{N}}
\newcommand{\C}{\mathbb{C}}
\newcommand{\TL}{\mathrm{TL}}
\newcommand{\BTL}{\mathrm{TL}^{1}}
\newcommand{\BBTL}{\mathrm{TL}^{2}}
\newcommand{\dTL}{\mathrm{dTL}}
\newcommand{\Gh}{\mathrm{Gh}^{2}}
\newcommand{\Gho}{\mathrm{Gh}^{1}}
\newcommand{\ncGh}{\widetilde{\mathrm{Gh}}{}^{2}}
\newcommand{\ncGho}{\widetilde{\mathrm{Gh}}{}^{1}}
\newcommand{\dGh}{\mathrm{dGh}^{2}}
\newcommand{\dGho}{\mathrm{dGh}^{1}}
\newcommand{\ncdGh}{\mathrm{d}\widetilde{\mathrm{Gh}}{}^{2}}
\newcommand{\ncdGho}{\mathrm{d}\widetilde{\mathrm{Gh}}{}^{1}}
\newcommand{\M}{\mathrm{M}}
\newcommand{\blob}{\mathrm{b}^1}
\newcommand{\out}[2]{\left\lvert {#1}\ \ {#2}\right\rvert}
\definecolor{paleblue}{HTML}{E9E9F3}
\definecolor{richblue}{HTML}{3333B2}
\definecolor{dgreen}{HTML}{1DBF48}
\definecolor{dred}{HTML}{A00022}
\definecolor{link}{HTML}{11CDFC}
\definecolor{bdylink}{HTML}{F411FC}
\definecolor{defect}{HTML}{7F11FC}
\definecolor{amber}{HTML}{FCA505}
\definecolor{cyanish}{HTML}{11CDFC}
\definecolor{magentaprint}{HTML}{F04BFC}
\definecolor{highlight}{HTML}{0000FF}
\tikzstyle{thicc}=[line width=1.6pt]
\tikzstyle{string}=[line width=0.8pt]
\tikzstyle{dotty}=[dotted,black!40,line width=0.8pt]
\tikzstyle{lstring}=[line width=1.6pt,draw=black!20!richblue]
\tikzstyle{lfill}=[fill=richblue!15]
\tikzstyle{ledge}=[draw=richblue!70]
\tikzstyle{ldotty}=[dotted,richblue!70]
\tikzstyle{lsq}=[lfill,ledge]
\tikzstyle{lgh}=[black!20!richblue]
\tikzstyle{ltri}=[lfill]
\newcommand{\ld}[1]{
\begin{scope}[shift={#1}]
\begin{pgfonlayer}{foreground}
\filldraw [ledge,fill=white] (0,0) circle (0.1);
\end{pgfonlayer}
\end{scope}
}
\newcommand{\lup}[1]{
\begin{scope}[shift={#1}]
\clip (0,0) rectangle (1,-1);
\begin{pgfonlayer}{background}
\filldraw[lsq] (0,0) rectangle (1,-1);
\end{pgfonlayer}
\draw[lstring] (0,-0.5) arc (-90:0:0.5cm);
\draw[lstring] (1,-0.5) arc (90:180:0.5cm);
\end{scope}
}
\newcommand{\rup}[1]{
\begin{scope}[shift={#1}]
\clip (0,0) rectangle (1,-1);
\begin{pgfonlayer}{background}
\filldraw[lsq] (0,0) rectangle (1,-1);
\end{pgfonlayer}
\draw[lstring] (0,-0.5) arc (90:0:0.5cm);
\draw[lstring] (1,-0.5) arc (270:180:0.5cm);
\end{scope}
}
\newcommand{\llup}[1]{
\begin{scope}[shift={#1}]
\clip (0,0) rectangle (1,-1);
\begin{pgfonlayer}{background}
\filldraw[lsq] (0,0) rectangle (1,-1);
\end{pgfonlayer}
\draw[lstring] (0,-0.5) arc (-90:0:0.5);
\ld{(1,-0.5)}
\ld{(0.5,-1)}
\end{scope}
}
\newcommand{\rrup}[1]{
\begin{scope}[shift={#1}]
\clip (0,0) rectangle (1,-1);
\begin{pgfonlayer}{background}
\filldraw[lsq] (0,0) rectangle (1,-1);
\end{pgfonlayer}
\draw[lstring] (1,-0.5) arc (-90:-180:0.5);
\ld{(0,-0.5)}
\ld{(0.5,-1)}
\end{scope}
}
\newcommand{\lld}[1]{
\begin{scope}[shift={#1}]
\clip (0,0) rectangle (1,-1);
\begin{pgfonlayer}{background}
\filldraw[lsq] (0,0) rectangle (1,-1);
\end{pgfonlayer}
\draw[lstring] (0,-0.5) arc (90:0:0.5);
\ld{(1,-0.5)}
\ld{(0.5,0)}
\end{scope}
}
\newcommand{\rrd}[1]{
\begin{scope}[shift={#1}]
\clip (0,0) rectangle (1,-1);
\begin{pgfonlayer}{background}
\filldraw[lsq] (0,0) rectangle (1,-1);
\end{pgfonlayer}
\draw[lstring] (1,-0.5) arc (90:180:0.5);
\ld{(0,-0.5)}
\ld{(0.5,0)}
\end{scope}
}
\newcommand{\vv}[1]{
\begin{scope}[shift={#1}]
\clip (0,0) rectangle (1,-1);
\begin{pgfonlayer}{background}
\filldraw[lsq] (0,0) rectangle (1,-1);
\end{pgfonlayer}
\draw[lstring] (0.5,0)--(0.5,-1);
\ld{(1,-0.5)}
\ld{(0,-0.5)}
\end{scope}
}
\newcommand{\hh}[1]{
\begin{scope}[shift={#1}]
\clip (0,0) rectangle (1,-1);
\begin{pgfonlayer}{background}
\filldraw[lsq] (0,0) rectangle (1,-1);
\end{pgfonlayer}
\draw[lstring] (0,-0.5)--(1,-0.5);
\ld{(0.5,0)}
\ld{(0.5,-1)}
\end{scope}
}
\newcommand{\emp}[1]{
\begin{scope}[shift={#1}]
\clip (0,0) rectangle (1,-1);
\begin{pgfonlayer}{background}
\filldraw[lsq] (0,0) rectangle (1,-1);
\end{pgfonlayer}
\ld{(1,-0.5)}
\ld{(0.5,-1)}
\ld{(0,-0.5)}
\ld{(0.5,0)}
\end{scope}
}
\newcommand{\faceop}[2]{
\begin{scope}[shift={#1}]
\filldraw[lsq] (0,0) rectangle (1,-1);
\fill[fill=richblue!70] (0,0)--(0.25,0)--(0,-0.25)--cycle;
\node at (0.5,-0.5) [anchor=center] {\scriptsize #2};
\end{scope}
}
\newcommand{\facetop}[2]{
\begin{scope}[shift={#1}]
\fill[ltri] (0,0)--(2,0)--(1,-1)--cycle;
\draw[ledge] (0,0)--(1,-1)--(2,0);
\draw[ldotty] (0,0)--(2,0);
\node at (1,-0.4) [anchor=center] {\scriptsize #2};
\end{scope}
}
\newcommand{\facebot}[2]{
\begin{scope}[shift={#1}]
\fill[ltri] (0,0)--(2,0)--(1,1)--cycle;
\draw[ledge] (0,0)--(1,1)--(2,0);
\draw[ldotty] (0,0)--(2,0);
\node at (1,0.4) [anchor=center] {\scriptsize #2};
\end{scope}
}
\newcommand{\vdotsq}[1]{
\begin{scope}[shift={#1}]
\filldraw[lsq] (0,0) rectangle (1,-1);
\draw (0.5,-0.4) node[anchor=center]{$\vdots$};
\end{scope}
}
\newcommand{\utri}[2]{
\begin{scope}[shift={#2}]
\fill[ltri] (2,0)--(1,-1)--(0,0)--cycle;
\draw [ledge] (2,0)--(1,-1)--(0,0);
\draw (1,-0.4) node[anchor=center]{#1};
\draw[ldotty] (0,0)--(2,0);
\end{scope}
}
\newcommand{\bulk}[2]{
\begin{scope}[shift={#2}]
\draw [lsq] (0,0)--(1,1)--(2,0)--(1,-1)--cycle;
\draw (1,0) node[anchor=center]{#1};
\filldraw[richblue!70] (0.2,0.2)--(0.2,-0.2)--(0,0)--cycle;
\end{scope}
}
\newcommand{\aatop}[1]{
\begin{scope}[shift={#1}]
\begin{pgfonlayer}{background}
\fill[ltri] (2,0)--(1,-1)--(0,0)--cycle;
\draw[ldotty] (0,0)--(2,0);
\end{pgfonlayer}
\begin{scope}
\clip (2,0)--(1,-1)--(0,0)--cycle;
\draw[lstring] (0.5,-0.5) to[out=60,in=120] (1.5,-0.5);
\end{scope}
\draw[ledge] (2,0)--(1,-1)--(0,0);
\end{scope}
}
\newcommand{\aabot}[1]{
\begin{scope}[shift={#1}]
\begin{pgfonlayer}{background}
\fill[ltri] (2,0)--(1,1)--(0,0)--cycle;
\draw[ldotty] (0,0)--(2,0);
\end{pgfonlayer}
\begin{scope}
\clip (2,0)--(1,1)--(0,0)--cycle;
\draw[lstring] (0.5,0.5) to[out=-60,in=-120] (1.5,0.5);
\end{scope}
\draw[ledge] (2,0)--(1,1)--(0,0);
\end{scope}
}
\newcommand{\batop}[1]{
\begin{scope}[shift={#1}]
\begin{pgfonlayer}{background}
\fill[ltri] (2,0)--(1,-1)--(0,0)--cycle;
\draw[ldotty] (0,0)--(2,0);
\end{pgfonlayer}
\begin{scope}
\clip (2,0)--(1,-1)--(0,0)--cycle;
\draw[lstring] (0.5,-0.5) to[out=60,in=-90] (0.7,0);
\draw[lstring] (1.5,-0.5) to[out=120,in=-90] (1.3,0);
\end{scope}
\draw[ledge] (2,0)--(1,-1)--(0,0);
\end{scope}
}
\newcommand{\babot}[1]{
\begin{scope}[shift={#1}]
\begin{pgfonlayer}{background}
\fill[ltri] (2,0)--(1,1)--(0,0)--cycle;
\draw[ldotty] (0,0)--(2,0);
\end{pgfonlayer}
\begin{scope}
\clip (2,0)--(1,1)--(0,0)--cycle;
\draw[lstring] (0.5,0.5) to[out=-60,in=90] (0.7,0);
\draw[lstring] (1.5,0.5) to[out=-120,in=90] (1.3,0);
\end{scope}
\draw[ledge] (2,0)--(1,1)--(0,0);
\end{scope}
}
\newcommand{\bbtop}[1]{
\begin{scope}[shift={#1}]
\begin{pgfonlayer}{background}
\fill[ltri] (2,0)--(1,-1)--(0,0)--cycle;
\draw[ldotty] (0,0)--(2,0);
\end{pgfonlayer}
\begin{scope}
\clip (2,0)--(1,-1)--(0,0)--cycle;
\draw[lstring] (0.5,-0.5) to[out=60,in=-90] (0.7,0);
\draw[lstring] (1.5,-0.5) to[out=120,in=-90] (1.3,0);
\end{scope}
\draw[ledge] (2,0)--(1,-1)--(0,0);
\filldraw[lgh] (0.35,0) circle (0.09);
\filldraw[lgh] (1,0) circle (0.09);
\end{scope}
}
\newcommand{\bbbot}[1]{
\begin{scope}[shift={#1}]
\begin{pgfonlayer}{background}
\fill[ltri] (2,0)--(1,1)--(0,0)--cycle;
\draw[ldotty] (0,0)--(2,0);
\end{pgfonlayer}
\begin{scope}
\clip (2,0)--(1,1)--(0,0)--cycle;
\draw[lstring] (0.5,0.5) to[out=-60,in=90] (0.7,0);
\draw[lstring] (1.5,0.5) to[out=-120,in=90] (1.3,0);
\end{scope}
\draw[ledge] (2,0)--(1,1)--(0,0);
\filldraw[lgh] (0.35,0) circle (0.09);
\filldraw[lgh] (1,0) circle (0.09);
\end{scope}
}
\newcommand{\bctop}[1]{
\begin{scope}[shift={#1}]
\begin{pgfonlayer}{background}
\fill[ltri] (2,0)--(1,-1)--(0,0)--cycle;
\draw[ldotty] (0,0)--(2,0);
\end{pgfonlayer}
\begin{scope}
\clip (2,0)--(1,-1)--(0,0)--cycle;
\draw[lstring] (0.5,-0.5) to[out=60,in=-90] (0.7,0);
\draw[lstring] (1.5,-0.5) to[out=120,in=-90] (1.3,0);
\end{scope}
\draw[ledge] (2,0)--(1,-1)--(0,0);
\filldraw[lgh] (0.35,0) circle (0.09);
\filldraw[lgh] (1.65,0) circle (0.09);
\end{scope}
}
\newcommand{\bcbot}[1]{
\begin{scope}[shift={#1}]
\begin{pgfonlayer}{background}
\fill[ltri] (2,0)--(1,1)--(0,0)--cycle;
\draw[ldotty] (0,0)--(2,0);
\end{pgfonlayer}
\begin{scope}
\clip (2,0)--(1,1)--(0,0)--cycle;
\draw[lstring] (0.5,0.5) to[out=-60,in=90] (0.7,0);
\draw[lstring] (1.5,0.5) to[out=-120,in=90] (1.3,0);
\end{scope}
\draw[ledge] (2,0)--(1,1)--(0,0);
\filldraw[lgh] (0.35,0) circle (0.09);
\filldraw[lgh] (1.65,0) circle (0.09);
\end{scope}
}
\newcommand{\bdtop}[1]{
\begin{scope}[shift={#1}]
\begin{pgfonlayer}{background}
\fill[ltri] (2,0)--(1,-1)--(0,0)--cycle;
\draw[ldotty] (0,0)--(2,0);
\end{pgfonlayer}
\begin{scope}
\clip (2,0)--(1,-1)--(0,0)--cycle;
\draw[lstring] (0.5,-0.5) to[out=60,in=-90] (0.7,0);
\draw[lstring] (1.5,-0.5) to[out=120,in=-90] (1.3,0);
\end{scope}
\draw[ledge] (2,0)--(1,-1)--(0,0);
\filldraw[lgh] (1,0) circle (0.09);
\filldraw[lgh] (1.65,0) circle (0.09);
\end{scope}
}
\newcommand{\bdbot}[1]{
\begin{scope}[shift={#1}]
\begin{pgfonlayer}{background}
\fill[ltri] (2,0)--(1,1)--(0,0)--cycle;
\draw[ldotty] (0,0)--(2,0);
\end{pgfonlayer}
\begin{scope}
\clip (2,0)--(1,1)--(0,0)--cycle;
\draw[lstring] (0.5,0.5) to[out=-60,in=90] (0.7,0);
\draw[lstring] (1.5,0.5) to[out=-120,in=90] (1.3,0);
\end{scope}
\draw[ledge] (2,0)--(1,1)--(0,0);
\filldraw[lgh] (1,0) circle (0.09);
\filldraw[lgh] (1.65,0) circle (0.09);
\end{scope}
}
\newcommand{\betop}[1]{
\begin{scope}[shift={#1}]
\begin{pgfonlayer}{background}
\fill[ltri] (2,0)--(1,-1)--(0,0)--cycle;
\draw[ldotty] (0,0)--(2,0);
\end{pgfonlayer}
\begin{scope}
\clip (2,0)--(1,-1)--(0,0)--cycle;
\draw[lstring] (0.5,-0.5) to[out=60,in=260] (1.35,0.03);
\end{scope}
\draw[ledge] (2,0)--(1,-1)--(0,0);
\filldraw[lgh] (0.65,0) circle (0.09);
\ld{(1.5,-0.5)}
\end{scope}
}
\newcommand{\bebot}[1]{
\begin{scope}[shift={#1}]
\begin{pgfonlayer}{background}
\fill[ltri] (2,0)--(1,1)--(0,0)--cycle;
\draw[ldotty] (0,0)--(2,0);
\end{pgfonlayer}
\begin{scope}
\clip (2,0)--(1,1)--(0,0)--cycle;
\draw[lstring] (0.5,0.5) to[out=-60,in=-260] (1.35,-0.03);
\end{scope}
\draw[ledge] (2,0)--(1,1)--(0,0);
\filldraw[lgh] (0.65,0) circle (0.09);
\ld{(1.5,0.5)}
\end{scope}
}
\newcommand{\bftop}[1]{
\begin{scope}[shift={#1}]
\begin{pgfonlayer}{background}
\fill[ltri] (2,0)--(1,-1)--(0,0)--cycle;
\draw[ldotty] (0,0)--(2,0);
\end{pgfonlayer}
\begin{scope}
\clip (2,0)--(1,-1)--(0,0)--cycle;
\draw[lstring] (0.5,-0.5) to[out=60,in=270] (0.65,0);
\end{scope}
\draw[ledge] (2,0)--(1,-1)--(0,0);
\filldraw[lgh] (1.35,0) circle (0.09);
\ld{(1.5,-0.5)}
\end{scope}
}
\newcommand{\bfbot}[1]{
\begin{scope}[shift={#1}]
\begin{pgfonlayer}{background}
\fill[ltri] (2,0)--(1,1)--(0,0)--cycle;
\draw[ldotty] (0,0)--(2,0);
\end{pgfonlayer}
\begin{scope}
\clip (2,0)--(1,1)--(0,0)--cycle;
\draw[lstring] (0.5,0.5) to[out=-60,in=-270] (0.65,0);
\end{scope}
\draw[ledge] (2,0)--(1,1)--(0,0);
\filldraw[lgh] (1.35,0) circle (0.09);
\ld{(1.5,0.5)}
\end{scope}
}
\newcommand{\bgtop}[1]{
\begin{scope}[shift={#1}]
\begin{pgfonlayer}{background}
\fill[ltri] (2,0)--(1,-1)--(0,0)--cycle;
\draw[ldotty] (0,0)--(2,0);
\end{pgfonlayer}
\begin{scope}
\clip (2,0)--(1,-1)--(0,0)--cycle;
\draw[lstring] (1.5,-0.5) to[out=120,in=280] (0.65,0.03);
\end{scope}
\draw[ledge] (2,0)--(1,-1)--(0,0);
\filldraw[lgh] (1.35,0) circle (0.09);
\ld{(0.5,-0.5)}
\end{scope}
}
\newcommand{\bgbot}[1]{
\begin{scope}[shift={#1}]
\begin{pgfonlayer}{background}
\fill[ltri] (2,0)--(1,1)--(0,0)--cycle;
\draw[ldotty] (0,0)--(2,0);
\end{pgfonlayer}
\begin{scope}
\clip (2,0)--(1,1)--(0,0)--cycle;
\draw[lstring] (1.5,0.5) to[out=-120,in=-280] (0.65,-0.03);
\end{scope}
\draw[ledge] (2,0)--(1,1)--(0,0);
\filldraw[lgh] (1.35,0) circle (0.09);
\ld{(0.5,0.5)}
\end{scope}
}
\newcommand{\bhtop}[1]{
\begin{scope}[shift={#1}]
\begin{pgfonlayer}{background}
\fill[ltri] (2,0)--(1,-1)--(0,0)--cycle;
\draw[ldotty] (0,0)--(2,0);
\end{pgfonlayer}
\begin{scope}
\clip (2,0)--(1,-1)--(0,0)--cycle;
\draw[lstring] (1.5,-0.5) to[out=120,in=-90] (1.35,0);
\end{scope}
\draw[ledge] (2,0)--(1,-1)--(0,0);
\filldraw[lgh] (0.65,0) circle (0.09);
\ld{(0.5,-0.5)}
\end{scope}
}
\newcommand{\bhbot}[1]{
\begin{scope}[shift={#1}]
\begin{pgfonlayer}{background}
\fill[ltri] (2,0)--(1,1)--(0,0)--cycle;
\draw[ldotty] (0,0)--(2,0);
\end{pgfonlayer}
\begin{scope}
\clip (2,0)--(1,1)--(0,0)--cycle;
\draw[lstring] (1.5,0.5) to[out=-120,in=90] (1.35,0);
\end{scope}
\draw[ledge] (2,0)--(1,1)--(0,0);
\filldraw[lgh] (0.65,0) circle (0.09);
\ld{(0.5,0.5)}
\end{scope}
}
\newcommand{\emtop}[1]{
\begin{scope}[shift={#1}]
\begin{pgfonlayer}{background}
\fill[ltri] (2,0)--(1,-1)--(0,0)--cycle;
\draw[ldotty] (0,0)--(2,0);
\draw[ledge] (2,0)--(1,-1)--(0,0);
\end{pgfonlayer}
\ld{(0.5,-0.5)}
\ld{(1.5,-0.5)}
\end{scope}
}
\newcommand{\embot}[1]{
\begin{scope}[shift={#1}]
\begin{pgfonlayer}{background}
\fill[ltri] (2,0)--(1,1)--(0,0)--cycle;
\draw[ldotty] (0,0)--(2,0);
\draw[ledge] (2,0)--(1,1)--(0,0);
\end{pgfonlayer}
\ld{(0.5,0.5)}
\ld{(1.5,0.5)}
\end{scope}
}
\newcommand{\idsq}[1]{
\begin{scope}[shift={#1}]
\filldraw[lsq] (0,0)--(1,1)--(2,0)--(1,-1)--cycle;
\draw[lstring] (0.5,0.5) to[out=-45,in=225] (1.5,0.5);
\draw[lstring] (0.5,-0.5) to[out=45,in=135] (1.5,-0.5);
\end{scope}
}
\newcommand{\esq}[1]{
\begin{scope}[shift={#1}]
\filldraw[lsq] (0,0)--(1,1)--(2,0)--(1,-1)--cycle;
\draw[lstring] (0.5,0.5) to[out=-45,in=45] (0.5,-0.5);
\draw[lstring] (1.5,0.5) to[out=-135,in=135] (1.5,-0.5);
\end{scope}
}
\newcommand{\emsq}[1]{
\begin{scope}[shift={#1}]
\filldraw[lsq] (0,0)--(1,1)--(2,0)--(1,-1)--cycle;
\ld{(0.5,0.5)}
\ld{(0.5,-0.5)}
\ld{(1.5,0.5)}
\ld{(1.5,-0.5)}
\end{scope}
}
\newcommand{\tsq}[1]{
\begin{scope}[shift={#1}]
\filldraw[lsq] (0,0)--(1,1)--(2,0)--(1,-1)--cycle;
\draw[lstring] (0.5,0.5) to[out=-45,in=225] (1.5,0.5);
\ld{(0.5,-0.5)}
\ld{(1.5,-0.5)}
\end{scope}
}
\newcommand{\bsq}[1]{
\begin{scope}[shift={#1}]
\filldraw[lsq] (0,0)--(1,1)--(2,0)--(1,-1)--cycle;
\draw[lstring] (0.5,-0.5) to[out=45,in=135] (1.5,-0.5);
\ld{(0.5,0.5)}
\ld{(1.5,0.5)}
\end{scope}
}
\newcommand{\lsq}[1]{
\begin{scope}[shift={#1}]
\filldraw[lsq] (0,0)--(1,1)--(2,0)--(1,-1)--cycle;
\draw[lstring] (0.5,0.5) to[out=-45,in=45] (0.5,-0.5);
\ld{(1.5,0.5)}
\ld{(1.5,-0.5)}
\end{scope}
}
\newcommand{\rsq}[1]{
\begin{scope}[shift={#1}]
\filldraw[lsq] (0,0)--(1,1)--(2,0)--(1,-1)--cycle;
\draw[lstring] (1.5,0.5) to[out=-135,in=135] (1.5,-0.5);
\ld{(0.5,0.5)}
\ld{(0.5,-0.5)}
\end{scope}
}
\newcommand{\usq}[1]{
\begin{scope}[shift={#1}]
\filldraw[lsq] (0,0)--(1,1)--(2,0)--(1,-1)--cycle;
\draw[lstring] (0.5,-0.5) -- (1.5,0.5);
\ld{(0.5,0.5)}
\ld{(1.5,-0.5)}
\end{scope}
}
\newcommand{\dsq}[1]{
\begin{scope}[shift={#1}]
\filldraw[lsq] (0,0)--(1,1)--(2,0)--(1,-1)--cycle;
\draw[lstring] (0.5,0.5) -- (1.5,-0.5);
\ld{(1.5,0.5)}
\ld{(0.5,-0.5)}
\end{scope}
}
\newcommand{\aatri}[1]{
\begin{scope}[shift={#1}]
\begin{pgfonlayer}{background}
\fill[ltri] (2,0)--(1,-1)--(0,0)--cycle;
\draw[ldotty] (0,0)--(2,0);
\end{pgfonlayer}
\begin{scope}
\clip (2,0)--(1,-1)--(0,0)--cycle;
\draw[lstring] (0.5,-0.5) to[out=45,in=135] (1.5,-0.5);
\end{scope}
\draw[ledge] (2,0)--(1,-1)--(0,0);
\end{scope}
}
\newcommand{\batri}[1]{
\begin{scope}[shift={#1}]
\begin{pgfonlayer}{background}
\fill[ltri] (2,0)--(1,-1)--(0,0)--cycle;
\draw[ldotty] (0,0)--(2,0);
\end{pgfonlayer}
\begin{scope}
\clip (2,0)--(1,-1)--(0,0)--cycle;
\draw[lstring] (0.5,-0.5) to[out=45,in=-90] (0.7,0);
\draw[lstring] (1.5,-0.5) to[out=135,in=-90] (1.3,0);
\end{scope}
\draw[ledge] (2,0)--(1,-1)--(0,0);
\end{scope}
}
\newcommand{\bbtri}[1]{
\begin{scope}[shift={#1}]
\begin{pgfonlayer}{background}
\fill[ltri] (2,0)--(1,-1)--(0,0)--cycle;
\draw[ldotty] (0,0)--(2,0);
\end{pgfonlayer}
\begin{scope}
\clip (2,0)--(1,-1)--(0,0)--cycle;
\draw[lstring] (0.5,-0.5) to[out=45,in=-90] (0.7,0);
\draw[lstring] (1.5,-0.5) to[out=135,in=-90] (1.3,0);
\end{scope}
\draw[ledge] (2,0)--(1,-1)--(0,0);
\filldraw[lgh] (0.35,0) circle (0.09);
\filldraw[lgh] (1,0) circle (0.09);
\end{scope}
}
\newcommand{\bctri}[1]{
\begin{scope}[shift={#1}]
\begin{pgfonlayer}{background}
\fill[ltri] (2,0)--(1,-1)--(0,0)--cycle;
\draw[ldotty] (0,0)--(2,0);
\end{pgfonlayer}
\begin{scope}
\clip (2,0)--(1,-1)--(0,0)--cycle;
\draw[lstring] (0.5,-0.5) to[out=45,in=-90] (0.7,0);
\draw[lstring] (1.5,-0.5) to[out=135,in=-90] (1.3,0);
\end{scope}
\draw[ledge] (2,0)--(1,-1)--(0,0);
\filldraw[lgh] (0.35,0) circle (0.09);
\filldraw[lgh] (1.65,0) circle (0.09);
\end{scope}
}
\newcommand{\bdtri}[1]{
\begin{scope}[shift={#1}]
\begin{pgfonlayer}{background}
\fill[ltri] (2,0)--(1,-1)--(0,0)--cycle;
\draw[ldotty] (0,0)--(2,0);
\end{pgfonlayer}
\begin{scope}
\clip (2,0)--(1,-1)--(0,0)--cycle;
\draw[lstring] (0.5,-0.5) to[out=45,in=-90] (0.7,0);
\draw[lstring] (1.5,-0.5) to[out=135,in=-90] (1.3,0);
\end{scope}
\draw[ledge] (2,0)--(1,-1)--(0,0);
\filldraw[lgh] (1,0) circle (0.09);
\filldraw[lgh] (1.65,0) circle (0.09);
\end{scope}
}
\newcommand{\betri}[1]{
\begin{scope}[shift={#1}]
\begin{pgfonlayer}{background}
\fill[ltri] (2,0)--(1,-1)--(0,0)--cycle;
\draw[ldotty] (0,0)--(2,0);
\end{pgfonlayer}
\begin{scope}
\clip (2,0)--(1,-1)--(0,0)--cycle;
\draw[lstring] (0.5,-0.5) to[out=45,in=270] (1.35,0);
\end{scope}
\draw[ledge] (2,0)--(1,-1)--(0,0);
\filldraw[lgh] (0.65,0) circle (0.09);
\ld{(1.5,-0.5)}
\end{scope}
}
\newcommand{\bftri}[1]{
\begin{scope}[shift={#1}]
\begin{pgfonlayer}{background}
\fill[ltri] (2,0)--(1,-1)--(0,0)--cycle;
\draw[ldotty] (0,0)--(2,0);
\end{pgfonlayer}
\begin{scope}
\clip (2,0)--(1,-1)--(0,0)--cycle;
\draw[lstring] (0.5,-0.5) to[out=45,in=270] (0.65,0);
\end{scope}
\draw[ledge] (2,0)--(1,-1)--(0,0);
\filldraw[lgh] (1.35,0) circle (0.09);
\ld{(1.5,-0.5)}
\end{scope}
}
\newcommand{\bgtri}[1]{
\begin{scope}[shift={#1}]
\begin{pgfonlayer}{background}
\fill[ltri] (2,0)--(1,-1)--(0,0)--cycle;
\draw[ldotty] (0,0)--(2,0);
\end{pgfonlayer}
\begin{scope}
\clip (2,0)--(1,-1)--(0,0)--cycle;
\draw[lstring] (1.5,-0.5) to[out=135,in=270] (0.65,0);
\end{scope}
\draw[ledge] (2,0)--(1,-1)--(0,0);
\filldraw[lgh] (1.35,0) circle (0.09);
\ld{(0.5,-0.5)}
\end{scope}
}
\newcommand{\bhtri}[1]{
\begin{scope}[shift={#1}]
\begin{pgfonlayer}{background}
\fill[ltri] (2,0)--(1,-1)--(0,0)--cycle;
\draw[ldotty] (0,0)--(2,0);
\end{pgfonlayer}
\begin{scope}
\clip (2,0)--(1,-1)--(0,0)--cycle;
\draw[lstring] (1.5,-0.5) to[out=135,in=-90] (1.35,0);
\end{scope}
\draw[ledge] (2,0)--(1,-1)--(0,0);
\filldraw[lgh] (0.65,0) circle (0.09);
\ld{(0.5,-0.5)}
\end{scope}
}
\newcommand{\emtri}[1]{
\begin{scope}[shift={#1}]
\begin{pgfonlayer}{background}
\fill[ltri] (2,0)--(1,-1)--(0,0)--cycle;
\draw[ldotty] (0,0)--(2,0);
\draw[ledge] (2,0)--(1,-1)--(0,0);
\end{pgfonlayer}
\ld{(0.5,-0.5)}
\ld{(1.5,-0.5)}
\end{scope}
}
\newcommand{\e}[1]{\left\langle {#1} \right\rangle}
\newcommand{\floor}[1]{{\left\lfloor {#1} \right\rfloor}}
\newtheorem{theorem}{Theorem}[section]
\newtheorem{proposition}[theorem]{Proposition}
\newtheorem{corollary}[theorem]{Corollary}
\title{\vspace{-15mm}The ghost algebra and the dilute ghost algebra}
\author{\vspace{-3mm}Madeline Nurcombe}
\date{}
\numberwithin{equation}{section}
\begin{document}
\maketitle

\vspace{-5mm}
\begin{abstract}
We introduce the \textit{ghost algebra}, a two-boundary generalisation of the Temperley-Lieb (TL) algebra, using a diagrammatic presentation. The existing two-boundary TL algebra has a basis of string diagrams with two boundaries, and the number of strings connected to each boundary must be even; in the ghost algebra, this number may be odd. To preserve associativity while allowing boundary-to-boundary strings to have distinct parameters according to the parity of their endpoints, as seen in the one-boundary TL algebra, we decorate the boundaries with bookkeeping dots called \textit{ghosts}. We also introduce the \textit{dilute ghost algebra}, an analogous two-boundary generalisation of the dilute TL algebra. We then present loop models associated with these algebras, and classify solutions to their boundary Yang-Baxter equations, given existing solutions to the Yang-Baxter equations for the TL and dilute TL models. This facilitates the construction of a one-parameter family of commuting transfer tangles, making these models Yang-Baxter integrable.
\end{abstract}

\vspace{-3mm}

\tableofcontents

\section{Introduction} \label{s:intro}
Lattice loop models are of interest in statistical mechanics, where they provide a tractable approach to modelling physical systems with non-local interactions. A key step towards solving such models is to construct a family of commuting operators called \textit{transfer tangles}, dependent on one parameter $u$, often called the spectral parameter. There is a known method for constructing commuting transfer tangles for a two-dimensional square lattice using local operators called \textit{face operators} satisfying certain relations. Indeed, if the face operators satisfy the \textit{Yang-Baxter equation} (YBE), the \textit{boundary Yang-Baxter equations} (BYBEs), and a \textit{local inversion relation}, then they can be composed into a transfer tangle $T(u)$ such that $T(u)T(v) = T(v)T(u)$ for all $u$ and $v$; see, for example, \cite{BehrendPearceOBrien}. The resulting model is then called \textit{Yang-Baxter integrable}.

Lattice loop models are often encoded algebraically using diagram algebras. These algebras have a set of diagrams for a basis, on which multiplication is defined based on concatenation of diagrams. Perhaps the most famous is the Temperley-Lieb (TL) algebra \cite{TL,Jones}. This algebra was first discussed by Temperley and Lieb in \cite{TL}, where it was found to connect transfer matrices in ice-type models to certain graph colouring problems, and to the enumeration of weighted trees. It has since been applied to a range of physical models---from polymers and percolation \cite{LMM} to quantum spin chains \cite{PasquierSaleur}---and has also been used in pure mathematics for knot theory \cite{JonesPnom,Kauffman}.

In our loop model context, the TL algebra $\TL_n(\beta)$ describes a fully-packed square lattice of non-crossing strings with closed boundary conditions. The one- and two-boundary TL algebras $\BTL_n(\beta;\alpha_1,\alpha_2)$ and $\BBTL_n(\beta;\alpha_1,\alpha_2,\delta_1,\delta_2)$ generalise this to allow open boundary conditions on one or both sides of the lattice, respectively. As diagram algebras, each of these has a basis of rectangular string diagrams, considered up to \textit{connectivity}, or more precisely, frame-preserving ambient isotopy. Some examples are given in Figure \ref{fig:diagexamples}. Each diagram has $n$ nodes on each of two opposing sides of the rectangle, between which non-crossing strings are drawn. In the one- and two-boundary algebras, one or both of the remaining sides become dotted boundary lines, to which strings may be connected at distinct points. Each node must have exactly one string endpoint attached to it, and each string must connect two distinct nodes, a node to a boundary, or two distinct boundaries. In the two-boundary TL algebra, however, the number of strings connected to each boundary must be even. This is already implied by the other rules in the one-boundary case, but must be separately imposed for the two-boundary case, as seen in Figure \ref{fig:exnotBBTL}. We will soon see that this evenness condition is essential for $\BBTL_n$ to be associative. 

\begin{figure}[h]
\centering
\begin{subfigure}[b]{0.18\textwidth}
\begin{align*}
\begin{tikzpicture}[baseline={([yshift=-1mm]current bounding box.center)},scale=0.5]
{
\draw [very thick](0,-4.5) -- (0,0.5);
\draw [very thick](3,-4.5)--(3,0.5);
\draw (0,0) arc (90:-90:0.5);
\draw (0,-2) to[out=0,in=180] (3,0);
\draw (0,-3) arc(90:-90:0.5);
\draw (3,-1) arc(90:270:1.5);
\draw (3,-2) arc(90:270:0.5);
}
\end{tikzpicture}
\end{align*}
\caption{$\TL_5$}
\label{fig:exTL}
\end{subfigure}
\begin{subfigure}[b]{0.18\textwidth}
\begin{align*}
\begin{tikzpicture}[baseline={([yshift=-1mm]current bounding box.center)},scale=0.5]
{
\draw[dotted] (0,0.5)--(3,0.5);
\draw [very thick](0,-4.5) -- (0,0.5);
\draw [very thick](3,-4.5)--(3,0.5);
\draw (0,0) arc (-90:0:0.5);
\draw (0,-1) arc (-90:0:1.5);
\draw (0,-2) to[out=0,in=180] (3,0);
\draw (0,-3) to[out=0,in=180] (3,-1);
\draw (0,-4) -- (3,-4);
\draw (3,-2) arc(90:270:0.5);
}
\end{tikzpicture}
\end{align*}
\caption{$\BTL_5$}
\label{fig:exBTL}
\end{subfigure}
\begin{subfigure}[b]{0.18\textwidth}
\begin{align*}
\begin{tikzpicture}[baseline={([yshift=-1mm]current bounding box.center)},scale=0.5]
{
\draw[dotted] (0,0.5)--(3,0.5);
\draw[dotted] (0,-4.5)--(3,-4.5);
\draw [very thick](0,-4.5) -- (0,0.5);
\draw [very thick](3,-4.5)--(3,0.5);
\draw (0,0) to[out=0,in=0] (0,-3);
\draw (0,-1) arc (90:-90:0.5);
\draw (0,-4) arc(90:0:0.5);
\draw (3,0) arc (90:270:0.5);
\draw (3,-2) to[out=180,in=270] (2,0.5);
\draw (3,-3) arc(90:270:0.5);
\draw (1.25,0.5)--(1.25,-4.5);
\draw (1.5,0.5)--(1.5,-4.5);
\draw (1.75,0.5)--(1.75,-4.5);
}
\end{tikzpicture}
\end{align*}
\caption{$\BBTL_5$}
\label{fig:exBBTL}
\end{subfigure}
\begin{subfigure}[b]{0.18\textwidth}
\begin{align*}
\begin{tikzpicture}[baseline={([yshift=-1mm]current bounding box.center)},scale=0.5]
{
\draw[dotted] (0,0.5)--(3,0.5);
\draw[dotted] (0,-4.5)--(3,-4.5);
\draw [very thick](0,-4.5) -- (0,0.5);
\draw [very thick](3,-4.5)--(3,0.5);
\draw (0,0) to[out=0,in=180] (3,-3);
\draw (0,-1) to[out=0,in=90] (1.2,-4.5);
\draw (0,-2) arc(90:-90:0.5);
\draw (0,-4) arc(90:0:0.5);
\draw (3,0) arc(270:180:0.5);
\draw (3,-1) arc (90:270:0.5);
\draw (3,-4) arc(90:180:0.5);
}
\end{tikzpicture}
\end{align*}
\caption{$\notin \BBTL_5$}
\label{fig:exnotBBTL}
\end{subfigure}
\begin{subfigure}[b]{0.18\textwidth}
\begin{align*}
\begin{tikzpicture}[baseline={([yshift=-1mm]current bounding box.center)},scale=0.5]
{
\draw [very thick](0,-4.5) -- (0,0.5);
\draw [very thick](3,-4.5)--(3,0.5);
\draw (0,0) to[out=0,in=180] (3,-1);
\draw (0,-1) arc (90:-90:1);
\draw (3,-2) arc(90:270:1);
\filldraw[fill=white] (0,-2) circle (0.12);
\filldraw[fill=white] (0,-4) circle (0.12);
\filldraw[fill=white] (3,0) circle (0.12);
\filldraw[fill=white] (3,-3) circle (0.12);
}
\end{tikzpicture}
\end{align*}
\caption{$\dTL_5$}
\label{fig:exdTL}
\end{subfigure}
\caption{Examples of basis diagrams from different diagram algebras. Figure \ref{fig:exnotBBTL} is included as an example of a two-boundary diagram that is \textit{not} a basis diagram of $\BBTL_5$, because it has an odd number of strings connected to each boundary.}
    \label{fig:diagexamples}
\end{figure}
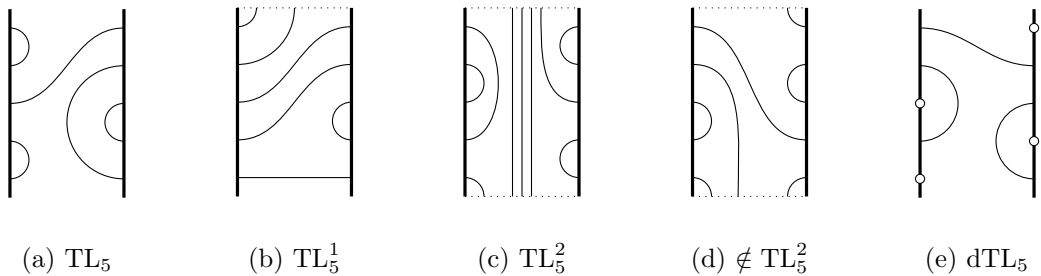

The primary purpose of the ghost algebra constructed in this paper is to provide a two-boundary generalisation of the TL and one-boundary TL algebras that allows basis diagrams with an odd number of strings connected to either boundary. In particular, we want this algebra to be associative, like $\TL_n$ and $\BTL_n$, and to have subalgebras isomorphic to $\BTL_n$ associated with each boundary, whose intersection is isomorphic to $\TL_n$. The specifics of multiplication in $\BTL_n$ cause this to be nontrivial, however.

The one-boundary TL algebra was introduced as the blob algebra $\blob_n(\beta;\alpha)$ by Martin and Saleur in \cite{MartinSaleur}. We use the slightly generalised three-parameter definition and diagrammatic presentation from Pearce, Rasmussen and Tipunin in \cite{Tipunin}, denoted by $\BTL_n(\beta; \alpha_1,\alpha_2)$. The isomorphism $\BTL_n(\beta;\alpha_1,\alpha_2) \cong \blob_n(\beta; \frac{\alpha_1}{\alpha_2})$ for $\alpha_2 \neq 0$ is discussed in \cite[App.~A]{Tipunin}.

The three parameters in $\BTL_n(\beta;\alpha_1,\alpha_2)$ arise during multiplication. First, the diagrams are concatenated. This may produce strings with both ends on the boundary, called \textit{boundary arcs}, and loops. Each string endpoint on the boundary is assigned a parity by numbering them from left to right, starting from 1. Any boundary arcs are then removed and each replaced by a factor of a \textit{boundary parameter} $\alpha_1$ or $\alpha_2$, if the left endpoint is odd or even, respectively. Each loop is removed and replaced by a factor of the \textit{loop parameter} $\beta$. For example, in $\BTL_7(\beta;\alpha_1,\alpha_2)$, 
\begin{align}
\begin{tikzpicture}[baseline={([yshift=-3mm]current bounding box.center)},scale=0.55]
{
\draw[dotted] (0,0.5)--(6,0.5);
\draw [very thick](0,-6.5) -- (0,0.5);
\draw [very thick](3,-6.5)--(3,0.5);
\draw [very thick](6,-6.5)--(6,0.5);
\draw (0,0) arc(-90:0:0.6 and 0.5);
\draw (0,-1) arc (90:-90:0.5);
\draw (0,-3) to[out=0,in=-90] (1.2,0.5);
\draw (0,-4) to[out=0,in=180] (3,-6);
\draw (0,-5) arc(90:-90:0.5);
\draw (3,0) arc(270:180:0.6 and 0.5);
\draw (3,-1) to[out=180,in=180] (3,-4);
\draw (3,-2) arc(90:270:0.5);
\draw (3,-5) ..controls (1,-5) and (1.8,-1.5).. (1.8,0.5);
\draw (3,0) arc(-90:0:0.6 and 0.5);
\draw (3,-1) ..controls (4,-1) and (4.2,0).. (4.2,0.5);
\draw (3,-2) arc(90:-90:0.5);
\draw (3,-4) ..controls (4.2,-4) and (4.8,-0.5).. (4.8,0.5);
\draw (3,-5) ..controls (4.5,-5) and (5.3,-1).. (5.4,0.5);
\draw (3,-6) ..controls (4.7,-6) and (5,-2).. (6,-2);
\draw (6,0) arc(90:270:0.45 and 0.5);
\draw (6,-3) to[out=180,in=180] (6,-6);
\draw (6,-4) arc(90:270:0.5);
\node at (0.6,0.35) [anchor=south] {\scriptsize 1};
\node at (1.2,0.35) [anchor=south] {\scriptsize 2};
\node at (1.8,0.35) [anchor=south] {\scriptsize 3};
\node at (2.4,0.35) [anchor=south] {\scriptsize 4};
\node at (3.6,0.35) [anchor=south] {\scriptsize 5};
\node at (4.2,0.35) [anchor=south] {\scriptsize 6};
\node at (4.8,0.35) [anchor=south] {\scriptsize 7};
\node at (5.4,0.35) [anchor=south] {\scriptsize 8};
}
\end{tikzpicture}
\;= \,\beta \alpha_1 \alpha_2^2\;
\begin{tikzpicture}[baseline={([yshift=-3mm]current bounding box.center)},scale=0.55]
{
\draw[dotted] (0,0.5)--(3,0.5);
\draw [very thick](0,-6.5) -- (0,0.5);
\draw [very thick](3,-6.5)--(3,0.5);
\draw (0,0) arc(-90:0:0.6 and 0.5);
\draw (0,-1) arc (90:-90:0.5);
\draw (0,-3) to[out=0,in=-90] (1.2,0.5);
\draw (0,-4) to[out=0,in=180] (3,-2);
\draw (0,-5) arc(90:-90:0.5);
\draw (3,0) arc(90:270:0.45 and 0.5);
\draw (3,-3) to[out=180,in=180] (3,-6);
\draw (3,-4) arc(90:270:0.5);
\node at (0.6,0.35) [anchor=south] {\vphantom{\scriptsize 1}};
}
\end{tikzpicture}\; .
\end{align}

The two-boundary TL algebra has a similar diagrammatic multiplication. This algebra was introduced by Mitra, Nienhuis, de Gier and Batchelor in \cite{MNdGB}, with one parameter associated with each boundary, similar to the blob algebra. Its structure, representation theory and applications to spin chains have been studied widely, along with a number of very similar algebras. 

In \cite{dGN}, de Gier and Nichols related the two-boundary TL algebra to the affine Hecke algebra of type C, using the action of its centre to show that all irreducible representations factor through a finite-dimensional quotient, later called the \textit{symplectic blob algebra}. For generic values of a parameter associated with the action of the centre, they found a representation of dimension $2^n$ that can be expressed as a tensor product of representations, and thus interpreted as a spin chain. They then established under what circumstances this representation is irreducible. 

The symplectic blob algebra was further studied by Green, King, Martin and Parker in a series of papers covering various topics in its representation theory \cite{TowersI,NonGenRepThySympBlob,SympBlobBlocks,QHSympBlob}, as well as a rigorous treatment of its presentation \cite{SympBlobPres}. 

In \cite{DaughertyRam}, Daugherty and Ram analysed the \textit{calibrated} representations of the two-boundary TL algebra, wherein the central elements from the affine Hecke algebra are simultaneously diagonalisable. More recently, Chernyak, Gainutdinov and Saleur investigated boundary conditions for an XXZ spin chain in \cite{ChernyakGainutdinovSaleur}, and in the two-boundary case, produced representations of the \textit{universal two-boundary TL algebra}, a central extension of the symplectic blob algebra.

For consistency with our conventions for the one-boundary TL algebra, in this paper, we use the version of the two-boundary TL algebra with two parameters associated with each boundary, as is done in \cite{SympBlobPres}. We denote this by $\BBTL_n(\beta;\alpha_1,\alpha_2,\delta_1,\delta_2)$; the parameters $\delta_1$ and $\delta_2$ are assigned to bottom boundary arcs with odd and even left endpoint, respectively.

Suppose we allow two-boundary diagrams with an odd number of strings attached to each boundary, and use the diagram multiplication rules from $\BBTL_n(\beta;\alpha_1,\alpha_2,\delta_1,\delta_2)$. Then with $n=2$, we have
\vspace{-3mm}
\begin{align}
\scalebox{1.4}{\bigg(}\,
\begin{tikzpicture}[baseline={([yshift=5mm]current bounding box.south)},scale=0.6]
{
\draw[dotted] (0,0.5)--(4,0.5);
\draw[dotted] (0,-1.5)--(4,-1.5);
\draw[very thick] (0,0.5)--(0,-1.5);
\draw[very thick] (2,0.5)--(2,-1.5);
\draw[very thick] (4,0.5)--(4,-1.5);
\draw (0,0) arc (-90:0:0.5);
\draw (0,-1) to[out=0,in=180] (2,0);
\draw (2,-1) arc (90:180:0.5);
\draw (2,0) arc (-90:0:0.5);
\draw (2,-1) -- (4,-1);
\draw (4,0) arc (270:180:0.5);
}
\end{tikzpicture}\,
\scalebox{1.4}{\bigg)}\,
\begin{tikzpicture}[baseline={([yshift=5mm]current bounding box.south)},scale=0.6]
{
\draw[dotted] (0,0.5)--(2,0.5);
\draw[dotted] (0,-1.5)--(2,-1.5);
\draw[very thick] (0,0.5)--(0,-1.5);
\draw[very thick] (2,0.5)--(2,-1.5);
\draw (0,0) arc (-90:0:0.5);
\draw (0,-1) -- (2,-1);
\draw (2,0) arc (270:180:0.5);
}
\end{tikzpicture}
\, &= \,
\begin{tikzpicture}[baseline={([yshift=5mm]current bounding box.south)},scale=0.6]
{
\draw[dotted] (0,0.5)--(4,0.5);
\draw[dotted] (0,-1.5)--(4,-1.5);
\draw (0,0) arc (-90:0:0.5);
\draw (0,-1) to[out=0,in=-90] (1,0.5);
\draw (2,-1) arc (90:180:0.5);
\draw (2,0) arc (270:180:0.5);
\draw (2,0) arc (-90:0:0.5);
\draw (2,-1) -- (4,-1);
\draw (4,0) arc (270:180:0.5);
\node at (0.5,0.35) [anchor=south] {\scriptsize 1};
\node at (1,0.35) [anchor=south] {\scriptsize 2};
\node at (1.5,0.35) [anchor=south] {\scriptsize 3};
\node at (2.5,0.35) [anchor=south] {\scriptsize 4};
\node at (3.5,0.35) [anchor=south] {\scriptsize 5};
\draw[very thick] (0,0.5)--(0,-1.5);
\draw[very thick] (2,0.5)--(2,-1.5);
\draw[very thick] (4,0.5)--(4,-1.5);
}
\end{tikzpicture}
\, = 
\alpha_1\,
\begin{tikzpicture}[baseline={([yshift=5mm]current bounding box.south)},scale=0.6]
{
\draw[dotted] (0,0.5)--(2,0.5);
\draw[dotted] (0,-1.5)--(2,-1.5);
\draw[very thick] (0,0.5)--(0,-1.5);
\draw[very thick] (2,0.5)--(2,-1.5);
\draw (0,0) arc (-90:0:0.5);
\draw (0,-1) to[out=0,in=-90] (1,0.5);
\draw (2,-1) arc (90:180:0.5);
\draw (2,0) arc (270:180:0.5);
}
\end{tikzpicture}\; ,
\end{align}
but
\begin{align}
\begin{tikzpicture}[baseline={([yshift=5mm]current bounding box.south)},scale=0.6]
{
\draw[dotted] (0,0.5)--(2,0.5);
\draw[dotted] (0,-1.5)--(2,-1.5);
\draw[very thick] (0,0.5)--(0,-1.5);
\draw[very thick] (2,0.5)--(2,-1.5);
\draw (0,0) arc (-90:0:0.5);
\draw (0,-1) to[out=0,in=180] (2,0);
\draw (2,-1) arc (90:180:0.5);
}
\end{tikzpicture}
\,\scalebox{1.4}{\bigg(}\,
\begin{tikzpicture}[baseline={([yshift=5mm]current bounding box.south)},scale=0.6]
{
\draw[dotted] (0,0.5)--(4,0.5);
\draw[dotted] (0,-1.5)--(4,-1.5);
\draw[very thick] (0,0.5)--(0,-1.5);
\draw[very thick] (4,0.5)--(4,-1.5);
\draw (0,0) arc (-90:0:0.5);
\draw (0,-1) -- (2,-1);
\draw (2,0) arc (270:180:0.5);
\draw (2,0) arc (-90:0:0.5);
\draw (2,-1) -- (4,-1);
\draw (4,0) arc (270:180:0.5);
\node at (0.5,0.35) [anchor=south] {\scriptsize 1};
\node at (1.5,0.35) [anchor=south] {\scriptsize 2};
\node at (2.5,0.35) [anchor=south] {\scriptsize 3};
\node at (3.5,0.35) [anchor=south] {\scriptsize 4};
\draw[very thick] (2,0.5)--(2,-1.5);
}
\end{tikzpicture}\,
\scalebox{1.4}{\bigg)}
 = 
\alpha_2\,
\begin{tikzpicture}[baseline={([yshift=5mm]current bounding box.south)},scale=0.6]
{
\draw[dotted] (0,0.5)--(4,0.5);
\draw[dotted] (0,-1.5)--(4,-1.5);
\draw[very thick] (0,0.5)--(0,-1.5);
\draw[very thick] (2,0.5)--(2,-1.5);
\draw[very thick] (4,0.5)--(4,-1.5);
\draw (0,0) arc (-90:0:0.5);
\draw (0,-1) to[out=0,in=180] (2,0);
\draw (2,-1) arc (90:180:0.5);
\draw (2,0) arc (-90:0:0.5);
\draw (2,-1) -- (4,-1);
\draw (4,0) arc (270:180:0.5);
}
\end{tikzpicture}
\, =
\alpha_2\,
\begin{tikzpicture}[baseline={([yshift=5mm]current bounding box.south)},scale=0.6]
{
\draw[dotted] (0,0.5)--(2,0.5);
\draw[dotted] (0,-1.5)--(2,-1.5);
\draw[very thick] (0,0.5)--(0,-1.5);
\draw[very thick] (2,0.5)--(2,-1.5);
\draw (0,0) arc (-90:0:0.5);
\draw (0,-1) to[out=0,in=-90] (1,0.5);
\draw (2,-1) arc (90:180:0.5);
\draw (2,0) arc (270:180:0.5);
}
\end{tikzpicture} \; .
\end{align}
Associativity requires that these products are equal. As they are scalar multiples of the same basis diagram, this means that their coefficients $\alpha_1$ and $\alpha_2$ must be equal. However, much of the structural richness of the $\BTL_n(\beta;\alpha_1,\alpha_2)$ subalgebra would be lost by considering only the cases where $\alpha_1=\alpha_2$, or, equivalently for $\alpha_2 \neq 0$, $\blob_n(\beta;1)$. For example, as seen in \cite{MartinSaleur,MartinWoodcock}, the representation theory of $\blob_n(\beta; \alpha)$ differs when $\alpha$ is specialised to different complex numbers. Hence we wish to allow $\alpha_1 \neq \alpha_2$.

As seen above, naively including two-boundary diagrams with an odd number of strings connected to either boundary does not yield the desired associative algebra with potentially distinct $\alpha_1$ and $\alpha_2$. Indeed, left-multiplication by any diagram with an odd number of strings on a boundary changes the parity of all the strings to its right on that boundary, because we count string parity starting from the left. Starting from the right gives the same problem with right-multiplication. Hence we seek an alternative method of keeping track of the parity of boundary connections that is robust to multiplication by such diagrams. 

This motivates the introduction of \textit{ghosts}: dots on the boundaries that allow us to keep track of the parity of strings. A ghost is left at each endpoint of each boundary arc removed during multiplication. When determining the parity of string endpoints on each boundary, we number the ghosts as well as the string endpoints. We also require the sum of the number of ghosts and the number of string endpoints on each boundary to be even. This allows us to have diagrams with an odd number of strings connected to each boundary, and parity-dependent boundary parameters, while preserving associativity.

Similar problems with associativity occur when trying to produce generalisations of the dilute TL algebra with boundary, and these can also be resolved by introducing ghosts. The dilute TL algebra $\dTL_n(\beta)$ describes \textit{dilute} lattice configurations, where the strings need not be fully-packed. The basis diagrams for $\dTL_n$ follow the same rules as the basis diagrams for $\TL_n$, except there may be \textit{empty} nodes, with no attached strings, as seen in Figure \ref{fig:exdTL}. Naturally, we would like one- and two-boundary generalisations of $\dTL_n(\beta)$ to describe dilute systems with open boundary conditions on one or both boundaries, but existing versions lack some of our desired properties. In \cite{DJS}, Dubail, Jacobsen and Saleur use a dilute version of the blob algebra; after translating their diagrams from the blob picture to the boundary picture as in \cite[App.~A]{Tipunin}, this does effectively have distinct boundary parameters, but all diagrams have an even number of strings connected to the boundary. De Gier, Lee and Rasmussen study a dilute one-boundary lattice model in \cite{deGierLeeRasmussen} that allows diagrams with an odd number of strings connected to the boundary, with a subalgebra isomorphic to $\BTL_n(\beta;\alpha_1,\alpha_2)$, but they found that nontrivial solutions to the BYBE exist only if all boundary parameters are equal. Alluqmani constructs a dilute blob algebra in \cite{Alluqmani} that also allows diagrams with an odd number of strings connected to the boundary, after converting to the boundary picture, but has only one boundary parameter. This algebra is actually based on the Motzkin algebra $\M_n(\beta,\eta)$ (see e.g. \cite{Motzkin}), but there is an isomorphism $\M_n(\beta, \eta) \cong \dTL_n(\beta-1)$ as long as $\eta \neq 0$, to be discussed in future work.

The dilute ghost algebra $\dGh_n$ constructed in this paper is an associative two-boundary generalisation of $\dTL_n$ that allows diagrams with odd numbers of strings connected to the boundaries, has parity-dependent parameters at each boundary, and admits solutions to the BYBE for generic values of the parameters. It also has a one-boundary subalgebra $\dGho_n$ with these desired properties.

The layout of this paper is as follows. In Section \ref{s:ghostalg}, we define the ghost algebra. We first define the diagram basis in Section \ref{ss:ndiags}, and explain the multiplication in Section \ref{ss:alg}. In Section \ref{s:dilute}, we recall the definition of the dilute TL algebra, and then define the dilute ghost algebra diagrammatically. Section \ref{s:relations} discusses key subalgebras of the ghost algebra and the dilute ghost algebra, and how they fit together. In Section \ref{s:loopmodels}, we present lattice loop models associated with the ghost algebra and the dilute ghost algebra. We construct transfer tangles from bulk and boundary face operators satisfying the YBE, crossing symmetry, the local inversion relation, and the BYBEs. We use existing bulk face operators from the TL and dilute TL algebras, and give a classification of the new boundary face operators with ghosts that satisfy the BYBEs for generic parameter values. For the dilute case, this involves solving 110 functional equations for ten unknown functions, and takes up approximately a third of the paper. Mathematica code was used to multiply diagrams and simplify their coefficients.

The dimensions of the algebras introduced in this paper are given in Appendix \ref{app:dims}, and associativity is proven in Appendix \ref{app:assoc}. In Appendix \ref{app:cellular}, we show that the ghost algebra and its dilute and one-boundary counterparts are \textit{cellular}, as defined by Graham and Lehrer in \cite{GL}. We note that the TL and one-boundary TL algebras are cellular, and that this is closely tied to their diagrammatic presentations, using an anti-involution defined by reflection of basis diagrams. Accordingly, the ghost algebra and the dilute ghost algebra are constructed to be cellular with respect to this reflection anti-involution. However, relaxing this requirement leads to some natural generalisations of these algebras that are still associative and have parity-dependent boundary parameters; the generalised algebras are discussed in Appendix \ref{app:nc}. The classification of BYBE solutions for the ghost algebra includes many diagrammatic details; the proof for the dilute ghost algebra is more streamlined, but some of the missing details are given in Appendix \ref{app:configs}.

\subsection{Terminology and notation}
Let $A$ be an algebra. We say $A$ is \textit{unital} if it has a multiplicative identity. Let $B$ be a vector subspace of $A$ that is closed under multiplication. If $B$ contains the multiplicative identity of $A$, then it is a \textit{unital subalgebra} of $A$; if not, it is a \textit{non-unital subalgebra} of $A$. Note that a non-unital subalgebra $B$ of $A$ may be a unital algebra in its own right; in this case, the multiplicative identity of $B$ does not coincide with that of $A$, but is instead some other idempotent.

Let $A_1$ and $A_2$ be unital algebras with identities $I_1$ and $I_2$ respectively, and let $\psi: A_1 \to A_2$ be a linear map that preserves multiplication. That is, $\psi(xy) = \psi(x)\psi(y)$ for all $x,y \in A_1$. If $\psi(I_1) = I_2$, then $\psi$ is a \textit{unital} homomorphism; otherwise, it is a \textit{non-unital} homomorphism.

All subalgebras and homomorphisms in this paper are unital unless otherwise specified. All algebras are over $\C$, and all parameters are indeterminates, though they may be specialised to elements of $\C$.

Each diagram algebra in this paper has a superscript indicating how many boundaries it has, or no superscript, if it has no boundary.

\section{The ghost algebra} \label{s:ghostalg}
In this section, we define the ghost algebra using a diagrammatic presentation, with a basis given in Section \ref{ss:ndiags} and multiplication given in Section \ref{ss:alg}.

\subsection{Diagram basis}\label{ss:ndiags}
We define the \textit{ghost algebra} $\Gh_n$ in terms of a diagram basis. A \textit{$\Gh_n$-diagram} is  drawn on a rectangle where the top and bottom sides are dotted \textit{boundary} lines, and $n$ nodes are placed on each of the remaining sides. Non-crossing strings are drawn within the rectangle such that each node has exactly one string attached to it, and each string connects a node to another node, or to a boundary. Hence there are no loops, and no strings with each end connected to a boundary. There is no restriction on the number of strings attached to either boundary.

A finite number of filled black circles called \textit{ghosts} may be drawn along the boundaries, except at the ends of any strings. At each boundary, we require that the sum of the number of ghosts and the number of strings attached to that boundary is even; this weaker evenness condition will ensure associativity of the algebra.

For example, some $\Gh_6$-diagrams are
\begin{align}
\begin{tikzpicture}[baseline={([yshift=-1mm]current bounding box.center)},xscale=0.5,yscale=0.5]
{
\draw[dotted] (0,0.5)--(3,0.5);
\draw[dotted] (0,-5.5)--(3,-5.5);
\draw [very thick](0,-5.5) -- (0,0.5);
\draw [very thick](3,-5.5)--(3,0.5);
\draw (0,0) arc (-90:0:0.5);
\draw (0,-1) arc (90:-90:0.5);
\draw (0,-3) to[out=0,in=-90] (1.3,0.5);
\draw (0,-4) to[out=0,in=180] (3,-1);
\draw (0,-5) to[out=0,in=180] (3,-2);
\draw (3,0) arc (270:180:0.5);
\draw (3,-3) to[out=180,in=90] (1.7,-5.5);
\draw (3,-4) arc (90:270:0.5);
\filldraw (0.9,0.5) circle (0.08);
\filldraw (1.7,0.5) circle (0.08);
\filldraw (2.1,0.5) circle (0.08);
\filldraw (0.85,-5.5) circle (0.08);
}
\end{tikzpicture}\; ,
\quad\ \ 
\begin{tikzpicture}[baseline={([yshift=-1mm]current bounding box.center)},xscale=0.5,yscale=0.5]
{
\draw[dotted] (0,0.5)--(3,0.5);
\draw[dotted] (0,-5.5)--(3,-5.5);
\draw [very thick](0,-5.5) -- (0,0.5);
\draw [very thick](3,-5.5)--(3,0.5);
\draw (0,0) to[out=0,in=180] (3,-4);
\draw (0,-1) to[out=0,in=0] (0,-4);
\draw (0,-2) arc (90:-90:0.5);
\draw (0,-5) -- (3,-5);
\draw (3,0) arc (90:270:0.5);
\draw (3,-2) arc (90:270:0.5);
}
\end{tikzpicture}\; ,
\quad\ \ 
\begin{tikzpicture}[baseline={([yshift=-1mm]current bounding box.center)},xscale=0.5,yscale=0.5]
{
\draw[dotted] (0,0.5)--(3,0.5);
\draw[dotted] (0,-5.5)--(3,-5.5);
\draw [very thick](0,-5.5) -- (0,0.5);
\draw [very thick](3,-5.5)--(3,0.5);
\draw (0,0) arc (90:0:1.65 and 5.5);
\draw (0,-1) arc (90:0:1.375 and 4.5);
\draw (0,-2) arc (90:0:1.1 and 3.5);
\draw (0,-3) arc (90:0:0.825 and 2.5);
\draw (0,-4) arc (90:0:0.55 and 1.5);
\draw (0,-5) arc (90:0:0.27 and 0.5);
\draw (3,-5) arc (270:180:1.65 and 5.5);
\draw (3,-4) arc (270:180:1.375 and 4.5);
\draw (3,-3) arc (270:180:1.1 and 3.5);
\draw (3,-2) arc (270:180:0.825 and 2.5);
\draw (3,-1) arc (270:180:0.55 and 1.5);
\draw (3,-0) arc (270:180:0.27 and 0.5);
}
\end{tikzpicture}\; ,
\quad\ \ 
\begin{tikzpicture}[baseline={([yshift=-1mm]current bounding box.center)},xscale=0.5,yscale=0.5]
{
\draw[dotted] (0,0.5)--(3,0.5);
\draw[dotted] (0,-5.5)--(3,-5.5);
\draw [very thick](0,-5.5) -- (0,0.5);
\draw [very thick](3,-5.5)--(3,0.5);
\draw (0,0) arc (-90:0:0.5);
\draw (0,-1) arc (90:-90:0.5);
\draw (0,-3) to[out=0,in=-90] (1.3,0.5);
\draw (0,-4) to[out=0,in=180] (3,-1);
\draw (0,-5) to[out=0,in=180] (3,-2);
\draw (3,0) arc (270:180:0.5);
\draw (3,-3) to[out=180,in=90] (1.7,-5.5);
\draw (3,-4) arc (90:270:0.5);
\filldraw (0.9,0.5) circle (0.08);
\filldraw (0.85,-5.5) circle (0.08);
}
\end{tikzpicture}\; ,
\quad\ \ 
\begin{tikzpicture}[baseline={([yshift=-1mm]current bounding box.center)},xscale=0.5,yscale=0.5]
{
\draw[dotted] (0,0.5)--(3,0.5);
\draw[dotted] (0,-5.5)--(3,-5.5);
\draw [very thick](0,-5.5) -- (0,0.5);
\draw [very thick](3,-5.5)--(3,0.5);
\draw (0,0) arc (-90:0:0.8 and 0.5);
\draw (0,-1) to[out=0,in=180] (3,0);
\draw (0,-2) to[out=0,in=180] (3,-1);
\draw (0,-3) to[out=0,in=180] (3,-2);
\draw (0,-4) to[out=0,in=180] (3,-3);
\draw (0,-5) to[out=0,in=180] (3,-4);
\draw (3,-5) arc (90:180:0.8 and 0.5);
\filldraw (1.9,0.5) circle (0.08);
\filldraw (1.1,-5.5) circle (0.08);
}
\end{tikzpicture}\; ,
\quad\ \ 
\begin{tikzpicture}[baseline={([yshift=-1mm]current bounding box.center)},xscale=0.5,yscale=0.5]
{
\draw[dotted] (0,0.5)--(3,0.5);
\draw[dotted] (0,-5.5)--(3,-5.5);
\draw [very thick](0,-5.5) -- (0,0.5);
\draw [very thick](3,-5.5)--(3,0.5);
\draw (0,0) arc (-90:0:0.5);
\draw (0,-1) arc (90:-90:0.5);
\draw (0,-3) to[out=0,in=-90] (1.3,0.5);
\draw (0,-4) to[out=0,in=180] (3,-1);
\draw (0,-5) to[out=0,in=180] (3,-2);
\draw (3,0) arc (270:180:0.5);
\draw (3,-3) to[out=180,in=90] (1.7,-5.5);
\draw (3,-4) arc (90:270:0.5);
\filldraw (0.9,0.5) circle (0.08);
\filldraw (2.35,-5.5) circle (0.08);
}
\end{tikzpicture}\; 
. \label{eq:egdiags}
\end{align}

A \textit{domain} is the part of a boundary between a pair of adjacent strings, between a corner and its nearest string, or, if there are no strings connected to that boundary, the whole boundary. For example, the first $\Gh_6$-diagram above has four domains at the top boundary, and two at the bottom.

Two $\Gh_n$-diagrams are considered equal if the same nodes are connected to each other or to the same boundary, and in each domain, the number of ghosts is equivalent modulo 2. For example, the first and fourth diagrams in \eqref{eq:egdiags} are equal, while the first and sixth are not. It follows that each $\Gh_n$-diagram can be drawn with at most one ghost in each of its domains.

Note that strings connecting the top boundary to the bottom boundary are disallowed in $\Gh_n$, while there may be arbitrarily many of these in $\BBTL_n$-diagrams. This means $\Gh_n$ is finite-dimensional, like $\TL_n$ and $\BTL_n$, while $\BBTL_n$ is infinite-dimensional. One can consider a quotient of $\BBTL_n$ obtained by removing top-to-bottom boundary arcs at the cost of an additional parameter, but to keep an even number of strings connected to each boundary, these must be removed in pairs. This is often called the \textit{symplectic blob algebra}; see, for example, \cite{SympBlobPres}. 

Readers more comfortable with the blob diagrams of \cite{MartinSaleur} might wish to redraw the $\Gh_n$-diagrams with blobs, instead of boundary connections. However, the usual process to obtain a boundary diagram from a blob diagram is to cut the strings at each blob and connect both ends to the appropriate boundary (see \cite[App.~A]{Tipunin}), leading to an even number of strings connected to each boundary. Hence this does not work for diagrams with an odd number of strings connected to either boundary.

Formulas for the dimensions of the ghost algebra and all other algebras constructed in this paper are found in Appendix \ref{app:dims}, as well as a table of their dimensions for small $n$.

\subsection{Multiplication}\label{ss:alg}
Let $\Gh_n$ be the complex vector space with the set of all $\Gh_n$-diagrams as its basis. To turn this into an algebra, we define multiplication on $\Gh_n$-diagrams, and extend this bilinearly to the whole space. This algebra is dependent on nine parameters, $\beta$, $\alpha_1$, $\alpha_2$, $\alpha_3$, $\gamma_{12}$, $\gamma_3$, $\delta_1$, $\delta_2$ and $\delta_3$. We take these to be indeterminates, but they may be specialised to fixed values in $\C$.

To multiply two $\Gh_n$-diagrams, we first concatenate them. Each loop that appears is removed and replaced by a factor of $\beta$. The ghosts and strings are then numbered along each boundary, starting from 1 on the left. Any string that is attached to a boundary at each end is called a \textit{boundary arc}. Each boundary arc is removed, leaving a ghost at each of its endpoints, and replaced by a parameter according to which boundary each end was attached to, and the parity of the connection points. This is summarised in Table \ref{tab:params}. The extra vertical line segment in the middle is removed, and what remains is a scalar multiple of a $\Gh_n$-diagram. If desired, this may be neatened by continuously deforming the strings and letting each domain have at most one ghost.

\begin{table}[H]
\centering
\caption{Table of boundary arcs and their associated parameters in the ghost algebra.}\label{tab:params}
\vspace{2mm}
\begin{tabular}{|c|c||c|c|}
\hline
Parameter & Boundary arc & Parameter & Boundary arc\\
\hline
$\alpha_1$ & $\begin{tikzpicture}[baseline={([yshift=-1mm]current bounding box.center)},xscale=0.5,yscale=0.5]
{
\draw[dotted] (0,0.5)--(2,0.5);
\draw (0.2,0.5) arc (-180:0:0.8);
\node at (0.2,0.5) [anchor=south] {\scriptsize odd};
\node at (1.8,0.5) [anchor=south] {\scriptsize even};
\node at (1,-0.5) {};
}
\end{tikzpicture}$
&$\delta_1$ &$\begin{tikzpicture}[baseline={([yshift=-1mm]current bounding box.center)},xscale=0.5,yscale=0.5]
{
\draw[dotted] (0,0)--(2,0);
\draw (0.2,0) arc (180:0:0.8);
\node at (0.2,-0.9) [anchor=south] {\scriptsize odd};
\node at (1.8,-0.9) [anchor=south] {\scriptsize even};
\node at (1,1) {};
}
\end{tikzpicture}$\\
\hline
$\alpha_2$ &$\begin{tikzpicture}[baseline={([yshift=-1mm]current bounding box.center)},xscale=0.5,yscale=0.5]
{
\draw[dotted] (0,0.5)--(2,0.5);
\draw (0.2,0.5) arc (-180:0:0.8);
\node at (0.2,0.5) [anchor=south] {\scriptsize even};
\node at (1.8,0.5) [anchor=south] {\scriptsize odd};
\node at (1,-0.5) {};
}
\end{tikzpicture}$ 
&$\delta_2$& $\begin{tikzpicture}[baseline={([yshift=-1mm]current bounding box.center)},xscale=0.5,yscale=0.5]
{
\draw[dotted] (0,0)--(2,0);
\draw (0.2,0) arc (180:0:0.8);
\node at (0.2,-0.9) [anchor=south] {\scriptsize even};
\node at (1.8,-0.9) [anchor=south] {\scriptsize odd};
\node at (1,1) {};
}
\end{tikzpicture}$\\
\hline
$\alpha_3$ & $\begin{tikzpicture}[baseline={([yshift=4mm]current bounding box.south)},xscale=0.5,yscale=0.5]
{
\draw[dotted] (0,0.5)--(2,0.5);
\draw (0.2,0.5) arc (-180:0:0.8);
\node at (0.2,0.5) [anchor=south] {\scriptsize odd};
\node at (1.8,0.5) [anchor=south] {\scriptsize odd};
\node at (1,-0.5) {};
}
\end{tikzpicture}\, , \  \begin{tikzpicture}[baseline={([yshift=4mm]current bounding box.south)},xscale=0.5,yscale=0.5]
{
\draw[dotted] (0,0.5)--(2,0.5);
\draw (0.2,0.5) arc (-180:0:0.8);
\node at (0.2,0.5) [anchor=south] {\scriptsize even};
\node at (1.8,0.5) [anchor=south] {\scriptsize even};
\node at (1,-0.5) {};
}
\end{tikzpicture}$
&$\delta_3$ &$\begin{tikzpicture}[baseline={([yshift=-1mm]current bounding box.center)},xscale=0.5,yscale=0.5]
{
\draw[dotted] (0,0)--(2,0);
\draw (0.2,0) arc (180:0:0.8);
\node at (0.2,-0.9) [anchor=south] {\scriptsize odd};
\node at (1.8,-0.9) [anchor=south] {\scriptsize odd};
\node at (1,1) {};
}
\end{tikzpicture}\, , \ \begin{tikzpicture}[baseline={([yshift=-1mm]current bounding box.center)},xscale=0.5,yscale=0.5]
{
\draw[dotted] (0,0)--(2,0);
\draw (0.2,0) arc (180:0:0.8);
\node at (0.2,-0.9) [anchor=south] {\scriptsize even};
\node at (1.8,-0.9) [anchor=south] {\scriptsize even};
\node at (1,1) {};
}
\end{tikzpicture}$\\
\hline \hline
$\gamma_{12}$ & $\begin{tikzpicture}[baseline={([yshift=5mm]current bounding box.south)},xscale=0.5,yscale=0.5]
{
\draw[dotted] (0,0)--(2,0);
\draw[dotted] (0,1)--(2,1);
\draw (1,1)--(1,0);
\node at (1,1) [anchor=south] {\scriptsize odd};
\node at (1,-0.9) [anchor=south] {\scriptsize even};
}
\end{tikzpicture}\, , \ \  \begin{tikzpicture}[baseline={([yshift=5mm]current bounding box.south)},xscale=0.5,yscale=0.5]
{
\draw[dotted] (0,0)--(2,0);
\draw[dotted] (0,1)--(2,1);
\draw (1,1)--(1,0);
\node at (1,1) [anchor=south] {\scriptsize even};
\node at (1,-0.9) [anchor=south] {\scriptsize odd};
}
\end{tikzpicture}$
&$\gamma_3$ & $\begin{tikzpicture}[baseline={([yshift=5mm]current bounding box.south)},xscale=0.5,yscale=0.5]
{
\draw[dotted] (0,0)--(2,0);
\draw[dotted] (0,1)--(2,1);
\draw (1,1)--(1,0);
\node at (1,1) [anchor=south] {\scriptsize odd};
\node at (1,-0.9) [anchor=south] {\scriptsize odd};
}
\end{tikzpicture}\,  , \ \  \begin{tikzpicture}[baseline={([yshift=5mm]current bounding box.south)},xscale=0.5,yscale=0.5]
{
\draw[dotted] (0,0)--(2,0);
\draw[dotted] (0,1)--(2,1);
\draw (1,1)--(1,0);
\node at (1,1) [anchor=south] {\scriptsize even};
\node at (1,-0.9) [anchor=south] {\scriptsize even};
}
\end{tikzpicture}$
\\
\hline
\end{tabular}
\end{table}

For example, with $n=8$, we have
\begin{align}
\begin{tikzpicture}[baseline={([yshift=-1mm]current bounding box.center)},xscale=0.6,yscale=0.6]
{
\draw[dotted] (0,0.5)--(6,0.5);
\draw[dotted] (0,-7.5)--(6,-7.5);
\draw [very thick] (0,-7.5)--(0,0.5);
\draw [very thick] (3,-7.5)--(3,0.5);
\draw [very thick] (6,-7.5)--(6,0.5);
\draw (0,0) arc (-90:0:0.6 and 0.5);
\draw (0,-1) arc (90:-90:0.5);
\draw (0,-3) arc (-90:0:1.2 and 3.5);
\draw (0,-4) arc (90:-90:1.2 and 1.5);
\draw (0,-5) arc (90:-90:0.5);
\node at (0.6,0.5) [anchor=south] {\scriptsize 1};
\node at (1.2,0.5) [anchor=south] {\scriptsize 2};
\filldraw (0.75,-7.5) circle (0.08);
\node at (0.75,-7.5) [anchor=north] {\scriptsize 1};
\draw (3,0) arc (270:180:0.6 and 0.5);
\draw (3,-1) arc (270:180:1.2 and 1.5);
\draw (3,-2) arc (90:180:1.5 and 5.5);
\draw (3,-3) arc (90:270:0.5);
\draw (3,-5) arc (90:180:0.9 and 2.5);
\draw (3,-6) arc (90:270:0.5);
\node at (2.4,0.5) [anchor=south] {\scriptsize 4};
\node at (1.8,0.5) [anchor=south] {\scriptsize 3};
\filldraw (2.55,-7.5) circle (0.08);
\node at (1.5,-7.5) [anchor=north] {\scriptsize 2};
\node at (2.1,-7.5) [anchor=north] {\scriptsize 3};
\node at (2.55,-7.5) [anchor=north] {\scriptsize 4};
\draw (3,0) arc (90:-90:0.5);
\draw (3,-2) to[out=0,in=-90] (4,0.5);
\draw (3,-3) arc (90:-90:0.5);
\draw (3,-5) arc (-90:0:1.8 and 5.5);
\draw (3,-6) to[out=0,in=180] (6,-7);
\draw (3,-7) arc (90:0:0.6 and 0.5);
\filldraw (3.5,0.5) circle (0.08);
\filldraw (4.4,0.5) circle (0.08);
\node at (3.5,0.5) [anchor=south] {\scriptsize 5};
\node at (4,0.5) [anchor=south] {\scriptsize 6};
\node at (4.4,0.5) [anchor=south] {\scriptsize 7};
\node at (4.8,0.5) [anchor=south] {\scriptsize 8};
\node at (3.6,-7.5) [anchor=north] {\scriptsize 5};
\draw (6,0) arc (270:180:0.8 and 0.5);
\draw (6,-1) arc (90:270:1.2 and 1.5);
\draw (6,-2) arc (90:270:0.5);
\draw (6,-5) arc (90:270:0.5);
\filldraw (5.6,0.5) circle (0.08);
\node at (5.2,0.5) [anchor=south] {\scriptsize 9};
\node at (5.6,0.5) [anchor=south] {\scriptsize 10};
\filldraw (4.8,-7.5) circle (0.08);
\node at (4.8,-7.5) [anchor=north] {\scriptsize 6};
}
\end{tikzpicture}
\ &=
\beta\alpha_1\gamma_{12}\gamma_3\ 
\begin{tikzpicture}[baseline={([yshift=-1mm]current bounding box.center)},xscale=0.6,yscale=0.6]
{
\draw[dotted] (0,0.5)--(6,0.5);
\draw[dotted] (0,-7.5)--(6,-7.5);
\draw [very thick] (0,-7.5)--(0,0.5);
\draw [very thick] (3,-7.5)--(3,0.5);
\draw [very thick] (6,-7.5)--(6,0.5);
\draw (0,0) arc (-90:0:0.6 and 0.5);
\draw (0,-1) arc (90:-90:0.5);
\draw (0,-3) arc (-90:0:1.2 and 3.5);
\draw (0,-4) arc (90:-90:1.2 and 1.5);
\draw (0,-5) arc (90:-90:0.5);
\node at (0.6,0.5) [anchor=south] {\scriptsize 1};
\node at (1.2,0.5) [anchor=south] {\scriptsize 2};
\filldraw (0.75,-7.5) circle (0.08);
\node at (0.75,-7.5) [anchor=north] {\scriptsize 1};
\draw (3,-6) arc (90:270:0.5);
\filldraw (2.4,0.5) circle (0.08);
\node at (2.4,0.5) [anchor=south] {\scriptsize 4};
\filldraw (1.8,0.5) circle (0.08);
\node at (1.8,0.5) [anchor=south] {\scriptsize 3};
\filldraw (2.55,-7.5) circle (0.08);
\filldraw (1.5,-7.5) circle (0.08);
\node at (1.5,-7.5) [anchor=north] {\scriptsize 2};
\filldraw (2.1,-7.5) circle (0.08);
\node at (2.1,-7.5) [anchor=north] {\scriptsize 3};
\filldraw (2.55,-7.5) circle (0.08);
\node at (2.55,-7.5) [anchor=north] {\scriptsize 4};
\draw (3,-6) to[out=0,in=180] (6,-7);
\draw (3,-7) arc (90:0:0.6 and 0.5);
\filldraw (3.5,0.5) circle (0.08);
\filldraw (4.4,0.5) circle (0.08);
\node at (3.5,0.5) [anchor=south] {\scriptsize 5};
\filldraw (4,0.5) circle (0.08);
\node at (4,0.5) [anchor=south] {\scriptsize 6};
\node at (4.4,0.5) [anchor=south] {\scriptsize 7};
\filldraw (4.8,0.5) circle (0.08);
\node at (4.8,0.5) [anchor=south] {\scriptsize 8};
\node at (3.6,-7.5) [anchor=north] {\scriptsize 5};
\draw (6,0) arc (270:180:0.8 and 0.5);
\draw (6,-1) arc (90:270:1.2 and 1.5);
\draw (6,-2) arc (90:270:0.5);
\draw (6,-5) arc (90:270:0.5);
\filldraw (5.6,0.5) circle (0.08);
\node at (5.2,0.5) [anchor=south] {\scriptsize 9};
\node at (5.6,0.5) [anchor=south] {\scriptsize 10};
\filldraw (4.8,-7.5) circle (0.08);
\node at (4.8,-7.5) [anchor=north] {\scriptsize 6};
}
\end{tikzpicture}
\ =
\beta \alpha_1 \gamma_{12}\gamma_3\ 
\begin{tikzpicture}[baseline={([yshift=-1mm]current bounding box.center)},xscale=0.6,yscale=0.6]
{
\draw[dotted] (0,0.5)--(3,0.5);
\draw[dotted] (0,-7.5)--(3,-7.5);
\draw [very thick] (0,-7.5)--(0,0.5);
\draw [very thick] (3,-7.5)--(3,0.5);
\draw (0,0) arc (-90:0:0.6 and 0.5);
\draw (0,-1) arc (90:-90:0.5);
\draw (0,-3) arc (-90:0:1.2 and 3.5);
\draw (0,-4) arc (90:-90:1.2 and 1.5);
\draw (0,-5) arc (90:-90:0.5);
\draw (3,0) arc (270:180:0.8 and 0.5);
\draw (3,-1) arc (90:270:1.2 and 1.5);
\draw (3,-2) arc (90:270:0.5);
\draw (3,-5) arc (90:270:0.5);
\draw (3,-7) arc (90:180:0.8 and 0.5);
\filldraw (2.7,0.5) circle (0.08);
\filldraw (2.6,-7.5) circle (0.08);
\node at (1,0.5) [anchor=south] {\phantom{\scriptsize 1}};
\node at (0.6,-7.5) [anchor=north] {\phantom{\scriptsize 1}};
}
\end{tikzpicture}\; ,
\end{align}
and
\begin{align}
\begin{tikzpicture}[baseline={([yshift=-1mm]current bounding box.center)},xscale=0.6,yscale=0.6]
{
\draw[dotted] (0,0.5)--(6,0.5);
\draw[dotted] (0,-7.5)--(6,-7.5);
\draw [very thick] (0,-7.5)--(0,0.5);
\draw [very thick] (3,-7.5)--(3,0.5);
\draw [very thick] (6,-7.5)--(6,0.5);
\draw (0,0) arc (90:-90:0.5);
\draw (0,-2) to[out=0,in=-90] (1.2,0.5);
\draw (0,-3) to[out=0,in=180] (3,-2);
\draw (0,-4) arc (90:-90:0.5);
\draw (0,-6) arc (90:-90:0.5);
\filldraw (0.6,0.5) circle (0.08);
\node at (0.6,0.5) [anchor=south] {\scriptsize 1};
\node at (1.2,0.5) [anchor=south] {\scriptsize 2};
\node at (0.6,-7.5) [anchor=north] {\scriptsize 1};
\draw (3,0) arc (270:180:0.6 and 0.5);
\draw (3,-1) arc (270:180:1.2 and 1.5);
\draw (3,-3) arc (90:270:0.5);
\draw (3,-5) arc (90:180:1.8 and 2.5);
\draw (3,-6) arc (90:180:1.2 and 1.5);
\draw (3,-7) arc (90:180:0.6 and 0.5);
\filldraw (0.6,-7.5) circle (0.08);
\node at (1.8,0.5) [anchor=south] {\scriptsize 3};
\node at (2.4,0.5) [anchor=south] {\scriptsize 4};
\node at (1.2,-7.5) [anchor=north] {\scriptsize 2};
\node at (1.8,-7.5) [anchor=north] {\scriptsize 3};
\node at (2.4,-7.5) [anchor=north] {\scriptsize 4};
\draw (3,0) arc (-90:0:0.6 and 0.5);
\draw (3,-1) arc (-90:0:1.8 and 1.5);
\draw (3,-2) to[out=0,in=180] (6,0);
\draw (3,-3) to[out=0,in=180] (6,-1);
\draw (3,-4) to[out=0,in=180] (6,-6);
\draw (3,-5) arc (90:0:1.725 and 2.5);
\draw (3,-6) arc (90:0:0.825 and 1.5);
\draw (3,-7) arc (90:0:0.375 and 0.5);
\filldraw (4.2,0.5) circle (0.08);
\filldraw (5.4,0.5) circle (0.08);
\filldraw (4.275,-7.5) circle (0.08);
\node at (3.6,0.5) [anchor=south] {\scriptsize 5};
\node at (4.2,0.5) [anchor=south] {\scriptsize 6};
\node at (4.8,0.5) [anchor=south] {\scriptsize 7};
\node at (5.4,0.5) [anchor=south] {\scriptsize 8};
\node at (3.375,-7.5) [anchor=north] {\scriptsize 5};
\node at (3.825,-7.5) [anchor=north] {\scriptsize 6};
\node at (4.275,-7.5) [anchor=north] {\scriptsize 7};
\node at (4.725,-7.5) [anchor=north] {\scriptsize 8};
\draw (6,-2) arc (90:270:0.5);
\draw (6,-4) arc (90:270:0.5);
\draw (6,-7) arc (90:180:0.825 and 0.5);
\filldraw (5.625,-7.5) circle (0.08);
\node at (5.175,-7.5) [anchor=north] {\scriptsize 9};
\node at (5.625,-7.5) [anchor=north] {\scriptsize 10};
}
\end{tikzpicture}
\ = \alpha_2 \alpha_3 \delta_1\delta_2\delta_3\,
\begin{tikzpicture}[baseline={([yshift=-1mm]current bounding box.center)},xscale=0.6,yscale=0.6]
{
\draw[dotted] (0,0.5)--(6,0.5);
\draw[dotted] (0,-7.5)--(6,-7.5);
\draw [very thick] (0,-7.5)--(0,0.5);
\draw [very thick] (3,-7.5)--(3,0.5);
\draw [very thick] (6,-7.5)--(6,0.5);
\draw (0,0) arc (90:-90:0.5);
\draw (0,-2) to[out=0,in=-90] (1.2,0.5);
\draw (0,-3) to[out=0,in=180] (3,-2);
\draw (0,-4) arc (90:-90:0.5);
\draw (0,-6) arc (90:-90:0.5);
\filldraw (0.6,0.5) circle (0.08);
\node at (0.6,0.5) [anchor=south] {\scriptsize 1};
\node at (1.2,0.5) [anchor=south] {\scriptsize 2};
\node at (0.6,-7.5) [anchor=north] {\scriptsize 1};
\draw (3,-3) arc (90:270:0.5);
\filldraw (0.6,-7.5) circle (0.08);
\node at (1.8,0.5) [anchor=south] {\scriptsize 3};
\node at (2.4,0.5) [anchor=south] {\scriptsize 4};
\node at (1.2,-7.5) [anchor=north] {\scriptsize 2};
\node at (1.8,-7.5) [anchor=north] {\scriptsize 3};
\node at (2.4,-7.5) [anchor=north] {\scriptsize 4};
\filldraw (1.8,0.5) circle (0.08);
\filldraw (2.4,0.5) circle (0.08);
\filldraw (1.2,-7.5) circle (0.08);
\filldraw (1.8,-7.5) circle (0.08);
\filldraw (2.4,-7.5) circle (0.08);
\draw (3,-2) to[out=0,in=180] (6,0);
\draw (3,-3) to[out=0,in=180] (6,-1);
\draw (3,-4) to[out=0,in=180] (6,-6);
\filldraw (4.2,0.5) circle (0.08);
\filldraw (5.4,0.5) circle (0.08);
\filldraw (4.725,-7.5) circle (0.08);
\node at (3.6,0.5) [anchor=south] {\scriptsize 5};
\node at (4.2,0.5) [anchor=south] {\scriptsize 6};
\node at (4.8,0.5) [anchor=south] {\scriptsize 7};
\node at (5.4,0.5) [anchor=south] {\scriptsize 8};
\node at (3.375,-7.5) [anchor=north] {\scriptsize 5};
\node at (3.825,-7.5) [anchor=north] {\scriptsize 6};
\node at (4.275,-7.5) [anchor=north] {\scriptsize 7};
\node at (4.725,-7.5) [anchor=north] {\scriptsize 8};
\filldraw (3.375,-7.5) circle (0.08);
\filldraw (3.825,-7.5) circle (0.08);
\filldraw (4.275,-7.5) circle (0.08);
\filldraw (3.6,0.5) circle (0.08);
\filldraw (4.8,0.5) circle (0.08);
\draw (6,-2) arc (90:270:0.5);
\draw (6,-4) arc (90:270:0.5);
\draw (6,-7) arc (90:180:0.825 and 0.5);
\filldraw (5.625,-7.5) circle (0.08);
\node at (5.175,-7.5) [anchor=north] {\scriptsize 9};
\node at (5.625,-7.5) [anchor=north] {\scriptsize 10};
}
\end{tikzpicture}
\ = \alpha_2 \alpha_3 \delta_1\delta_2\delta_3\,
\begin{tikzpicture}[baseline={([yshift=-1mm]current bounding box.center)},xscale=0.6,yscale=0.6]
{
\draw[dotted] (0,0.5)--(3,0.5);
\draw[dotted] (0,-7.5)--(3,-7.5);
\draw [very thick] (0,-7.5)--(0,0.5);
\draw [very thick] (3,-7.5)--(3,0.5);
\draw (0,0) arc (90:-90:0.5);
\draw (0,-2) to[out=0,in=-90] (1.2,0.5);
\draw (0,-3) to[out=0,in=180] (3,0);
\draw (0,-4) arc (90:-90:0.5);
\draw (0,-6) arc (90:-90:0.5);
\filldraw (0.6,0.5) circle (0.08);
\draw (3,-1) to[out=180,in=180] (3,-6);
\draw (3,-2) arc (90:270:0.5);
\draw (3,-4) arc (90:270:0.5);
\draw (3,-7) arc (90:180:0.825 and 0.5);
\filldraw (2.625,-7.5) circle (0.08);
\node at (1,0.5) [anchor=south] {\phantom{\scriptsize 1}};
\node at (0.6,-7.5) [anchor=north] {\phantom{\scriptsize 1}};
}
\end{tikzpicture}\; .
\end{align}

With this multiplication, $\Gh_n$ is an associative unital algebra. Associativity is proven in Appendix \ref{app:assoc}. The identity is the diagram with $n$ horizontal strings linking the nodes on either side of the rectangle; this is the same as for $\TL_n$, $\BTL_n$ and $\BBTL_n$. We note that $\Gh_n$ has subalgebras isomorphic to $\TL_n(\beta)$, $\BTL_n(\beta;\alpha_1,\alpha_2)$ and $\BTL_n(\beta;\delta_1,\delta_2)$; these are spanned by the diagrams with no boundary connections, no ghosts and no bottom boundary connections, and no ghosts and no top boundary connections, respectively. There is also a subalgebra spanned by the diagrams that have no bottom boundary connections, but may have ghosts; we call this the \textit{one-boundary ghost algebra} $\Gho_n(\beta;\alpha_1,\alpha_2,\alpha_3)$. The diagrams with no top boundary connections span a subalgebra isomorphic to $\Gho_n(\beta;\delta_1,\delta_2,\delta_3)$.

We also note that some boundary arcs of different parities are assigned the same parameters; see the arcs listed for the parameters $\alpha_3$, $\delta_3$, $\gamma_{12}$ and $\gamma_3$ in Table \ref{tab:params}. This is done for another point of consistency with the $\TL_n$ and $\BTL_n$ subalgebras. The algebras $\TL_n$ and $\BTL_n$ are \textit{cellular} algebras, in the sense of Graham and Lehrer \cite{GL}. The definition of cellularity requires an anti-involution on the algebra, and for $\TL_n$ and $\BTL_n$, the anti-involution is given by reflecting basis diagrams about a vertical line. Assigning the parameters $\alpha_3$, $\delta_3$, $\gamma_{12}$ and $\gamma_3$ to certain pairs of boundary arcs of different parities means the ghost algebra is also cellular with respect to the reflection anti-involution; this is proven in Appendix \ref{app:cellular}.

If these boundary arcs were assigned distinct parameters, the resulting algebra is still associative and unital, and has the desired $\TL_n$ and $\BTL_n$ subalgebras. We call this the \textit{generalised ghost algebra} $\ncGh_n$, and it is discussed in Appendix \ref{app:nc}, alongside its dilute counterpart, the \textit{generalised dilute ghost algebra} $\ncdGh_n$.

\section{The dilute ghost algebra}\label{s:dilute}
Recall from Section \ref{s:intro} that each basis diagram in the two-boundary TL algebra $\BBTL_n$ is required to have an even number of strings connected to each boundary. In fact, each basis diagram of the one-boundary TL algebra $\BTL_n$ also has an even number of strings attached to the boundary. Indeed, each string has two ends, and each of the $2n$ nodes must have exactly one end of a string attached to it, so there are an even number of string endpoints remaining, and these are all attached to the boundary. This is why the $\BBTL_n$-diagrams with no strings connected to the bottom boundary are precisely the basis diagrams of $\BTL_n$, and thus $\BBTL_n(\beta;\alpha_1,\alpha_2,\delta_1,\delta_2)$ has a subalgebra isomorphic to $\BTL_n(\beta;\alpha_1,\alpha_2)$ spanned by these diagrams. 

However, this relies on the assumption that every node has exactly one string endpoint attached to it. Weakening this requirement to allow \textit{empty} nodes, with no attached strings, we enter the realm of \textit{dilute} algebras.

\subsection{The dilute Temperley-Lieb algebra}
The \textit{dilute Temperley-Lieb algebra} $\dTL_n(\beta)$ is another diagram algebra, with a basis of \textit{$\dTL_n$-diagrams} drawn on the familiar rectangle of $2n$ nodes. Non-crossing strings are drawn within the rectangle, and each must connect a node to a different node. Each node must have at most one attached string, but may have none; empty nodes are drawn as unfilled circles. Strings must not connect nodes to the boundaries, however, so we remove the boundary lines to indicate this. Some example $\dTL_5$-diagrams are
\begin{align}
\begin{tikzpicture}[baseline={([yshift=-1mm]current bounding box.center)},scale=0.55]
{
\draw[very thick] (0,0.5)--(0,-4.5);
\draw[very thick] (3,0.5)--(3,-4.5);
\draw (0,0) to[out=0,in=180] (3,-2);
\draw (0,-1) arc(90:-90:1);
\draw (0,-4) to[out=0,in=180] (3,-3);
\draw (3,0) arc(90:270:0.5);
\filldraw[fill=white] (0,-2) circle (0.12);
\filldraw[fill=white] (3,-4) circle (0.12);
}
\end{tikzpicture}
\; , \qquad
\begin{tikzpicture}[baseline={([yshift=-1mm]current bounding box.center)},scale=0.55]
{
\draw[very thick] (0,0.5)--(0,-4.5);
\draw[very thick] (3,0.5)--(3,-4.5);
\filldraw[fill=white] (0,0) circle (0.12);
\filldraw[fill=white] (0,-1) circle (0.12);
\filldraw[fill=white] (0,-2) circle (0.12);
\filldraw[fill=white] (0,-3) circle (0.12);
\filldraw[fill=white] (0,-4) circle (0.12);
\filldraw[fill=white] (3,0) circle (0.12);
\filldraw[fill=white] (3,-1) circle (0.12);
\filldraw[fill=white] (3,-2) circle (0.12);
\filldraw[fill=white] (3,-3) circle (0.12);
\filldraw[fill=white] (3,-4) circle (0.12);
}
\end{tikzpicture}
\; , \qquad
\begin{tikzpicture}[baseline={([yshift=-1mm]current bounding box.center)},xscale=-0.5,yscale=0.55]
{
\draw[very thick] (0,0.5)--(0,-4.5);
\draw[very thick] (3,0.5)--(3,-4.5);
\draw (0,0) to[out=0,in=0] (0,-3);
\draw (0,-1) arc(90:-90:0.5);
\draw (0,-4) to[out=0,in=180] (3,-2);
\draw (3,0) arc(90:270:0.5);
\draw (3,-3) arc(90:270:0.5);
}
\end{tikzpicture}
\; , \qquad 
\begin{tikzpicture}[baseline={([yshift=-1mm]current bounding box.center)},scale=0.55]
{
\draw[very thick] (0,0.5)--(0,-4.5);
\draw[very thick] (3,0.5)--(3,-4.5);
\draw (0,-3)--(3,-3);
\draw (0,-4)--(3,-4);
\draw (3,0) arc(90:270:1);
\filldraw[fill=white] (0,0) circle (0.12);
\filldraw[fill=white] (0,-1) circle (0.12);
\filldraw[fill=white] (0,-2) circle (0.12);
\filldraw[fill=white] (3,-1) circle (0.12);
}
\end{tikzpicture}
\; .
\end{align}
Two $\dTL_n$-diagrams are equal if the strings connect the same nodes in each.

Multiplication is similarly defined on pairs of diagrams by concatenation, and extended bilinearly to the whole space. Any loops produced are removed and replaced by a factor of $\beta$, as for $\TL_n$, but now if a string meets an empty node during concatenation, the product of those diagrams is set to zero. For example, in $\dTL_6$,
\begin{align}
\begin{tikzpicture}[baseline={([yshift=-1mm]current bounding box.center)},xscale=0.55,yscale=0.55]
{
\draw[very thick] (0,0.5)--(0,-5.5);
\draw[very thick] (3,0.5)--(3,-5.5);
\draw[very thick] (6,0.5)--(6,-5.5);
\draw (0,-1) to[out=0,in=180] (3,0);
\draw (0,-3) arc (90:-90:1);
\draw (3,-1) to[out=180,in=180] (3,-5);
\draw (3,-2) arc (90:270:0.5);
\draw (3,0) arc (90:-90:1) (3,-2);
\draw (3,-3) to[out=0,in=180] (6,-4);
\draw (3,-5) to[out=0,in=180] (6,-5);
\draw (6,0) arc (90:270:0.5);
\filldraw[fill=white] (0,0) circle (0.12);
\filldraw[fill=white] (0,-2) circle (0.12);
\filldraw[fill=white] (0,-4) circle (0.12);
\filldraw[fill=white] (3,-1) circle (0.12);
\filldraw[fill=white] (3,-4) circle (0.12);
\filldraw[fill=white] (6,-2) circle (0.12);
\filldraw[fill=white] (6,-3) circle (0.12);
}
\end{tikzpicture}
\ &= 0,
&\begin{tikzpicture}[baseline={([yshift=-1mm]current bounding box.center)},xscale=0.55,yscale=0.55]
{
\draw (0,0) arc (90:-90:0.5);
\draw (0,-2) to[out=0,in=180] (3,0);
\draw (0,-4) arc (90:-90:0.5);
\draw (3,-1) to[out=180,in=180] (3,-5);
\draw (3,-3) arc (90:270:0.5);
\draw (3,0) to[out=0,in=180] (6,-1);
\draw (3,-1) arc (90:-90:1);
\draw (3,-4) arc (90:-90:0.5);
\draw (6,-2) arc (90:270:1);
\draw[very thick] (0,0.5)--(0,-5.5);
\draw[very thick] (3,0.5)--(3,-5.5);
\draw[very thick] (6,0.5)--(6,-5.5);
\filldraw[fill=white] (0,-3) circle (0.12);
\filldraw[fill=white] (3,-2) circle (0.12);
\filldraw[fill=white] (6,0) circle (0.12);
\filldraw[fill=white] (6,-3) circle (0.12);
\filldraw[fill=white] (6,-5) circle (0.12);
}
\end{tikzpicture}
\ &=\beta\  \begin{tikzpicture}[baseline={([yshift=-1mm]current bounding box.center)},xscale=0.55,yscale=0.55]
{
\draw (0,0) arc (90:-90:0.5);
\draw (0,-2) to[out=0,in=180] (3,-1);
\draw (0,-4) arc (90:-90:0.5);
\draw (3,-2) arc (90:270:1);
\draw[very thick] (0,0.5)--(0,-5.5);
\draw[very thick] (3,0.5)--(3,-5.5);
\filldraw[fill=white] (0,-3) circle (0.12);
\filldraw[fill=white] (3,0) circle (0.12);
\filldraw[fill=white] (3,-3) circle (0.12);
\filldraw[fill=white] (3,-5) circle (0.12);
}
\end{tikzpicture} \; .
\end{align}

This algebra is associative and unital. To state the identity, let diagrams with $k$ dashed strings represent the sum of the $2^k$ $\dTL_n$-diagrams obtained by drawing or not drawing each dashed string. For example, in $\dTL_5$,
\begin{align}
\begin{tikzpicture}[baseline={([yshift=-1mm]current bounding box.center)},xscale=0.55,yscale=0.55]
{
\draw[dashed] (0,0) to[out=0,in=0] (0,-2);
\draw[dashed] (3,-2) arc (90:270:0.5);
\draw[dashed] (0,-4) -- (3,-4);
\draw (0,-3) to[out=0,in=180] (3,-1);
\draw[very thick] (0,0.5)--(0,-4.5);
\draw[very thick] (3,0.5)--(3,-4.5);
\filldraw[fill=white] (0,-1) circle (0.12);
\filldraw[fill=white] (3,0) circle (0.12);
}
\end{tikzpicture} 
\; &=\; 
\begin{tikzpicture}[baseline={([yshift=-1mm]current bounding box.center)},xscale=0.55,yscale=0.55]
{
\draw (0,0) to[out=0,in=0] (0,-2);
\draw (3,-2) arc (90:270:0.5);
\draw (0,-4) -- (3,-4);
\draw (0,-3) to[out=0,in=180] (3,-1);
\draw[very thick] (0,0.5)--(0,-4.5);
\draw[very thick] (3,0.5)--(3,-4.5);
\filldraw[fill=white] (0,-1) circle (0.12);
\filldraw[fill=white] (3,0) circle (0.12);
}
\end{tikzpicture} 
\; +\; 
\begin{tikzpicture}[baseline={([yshift=-1mm]current bounding box.center)},xscale=0.55,yscale=0.55]
{
\draw (0,0) to[out=0,in=0] (0,-2);
\draw (3,-2) arc (90:270:0.5);
\draw (0,-3) to[out=0,in=180] (3,-1);
\draw[very thick] (0,0.5)--(0,-4.5);
\draw[very thick] (3,0.5)--(3,-4.5);
\filldraw[fill=white] (0,-1) circle (0.12);
\filldraw[fill=white] (3,0) circle (0.12);
\filldraw[fill=white] (0,-4) circle (0.12);
\filldraw[fill=white] (3,-4) circle (0.12);
}
\end{tikzpicture} 
\; +\;
\begin{tikzpicture}[baseline={([yshift=-1mm]current bounding box.center)},xscale=0.55,yscale=0.55]
{
\draw (0,0) to[out=0,in=0] (0,-2);
\draw (0,-4) -- (3,-4);
\draw (0,-3) to[out=0,in=180] (3,-1);
\draw[very thick] (0,0.5)--(0,-4.5);
\draw[very thick] (3,0.5)--(3,-4.5);
\filldraw[fill=white] (0,-1) circle (0.12);
\filldraw[fill=white] (3,0) circle (0.12);
\filldraw[fill=white] (3,-2) circle (0.12);
\filldraw[fill=white] (3,-3) circle (0.12);
}
\end{tikzpicture} 
\; +\; 
\begin{tikzpicture}[baseline={([yshift=-1mm]current bounding box.center)},xscale=0.55,yscale=0.55]
{
\draw (3,-2) arc (90:270:0.5);
\draw (0,-4) -- (3,-4);
\draw (0,-3) to[out=0,in=180] (3,-1);
\draw[very thick] (0,0.5)--(0,-4.5);
\draw[very thick] (3,0.5)--(3,-4.5);
\filldraw[fill=white] (0,-1) circle (0.12);
\filldraw[fill=white] (3,0) circle (0.12);
\filldraw[fill=white] (0,0) circle (0.12);
\filldraw[fill=white] (0,-2) circle (0.12);
}
\end{tikzpicture} \nonumber
\\[3mm]
&\quad +\; 
\begin{tikzpicture}[baseline={([yshift=-1mm]current bounding box.center)},xscale=0.55,yscale=0.55]
{
\draw (0,0) to[out=0,in=0] (0,-2);
\draw (0,-3) to[out=0,in=180] (3,-1);
\draw[very thick] (0,0.5)--(0,-4.5);
\draw[very thick] (3,0.5)--(3,-4.5);
\filldraw[fill=white] (0,-1) circle (0.12);
\filldraw[fill=white] (3,0) circle (0.12);
\filldraw[fill=white] (0,-4) circle (0.12);
\filldraw[fill=white] (3,-2) circle (0.12);
\filldraw[fill=white] (3,-3) circle (0.12);
\filldraw[fill=white] (3,-4) circle (0.12);
}
\end{tikzpicture} 
\; +\; 
\begin{tikzpicture}[baseline={([yshift=-1mm]current bounding box.center)},xscale=0.55,yscale=0.55]
{
\draw (3,-2) arc (90:270:0.5);
\draw (0,-3) to[out=0,in=180] (3,-1);
\draw[very thick] (0,0.5)--(0,-4.5);
\draw[very thick] (3,0.5)--(3,-4.5);
\filldraw[fill=white] (0,-1) circle (0.12);
\filldraw[fill=white] (3,0) circle (0.12);
\filldraw[fill=white] (0,0) circle (0.12);
\filldraw[fill=white] (0,-2) circle (0.12);
\filldraw[fill=white] (0,-4) circle (0.12);
\filldraw[fill=white] (3,-4) circle (0.12);
}
\end{tikzpicture} 
\; + \; 
\begin{tikzpicture}[baseline={([yshift=-1mm]current bounding box.center)},xscale=0.55,yscale=0.55]
{
\draw (0,-4) -- (3,-4);
\draw (0,-3) to[out=0,in=180] (3,-1);
\draw[very thick] (0,0.5)--(0,-4.5);
\draw[very thick] (3,0.5)--(3,-4.5);
\filldraw[fill=white] (0,-1) circle (0.12);
\filldraw[fill=white] (3,0) circle (0.12);
\filldraw[fill=white] (0,0) circle (0.12);
\filldraw[fill=white] (0,-2) circle (0.12);
\filldraw[fill=white] (3,-2) circle (0.12);
\filldraw[fill=white] (3,-3) circle (0.12);
}
\end{tikzpicture}
\; + \; 
\begin{tikzpicture}[baseline={([yshift=-1mm]current bounding box.center)},xscale=0.55,yscale=0.55]
{
\draw (0,-3) to[out=0,in=180] (3,-1);
\draw[very thick] (0,0.5)--(0,-4.5);
\draw[very thick] (3,0.5)--(3,-4.5);
\filldraw[fill=white] (0,-1) circle (0.12);
\filldraw[fill=white] (3,0) circle (0.12);
\filldraw[fill=white] (0,0) circle (0.12);
\filldraw[fill=white] (0,-2) circle (0.12);
\filldraw[fill=white] (3,-2) circle (0.12);
\filldraw[fill=white] (3,-3) circle (0.12);
\filldraw[fill=white] (3,-4) circle (0.12);
\filldraw[fill=white] (0,-4) circle (0.12);
}
\end{tikzpicture} \; .
\end{align}
The identity in $\dTL_n$ is then
\begin{align}
I_{\dTL_n} = \begin{tikzpicture}[baseline={([yshift=-1mm]current bounding box.center)},scale=0.55]
{
\draw [very thick](0,-5) -- (0,0.5);
\draw[dashed] (0,0) -- (3,0);
\draw[dashed] (0,-1) -- (3,-1);
\draw[dashed] (0,-2) -- (3,-2);
\draw (1.5,-2.5) node[anchor=center]{$\vdots$};
\draw[dashed] (0,-3.5) -- (3,-3.5);
\draw[dashed] (0,-4.5) -- (3,-4.5);
\draw [very thick](3,-5) -- (3,0.5);
\draw (0,0) node[font=\scriptsize,anchor=east]{$1$};
\draw (0,-1) node[font=\scriptsize,anchor=east]{$2$};
\draw (0,-2) node[font=\scriptsize,anchor=east]{$3$};
\draw (0,-3.5) node[font=\scriptsize,anchor=east]{$n-1$};
\draw (0,-4.5) node[font=\scriptsize,anchor=east]{$n$};
}
\end{tikzpicture}\; . \label{eq:diluteidentity}
\end{align}

Observe that the set of all $\TL_n$-diagrams is a subset of the set of all $\dTL_n$-diagrams, and since the $\TL_n$-diagrams have no empty nodes, their products are the same in $\dTL_n$ as in $\TL_n$. The span of the $\TL_n$-diagrams is closed under multiplication, but it does not contain the identity of $\dTL_n$, since this is a linear combination of $\dTL_n$-diagrams, including some with empty nodes. Hence the $\TL_n$-diagrams in $\dTL_n(\beta)$ span a non-unital subalgebra isomorphic to $\TL_n(\beta)$.

There is a unital subalgebra of $\dTL_n(\beta)$ isomorphic to $\TL_n(\beta+1)$, however. This is generated by the $\dTL_n$ identity, and diagrams of the form
\begin{align}
&\begin{tikzpicture}[baseline={([yshift=-1mm]current bounding box.center)},scale=0.55]
{
\draw [very thick](0,-5.5) -- (0,0.5);
\draw[dashed] (0,0) -- (3,0);
\draw (1.5,-0.3) node[anchor=center]{$\vdots$};
\draw[dashed] (0,-1) -- (3,-1);
\draw (0,-1) node[font=\scriptsize,anchor=east]{$j-1$};
\draw[dashed,dash phase=1.25pt] (0,-2) arc (90:-90:0.5);
\draw[dashed,dash phase=1.25pt] (3,-2) arc (90:270:0.5);
\draw (0,-2) node[font=\scriptsize,anchor=east]{$j$};
\draw (0,-3) node[font=\scriptsize,anchor=east]{$j+1$};
\draw[dashed] (0,-4) -- (3,-4);
\draw (0,-4) node[font=\scriptsize,anchor=east]{$j+2$};
\draw (1.5,-4.3) node[anchor=center]{$\vdots$};
\draw[dashed] (0,-5) -- (3,-5);
\draw (0,-5) node[font=\scriptsize,anchor=east]{$n$};
\draw [very thick](3,-5.5) -- (3,0.5);
\draw (0,0) node[font=\scriptsize,anchor=east]{$1$};
}
\end{tikzpicture}\; , &1\leq j \leq n-1.
\end{align}

\subsection{Diagram basis and multiplication}
The \textit{dilute ghost algebra} $\dGh_n$ is defined in terms of a basis of \textit{$\dGh_n$-diagrams}. A $\dGh_n$-diagram consists of non-crossing strings drawn within a two-boundary rectangle with $2n$ nodes. Each string must connect a node to another node, or to a boundary, and each node must have at most one string attached to it. Ghosts may be drawn in the domains of the boundaries, and we require that the number of ghosts plus the number of strings on each boundary is even.

Two $\dGh_n$-diagrams are equal if the strings connect the same nodes to each other, or to the same boundary, and the numbers of ghosts in corresponding domains are equivalent modulo 2.

These $\dGh_n$-diagrams are multiplied by concatenation, where each loop is removed and replaced by a factor of $\beta$, and each boundary arc is removed and replaced by a factor of the corresponding parameter from Table \ref{tab:params}, leaving a ghost at each endpoint. The parity of boundary arc endpoints is again determined by counting the strings and ghosts along each boundary, starting from 1 on the left. If, during concatenation, a string meets an empty node, the product is set to zero. If not, we remove the extra vertical line in the middle to produce a scalar multiple of a $\dGh_n$-diagram, tightening the strings and removing extra pairs of ghosts from each domain, if desired.

For example, in $\dGh_5$, we have 
\begin{align}
\begin{tikzpicture}[baseline={([yshift=-1mm]current bounding box.center)},xscale=0.6,yscale=0.6]
{
\draw[very thick] (0,0.5)--(0,-4.5);
\draw[very thick] (3,0.5)--(3,-4.5);
\draw[very thick] (6,0.5)--(6,-4.5);
\draw[dotted] (0,0.5)--(6,0.5);
\draw[dotted] (0,-4.5)--(6,-4.5);
\draw (0,0) arc (-90:0:0.6 and 0.5);
\draw (0,-2) to[out=0,in=180] (3,-4);
\draw (0,-3) arc (90:-90:0.5);
\draw (3,-1) to[out=180,in=270] (2,0.5);
\draw (3,-2) arc (90:270:0.5);
\draw (3,-1) arc (90:-90:1);
\draw (3,-4) to[out=0,in=180] (6,0);
\draw (6,-1) to[out=180,in=90] (4.8,-4.5);
\draw (6,-2) arc (90:270:0.5);
\filldraw[fill=white] (0,-1) circle (0.12);
\filldraw[fill=white] (3,0) circle (0.12);
\filldraw[fill=white] (3,-2) circle (0.12);
\filldraw[fill=white] (6,-4) circle (0.12);
\filldraw (4,-4.5) circle (0.09);
}
\end{tikzpicture}
\ = 0,
\end{align}
and
\begin{align}
\begin{tikzpicture}[baseline={([yshift=-1mm]current bounding box.center)},xscale=0.6,yscale=0.6]
{
\draw[dotted] (0,0.5)--(6,0.5);
\draw[dotted] (0,-4.5)--(6,-4.5);
\draw (0,0) arc (90:-90:0.5);
\draw (0,-3) to[out=0,in=180] (3,-1);
\draw (3,0) to[out=180,in=270] (2,0.5);
\draw (3,-2) to[out=180,in=90] (1.2,-4.5);
\draw (3,-4) ..controls (2.2,-3.95) and (1.9,-4.2) .. (1.8,-4.5);
\draw (3,0) arc (90:-90:0.5);
\draw (3,-2) to[out=0,in=90] (4.8,-4.5);
\draw (3,-4) to[out=0,in=90] (3.6,-4.5);
\draw (6,0) arc (270:180:0.6 and 0.5);
\draw (6,-1) arc (90:270:0.5);
\draw (6,-4) arc (90:180:0.6 and 0.5);
\draw[very thick] (0,0.5)--(0,-4.5);
\draw[very thick] (3,0.5)--(3,-4.5);
\draw[very thick] (6,0.5)--(6,-4.5);
\node at (2,0.5) [anchor=south] {\footnotesize 1};
\node at (2.5,0.5) [anchor=south] {\footnotesize 2};
\node at (4.1,0.5) [anchor=south] {\footnotesize 3};
\node at (5.4,0.5) [anchor=south] {\footnotesize 4};
\node at (0.6,-4.5)[anchor=north] {\footnotesize 1};
\node at (1.2,-4.5)[anchor=north] {\footnotesize 2};
\node at (1.8,-4.5)[anchor=north] {\footnotesize 3};
\node at (2.4,-4.5)[anchor=north] {\footnotesize 4};
\node at (3.6,-4.5)[anchor=north] {\footnotesize 5};
\node at (4.2,-4.5)[anchor=north] {\footnotesize 6};
\node at (4.8,-4.5)[anchor=north] {\footnotesize 7};
\node at (5.4,-4.5)[anchor=north] {\footnotesize 8};
\filldraw[fill=white] (0,-2) circle (0.12);
\filldraw[fill=white] (0,-4) circle (0.12);
\filldraw[fill=white] (3,-3) circle (0.12);
\filldraw[fill=white] (6,-3) circle (0.12);
\filldraw (2.5,0.5) circle (0.09);
\filldraw (4.1,0.5) circle (0.09);
\filldraw (0.6,-4.5) circle (0.09);
\filldraw (2.4,-4.5) circle (0.09);
\filldraw (4.2,-4.5) circle (0.09);
}
\end{tikzpicture}
\ &=\delta_2\delta_3 \  \begin{tikzpicture}[baseline={([yshift=-1mm]current bounding box.center)},xscale=0.6,yscale=0.6]
{
\draw[dotted] (0,0.5)--(3,0.5);
\draw[dotted] (0,-4.5)--(3,-4.5);
\draw[dotted] (3,0.5)--(6,0.5);
\draw[dotted] (3,-4.5)--(6,-4.5);
%
\draw (0,0) arc (90:-90:0.5);
\draw (0,-3) to[out=0,in=180] (3,-1);
%
\draw (3,0) to[out=180,in=270] (2,0.5);
%
\draw (3,0) arc (90:-90:0.5);
%
\draw (6,0) arc (270:180:0.6 and 0.5);
\draw (6,-1) arc (90:270:0.5);
\draw (6,-4) arc (90:180:0.6 and 0.5);
\node at (2,0.5) [anchor=south] {\footnotesize 1};
\node at (2.5,0.5) [anchor=south] {\footnotesize 2};
\node at (4.1,0.5) [anchor=south] {\footnotesize 3};
\node at (5.4,0.5) [anchor=south] {\footnotesize 4};
\node at (0.6,-4.5)[anchor=north] {\footnotesize 1};
\node at (1.2,-4.5)[anchor=north] {\footnotesize 2};
\node at (1.8,-4.5)[anchor=north] {\footnotesize 3};
\node at (2.4,-4.5)[anchor=north] {\footnotesize 4};
\node at (3.6,-4.5)[anchor=north] {\footnotesize 5};
\node at (4.2,-4.5)[anchor=north] {\footnotesize 6};
\node at (4.8,-4.5)[anchor=north] {\footnotesize 7};
\node at (5.4,-4.5)[anchor=north] {\footnotesize 8};
\filldraw (2.5,0.5) circle (0.09);
\filldraw (4.1,0.5) circle (0.09);
\filldraw (0.6,-4.5) circle (0.09);
\filldraw (1.2,-4.5) circle (0.09);
\filldraw (1.8,-4.5) circle (0.09);
\filldraw (2.4,-4.5) circle (0.09);
\filldraw (3.6,-4.5) circle (0.09);
\filldraw (4.2,-4.5) circle (0.09);
\filldraw (4.8,-4.5) circle (0.09);
\draw[very thick] (0,0.5)--(0,-4.5);
\draw[very thick] (3,0.5)--(3,-4.5);
\draw[very thick] (6,0.5)--(6,-4.5);
\filldraw[fill=white] (0,-2) circle (0.12);
\filldraw[fill=white] (0,-4) circle (0.12);
\filldraw[fill=white] (3,-3) circle (0.12);
\filldraw[fill=white] (6,-3) circle (0.12);
}
\end{tikzpicture}
\ =\delta_2\delta_3\  
\begin{tikzpicture}[baseline={([yshift=-1mm]current bounding box.center)},xscale=0.6,yscale=0.6]
{
\draw[dotted] (0,0.5)--(3,0.5);
\draw[dotted] (0,-4.5)--(3,-4.5);
%
%
\draw (0,0) arc(90:-90:0.5);
\draw (0,-3) to[out=0,in=-90] (1.5,0.5);
\draw (3,0) arc (270:180:0.6 and 0.5);
\draw (3,-1) arc (90:270:0.5);
\draw (3,-4) arc (90:180:0.6 and 0.5);
\filldraw (1.2,-4.5) circle (0.09);
\node at (2,0.5) [anchor=south] {\vphantom{\footnotesize 1234}};
\node at (0.75,-4.5)[anchor=north] {\vphantom{\footnotesize 12345678}};
\draw[very thick] (0,0.5)--(0,-4.5);
\draw[very thick] (3,0.5)--(3,-4.5);
\filldraw[fill=white] (0,-2) circle (0.12);
\filldraw[fill=white] (0,-4) circle (0.12);
\filldraw[fill=white] (3,-3) circle (0.12);
}
\end{tikzpicture}
\; .
\end{align}

This algebra is associative, as proven in Appendix \ref{app:assoc}, and unital; the identity is the same as in \eqref{eq:diluteidentity} for $\dTL_n$, but with dotted boundary lines added to the top and bottom, to be consistent with the other $\dGh_n$-diagrams.

The $\dGh_n$-diagrams without any strings attached to the bottom boundary span a subalgebra that we call the \textit{one-boundary dilute ghost algebra} $\dGho_n(\beta;\alpha_1,\alpha_2,\alpha_3)$. The $\dGh_n$-diagrams without any strings connected to the top boundary span a subalgebra isomorphic to $\dGho_n(\beta;\delta_1,\delta_2,\delta_3)$, and the intersection of these two subalgebras is isomorphic to $\dTL_n(\beta)$.

\section{Relationships between new and existing algebras}\label{s:relations}

In this section, we describe some notable unital and non-unital subalgebras of the ghost and dilute ghost algebras, culminating in the commutative diagrams in Figures \ref{fig:nonunital} and \ref{fig:unital}.

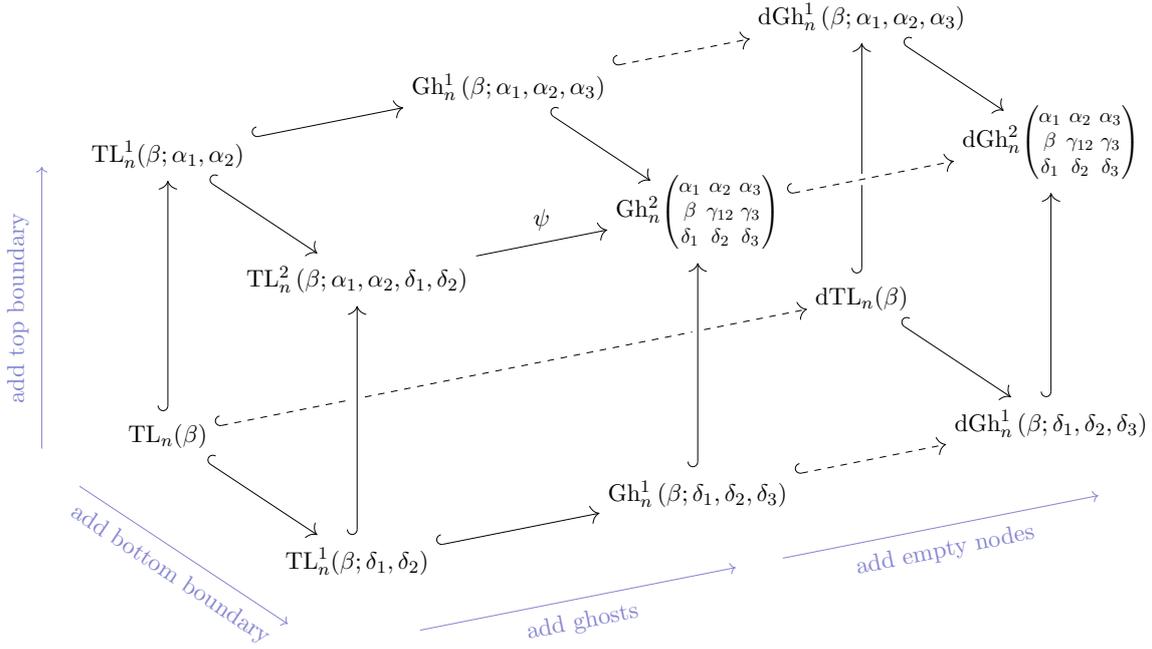
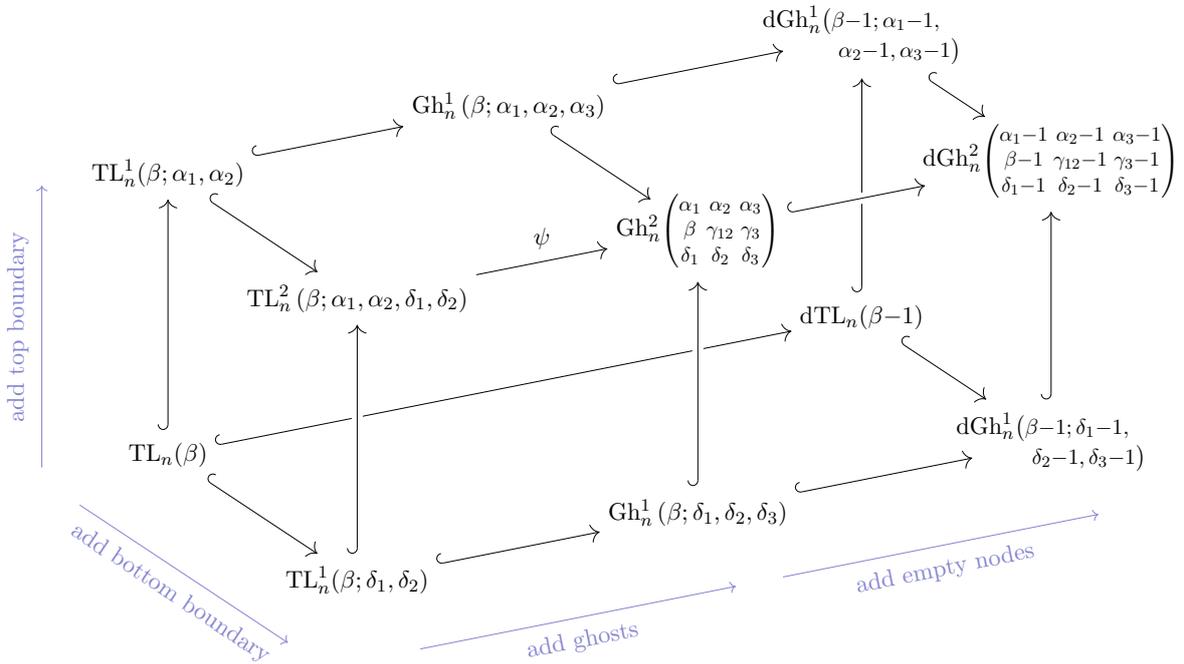
\begin{figure}[p]
\centering
\begin{subfigure}[b]{\textwidth}
\begin{align*}
\scalebox{0.83}{
\begin{tikzpicture}[inner ysep=2mm,minimum size=5mm,baseline={([yshift=-1mm]current bounding box.north)}]
\node (TL) at (0,0) {$\TL_n(\beta)$};
\node (1BTL) at (0,4.5) {$\BTL_n(\beta; \alpha_1,\alpha_2)$};
\node (2BTL) at (3,2.5) {$\BBTL_n\left(\beta;\alpha_1,\alpha_2, \delta_1,\delta_2\right)$};
\node (Gh1) at (5.4,5.58) {$\Gho_n\left(\beta; \alpha_1,\alpha_2,\alpha_3\right)$};
\node (Gh2) at (8.4,3.58) {$\Gh_n$\scalebox{0.85}{$\begingroup
\setlength\arraycolsep{0.5mm}
\begin{pmatrix}
\alpha_1 & \alpha_2 & \alpha_3 \\
\beta & \gamma_{12} & \gamma_3 \\
\delta_1 & \delta_2 & \delta_3
\end{pmatrix}\endgroup$}};
\node (dGh2) at (14,4.7) {$\dGh_n$\scalebox{0.85}{$\begingroup
\setlength\arraycolsep{0.5mm}
\begin{pmatrix}
\alpha_1 & \alpha_2 & \alpha_3 \\
\beta & \gamma_{12} & \gamma_3 \\
\delta_1 & \delta_2 & \delta_3
\end{pmatrix}\endgroup$}};
\node (dGh1b) at (14,0.2) {$\dGho_n\left(\beta;\delta_1,\delta_2,\delta_3\right)$};
\node (1BTLb) at (3,-2) {$\BTL_n(\beta; \delta_1,\delta_2)$};
\node (Gh1b) at (8.4,-0.92) {$\Gho_n\left(\beta; \delta_1,\delta_2,\delta_3\right)$};
\node (dTL) at (11,2.2) {$\dTL_n(\beta)$};
\node (dGh1) at (11,6.7) {$\dGho_n\left(\beta;\alpha_1,\alpha_2,\alpha_3\right)$};
\draw[{Hooks[right,scale=2]}-{>[scale=1.6]}] (TL) to (1BTL);
\draw[{Hooks[right,scale=2]}-{>[scale=1.6]},dashed] (TL) to (dTL);
\draw[{Hooks[right,scale=2]}-{>[scale=1.6]}] (1BTL) to (Gh1);
\draw[{Hooks[right,scale=2]}-{>[scale=1.6]}] (1BTL) to (2BTL);
\draw[{Hooks[right,scale=2]}-{>[scale=1.6]},dashed] (Gh1) to (dGh1);
\draw[{Hooks[right,scale=2]}-{>[scale=1.6]}] (Gh1) to (7.65,4.08);
\draw[{Hooks[right,scale=2]}-{>[scale=1.6]}] (dGh1) -- (13.25,5.2);
\draw[{Hooks[right,scale=2]}-{>[scale=1.6]}] (dTL) to (dGh1);
\draw[line width=5pt,white] (Gh2) to (dGh2);
\draw[{Hooks[right,scale=2]}-{>[scale=1.6]},dashed] (Gh2) to (dGh2);
\draw[{Hooks[right,scale=2]}-{>[scale=1.6]}] (TL) to (1BTLb);
\draw[{Hooks[right,scale=2]}-{>[scale=1.6]}] (1BTLb) to (Gh1b);
\draw[line width=5pt,white] (1BTLb) to (2BTL);
\draw[{Hooks[right,scale=2]}-{>[scale=1.6]}] (1BTLb) to (2BTL);
\draw[{Hooks[right,scale=2]}-{>[scale=1.6]},dashed] (Gh1b) to (dGh1b);
\draw[line width=5pt,white] (Gh1b) to (Gh2);
\draw[{Hooks[right,scale=2]}-{>[scale=1.6]}] (Gh1b) to (Gh2);
\draw[{Hooks[right,scale=2]}-{>[scale=1.6]}] (dGh1b) to (dGh2);
\draw[{Hooks[right,scale=2]}-{>[scale=1.6]}] (dTL) to (dGh1b);
\draw[-{>[scale=1.6]}] (2BTL) -- (Gh2) node[midway,above,align=center] {$\psi$};
\draw[->,richblue!60] (-2,-0.2)--(-2,4.3) node[midway,rotate=90,align=center,above] {add top boundary};
\draw[->,richblue!60] (-1.4,-0.8)--(1.9,-3) node[midway,rotate=-33.69,align=center,below] {add bottom boundary};
\draw[->,richblue!60] (4,-3.1)--(9,-2.1) node[midway,rotate=11.31,align=center,below] {add ghosts};
\draw[->,richblue!60] (9.75,-1.95)--(14.75,-0.95) node[midway,rotate=11.31,align=center,below] {add empty nodes};
\end{tikzpicture}
}
\end{align*}
    \caption{Commutative diagram relating notable unital and non-unital subalgebras of $\Gh_n$ and $\dGh_n$.}
    \label{fig:nonunital}
\end{subfigure}
\begin{subfigure}[b]{\textwidth}
    \begin{align*}
\scalebox{0.83}{
\begin{tikzpicture}[inner ysep=2mm,minimum size=5mm,baseline={([yshift=-1mm]current bounding box.north)}]
\node (TL) at (0,0) {$\TL_n(\beta)$};
\node (1BTL) at (0,4.5) {$\BTL_n(\beta; \alpha_1,\alpha_2)$};
\node (2BTL) at (3,2.5) {$\BBTL_n\left(\beta;\alpha_1,\alpha_2, \delta_1,\delta_2\right)$};
\node (Gh1) at (5.4,5.58) {$\Gho_n\left(\beta; \alpha_1,\alpha_2,\alpha_3\right)$};
\node (Gh2) at (8.4,3.58) {$\Gh_n$\scalebox{0.85}{$\begingroup
\setlength\arraycolsep{0.5mm}
\begin{pmatrix}
\alpha_1 & \alpha_2 & \alpha_3 \\
\beta & \gamma_{12} & \gamma_3 \\
\delta_1 & \delta_2 & \delta_3
\end{pmatrix}\endgroup$}};
\node (dGh2) at (14,4.7) {\scalebox{1}{$\dGh_n$}\scalebox{0.85}{$\begingroup
\setlength\arraycolsep{0.5mm}
\begin{pmatrix}
\alpha_1\textcolor{black}{-}1 & \alpha_2\textcolor{black}{-}1 & \alpha_3\textcolor{black}{-}1 \\
\beta\textcolor{black}{-}1 & \gamma_{12}\textcolor{black}{-}1 & \gamma_3\textcolor{black}{-}1 \\
\delta_1\textcolor{black}{-}1 & \delta_2\textcolor{black}{-}1 & \delta_3\textcolor{black}{-}1
\end{pmatrix}\endgroup$}};
\node (dGh1b) at (14,0.2) {\begin{tabular}{l}\scalebox{1}{$\dGho_n$}\scalebox{0.9}{$\big(\beta\textcolor{black}{-}1;\delta_1\textcolor{black}{-}1,$}\\ \hspace{12mm}\scalebox{0.9}{$\delta_2\textcolor{black}{-}1,\delta_3\textcolor{black}{-}1\big)$}\end{tabular}};
\node (1BTLb) at (3,-2) {$\BTL_n(\beta; \delta_1,\delta_2)$};
\node (Gh1b) at (8.4,-0.92) {$\Gho_n\left(\beta; \delta_1,\delta_2,\delta_3\right)$};
\node (dTL) at (11,2.2) {\scalebox{1}{$\dTL_n(\beta\textcolor{black}{-}1)$}};
\node (dGh1) at (11,6.7) {\begin{tabular}{l}\scalebox{1}{$\dGho_n$}\scalebox{0.9}{$\big(\beta\textcolor{black}{-}1;\alpha_1\textcolor{black}{-}1,$}\\ \hspace{12mm}\scalebox{0.9}{$\alpha_2\textcolor{black}{-}1,\alpha_3\textcolor{black}{-}1\big)$}\end{tabular}};
\draw[{Hooks[right,scale=2]}-{>[scale=1.6]}] (TL) to (1BTL);
\draw[{Hooks[right,scale=2]}-{>[scale=1.6]}] (TL) to (dTL);
\draw[{Hooks[right,scale=2]}-{>[scale=1.6]}] (1BTL) to (Gh1);
\draw[{Hooks[right,scale=2]}-{>[scale=1.6]}] (1BTL) to (2BTL);
\draw[{Hooks[right,scale=2]}-{>[scale=1.6]}] (Gh1) to (9.75,6.45);
\draw[{Hooks[right,scale=2]}-{>[scale=1.6]}] (Gh1) to (7.65,4.08);
\draw[{Hooks[right,scale=2]}-{>[scale=1.6]}] (dGh1) -- (12.95,5.4);
\draw[{Hooks[right,scale=2]}-{>[scale=1.6]}] (dTL) to (dGh1);
\draw[line width=5pt,white] (Gh2) to (dGh2);
\draw[{Hooks[right,scale=2]}-{>[scale=1.6]}] (Gh2) to (12,4.3);
\draw[{Hooks[right,scale=2]}-{>[scale=1.6]}] (TL) to (1BTLb);
\draw[{Hooks[right,scale=2]}-{>[scale=1.6]}] (1BTLb) to (Gh1b);
\draw[line width=5pt,white] (1BTLb) to (2BTL);
\draw[{Hooks[right,scale=2]}-{>[scale=1.6]}] (1BTLb) to (2BTL);
\draw[{Hooks[right,scale=2]}-{>[scale=1.6]}] (Gh1b) to (12.75,-0.05);
\draw[line width=5pt,white] (Gh1b) to (Gh2);
\draw[{Hooks[right,scale=2]}-{>[scale=1.6]}] (Gh1b) to (Gh2);
\draw[{Hooks[right,scale=2]}-{>[scale=1.6]}] (dGh1b) to (dGh2);
\draw[{Hooks[right,scale=2]}-{>[scale=1.6]}] (dTL) to (dGh1b);
\draw[-{>[scale=1.6]}] (2BTL) -- (Gh2) node[midway,above,align=center] {$\psi$};
\draw[->,richblue!60] (-2,-0.2)--(-2,4.3) node[midway,rotate=90,align=center,above] {add top boundary};
\draw[->,richblue!60] (-1.4,-0.8)--(1.9,-3) node[midway,rotate=-33.69,align=center,below] {add bottom boundary};
\draw[->,richblue!60] (4,-3.1)--(9,-2.1) node[midway,rotate=11.31,align=center,below] {add ghosts};
\draw[->,richblue!60] (9.75,-1.95)--(14.75,-0.95) node[midway,rotate=11.31,align=center,below] {add empty nodes};
\end{tikzpicture}
}
\end{align*}
    \caption{Commutative diagram relating notable unital subalgebras of $\Gh_n$ and $\dGh_n$.}
    \label{fig:unital}
\end{subfigure}
\caption{Commutative diagrams summarising the maps from Section \ref{s:relations}, with blue notes indicating their 3D layout. Solid and dashed arrows are unital and non-unital homomorphisms, respectively. Hooked arrows are injections. The map $\psi$ is defined in Section \ref{s:relations}, paragraph three.}
\end{figure}

As discussed in Section \ref{s:ghostalg}, the ghost algebra $\Gh_n$ has one-boundary subalgebras isomorphic to $\Gho_n(\beta;\alpha_1,\alpha_2,\alpha_3)$ and $\Gho_n(\beta;\delta_1,\delta_2,\delta_3)$, spanned by the $\Gh_n$-diagrams with no strings attached to the bottom and top boundaries, respectively. Within these, the $\Gh_n$-diagrams without ghosts span subalgebras isomorphic to $\BTL_n(\beta;\alpha_1,\alpha_2)$ and $\BTL_n(\beta;\delta_1,\delta_2)$, respectively. Their intersection, spanned by the diagrams with no boundary connections, is isomorphic to $\TL_n(\beta)$.

The two-boundary TL algebra $\BBTL_n(\beta;\alpha_1,\alpha_2,\delta_1,\delta_2)$ similarly has subalgebras isomorphic to $\BTL_n(\beta;\alpha_1,\alpha_2)$ and $\BTL_n(\beta;\delta_1,\delta_2)$, whose intersection is isomorphic to $\TL_n(\beta)$. Further, we can construct a homomorphism $\psi: \BBTL_n(\beta;\alpha_1,\alpha_2,\delta_1,\delta_2) \to \Gh_n$ by mapping each $\BBTL_n$-diagram to the $\Gh_n$-diagram obtained by replacing any top-to-bottom boundary arcs with factors of $\gamma_{12}$ or $\gamma_3$ and leaving ghosts at their endpoints, as is done in $\Gh_n$ multiplication. This is not injective, since $\BBTL_n$ is infinite-dimensional, while $\Gh_n$ is finite-dimensional.

From Section \ref{s:dilute}, $\dGh_n$ has one-boundary subalgebras isomorphic to $\dGho_n(\beta;\alpha_1,\alpha_2,\alpha_3)$ and $\dGho_n(\beta;\delta_1,\delta_2,\delta_3)$, spanned by the $\dGh_n$-diagrams with no strings attached to the bottom and top boundaries, respectively. Within these subalgebras, the diagrams without any boundary connections span a subalgebra isomorphic to $\dTL_n(\beta)$, and the diagrams without empty nodes span subalgebras isomorphic to $\Gho_n(\beta;\alpha_1,\alpha_2,\alpha_3)$ and $\Gho_n(\beta;\delta_1,\delta_2,\delta_3)$, respectively. Recall that $\dTL_n$ has a non-unital subalgebra isomorphic to $\TL_n(\beta)$, spanned by the diagrams without empty nodes. The dilute ghost algebra similarly has a non-unital subalgebra isomorphic to $\Gh_n$, spanned by the $\dGh_n$-diagrams without empty nodes. 

These relationships are summarised in a commutative diagram in Figure \ref{fig:nonunital}.

Since unital subalgebras are often more useful than non-unital subalgebras, it is fortunate that we also have similar unital subalgebras of these dilute algebras. Recall that a dashed string in a dilute diagram is the sum of the diagrams with a solid string and no string in that position. When a dashed string is connected to a dashed string in multiplication, the result is a dashed string. A dashed loop is then the sum of a loop and no loop, so the parameter associated with a dashed loop is effectively the loop parameter, plus one. Similarly, the parameter associated with a dashed boundary arc is effectively the parameter associated with the corresponding solid boundary arc, plus one. It follows that we have an injective unital homomorphism from $\Gho_n(\beta;\alpha_1,\alpha_2,\alpha_3)$ to $\dGho_n(\beta-1;\alpha_1-1,\alpha_2-1,\alpha_3-1)$, obtained by turning the solid strings in each $\Gho_n$-diagram into dashed strings. There are analogous injective unital homomorphisms between the ghost and dilute ghost algebras, and between $\TL_n(\beta)$ and $\dTL_n(\beta-1)$. This is summarised in Figure \ref{fig:unital}.

We note that the zero- and one-boundary algebras $\TL_n$, $\dTL_n$, $\BTL_n$, $\Gho_n$ and $\dGho_n$ also have injective unital homomorphisms into the corresponding algebras with the same parameters and subscript $n+1$. This is given by adding a string linking the $(n+1)$th nodes on each side of the diagram, or a dashed string in the dilute cases. This process does not work for the two-boundary algebras, since the added string would intersect any strings connected to the bottom boundary.

\section{Loop models}\label{s:loopmodels}
In Sections \ref{ss:dense} and \ref{ss:dilute}, we present lattice loop models associated with the ghost algebra and the dilute ghost algebra, respectively. For each algebra, the goal is to produce a one-parameter family of commuting \textit{transfer tangles} $T(u)$. These are built from the \textit{bulk face operators} discussed in Sections \ref{sss:densebulk} and \ref{sss:dilutebulk}, and the \textit{boundary face operators} discussed in Sections \ref{sss:densebdy} and \ref{sss:dilutebdy}. We follow a standard construction of the transfer tangles such that, if the face operators satisfy the \textit{Yang-Baxter equation (YBE)}, the local inversion relation and the \textit{boundary Yang-Baxter equations (BYBEs)}, then the resulting transfer tangles commute, as proven in \cite[\S 2.4]{BehrendPearceOBrien}. Since the bulk of these lattices behaves identically to the TL or dilute TL lattices, we use known bulk face operators from each that satisfy the YBE and local inversion relation. Any new boundary behaviour due to the ghosts is captured by the boundary face operators; much of Sections \ref{sss:densebdy} and \ref{sss:dilutebdy} is spent classifying boundary face operators that satisfy the BYBEs. The final transfer tangles are given in Sections \ref{sss:densetransfer} and \ref{sss:dilutetransfer}.

\subsection{Ghost algebra}\label{ss:dense}
The ghost algebra can be used to describe a fully-packed lattice loop model with two boundaries. An example of such a lattice is given in Figure \ref{fig:denselattice}. These lattices are constructed from two possible \textit{bulk squares}
\begin{align}
\begin{tikzpicture}[baseline={([yshift=-1mm]current bounding box.center)},scale=0.75]
{
\lup{(0,0)}
}
\end{tikzpicture}
\; , 
\hspace{30mm}
\begin{tikzpicture}[baseline={([yshift=-1mm]current bounding box.center)},scale=0.75]
{
\rup{(0,0)}
}
\end{tikzpicture} \; ,
\end{align}
and five possible \textit{boundary triangles} for each boundary,
\begin{align}
\begin{tikzpicture}[baseline={([yshift=-1mm]current bounding box.center)},scale=0.75]
{
\aatop{(0,0)}
}
\end{tikzpicture}
\; , \qquad
\begin{tikzpicture}[baseline={([yshift=-1mm]current bounding box.center)},scale=0.75]
{
\batop{(0,0)}
}
\end{tikzpicture}
\; , \qquad
\begin{tikzpicture}[baseline={([yshift=-1mm]current bounding box.center)},scale=0.75]
{
\bbtop{(0,0)}
}
\end{tikzpicture}
\; , \qquad
\begin{tikzpicture}[baseline={([yshift=-1mm]current bounding box.center)},scale=0.75]
{
\bctop{(0,0)}
}
\end{tikzpicture}
\;, \qquad
\begin{tikzpicture}[baseline={([yshift=-1mm]current bounding box.center)},scale=0.75]
{
\bdtop{(0,0)}
}
\end{tikzpicture} 
\; , \nonumber
\\[4mm]
\begin{tikzpicture}[baseline={([yshift=-1mm]current bounding box.center)},scale=0.75]
{
\aabot{(0,0)}
}
\end{tikzpicture}
\; , \qquad
\begin{tikzpicture}[baseline={([yshift=-1mm]current bounding box.center)},scale=0.75]
{
\babot{(0,0)}
}
\end{tikzpicture}
\; , \qquad
\begin{tikzpicture}[baseline={([yshift=-1mm]current bounding box.center)},scale=0.75]
{
\bbbot{(0,0)}
}
\end{tikzpicture}
\; , \qquad
\begin{tikzpicture}[baseline={([yshift=-1mm]current bounding box.center)},scale=0.75]
{
\bcbot{(0,0)}
}
\end{tikzpicture}
\;, \qquad
\begin{tikzpicture}[baseline={([yshift=-1mm]current bounding box.center)},scale=0.75]
{
\bdbot{(0,0)}
}
\end{tikzpicture} 
\; . \label{eq:bdytris}
\end{align}

\begin{figure}[bt]
\centering
\begin{align*}
\begin{tikzpicture}[baseline={([yshift=-1mm]current bounding box.center)},scale=0.95]
{
\aatop{(0,0)}
\batop{(2,0)}
\bbtop{(4,0)}
\bctop{(6,0)}
\bdtop{(8,0)}
\batop{(10,0)}
\bbtop{(12,0)}
\bdtop{(14,0)}
\aabot{(6,-5)}
\bcbot{(4,-5)}
\bbbot{(10,-5)}
\bdbot{(0,-5)}
\aabot{(2,-5)}
\babot{(8,-5)}
\bbbot{(12,-5)}
\bcbot{(14,-5)}
\lup{(0,-1)}
\lup{(1,-1)}
\lup{(2,-1)}
\rup{(3,-1)}
\lup{(4,-1)}
\lup{(5,-1)}
\rup{(6,-1)}
\lup{(7,-1)}
\lup{(8,-1)}
\rup{(9,-1)}
\rup{(10,-1)}
\rup{(11,-1)}
\lup{(12,-1)}
\rup{(13,-1)}
\rup{(14,-1)}
\rup{(15,-1)}
\rup{(0,-2)}
\lup{(1,-2)}
\rup{(2,-2)}
\rup{(3,-2)}
\lup{(4,-2)}
\lup{(5,-2)}
\lup{(6,-2)}
\rup{(7,-2)}
\lup{(8,-2)}
\lup{(9,-2)}
\rup{(10,-2)}
\lup{(11,-2)}
\lup{(12,-2)}
\lup{(13,-2)}
\lup{(14,-2)}
\lup{(15,-2)}
\lup{(0,-3)}
\rup{(1,-3)}
\lup{(2,-3)}
\rup{(3,-3)}
\rup{(4,-3)}
\lup{(5,-3)}
\rup{(6,-3)}
\rup{(7,-3)}
\lup{(8,-3)}
\rup{(9,-3)}
\lup{(10,-3)}
\lup{(11,-3)}
\lup{(12,-3)}
\lup{(13,-3)}
\rup{(14,-3)}
\lup{(15,-3)}
\begin{pgfonlayer}{main}
\foreach \x in {0.5,2.5,4.5,6.5,8.5,10.5,12.5,14.5}
{\draw[lstring] (\x,-1) to[out=90,in=240] (\x,-0.5);
\draw[lstring] (\x,-4) to[out=-90,in=120] (\x,-4.5);
\foreach \y in {-0.5,-1,-2,-3,-4,-4.5}
{\fill[lgh] (\x,\y) circle (0.8pt);}}
\foreach \xx in {1.5,3.5,5.5,7.5,9.5,11.5,13.5,15.5}
{\draw[lstring] (\xx,-1) to[out=90,in=300] (\xx,-0.5);
\draw[lstring] (\xx,-4) to[out=-90,in=60] (\xx,-4.5);
\foreach \yy in {-0.5,-1,-2,-3,-4,-4.5}
{\fill[lgh] (\xx,\yy) circle (0.8pt);}}
\foreach \w in {1,2,3,4,5,6,7,8,9,10,11,12,13,14,15}
{\foreach \z in {-1.5,-2.5,-3.5}
{\fill[lgh] (\w,\z) circle (0.8 pt);}}
\end{pgfonlayer}
}
\end{tikzpicture}
\end{align*}
\vspace*{-5mm}
\caption{Fully-packed lattice that can be described using the ghost algebra $\Gh_3$.}\label{fig:denselattice}
\end{figure}
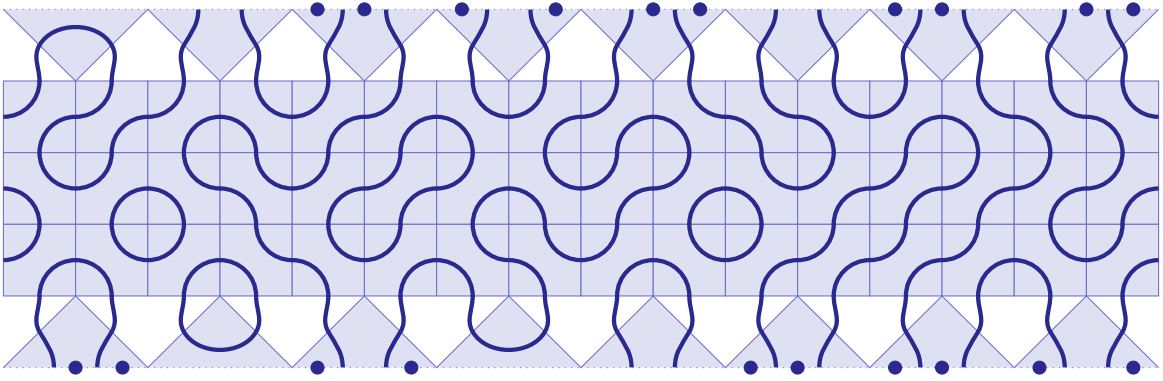

\subsubsection{Bulk face operator}\label{sss:densebulk}
The bulk squares of the ghost lattice model are the same as those in the Temperley-Lieb lattice model, so we can use the well-known Temperley-Lieb bulk face operator,
\begin{align}
\begin{tikzpicture}[baseline={([yshift=-1mm]current bounding box.center)},scale=0.75]
{
\bulk{$u$}{(0,0)}
}
\end{tikzpicture}
\ &=\  
\frac{\sin(\lambda-u)}{\sin(\lambda)}
\begin{tikzpicture}[baseline={([yshift=-1mm]current bounding box.center)},scale=0.75]
{
\filldraw[lsq] (0,0)--(1,1)--(2,0)--(1,-1)--cycle;
\draw[lstring] (0.5,0.5) to[out=-45,in=225] (1.5,0.5);
\draw[lstring] (0.5,-0.5) to[out=45,in=135] (1.5,-0.5);
}
\end{tikzpicture}
+
\frac{\sin(u)}{\sin(\lambda)}
\begin{tikzpicture}[baseline={([yshift=-1mm]current bounding box.center)},scale=0.75]
{
\filldraw[lsq] (0,0)--(1,1)--(2,0)--(1,-1)--cycle;
\draw[lstring] (0.5,0.5) to[out=-45,in=45] (0.5,-0.5);
\draw[lstring] (1.5,0.5) to[out=-135,in=135] (1.5,-0.5);
}
\end{tikzpicture}\; , \label{eq:densefaceop}
\end{align}
where $\beta = 2\cos(\lambda)$.

This satisfies the \textit{Yang-Baxter equation} (YBE)
\begin{align}
\begin{tikzpicture}[baseline={([yshift=-1mm]current bounding box.center)},scale=0.75]
{
\draw[lstring] (1.5,0.5)--(2,0.5);
\draw[lstring] (1.5,-0.5)--(2,-0.5);
\filldraw [lsq] (0,0)--(1,1)--(2,0)--(1,-1)--cycle;
\draw (1,0) node[anchor=center]{\footnotesize $u-v$};
\filldraw[richblue!70] (0.2,0.2)--(0.2,-0.2)--(0,0)--cycle;
\filldraw[lsq] (2,1) rectangle (3,0);
\fill[richblue!70] (2,0.3)--(2.3,0)--(2,0)--cycle;
\draw (2.5,0.5) node[anchor=center]{$v$};
\filldraw[lsq] (2,0) rectangle (3,-1);
\fill[richblue!70] (2,-0.7)--(2.3,-1)--(2,-1)--cycle;
\draw (2.5,-0.5) node[anchor=center]{$u$};
}
\end{tikzpicture}
\ =\ 
\begin{tikzpicture}[baseline={([yshift=-1mm]current bounding box.center)},scale=0.75]
{
\draw[lstring] (0,0.5)--(0.5,0.5);
\draw[lstring] (0,-0.5)--(0.5,-0.5);
\filldraw [lsq] (0,0)--(1,1)--(2,0)--(1,-1)--cycle;
\draw (1,0) node[anchor=center]{\footnotesize $u-v$};
\filldraw[richblue!70] (0.2,0.2)--(0.2,-0.2)--(0,0)--cycle;
\filldraw[lsq] (-1,0) rectangle (0,1);
\fill[richblue!70] (-1,0.3)--(-0.7,0)--(-1,0)--cycle;
\draw (-0.5,0.5) node[anchor=center]{$u$};
\filldraw[lsq] (-1,-1) rectangle (0,0);
\fill[richblue!70] (-1,-0.7)--(-0.7,-1)--(-1,-1)--cycle;
\draw (-0.5,-0.5) node[anchor=center]{$v$};
}
\end{tikzpicture} \; .\label{eq:denseYBE}
\end{align}

We also have \textit{crossing symmetry}
\begin{align}
\begin{tikzpicture}[baseline={([yshift=-1mm]current bounding box.center)},scale=0.75]
{
\bulk{$u$}{(0,0)}
}
\end{tikzpicture}
\; = \;
\begin{tikzpicture}[baseline={([yshift=-1mm]current bounding box.center)},scale=0.75,rotate=90]
{
\bulk{$\lambda - u$}{(0,0)}
}
\end{tikzpicture}
\; = \;
\begin{tikzpicture}[baseline={([yshift=-1mm]current bounding box.center)},scale=0.75, rotate=180]
{
\bulk{$u$}{(0,0)}
}
\end{tikzpicture}
\; = \;
\begin{tikzpicture}[baseline={([yshift=-1mm]current bounding box.center)},scale=0.75,rotate=270]
{
\bulk{$\lambda-u$}{(0,0)}
}
\end{tikzpicture}\; , \label{eq:densecrossing}
\end{align}
and the \textit{local inversion relation}
\begin{align}
\begin{tikzpicture}[baseline={([yshift=-1mm]current bounding box.center)},scale=0.75]
{
\bulk{$u$}{(0,0)}
\bulk{$-u$}{(2,0)}
\draw[lstring] (1.5,0.5) to[out=45,in=135] (2.5,0.5);
\draw[lstring] (1.5,-0.5) to[out=-45,in=-135] (2.5,-0.5);
}
\end{tikzpicture}
\; =\, f(u)\;
\begin{tikzpicture}[baseline={([yshift=-1mm]current bounding box.center)},scale=0.75]
{
\filldraw[lsq] (0,0)--(1,1)--(2,0)--(1,-1)--cycle;
\draw[lstring] (0.5,0.5) to[out=-45,in=-135] (1.5,0.5);
\draw[lstring] (0.5,-0.5) to[out=45,in=135] (1.5,-0.5);
}
\end{tikzpicture}\; , \qquad f \neq 0. \label{eq:denseLIR}
\end{align}

\subsubsection{Boundary face operators}\label{sss:densebdy}
Unlike the YBE and the local inversion relation, the BYBEs involve the boundary face operators. The boundary face operators in the TL lattice models use only the first triangle for each boundary from \eqref{eq:bdytris}, while the one- and two-boundary TL lattice models also use the second triangle for the top or both boundaries, respectively. Thus, for our lattice model to have any new boundary behaviour due to the ghosts, we cannot simply use the existing BYBE solutions from the zero-, one- and two-boundary TL lattice models. Hence in this section, we classify solutions to the BYBEs involving our new boundary triangles with ghosts.

The BYBE for the top boundary of the ghost lattice model is
\begin{align}
\begin{tikzpicture}[baseline={([yshift=-1mm]current bounding box.center)},scale=0.75]
{
\bulk{$u-v$}{(0,0)}
\utri{$u$}{(1,1)}
\bulk{$u+v$}{(2,0)}
\utri{$v$}{(3,1)}
\draw[lstring] (1.5,-0.5) to[out=-45,in=-135] (2.5,-0.5);
}
\end{tikzpicture}
\ = \ 
\begin{tikzpicture}[baseline={([yshift=-1mm]current bounding box.center)},scale=0.75]
{
\utri{$v$}{(0,0)}
\bulk{$u+v$}{(1,-1)}
\utri{$u$}{(2,0)}
\bulk{$u-v$}{(3,-1)}
\draw[lstring] (2.5,-1.5) to [out=-45,in=-135] (3.5,-1.5);
}
\end{tikzpicture}\; , \label{eq:denseBYBE}
\end{align}
and the BYBE for the bottom boundary is
\begin{align}
\begin{tikzpicture}[baseline={([yshift=-1mm]current bounding box.center)},xscale=0.75,yscale=-0.75]
{
\bulk{$u-v$}{(0,0)}
\utri{$u$}{(1,1)}
\bulk{$u+v$}{(2,0)}
\utri{$v$}{(3,1)}
\draw[lstring] (1.5,-0.5) to[out=-45,in=-135] (2.5,-0.5);
}
\end{tikzpicture}
\ = \ 
\begin{tikzpicture}[baseline={([yshift=-1mm]current bounding box.center)},xscale=0.75,yscale=-0.75]
{
\utri{$v$}{(0,0)}
\bulk{$u+v$}{(1,-1)}
\utri{$u$}{(2,0)}
\bulk{$u-v$}{(3,-1)}
\draw[lstring] (2.5,-1.5) to [out=-45,in=-135] (3.5,-1.5);
}
\end{tikzpicture}\; . \label{eq:denseBYBEbot}
\end{align}
Given the bulk face operator \eqref{eq:densefaceop}, we seek solutions for the boundary face operators satisfying \eqref{eq:denseBYBE} and \eqref{eq:denseBYBEbot}. Since the bottom boundary equation is just a reflection of the top boundary equation about a horizontal line, it suffices to solve the top boundary equation, and reflect the solution for the bottom boundary, as long as we also swap the parameters $\alpha_i \leftrightarrow \delta_i$. Note also that the BYBEs are homogeneous, so our solutions for the boundary face operator with parameter $u$ will have an overall scaling factor, that may depend on $u$.

Note that the top boundary BYBE involves no bottom boundary triangles, and has two places on each side that strings could be connected to. This means the top boundary BYBE is contained within the one-boundary ghost algebra on two nodes, $\Gho_2$. Hence we will solve this BYBE within $\Gho_2$, so our diagrams will have no bottom boundary.

Since the top boundary BYBE is exactly the same for the one-boundary ghost algebra $\Gho_n$, and the bottom triangle with no boundary connections solves the bottom boundary BYBE, this means the one-boundary ghost lattice model is also Yang-Baxter integrable, with the transfer tangle given in Section \ref{sss:densetransfer}, using that bottom triangle alone as the bottom boundary face operator.

Let 
\begin{align}
\begin{tikzpicture}[baseline={([yshift=-1mm]current bounding box.center)},scale=0.75]
{
\utri{$u$}{(0,0)}
}
\end{tikzpicture}
&=\, 
a(u)
\begin{tikzpicture}[baseline={([yshift=-1mm]current bounding box.center)},scale=0.75]
{
\aatop{(0,0)}
}
\end{tikzpicture}
+b_1(u)
\begin{tikzpicture}[baseline={([yshift=-1mm]current bounding box.center)},scale=0.75]
{
\batop{(0,0)}
}
\end{tikzpicture}
+b_2(u)
\begin{tikzpicture}[baseline={([yshift=-1mm]current bounding box.center)},scale=0.75]
{
\bbtop{(0,0)}
}
\end{tikzpicture}
+b_3(u)
\begin{tikzpicture}[baseline={([yshift=-1mm]current bounding box.center)},scale=0.75]
{
\bctop{(0,0)}
}
\end{tikzpicture}
+b_4(u)
\begin{tikzpicture}[baseline={([yshift=-1mm]current bounding box.center)},scale=0.75]
{
\bdtop{(0,0)}
}
\end{tikzpicture} \label{eq:densetri}
\end{align}
and let
\begin{align}
A_1 &= \alpha_1^2 + \alpha_2^2 - 2\alpha_3^2, &A_2 &= \alpha_1\alpha_2 - \alpha_3^2. \label{eq:A1A2def}
\end{align}

\begin{theorem}\label{thm:denseBYBE}
Let the face operator for $\Gh_n$ be given by \eqref{eq:densefaceop}. Then the BYBE \eqref{eq:denseBYBE} admits two solutions for the boundary face operator \eqref{eq:densetri}. Solution I is
\begin{align}
a(u) &= \frac{b(u)}{2\sin(2u)\sin (\lambda)} \bigg(\kappa +  \Big(\!\left(c_1\alpha_1 + c_3\alpha_2 + \left(c_2+c_4\right) \alpha_3\right) \cos(2u) \nonumber \\
&\hspace{46mm} - \left(c_3 \alpha_1 + c_1\alpha_2 + \left(c_2+c_4\right)\alpha_3\right) \cos(2u-\lambda) \Big)\!\bigg), \\
b_i(u) &= c_i\, b(u), \qquad i \in \{1,2,3,4\},
\end{align}
for any $\kappa, c_1,c_2,c_3,c_4 \in \C$ with $c_1c_3 = c_2c_4$ and $b: \C \to \C$. Solution II is
\begin{align}
a(u) &= \frac{b(u)}{2\sin(2u)\sin(\lambda)} \left(\kappa -A_1\cos(2u) + A_2 \cos(2u-\lambda)  \right), \\
b_1(u) &= -\alpha_1\, b(u), \\
b_2(u) &= b_4(u) = \alpha_3\, b(u), \\
b_3(u) &= -\alpha_2\, b(u),
\end{align}
for any $\kappa \in \C$ and $b:\C \to \C$. 
\end{theorem}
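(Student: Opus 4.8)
The plan is to reduce the diagrammatic identity \eqref{eq:denseBYBE} to a finite system of scalar functional equations by expanding both sides in a basis of $\Gho_2$, and then to solve that system. As noted before the theorem, the top boundary BYBE lives entirely in $\Gho_2$, so after all contractions each side is an element of this finite-dimensional algebra. First I would expand each bulk face using \eqref{eq:densefaceop}, writing it as a trig-weighted sum of its two diagrams, and expand each boundary triangle using \eqref{eq:densetri}, writing it as a sum of the five triangles with the unknown coefficient functions $a,b_1,\dots,b_4$ evaluated at the appropriate arguments $u$ and $v$. Multiplying out, both sides become linear combinations of the $\Gho_2$-diagrams that can occur, with coefficients that are products of the trig factors, the unknown functions, and monomials in $\alpha_1,\alpha_2,\alpha_3,\beta$ arising from contracted boundary arcs and loops. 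The key technical point here is the ghost bookkeeping: every contracted boundary arc leaves ghosts behind and contributes $\alpha_1,\alpha_2$ or $\alpha_3$ according to the parity rule of Table \ref{tab:params}, and this parity must be tracked carefully through each concatenation.

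Next I would enumerate the finitely many $\Gho_2$-diagrams appearing in the two expansions and equate the coefficient of each diagram on the two sides, producing a system of functional equations in $u$ and $v$. I expect the system to organise itself naturally. Equations coming from diagrams in which a single boundary arc has been removed should force the ratios $b_2/b_1$, $b_3/b_1$ and $b_4/b_1$ to be independent of $u$; recording this as $b_i(u)=c_i\,b(u)$ collapses the problem to constants $c_i$ together with one overall scale $b(u)$, which is consistent with the homogeneity of the BYBE already observed. Equations coming from diagrams where two arcs are removed should then produce a quadratic compatibility relation among the $c_i$.

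I anticipate that this quadratic relation factors into the two cases that appear in the statement: the constraint $c_1c_3=c_2c_4$, giving Solution I with $c_1,c_2,c_3,c_4$ otherwise free, and the rigid choice $c_1=-\alpha_1,\ c_2=c_4=\alpha_3,\ c_3=-\alpha_2$, giving Solution II. With the $b_i$ pinned down, the remaining equations determine $a(u)$: they relate $a(u)$ and $a(v)$ to $b(u),b(v)$ and the trig factors, and after dividing by $b(u)b(v)$ and applying the $\sin$ addition formulae they should collapse to a single relation that is affine in $a(u)/b(u)$ and $a(v)/b(v)$. Solving this yields $a(u)/b(u)$ as the displayed combination of $\cos(2u)$ and $\cos(2u-\lambda)$ up to one free additive constant, namely $\kappa$; specialising to each branch reproduces the two formulas, with $A_1,A_2$ from \eqref{eq:A1A2def} surfacing in Solution II.

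The main obstacle will be the middle step. Carrying out the diagrammatic products correctly under the ghost parity rule is delicate, since left- and right-multiplication shift parities of the remaining endpoints, and a single miscounted ghost changes which $\alpha_i$ is produced. Having assembled the coefficient equations, the second difficulty is to show that this over-determined functional system is consistent and admits exactly these two solution families and no others: I would need to verify that every coefficient equation is either implied by the chosen $b_i$ and the derived formula for $a(u)$, or else forces one of the two branches, so that the classification is complete. The reflected statement for the bottom boundary BYBE \eqref{eq:denseBYBEbot} then follows immediately by the reflection symmetry and the swap $\alpha_i\leftrightarrow\delta_i$ noted in the text.
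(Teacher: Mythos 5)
Your proposal follows essentially the same route as the paper: expand the difference of the two sides of \eqref{eq:denseBYBE} in the diagram basis of $\Gho_2$, use the antisymmetric coefficients of the diagrams with four boundary strings to force $b_i(u)=c_i\,b(u)$, separate variables in one further coefficient equation to solve for $a(u)$ up to the constant $\kappa$, and let the remaining coefficients factor through $(c_1c_3-c_2c_4)$ to produce the two branches. The only cosmetic difference is the order in which you extract $a(u)$ versus the branch condition, and exactly which diagrams you attribute each equation to; the substance matches.
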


\begin{proof}
We consider the expression
\begin{align}
\begin{tikzpicture}[baseline={([yshift=-1mm]current bounding box.center)},scale=0.7]
{
\utri{$v$}{(0,0)}
\bulk{$u+v$}{(1,-1)}
\utri{$u$}{(2,0)}
\bulk{$u-v$}{(3,-1)}
\draw[lstring] let \n1 = {sin(45)} in (2.5,-1.5) arc (225:315:\n1);
}
\end{tikzpicture}
\ - \ \begin{tikzpicture}[baseline={([yshift=-1mm]current bounding box.center)},scale=0.7]
{
\bulk{$u-v$}{(0,0)}
\utri{$u$}{(1,1)}
\bulk{$u+v$}{(2,0)}
\utri{$v$}{(3,1)}
\draw[lstring] let \n1 = {sin(45)} in (1.5,-0.5) arc (225:315:\n1);
}
\end{tikzpicture}\;  \label{eq:denseBYBEequiv}
\end{align}
as a linear combination of the basis diagrams, and note that the BYBE \eqref{eq:denseBYBE} holds if and only if the coefficient of each of these diagrams is zero. 

Now, the bulk and boundary face operators in \eqref{eq:denseBYBEequiv} are linear combinations of two squares and five triangles, respectively. This means \eqref{eq:denseBYBEequiv} is a linear combination of $5\times 2\times 5\times 2+2\times 5\times 2\times 5 = 200$ \textit{configurations} of these squares and triangles. However, many of these configurations correspond to the same diagrams, and some of their contributions to the coefficients of these diagrams cancel out. As a result of this, only twenty-four of the thirty possible $\Gho_2$-diagrams have coefficients that are not trivially zero.

We first consider the diagrams with four strings connected to the boundary. Each of these diagrams arises from two possible configurations; for example,
\begin{align}
\begin{tikzpicture}[baseline={([yshift=-1mm]current bounding box.center)},scale=0.6]
{
\draw[dotted] (0,0.5)--(3,0.5);
\draw[very thick] (0,0.5)--(0,-1.5);
\draw[very thick] (3,0.5)--(3,-1.5);
\draw (0,0) to[out=0,in=-90] (0.6,0.5);
\draw (0,-1) to[out=0,in=-90] (1.2,0.5);
\draw (3,-1) to[out=180,in=270] (1.8,0.5);
\draw (3,0) to[out=180,in=270] (2.4,0.5);
\filldraw (0.3,0.5) circle (0.08);
\filldraw (2.1,0.5) circle (0.08);
}
\end{tikzpicture}\; :
\qquad
\begin{tikzpicture}[baseline={([yshift=-1mm]current bounding box.center)},scale=0.6]
{
\bctri{(0,0)}
\esq{(1,-1)}
\bbtri{(2,0)}
\idsq{(3,-1)}
\draw[lstring] let \n1 = {sin(45)} in (2.5,-1.5) arc (225:315:\n1);
\foreach \x in {2.5,3.5}
{\foreach \y in {-0.5,-1.5}
{\fill[lgh] (\x,\y) circle (1.6 pt);}}
\fill[lgh] (1.5,-0.5) circle (1.6 pt);
}
\end{tikzpicture}\; , \qquad
\begin{tikzpicture}[baseline={([yshift=-1mm]current bounding box.center)},scale=0.6]
{
\idsq{(1,-1)}
\bctri{(2,0)}
\esq{(3,-1)}
\bbtri{(4,0)}
\draw[lstring] let \n1 = {sin(45)} in (2.5,-1.5) arc (225:315:\n1);
\foreach \x in {2.5,3.5}
{\foreach \y in {-0.5,-1.5}
{\fill[lgh] (\x,\y) circle (1.6 pt);}}
\fill[lgh] (4.5,-0.5) circle (1.6 pt);
}
\end{tikzpicture}\; .
\end{align}
The coefficients of the diagrams with four strings connected to the boundary thus have the form
\begin{align}
\Big( b_i(u)b_j(v)-b_i(v)b_j(u)\Big)\frac{\sin(u+v)\sin(u-v-\lambda)}{\sin^2(\lambda)} \label{eq:bibj}
\end{align}
with $i,j \in \{1,2,3,4\}$. For these to be zero for all $u,v$, the expression in the larger brackets must be zero. If $b_i\neq 0$ for some $i \in \{1,2,3,4\}$, then for each $j$, we have
\begin{align}
\frac{b_j(v)}{b_i(v)} = \frac{b_j(u)}{b_i(u)}.
\end{align}
Since one side of this equation depends only on $u$ and the other only on $v$, both sides are independent of $u$ and $v$. So each $b_j$ is a scalar multiple of $b_i$. Hence there exists a nonzero function $b$ such that, for each $k \in \{1,2,3,4\}$, we can write
\begin{align}
b_k(u) = c_k b(u) \label{eq:introb}
\end{align}
for some $c_k \in \C$. This also holds if $b_i=0$ for all $i \in \{1,2,3,4\}$, in which case $c_k = 0$ for all $k \in \{1,2,3,4\}$.

Next, we consider the diagrams
\begin{align}
\begin{tikzpicture}[baseline={([yshift=-1mm]current bounding box.center)},scale=0.6]
{
\draw[dotted] (0,0.5)--(3,0.5);
\draw [very thick](0,-1.5) -- (0,0.5);
\draw [very thick](3,-1.5)--(3,0.5);
\draw (0,0) to[out=0,in=-90] (0.8,0.5);
\draw (0,-1) to[out=0,in=-90] (1.6,0.5);
\draw (3,0) arc (90:270:0.5);
}
\end{tikzpicture}
\; , \qquad
\begin{tikzpicture}[baseline={([yshift=-1mm]current bounding box.center)},scale=0.6]
{
\draw[dotted] (0,0.5)--(3,0.5);
\draw [very thick](0,-1.5) -- (0,0.5);
\draw [very thick](3,-1.5)--(3,0.5);
\draw (0,0) to[out=0,in=-90] (0.8,0.5);
\draw (0,-1) to[out=0,in=-90] (1.6,0.5);
\draw (3,0) arc (90:270:0.5);
\filldraw (0.4,0.5) circle (0.08);
\filldraw (1.2,0.5) circle (0.08);
}
\end{tikzpicture}
\; , \qquad
\begin{tikzpicture}[baseline={([yshift=-1mm]current bounding box.center)},scale=0.6]
{
\draw[dotted] (0,0.5)--(3,0.5);
\draw [very thick](0,-1.5) -- (0,0.5);
\draw [very thick](3,-1.5)--(3,0.5);
\draw (0,0) to[out=0,in=-90] (0.8,0.5);
\draw (0,-1) to[out=0,in=-90] (1.6,0.5);
\draw (3,0) arc (90:270:0.5);
\filldraw (0.4,0.5) circle (0.08);
\filldraw (2,0.5) circle (0.08);
}
\end{tikzpicture}
\; , \qquad
\begin{tikzpicture}[baseline={([yshift=-1mm]current bounding box.center)},scale=0.6]
{
\draw[dotted] (0,0.5)--(3,0.5);
\draw [very thick](0,-1.5) -- (0,0.5);
\draw [very thick](3,-1.5)--(3,0.5);
\draw (0,0) to[out=0,in=-90] (0.8,0.5);
\draw (0,-1) to[out=0,in=-90] (1.6,0.5);
\draw (3,0) arc (90:270:0.5);
\filldraw (2,0.5) circle (0.08);
\filldraw (1.2,0.5) circle (0.08);
}
\end{tikzpicture}
\; . \label{eq:densediags}
\end{align}
There are thirteen configurations that contribute to the coefficient of the first diagram. These configurations and their contributions are as follows. Note that the minus sign in the contribution from the first configuration comes from the subtraction in \eqref{eq:denseBYBEequiv}, and the factors of $\beta=2\cos(\lambda)$ and each $\alpha_i$ come from the loops and boundary arcs formed, respectively.
\begin{align}
\begin{tikzpicture}[baseline={([yshift=-1mm]current bounding box.center)},scale=0.6]
{
\idsq{(1,-1)}
\batri{(2,0)}
\esq{(3,-1)}
\aatri{(4,0)}
\draw[lstring] let \n1 = {sin(45)} in (2.5,-1.5) arc (225:315:\n1);
\foreach \x in {2.5,3.5}
{\foreach \y in {-0.5,-1.5}
{\fill[lgh] (\x,\y) circle (1.6 pt);}}
\fill[lgh] (4.5,-0.5) circle (1.6 pt);
}
\end{tikzpicture}
&&
-c_1\, a(v)b(u)\sin(\lambda-u+v)\sin(u+v)
\\[2.5mm]
\begin{tikzpicture}[baseline={([yshift=-1mm]current bounding box.center)},scale=0.6]
{
\batri{(0,0)}
\esq{(1,-1)}
\aatri{(2,0)}
\idsq{(3,-1)}
\draw[lstring] let \n1 = {sin(45)} in (2.5,-1.5) arc (225:315:\n1);
\foreach \x in {2.5,3.5}
{\foreach \y in {-0.5,-1.5}
{\fill[lgh] (\x,\y) circle (1.6 pt);}}
\fill[lgh] (1.5,-0.5) circle (1.6 pt);
}
\end{tikzpicture}
&&
c_1\, a(u)b(v) \sin(u+v)\sin(\lambda-u+v)
\\[2.5mm]
\begin{tikzpicture}[baseline={([yshift=-1mm]current bounding box.center)},scale=0.6]
{
\batri{(0,0)}
\idsq{(1,-1)}
\aatri{(2,0)}
\esq{(3,-1)}
\draw[lstring] let \n1 = {sin(45)} in (2.5,-1.5) arc (225:315:\n1);
\foreach \x in {2.5,3.5}
{\foreach \y in {-0.5,-1.5}
{\fill[lgh] (\x,\y) circle (1.6 pt);}}
\fill[lgh] (1.5,-0.5) circle (1.6 pt);
}
\end{tikzpicture}
&&
c_1\, a(u) b(v) \sin(\lambda-u-v)\sin(u-v)
\\[2.5mm]
\begin{tikzpicture}[baseline={([yshift=-1mm]current bounding box.center)},scale=0.6]
{
\batri{(0,0)}
\esq{(1,-1)}
\aatri{(2,0)}
\esq{(3,-1)}
\draw[lstring] let \n1 = {sin(45)} in (2.5,-1.5) arc (225:315:\n1);
\foreach \x in {2.5,3.5}
{\foreach \y in {-0.5,-1.5}
{\fill[lgh] (\x,\y) circle (1.6 pt);}}
\fill[lgh] (1.5,-0.5) circle (1.6 pt);
}
\end{tikzpicture}
&&
2\cos(\lambda)\, c_1\,   a(u)b(v)\sin(u+v)\sin(u-v)
\\[2.5mm]
\begin{tikzpicture}[baseline={([yshift=-1mm]current bounding box.center)},scale=0.6]
{
\aatri{(0,0)}
\idsq{(1,-1)}
\batri{(2,0)}
\esq{(3,-1)}
\draw[lstring] let \n1 = {sin(45)} in (2.5,-1.5) arc (225:315:\n1);
\foreach \x in {2.5,3.5}
{\foreach \y in {-0.5,-1.5}
{\fill[lgh] (\x,\y) circle (1.6 pt);}}
\fill[lgh] (1.5,-0.5) circle (1.6 pt);
}
\end{tikzpicture}
&&
c_1\, a(v)b(u)\sin(\lambda-u-v)\sin(u-v)
\\[2.5mm]
\begin{tikzpicture}[baseline={([yshift=-1mm]current bounding box.center)},scale=0.6]
{
\batri{(0,0)}
\esq{(1,-1)}
\batri{(2,0)}
\esq{(3,-1)}
\draw[lstring] let \n1 = {sin(45)} in (2.5,-1.5) arc (225:315:\n1);
\foreach \x in {2.5,3.5}
{\foreach \y in {-0.5,-1.5}
{\fill[lgh] (\x,\y) circle (1.6 pt);}}
\fill[lgh] (1.5,-0.5) circle (1.6 pt);
}
\end{tikzpicture}
&&
\alpha_1\, c_1^2\, b(u)b(v)\sin(u+v)\sin(u-v)
\\[2.5mm]
\begin{tikzpicture}[baseline={([yshift=-1mm]current bounding box.center)},scale=0.6]
{
\batri{(0,0)}
\esq{(1,-1)}
\bbtri{(2,0)}
\esq{(3,-1)}
\draw[lstring] let \n1 = {sin(45)} in (2.5,-1.5) arc (225:315:\n1);
\foreach \x in {2.5,3.5}
{\foreach \y in {-0.5,-1.5}
{\fill[lgh] (\x,\y) circle (1.6 pt);}}
\fill[lgh] (1.5,-0.5) circle (1.6 pt);
}
\end{tikzpicture}
&&
\alpha_3\, c_1c_2\, b(u)b(v) \sin(u+v)\sin(u-v)
\\[2.5mm]
\begin{tikzpicture}[baseline={([yshift=-1mm]current bounding box.center)},scale=0.6]
{
\batri{(0,0)}
\esq{(1,-1)}
\bctri{(2,0)}
\esq{(3,-1)}
\draw[lstring] let \n1 = {sin(45)} in (2.5,-1.5) arc (225:315:\n1);
\foreach \x in {2.5,3.5}
{\foreach \y in {-0.5,-1.5}
{\fill[lgh] (\x,\y) circle (1.6 pt);}}
\fill[lgh] (1.5,-0.5) circle (1.6 pt);
}
\end{tikzpicture}
&&
\alpha_2\, c_1 c_3\, b(u)b(v) \sin(u+v)\sin(u-v)
\\[2.5mm]
\begin{tikzpicture}[baseline={([yshift=-1mm]current bounding box.center)},scale=0.6]
{
\batri{(0,0)}
\esq{(1,-1)}
\bdtri{(2,0)}
\esq{(3,-1)}
\draw[lstring] let \n1 = {sin(45)} in (2.5,-1.5) arc (225:315:\n1);
\foreach \x in {2.5,3.5}
{\foreach \y in {-0.5,-1.5}
{\fill[lgh] (\x,\y) circle (1.6 pt);}}
\fill[lgh] (1.5,-0.5) circle (1.6 pt);
}
\end{tikzpicture}
&&
\alpha_3\, c_1 c_4\, b(u)b(v) \sin(u+v)\sin(u-v)
\\[2.5mm]
\begin{tikzpicture}[baseline={([yshift=-1mm]current bounding box.center)},scale=0.6]
{
\batri{(0,0)}
\idsq{(1,-1)}
\batri{(2,0)}
\esq{(3,-1)}
\draw[lstring] let \n1 = {sin(45)} in (2.5,-1.5) arc (225:315:\n1);
\foreach \x in {2.5,3.5}
{\foreach \y in {-0.5,-1.5}
{\fill[lgh] (\x,\y) circle (1.6 pt);}}
\fill[lgh] (1.5,-0.5) circle (1.6 pt);
}
\end{tikzpicture}
&&
\alpha_2\, c_1^2\, b(u)b(v) \sin(\lambda-u-v) \sin(u-v)
\end{align}
\begin{align}
\begin{tikzpicture}[baseline={([yshift=-1mm]current bounding box.center)},scale=0.6]
{
\batri{(0,0)}
\idsq{(1,-1)}
\bbtri{(2,0)}
\esq{(3,-1)}
\draw[lstring] let \n1 = {sin(45)} in (2.5,-1.5) arc (225:315:\n1);
\foreach \x in {2.5,3.5}
{\foreach \y in {-0.5,-1.5}
{\fill[lgh] (\x,\y) circle (1.6 pt);}}
\fill[lgh] (1.5,-0.5) circle (1.6 pt);
}
\end{tikzpicture}
&&
\alpha_3\, c_1c_2\, b(u)b(v) \sin(\lambda-u-v)\sin(u-v)
\\[2.5mm]
\begin{tikzpicture}[baseline={([yshift=-1mm]current bounding box.center)},scale=0.6]
{
\bdtri{(0,0)}
\idsq{(1,-1)}
\batri{(2,0)}
\esq{(3,-1)}
\draw[lstring] let \n1 = {sin(45)} in (2.5,-1.5) arc (225:315:\n1);
\foreach \x in {2.5,3.5}
{\foreach \y in {-0.5,-1.5}
{\fill[lgh] (\x,\y) circle (1.6 pt);}}
\fill[lgh] (1.5,-0.5) circle (1.6 pt);
}
\end{tikzpicture}
&&
\alpha_3\, c_1c_4\, b(u)b(v) \sin(\lambda-u-v) \sin(u-v)
\\[2.5mm]
\begin{tikzpicture}[baseline={([yshift=-1mm]current bounding box.center)},scale=0.6]
{
\bdtri{(0,0)}
\idsq{(1,-1)}
\bbtri{(2,0)}
\esq{(3,-1)}
\draw[lstring] let \n1 = {sin(45)} in (2.5,-1.5) arc (225:315:\n1);
\foreach \x in {2.5,3.5}
{\foreach \y in {-0.5,-1.5}
{\fill[lgh] (\x,\y) circle (1.6 pt);}}
\fill[lgh] (1.5,-0.5) circle (1.6 pt);
}
\end{tikzpicture}
&& \alpha_1\, c_2c_4\, b(u)b(v) \sin(\lambda-u-v)\sin(u-v)
\end{align}

The sum of these contributions gives the coefficient of the first diagram. Simplifying it and setting it equal to zero, we find
\begin{align}
0 &= c_1\left( a(u)b(v) \sin(2u) - a(v)b(u) \sin(2v) \right)\nonumber \\
&\hspace{7mm}+ b(u)b(v) \csc(\lambda)\sin(u-v) \Big(c_1 (c_1\alpha_1 +c_3\alpha_2 + (c_2+c_4)\alpha_3) \sin (u+v)  \nonumber \\
&\hspace{57mm}-\left(c_2 c_4\alpha_1 +c_1^2\alpha_2+ c_1 (c_2+c_4)\alpha_3\right)\sin (u+v-\lambda) \Big)\, .
\end{align}
If $c_1 \neq 0$, this can be rearranged to find
\begin{align}
&\quad 2\,\frac{a(u)}{b(u)} \sin (\lambda ) \sin (2 u)-\cos(2 u) (c_1\alpha_1+c_3\alpha_2  + (c_2+c_4)\alpha_3) \nonumber \\
&\qquad\quad  +\cos(2 u-\lambda ) \left(\frac{c_2 c_4}{c_1}\alpha_1+c_1\alpha_2 + (c_2+c_4)\alpha_3\!\right) \nonumber \\
&= 2\,\frac{a(v) }{b(v)}\sin (\lambda ) \sin (2 v)-\cos(2 v) (c_1\alpha_1+c_3 \alpha_2 +(c_2+c_4)\alpha_3)\nonumber \\
&\qquad\quad  + \cos(2 v-\lambda ) \left(c_1\alpha_2+\frac{c_2 c_4}{c_1}\alpha_1 + (c_2+c_4)\alpha_3 \!\right),
\end{align}
and therefore 
\begin{align}
\kappa &= 2\,\frac{a(u)}{b(u)} \sin (\lambda ) \sin (2 u)-\cos(2 u) (c_1\alpha_1+c_3\alpha_2  + (c_2+c_4)\alpha_3)  \nonumber \\
&\qquad +\cos(2 u-\lambda ) \left(\frac{c_2 c_4}{c_1}\alpha_1+c_1\alpha_2 + (c_2+c_4)\alpha_3\!\right)
\end{align}
for some $\kappa \in\C$. Then
\begin{align}
a(u) &= \frac{1}{2}\, b(u) \csc(2u)\csc(\lambda) \bigg( \kappa + \cos(2 u) (c_1\alpha_1+c_3\alpha_2  + (c_2+c_4)\alpha_3) \nonumber \\
&\hspace{45mm}-\cos(2 u-\lambda ) \left(\frac{c_2 c_4}{c_1}\alpha_1+c_1\alpha_2 + (c_2+c_4)\alpha_3\right) \!\!\bigg).
\end{align}
Substituting this into the coefficients of the remaining three diagrams from \eqref{eq:densediags}, and setting them equal to zero, we find
\begin{align}
0&= \frac{-1}{c_1}\ (c_1c_3-c_2c_4)(c_2\alpha_1 + c_1 \alpha_3)\ b(u)b(v)\csc^2(\lambda)\sin(u-v)\sin(u+v-\lambda), \\
0&= \frac{-1}{c_1}\ (c_1c_3-c_2c_4)(c_3\alpha_1 - c_1 \alpha_2)\ b(u)b(v)\csc^2(\lambda)\sin(u-v)\sin(u+v-\lambda),\\
0&= \frac{-1}{c_1}\ (c_1c_3-c_2c_4)(c_4\alpha_1 + c_1 \alpha_3)\ b(u)b(v)\csc^2(\lambda)\sin(u-v)\sin(u+v-\lambda).
\end{align}
Hence we must have 
\begin{align}
c_1c_3 &= c_2c_4, \label{eq:c1c3eqc2c4}
\end{align}
or
\begin{align}
c_1&=-\alpha_1\, \rho, & c_2&=c_4 = \alpha_3\, \rho, &c_3 &= -\alpha_2\, \rho, \label{eq:c1c3NEQc2c4}
\end{align}
for some $\rho \in \C$. 

If $c_1=0$, but $c_i \neq 0$ for some $i \in \{2,3,4\}$, then setting the coefficients of all four diagrams from \eqref{eq:densediags} to zero analogously implies that \eqref{eq:c1c3eqc2c4} or \eqref{eq:c1c3NEQc2c4} must hold. If $c_i = 0$ for all $i$, both are true. Applying either of \eqref{eq:c1c3eqc2c4} and \eqref{eq:c1c3NEQc2c4} is sufficient for the coefficients of all diagrams to be zero, and after some rescaling, this yields the two solutions from the theorem.
\end{proof}

\subsubsection{Transfer tangle}\label{sss:densetransfer}
The \textit{transfer tangle} $T(u)$ is constructed from the bulk and boundary face operators as
\begin{align}
    T(u) = \begin{tikzpicture}[baseline={([yshift=-1mm]current bounding box.center)},scale=1]
{
\draw[lstring] (0.5,-1)--(0.5,-0.5);
\draw[lstring] (0.5,-5)--(0.5,-5.5);
\draw[lstring] (1.5,-1)--(1.5,-0.5);
\draw[lstring] (1.5,-5)--(1.5,-5.5);
\facetop{(0,0)}{$\lambda -u$}
\faceop{(0,-1)}{$u$}
\faceop{(1,-1)}{$\lambda - u$}
\faceop{(0,-2)}{$u$}
\faceop{(1,-2)}{$\lambda -u$}
\vdotsq{(0,-3)}
\vdotsq{(1,-3)}
\faceop{(0,-4)}{$u$}
\faceop{(1,-4)}{$\lambda - u$}
\facebot{(0,-6)}{$u$}
}
\end{tikzpicture}\ \; .
\end{align}
Using the YBE, local inversion relation, BYBEs and crossing symmetry, it can be shown that $T(u)T(v) = T(v)T(u)$ for any $u,v \in \C$; see, for example, \cite[\S 2.4]{BehrendPearceOBrien}.

\subsection{Dilute ghost algebra}\label{ss:dilute}
The dilute ghost algebra can be used to describe a dilute lattice loop model with two boundaries. An example of such a lattice is given in Figure \ref{fig:dilutelattice}. The dilute lattice is constructed from nine possible bulk squares
\begin{align}
\begin{tikzpicture}[baseline={([yshift=-1mm]current bounding box.center)},scale=0.75]
{
\lup{(0,0)}
}
\end{tikzpicture}\; ,\quad
\begin{tikzpicture}[baseline={([yshift=-1mm]current bounding box.center)},scale=0.75]
{
\rup{(0,0)}
}
\end{tikzpicture}\; ,\quad
\begin{tikzpicture}[baseline={([yshift=-1mm]current bounding box.center)},scale=0.75]
{
\llup{(0,0)}
}
\end{tikzpicture}\; ,\quad
\begin{tikzpicture}[baseline={([yshift=-1mm]current bounding box.center)},scale=0.75]
{
\rrup{(0,0)}
}
\end{tikzpicture}\; ,\quad
\begin{tikzpicture}[baseline={([yshift=-1mm]current bounding box.center)},scale=0.75]
{
\lld{(0,0)}
}
\end{tikzpicture}\; ,\quad
\begin{tikzpicture}[baseline={([yshift=-1mm]current bounding box.center)},scale=0.75]
{
\rrd{(0,0)}
}
\end{tikzpicture}\; ,\quad
\begin{tikzpicture}[baseline={([yshift=-1mm]current bounding box.center)},scale=0.75]
{
\hh{(0,0)}
}
\end{tikzpicture}\; ,\quad
\begin{tikzpicture}[baseline={([yshift=-1mm]current bounding box.center)},scale=0.75]
{
\vv{(0,0)}
}
\end{tikzpicture}\; ,\quad
\begin{tikzpicture}[baseline={([yshift=-1mm]current bounding box.center)},scale=0.75]
{
\emp{(0,0)}
}
\end{tikzpicture}\; ,
\end{align}
and ten possible boundary triangles for each boundary, 
\begin{align}
\begin{tikzpicture}[baseline={([yshift=-1mm]current bounding box.center)},scale=0.75]
{
\aatop{(0,0)}
}
\end{tikzpicture}
\; , \qquad
\begin{tikzpicture}[baseline={([yshift=-1mm]current bounding box.center)},scale=0.75]
{
\emtop{(0,0)}
}
\end{tikzpicture} 
\; , \qquad
\begin{tikzpicture}[baseline={([yshift=-1mm]current bounding box.center)},scale=0.75]
{
\batop{(0,0)}
}
\end{tikzpicture}
\; , \qquad
\begin{tikzpicture}[baseline={([yshift=-1mm]current bounding box.center)},scale=0.75]
{
\bbtop{(0,0)}
}
\end{tikzpicture}
\; , \qquad
\begin{tikzpicture}[baseline={([yshift=-1mm]current bounding box.center)},scale=0.75]
{
\bctop{(0,0)}
}
\end{tikzpicture}
\; , \nonumber \\[3mm]
\begin{tikzpicture}[baseline={([yshift=-1mm]current bounding box.center)},scale=0.75]
{
\bdtop{(0,0)}
}
\end{tikzpicture}
\; , \qquad
\begin{tikzpicture}[baseline={([yshift=-1mm]current bounding box.center)},scale=0.75]
{
\betop{(0,0)}
}
\end{tikzpicture}
\; , \qquad
\begin{tikzpicture}[baseline={([yshift=-1mm]current bounding box.center)},scale=0.75]
{
\bftop{(0,0)}
}
\end{tikzpicture}
\; , \qquad
\begin{tikzpicture}[baseline={([yshift=-1mm]current bounding box.center)},scale=0.75]
{
\bgtop{(0,0)}
}
\end{tikzpicture}
\; , \qquad
\begin{tikzpicture}[baseline={([yshift=-1mm]current bounding box.center)},scale=0.75]
{
\bhtop{(0,0)}
}
\end{tikzpicture}
\; , \nonumber \\[4mm]
\begin{tikzpicture}[baseline={([yshift=-1mm]current bounding box.center)},scale=0.75]
{
\aabot{(0,0)}
}
\end{tikzpicture}
\; , \qquad
\begin{tikzpicture}[baseline={([yshift=-1mm]current bounding box.center)},scale=0.75]
{
\embot{(0,0)}
}
\end{tikzpicture} 
\; , \qquad
\begin{tikzpicture}[baseline={([yshift=-1mm]current bounding box.center)},scale=0.75]
{
\babot{(0,0)}
}
\end{tikzpicture}
\; , \qquad
\begin{tikzpicture}[baseline={([yshift=-1mm]current bounding box.center)},scale=0.75]
{
\bbbot{(0,0)}
}
\end{tikzpicture}
\; , \qquad
\begin{tikzpicture}[baseline={([yshift=-1mm]current bounding box.center)},scale=0.75]
{
\bcbot{(0,0)}
}
\end{tikzpicture}
\; , \nonumber \\[3mm]
\begin{tikzpicture}[baseline={([yshift=-1mm]current bounding box.center)},scale=0.75]
{
\bdbot{(0,0)}
}
\end{tikzpicture}
\; , \qquad
\begin{tikzpicture}[baseline={([yshift=-1mm]current bounding box.center)},scale=0.75]
{
\bebot{(0,0)}
}
\end{tikzpicture}
\; , \qquad
\begin{tikzpicture}[baseline={([yshift=-1mm]current bounding box.center)},scale=0.75]
{
\bfbot{(0,0)}
}
\end{tikzpicture}
\; , \qquad
\begin{tikzpicture}[baseline={([yshift=-1mm]current bounding box.center)},scale=0.75]
{
\bgbot{(0,0)}
}
\end{tikzpicture}
\; , \qquad
\begin{tikzpicture}[baseline={([yshift=-1mm]current bounding box.center)},scale=0.75]
{
\bhbot{(0,0)}
}
\end{tikzpicture}\; .
\end{align}

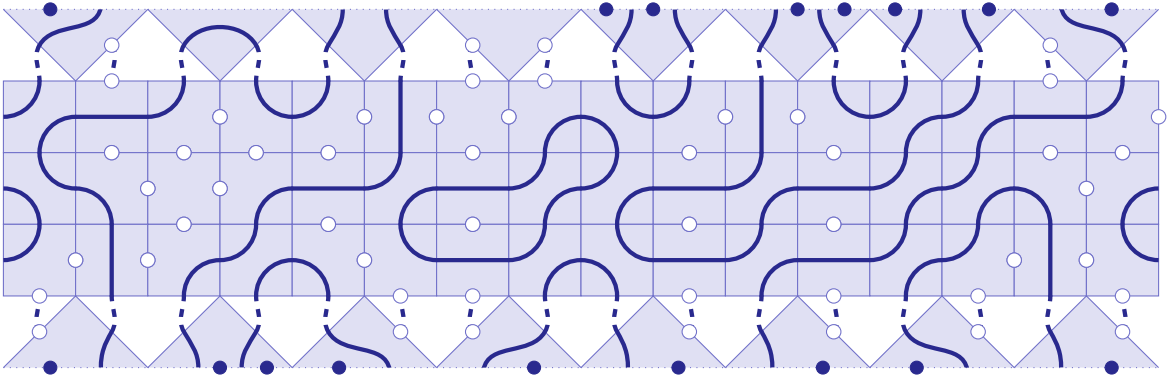
\begin{figure}
\centering
\begin{align*}
\begin{tikzpicture}[baseline={([yshift=-1mm]current bounding box.center)},scale=0.95]
{
\betop{(0,0)}
\aatop{(2,0)}
\batop{(4,0)}
\emtop{(6,0)}
\bbtop{(8,0)}
\bdtop{(10,0)}
\bctop{(12,0)}
\bgtop{(14,0)}
\bgbot{(6,-5)}
\bebot{(4,-5)}
\bfbot{(10,-5)}
\bhbot{(0,-5)}
\bdbot{(2,-5)}
\bfbot{(8,-5)}
\bebot{(12,-5)}
\bfbot{(14,-5)}
\lup{(0,-1)}
\hh{(1,-1)}
\llup{(2,-1)}
\rrup{(3,-1)}
\llup{(4,-1)}
\vv{(5,-1)}
\emp{(6,-1)}
\rrd{(7,-1)}
\rup{(8,-1)}
\llup{(9,-1)}
\vv{(10,-1)}
\rrup{(11,-1)}
\lup{(12,-1)}
\lup{(13,-1)}
\hh{(14,-1)}
\llup{(15,-1)}
\rup{(0,-2)}
\lld{(1,-2)}
\emp{(2,-2)}
\rrd{(3,-2)}
\hh{(4,-2)}
\lup{(5,-2)}
\hh{(6,-2)}
\lup{(7,-2)}
\lup{(8,-2)}
\hh{(9,-2)}
\lup{(10,-2)}
\hh{(11,-2)}
\lup{(12,-2)}
\lup{(13,-2)}
\lld{(14,-2)}
\rrd{(15,-2)}
\llup{(0,-3)}
\vv{(1,-3)}
\rrd{(2,-3)}
\lup{(3,-3)}
\lld{(4,-3)}
\rrup{(5,-3)}
\hh{(6,-3)}
\lup{(7,-3)}
\rup{(8,-3)}
\hh{(9,-3)}
\lup{(10,-3)}
\hh{(11,-3)}
\lup{(12,-3)}
\llup{(13,-3)}
\vv{(14,-3)}
\rrup{(15,-3)}
\begin{pgfonlayer}{main}
\foreach \x in {0.5,2.5,4.5,6.5,8.5,10.5,12.5,14.5}
{\draw[lstring,dashed,dash phase=1pt] (\x,-1) to[out=90,in=240] (\x,-0.5);
\draw[lstring,dashed,dash phase=1pt] (\x,-4) to[out=-90,in=120] (\x,-4.5);
\foreach \y in {-0.5,-1,-2,-3,-4,-4.5}
{\fill[lgh] (\x,\y) circle (0.8pt);}}
\foreach \xx in {1.5,3.5,5.5,7.5,9.5,11.5,13.5,15.5}
{\draw[lstring,dashed,dash phase=1pt] (\xx,-1) to[out=90,in=300] (\xx,-0.5);
\draw[lstring,dashed, dash phase=1pt] (\xx,-4) to[out=-90,in=60] (\xx,-4.5);
\foreach \yy in {-0.5,-1,-2,-3,-4,-4.5}
{\fill[lgh] (\xx,\yy) circle (0.8pt);}}
\foreach \w in {1,2,3,4,5,6,7,8,9,10,11,12,13,14,15}
{\foreach \z in {-1.5,-2.5,-3.5}
{\fill[lgh] (\w,\z) circle (0.8 pt);}}
\end{pgfonlayer}
}
\end{tikzpicture}
\end{align*}
\vspace*{-5mm}

\caption{Dilute lattice that can be described using the dilute ghost algebra $\dGh_3$.}\label{fig:dilutelattice}
\end{figure}

\subsubsection{Bulk face operator}\label{sss:dilutebulk}
The bulk squares of the dilute ghost lattice model are the same as those in the dilute Temperley-Lieb lattice model, so we can use an existing face operator from that model. We use a face operator due to Nienhuis \cite{Nienhuis} and Izergin and Korepin \cite{IzerginKorepin}, given by
\begin{align}
\begin{tikzpicture}[baseline={([yshift=-1mm]current bounding box.center)},scale=0.7]
{
\bulk{$u$}{(0,0)}
}
\end{tikzpicture}
\ &=\  
w_1(u)
\begin{tikzpicture}[baseline={([yshift=-1mm]current bounding box.center)},scale=0.7]
{
\filldraw [lsq] (0,0)--(1,1)--(2,0)--(1,-1)--cycle;
\ld{(0.5,-0.5)}
\ld{(1.5,-0.5)}
\ld{(0.5,0.5)}
\ld{(1.5,0.5)}
}
\end{tikzpicture}
+ w_2(u)\left(\begin{tikzpicture}[baseline={([yshift=-1mm]current bounding box.center)},scale=0.7]
{
\filldraw[lsq] (0,0)--(1,1)--(2,0)--(1,-1)--cycle;
\draw[lstring] let \n2 = {sin(45)} in (0.5,-0.5) arc (135:45:\n2);
\ld{(0.5,0.5)}
\ld{(1.5,0.5)}
}
\end{tikzpicture}
+
\begin{tikzpicture}[baseline={([yshift=-1mm]current bounding box.center)},scale=0.7]
{
\filldraw[lsq] (0,0)--(1,1)--(2,0)--(1,-1)--cycle;
\draw[lstring] let \n1 = {sin(45)} in (0.5,0.5) arc (225:315:\n1);
\ld{(0.5,-0.5)}
\ld{(1.5,-0.5)}
}
\end{tikzpicture}\right)
+ w_3(u)\left(\begin{tikzpicture}[baseline={([yshift=-1mm]current bounding box.center)},scale=0.7]
{
\filldraw[lsq] (0,0)--(1,1)--(2,0)--(1,-1)--cycle;
\draw[lstring] let \n2 = {sin(45)} in (1.5,0.5) arc (135:225:\n2);
\ld{(0.5,-0.5)}
\ld{(0.5,0.5)}
}
\end{tikzpicture}
+
\begin{tikzpicture}[baseline={([yshift=-1mm]current bounding box.center)},scale=0.7]
{
\filldraw[lsq] (0,0)--(1,1)--(2,0)--(1,-1)--cycle;
\draw[lstring] let \n1 = {sin(45)} in (0.5,0.5) arc (45:-45:\n1);
\ld{(1.5,-0.5)}
\ld{(1.5,0.5)}
}
\end{tikzpicture}\right) \nonumber
\\[3mm]
&\quad + w_4(u)\left(\begin{tikzpicture}[baseline={([yshift=-1mm]current bounding box.center)},scale=0.7]
{
\filldraw[lsq] (0,0)--(1,1)--(2,0)--(1,-1)--cycle;
\draw[lstring] (0.5,-0.5)--(1.5,0.5);
\ld{(1.5,-0.5)}
\ld{(0.5,0.5)}
}
\end{tikzpicture}
+
\begin{tikzpicture}[baseline={([yshift=-1mm]current bounding box.center)},scale=0.7]
{
\filldraw[lsq] (0,0)--(1,1)--(2,0)--(1,-1)--cycle;
\draw[lstring] (0.5,0.5)--(1.5,-0.5);
\ld{(0.5,-0.5)}
\ld{(1.5,0.5)}
}
\end{tikzpicture}\right)
+ w_5(u)\;
\begin{tikzpicture}[baseline={([yshift=-1mm]current bounding box.center)},scale=0.7]
{
\filldraw[lsq] (0,0)--(1,1)--(2,0)--(1,-1)--cycle;
\draw[lstring] let \n1 = {sin(45)} in (0.5,0.5) arc (225:315:\n1);
\draw[lstring] let \n2 = {sin(45)} in (0.5,-0.5) arc (135:45:\n2);
}
\end{tikzpicture}
+w_6(u)\;
\begin{tikzpicture}[baseline={([yshift=-1mm]current bounding box.center)},scale=0.7]
{
\filldraw[lsq] (0,0)--(1,1)--(2,0)--(1,-1)--cycle;
\draw[lstring] let \n1 = {sin(45)} in (0.5,0.5) arc (45:-45:\n1);
\draw[lstring] let \n2 = {sin(45)} in (1.5,0.5) arc (135:225:\n2);
}
\end{tikzpicture}\; , \label{eq:dilutesq}
\end{align}
where
\begin{align}
w_1(u) &= \sin(2\phi) \sin(3\phi) + \sin (u) \sin(3\phi-u),\\
w_2(u) &= \sin(2\phi) \sin(3\phi-u),\\
w_3(u) &= \sin(2\phi) \sin(u),\\
w_4(u) &= \sin(u) \sin(3\phi-u),\\
w_5(u) &= \sin(2\phi-u)\sin(3\phi-u),\\
w_6(u) &= -\sin (u) \sin(\phi-u),
\end{align}
and $\beta = -2\cos(4\phi)$.

These bulk face operators satisfy the YBE
\begin{align}
\begin{tikzpicture}[baseline={([yshift=-1mm]current bounding box.center)},scale=0.75]
{
\draw[lstring,dashed] (1.5,0.5)--(2,0.5);
\draw[lstring,dashed] (1.5,-0.5)--(2,-0.5);
\filldraw [lsq] (0,0)--(1,1)--(2,0)--(1,-1)--cycle;
\draw (1,0) node[anchor=center]{\footnotesize $u-v$};
\filldraw[richblue!70] (0.2,0.2)--(0.2,-0.2)--(0,0)--cycle;
\filldraw[lsq] (2,1) rectangle (3,0);
\fill[richblue!70] (2,0.3)--(2.3,0)--(2,0)--cycle;
\draw (2.5,0.5) node[anchor=center]{$v$};
\filldraw[lsq] (2,0) rectangle (3,-1);
\fill[richblue!70] (2,-0.7)--(2.3,-1)--(2,-1)--cycle;
\draw (2.5,-0.5) node[anchor=center]{$u$};
}
\end{tikzpicture}
\ =\ 
\begin{tikzpicture}[baseline={([yshift=-1mm]current bounding box.center)},scale=0.75]
{
\draw[lstring,dashed] (0,0.5)--(0.5,0.5);
\draw[lstring,dashed] (0,-0.5)--(0.5,-0.5);
\filldraw [lsq] (0,0)--(1,1)--(2,0)--(1,-1)--cycle;
\draw (1,0) node[anchor=center]{\footnotesize $u-v$};
\filldraw[richblue!70] (0.2,0.2)--(0.2,-0.2)--(0,0)--cycle;
\filldraw[lsq] (-1,0) rectangle (0,1);
\fill[richblue!70] (-1,0.3)--(-0.7,0)--(-1,0)--cycle;
\draw (-0.5,0.5) node[anchor=center]{$u$};
\filldraw[lsq] (-1,-1) rectangle (0,0);
\fill[richblue!70] (-1,-0.7)--(-0.7,-1)--(-1,-1)--cycle;
\draw (-0.5,-0.5) node[anchor=center]{$v$};
}
\end{tikzpicture} \; ,\label{eq:diluteYBE}
\end{align}
where we note that this has dashed strings instead of the solid strings used for the ghost algebra lattice model.

We also have crossing symmetry
\begin{align}
\begin{tikzpicture}[baseline={([yshift=-1mm]current bounding box.center)},scale=0.75]
{
\bulk{$u$}{(0,0)}
}
\end{tikzpicture}
\; = \;
\begin{tikzpicture}[baseline={([yshift=-1mm]current bounding box.center)},scale=0.75,rotate=90]
{
\bulk{$3\phi - u$}{(0,0)}
}
\end{tikzpicture}
\; = \;
\begin{tikzpicture}[baseline={([yshift=-1mm]current bounding box.center)},scale=0.75, rotate=180]
{
\bulk{$u$}{(0,0)}
}
\end{tikzpicture}
\; = \;
\begin{tikzpicture}[baseline={([yshift=-1mm]current bounding box.center)},scale=0.75,rotate=270]
{
\bulk{$3\phi-u$}{(0,0)}
}
\end{tikzpicture}\; , \label{eq:dilutecrossing}
\end{align}
and the local inversion relation
\begin{align}
\begin{tikzpicture}[baseline={([yshift=-1mm]current bounding box.center)},scale=0.75]
{
\bulk{$u$}{(0,0)}
\bulk{$-u$}{(2,0)}
\draw[lstring,dashed,dash phase = 1.5 pt] (1.5,0.5) to[out=45,in=135] (2.5,0.5);
\draw[lstring,dashed, dash phase = 1.5pt] (1.5,-0.5) to[out=-45,in=-135] (2.5,-0.5);
}
\end{tikzpicture}
\; =\,
f(u)\;
\begin{tikzpicture}[baseline={([yshift=-1mm]current bounding box.center)},scale=0.75]
{
\filldraw[lsq] (0,0)--(1,1)--(2,0)--(1,-1)--cycle;
\draw[lstring,dashed,dash phase=1.5pt] (0.5,0.5) to[out=-45,in=-135] (1.5,0.5);
\draw[lstring,dashed,dash phase = 1.5pt] (0.5,-0.5) to[out=45,in=135] (1.5,-0.5);
}
\end{tikzpicture}
\; , \qquad f \neq 0. \label{eq:diluteLIR}
\end{align}

\subsubsection{Boundary face operators}\label{sss:dilutebdy}
In the dilute ghost lattice loop model, the BYBE for the top boundary is
\begin{align}
\begin{tikzpicture}[baseline={([yshift=-1mm]current bounding box.center)},scale=0.7]
{
\bulk{$u-v$}{(0,0)}
\utri{$u$}{(1,1)}
\bulk{$u+v$}{(2,0)}
\utri{$v$}{(3,1)}
\draw[lstring,dashed,dash phase = -0.5pt] let \n1 = {sin(45)} in (1.5,-0.5) arc (225:315:\n1);
}
\end{tikzpicture}
\ = \ 
\begin{tikzpicture}[baseline={([yshift=-1mm]current bounding box.center)},scale=0.7]
{
\utri{$v$}{(0,0)}
\bulk{$u+v$}{(1,-1)}
\utri{$u$}{(2,0)}
\bulk{$u-v$}{(3,-1)}
\draw[lstring,dashed, dash phase =-0.5pt] let \n1 = {sin(45)} in (2.5,-1.5) arc (225:315:\n1);
}
\end{tikzpicture}\; , \label{eq:diluteBYBE}
\end{align}
and the BYBE for the bottom boundary is
\begin{align}
\begin{tikzpicture}[baseline={([yshift=-1mm]current bounding box.center)},xscale=0.7,yscale=-0.7]
{
\bulk{$u-v$}{(0,0)}
\utri{$u$}{(1,1)}
\bulk{$u+v$}{(2,0)}
\utri{$v$}{(3,1)}
\draw[lstring,dashed,dash phase = -0.5pt] let \n1 = {sin(45)} in (1.5,-0.5) arc (225:315:\n1);
}
\end{tikzpicture}
\ = \ 
\begin{tikzpicture}[baseline={([yshift=-1mm]current bounding box.center)},xscale=0.7,yscale=-0.7]
{
\utri{$v$}{(0,0)}
\bulk{$u+v$}{(1,-1)}
\utri{$u$}{(2,0)}
\bulk{$u-v$}{(3,-1)}
\draw[lstring,dashed, dash phase =-0.5pt] let \n1 = {sin(45)} in (2.5,-1.5) arc (225:315:\n1);
}
\end{tikzpicture}\; . \label{eq:diluteBYBEbot}
\end{align}
These are the same as the BYBEs \eqref{eq:denseBYBE} and \eqref{eq:denseBYBEbot} for the ghost algebra, except that the squares are connected by dashed (identity) strings, instead of solid strings. As in the fully-packed ghost model, we first seek solutions to \eqref{eq:diluteBYBE} for the top boundary face operator. By reflecting these solutions about a horizontal line and swapping parameters $\alpha_i \leftrightarrow \delta_i$, we can find the solutions to \eqref{eq:diluteBYBEbot} for the bottom boundary face operator. The BYBEs are again homogeneous, so our solutions for the boundary face operator with parameter $u$ will have overall scaling factors that are functions of $u$.

Similarly, the top boundary BYBE is also \eqref{eq:diluteBYBE} in the one-boundary dilute ghost lattice model. The top boundary BYBE is actually contained in $\dGho_2$, so we will solve it within this algebra, using diagrams without a bottom boundary. Two solutions for the dilute bottom boundary BYBE without boundary connections are given by Yung and Batchelor in \cite{BatchelorYung}, so these can be used with the top boundary face operators found in Theorem \ref{thm:diluteBYBE} to construct commuting transfer tangles for the one-boundary dilute ghost lattice model, and thus this model is also Yang-Baxter integrable. The construction is as in Section \ref{sss:dilutetransfer}, with the appropriate bottom boundary face operators.

Let
\begin{align}
\begin{tikzpicture}[baseline={([yshift=-1mm]current bounding box.center)},scale=0.7]
{
\utri{$u$}{(0,0)}
}
\end{tikzpicture}
\ &=\  
a_1(u)
\begin{tikzpicture}[baseline={([yshift=-1mm]current bounding box.center)},scale=0.7]
{
\aatop{(0,0)}
}
\end{tikzpicture}
+a_2(u)
\begin{tikzpicture}[baseline={([yshift=-1mm]current bounding box.center)},scale=0.7]
{
\emtop{(0,0)}
}
\end{tikzpicture} \nonumber
\\[2mm]
&\quad +b_1(u)
\begin{tikzpicture}[baseline={([yshift=-1mm]current bounding box.center)},scale=0.7]
{
\batop{(0,0)}
}
\end{tikzpicture}
+b_2(u)
\begin{tikzpicture}[baseline={([yshift=-1mm]current bounding box.center)},scale=0.7]
{
\bbtop{(0,0)}
}
\end{tikzpicture}
+b_3(u)
\begin{tikzpicture}[baseline={([yshift=-1mm]current bounding box.center)},scale=0.7]
{
\bctop{(0,0)}
}
\end{tikzpicture}
+b_4(u)
\begin{tikzpicture}[baseline={([yshift=-1mm]current bounding box.center)},scale=0.7]
{
\bdtop{(0,0)}
}
\end{tikzpicture} \nonumber
\\[2mm]
&\quad +b_5(u)
\begin{tikzpicture}[baseline={([yshift=-1mm]current bounding box.center)},scale=0.7]
{
\betop{(0,0)}
}
\end{tikzpicture}
+b_6(u)
\begin{tikzpicture}[baseline={([yshift=-1mm]current bounding box.center)},scale=0.7]
{
\bftop{(0,0)}
}
\end{tikzpicture}
+b_7(u)
\begin{tikzpicture}[baseline={([yshift=-1mm]current bounding box.center)},scale=0.7]
{
\bgtop{(0,0)}
}
\end{tikzpicture}
+b_8(u)
\begin{tikzpicture}[baseline={([yshift=-1mm]current bounding box.center)},scale=0.7]
{
\bhtop{(0,0)}
}
\end{tikzpicture}\; , \label{eq:dilutetri}
\end{align}
and, as we did for $\Gh_n$, let 
\begin{align}
A_1 &= \alpha_1^2 + \alpha_2^2 - 2\alpha_3^2, &A_2 &= \alpha_1\alpha_2 - \alpha_3^2.
\end{align}
Define
\begin{align}
P(\sigma,\tau; \theta) := 2\sigma \tau \cos(\theta) + \left(\sigma^2 - \tau^2\right) \sin(\theta). \label{eq:Pdef}
\end{align}

\pagebreak
\begin{theorem}\label{thm:diluteBYBE}
Let the face operator for $\dGh_n$ be given by \eqref{eq:dilutesq}. Then the BYBE \eqref{eq:diluteBYBE} admits five solutions for the boundary face operator \eqref{eq:dilutetri}. \hypertarget{solI}{Solution I} is
\begin{align}
a_1(u)&= \frac{-b(u)}{2\sin(2u)\sin(2\phi)}\bigg(\! \sin\! \left(u-\frac{\phi}{2}\right) \!(\cos(2(u-\phi))-\cos(\phi))\! \left(\mu \nu (\alpha_1+\alpha_2)+\left(\mu^2+\nu^2\right)\alpha_3\right)\nonumber \\
&\hspace{37mm}+4\cos(\phi)\sin\! \left(u-\frac{3 \phi}{2}\right)\!\!\bigg),\\[1mm]
a_2(u)&= \frac{b(u)}{2\sin(2u)\sin(2\phi)}  \bigg(\! \sin\! \left(u+\frac{\phi}{2}\right) \!(\cos(2(u-\phi))-\cos(\phi))\! \left(\mu \nu (\alpha_1+\alpha_2)+\left(\mu^2+\nu^2\right)\alpha_3\right)\nonumber \\
&\hspace{37mm}+4\cos(\phi)\sin\! \left(u+\frac{3 \phi}{2}\right)\!\!\bigg), \\[1mm]
b_1(u)&= b_3(u) = \mu\nu\,  b(u)\sin\!\left(u-\frac{\phi}{2}\right), \\
b_2(u)&= \nu^2\,  b(u)\sin\!\left(u-\frac{\phi}{2}\right), \\
b_4(u)&= \mu^2\,  b(u)\sin\!\left(u-\frac{\phi}{2}\right), \\
b_5(u)&= b_8(u) = \nu\, b(u) , \\
b_6(u)&= b_7(u) = \mu\, b(u) ,
\end{align}
where $\mu, \nu \in \C$, and $b: \C \to \C$. \hypertarget{solII}{Solution II} is
\begin{align}
a_1(u)&= \frac{-b(u)}{2\sin(2u)\sin(2\phi)}\bigg(\! \cos\!\left(u-\frac{\phi}{2}\right) \!(\cos(2(u-\phi)) + \cos(\phi)) \!\left( \mu \nu (\alpha_1+\alpha_2)+\left(\mu^2+\nu^2\right)\alpha_3\right) \nonumber \\
&\hspace{37mm}- 4\cos(\phi) \cos\!\left(u-\frac{3\phi}{2}\right)\!\!\bigg),\\[1mm]
a_2(u)&= \frac{-b(u)}{2\sin(2u)\sin(2\phi)}  \bigg(\! \cos\!\left(u+\frac{\phi}{2}\right)\! (\cos(2(u-\phi)) + \cos(\phi)) \!\left(\mu \nu (\alpha_1+\alpha_2)+\left(\mu^2+\nu^2\right)\alpha_3\right)\nonumber \\
&\hspace{37mm}- 4\cos(\phi) \cos\!\left(u+\frac{3\phi}{2}\right)\!\!\bigg) , \\[1mm]
b_1(u)&= b_3(u) = \mu\nu\,  b(u)\cos\!\left(u-\frac{\phi}{2}\right), \\
b_2(u)&= \nu^2\,  b(u)\cos\!\left(u-\frac{\phi}{2}\right), \\
b_4(u)&= \mu^2\,  b(u)\cos\!\left(\!u-\frac{\phi}{2}\right), \\
b_5(u)&= -b_8(u) = \nu\, b(u) , \\
b_6(u)&= -b_7(u) = \mu\, b(u),
\end{align}
where $\mu, \nu \in \C$, and $b: \C \to \C$. \hypertarget{solIII}{Solution III} is
\begin{align}
a_1(u) &= \frac{b(u)}{\sin(2u)} A_1 \big(\sigma\cos(u)+\tau\sin(u)\big)\big(\sigma\cos(u-\phi)+\tau\sin(u-\phi)\big),\\
a_2(u) &= \frac{b(u)}{\sin(2u)} A_1 \big(\sigma\cos(u)-\tau\sin(u)\big) \big(\sigma\cos(u-\phi)+\tau\sin(u-\phi)\big),\\
b_1(u) &= \alpha_1\, b(u)\, P(\sigma,\tau;-3\phi), \\
b_2(u) &= b_4(u) =  -\alpha_3\, b(u)\, P(\sigma,\tau; -3\phi), \\
b_3(u) &= \alpha_2\, b(u)\, P(\sigma,\tau; -3\phi), \\
b_5(u) &= b_6(u) = b_7(u) = b_8(u) = 0  ,
\end{align}
where $b:\C\to\C$, and $\sigma,\tau \in \C$ satisfy
\begin{align}
0 &= A_1\, P(\sigma,\tau;\phi) + A_2\, P(\sigma,\tau; -3\phi).
\end{align}
\hypertarget{solIV}{Solution IV} is 
\begin{align}
a_1(u) &= \frac{b(u)}{\sin(2u)}(\alpha_1-\alpha_2)\left(\mu \nu(\alpha_1+\alpha_2)+\left(\mu ^2+\nu^2\right)\alpha_3\right)\left(\sigma \cos(u) + \tau\sin(u)\right) \nonumber \\
&\hspace{20mm} \times \left(\sigma \cos(u-\phi)+\tau\sin(u-\phi)\right), \\
a_2(u) &=\frac{b(u)}{\sin(2u)}(\alpha_1-\alpha_2)\left(\mu \nu(\alpha_1+\alpha_2)+\left(\mu ^2+\nu^2\right)\alpha_3\right)\left(\sigma \cos(u) - \tau\sin(u)\right)\nonumber \\
&\hspace{20mm} \times \left(\sigma \cos(u-\phi)+\tau\sin(u-\phi)\right), \\
b_1(u) &= \mu\, b(u) \Big((\nu\alpha_2 + \mu\alpha_3)P(\sigma,\tau;\phi) + (\nu\alpha_1 + \mu \alpha_3)P(\sigma,\tau;-3\phi) \Big), \\
b_2(u) &= \nu\, b(u) \Big((\nu\alpha_2 + \mu \alpha_3)P(\sigma,\tau;\phi) + (\nu\alpha_1 + \mu \alpha_3)P(\sigma,\tau;-3\phi) \Big), \\
b_3(u) &= -\nu\, b(u) \Big((\mu \alpha_1 + \nu\alpha_3) P(\sigma,\tau; \phi) + (\mu \alpha_2+\nu\alpha_3)P(\sigma,\tau; -3\phi)\Big), \\
b_4(u) &= -\mu\, b(u) \Big((\mu \alpha_1 + \nu\alpha_3)P(\sigma,\tau; \phi)+ (\mu \alpha_2+\nu\alpha_3) P(\sigma,\tau; -3\phi)\Big) , \\
b_5(u) &= b_6(u) = b_7(u) = b_8(u) = 0,
\end{align}
where $\mu,\nu, \sigma, \tau \in \C$ and $b:\C\to\C$. \hypertarget{solV}{Solution V} is
\begin{align}
a_1(u) &= \frac{b(u)}{4\sin(2u)\cos^2(2\phi)} (\alpha_1-\alpha_2) (\sigma \cos(u)+\tau \sin(u)) (\sigma \cos(u-\phi)+\tau \sin(u-\phi)) \nonumber \\
&\hspace{10mm} \times \big((c_2\alpha_1 + c_3\alpha_3)\, P(\sigma,\tau;-3\phi) +(c_2 \alpha_2 + c_3\alpha_3)\, P(\sigma,\tau;\phi) \big), \\
a_2(u)&= \frac{b(u)}{4\sin(2u)\cos^2(2\phi)} (\alpha_1-\alpha_2) (\sigma \cos(u)-\tau \sin(u)) (\sigma \cos(u-\phi)+\tau \sin(u-\phi)) \nonumber \\
&\hspace{10mm} \times \big((c_2\alpha_1 + c_3\alpha_3)\, P(\sigma,\tau;-3\phi) + (c_2\alpha_2 + c_3\alpha_3)\, P(\sigma,\tau;\phi) \big), \\
b_1(u)&= \frac{c_2\, b(u)}{2\cos(2\phi)} P(\sigma,\tau; -\phi) \Big(\alpha_1\, P(\sigma,\tau;-3\phi)+\alpha_2\, P(\sigma,\tau;\phi) \Big), \\
b_2(u)&= -c_2\, \alpha_3\, b(u) \, P(\sigma,\tau; -\phi)^2,\\
b_3(u)&= -c_3\, \alpha_3\, b(u)\,  P(\sigma,\tau; -\phi)^2, \\
b_4(u)&= \frac{c_3\, b(u)}{2\cos(2\phi)} P(\sigma,\tau;-\phi) \Big(\alpha_1\, P(\sigma,\tau;-3\phi) +\alpha_2\, P(\sigma,\tau;\phi)\Big), \\
b_5(u) &= b_6(u) = b_7(u) = b_8(u) = 0,
\end{align}
where $b:\C\to\C$, $c_2,c_3 \in \C$ and $\sigma,\tau \in \C$ satisfy
\begin{align}
0&= A_1\, P(\sigma,\tau;\phi)\, P(\sigma,\tau;-3\phi) + A_2\, \Big(P(\sigma,\tau;\phi)^2 + P(\sigma,\tau;-3\phi)^2\Big).
\end{align}
\end{theorem}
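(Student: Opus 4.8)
The plan is to follow the same overall strategy as the proof of Theorem~\ref{thm:denseBYBE}, adapted to the much larger dilute boundary face operator \eqref{eq:dilutetri}. First I would substitute the bulk face operator \eqref{eq:dilutesq} and the general triangle \eqref{eq:dilutetri} into both sides of the top boundary BYBE \eqref{eq:diluteBYBE}, form their difference, and expand it as a linear combination of $\dGho_2$-diagrams. Since distinct basis diagrams are linearly independent, the BYBE holds if and only if every coefficient vanishes, giving a system of functional equations in $u$ and $v$ for the ten unknown functions $a_1,a_2,b_1,\dots,b_8$; there are $110$ of these once the trivially zero coefficients are discarded. Because the BYBEs are homogeneous, each solution carries an undetermined overall factor $b(u)$, so I would aim to express the remaining nine functions as scalar multiples of $b(u)$ times fixed functions of $u$.

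Next I would stratify the diagrams by the number of string endpoints meeting the top boundary. The richest diagrams, with four boundary endpoints, arise only when both triangles are of two-string type, and their coefficients take the separable form $\big(b_i(u)b_j(v)-b_i(v)b_j(u)\big)\,f(u,v)$ with $i,j\in\{1,2,3,4\}$ and $f$ a fixed nonzero function. As in the dense case, these force $b_1,\dots,b_4$ to be common scalar multiples of a single function. The diagrams with three boundary endpoints couple a two-string triangle to a one-string triangle and thereby relate $b_1,\dots,b_4$ to $b_5,\dots,b_8$, while those with two boundary endpoints arising one from each triangle constrain $b_5,\dots,b_8$ among themselves. Solving this first tier introduces the scalar parameters (the $\mu,\nu,\sigma,\tau,c_i$ of the statement) and fixes the $u$-dependence of the $b_k$ up to the overall factor $b(u)$.

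The classification then proceeds by a case split governed by the one-string coefficients $b_5,\dots,b_8$. When these are not all zero, the cross equations tie them to $b_1,\dots,b_4$ through a common pair of scalars $\mu,\nu$, and matching the residual sign data distinguishes two sine/cosine branches, producing Solutions~I and~II. When $b_5=b_6=b_7=b_8=0$, the two-string coefficients are instead pinned to the boundary parameters $\alpha_1,\alpha_2,\alpha_3$, and the remaining freedom splits into Solutions~III,~IV and~V. In every branch I would determine $a_1$ and $a_2$ by a separation-of-variables argument identical in spirit to the dense case: a suitable combination of coefficient equations places all $u$-dependence on one side and all $v$-dependence on the other, so both sides equal a constant, which I then solve for $a_1,a_2$. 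The still-unused equations collapse to polynomial identities among the scalar parameters, which I would package using $P(\sigma,\tau;\theta)$ from \eqref{eq:Pdef} together with $A_1$ and $A_2$; these are exactly the constraint equations recorded in Solutions~III and~V.

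I expect the main obstacle to be the bookkeeping rather than any single conceptual difficulty: correctly enumerating the $\dGho_2$-diagrams, reading off all $110$ coefficients, and organising them into the tiers above is laborious and error-prone. The most delicate part is the exhaustive treatment of the degenerate subcases---for instance when one of the scalars $\mu,\nu,\sigma,\tau$ or $c_i$ vanishes, mirroring the $c_1=0$ analysis in the proof of Theorem~\ref{thm:denseBYBE}---since these must be handled separately to guarantee that the list of five solutions is both complete and free of redundancy. A secondary challenge is choosing the parametrisations ($\mu,\nu$ and $\sigma,\tau$ packaged via $P$) that render the solutions in the compact closed form stated in the theorem.
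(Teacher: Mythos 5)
Your proposal follows essentially the same route as the paper's proof: expanding the difference of the two sides of \eqref{eq:diluteBYBE} over the $\dGho_2$-diagram basis, using the separable four-boundary-endpoint coefficients to write $b_1,\dots,b_4$ as multiples of a common $b(u)$, splitting on whether $b_5,\dots,b_8$ vanish (which is exactly the paper's Case 1 versus Case 2, yielding Solutions I--II versus III--V), and determining $a_1,a_2$ by separation of variables. The only caveat is that the completeness of the classification lives entirely in the ``laborious bookkeeping'' you defer --- the paper needs the full binary tree of subcases in Figure \ref{fig:flowchart} (e.g.\ the $P(\sigma,\tau;-3\phi)=0$ and $P(\sigma,\tau;-\phi)=0$ degenerations) to show no further solutions exist --- but your plan correctly identifies this as the remaining work.
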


\begin{proof}
Similar to the proof of Theorem \ref{thm:denseBYBE}, we consider the expression
\begin{align}
\begin{tikzpicture}[baseline={([yshift=-1mm]current bounding box.center)},scale=0.7]
{
\utri{$v$}{(0,0)}
\bulk{$u+v$}{(1,-1)}
\utri{$u$}{(2,0)}
\bulk{$u-v$}{(3,-1)}
\draw[lstring,dashed,dash phase = -0.5pt] let \n1 = {sin(45)} in (2.5,-1.5) arc (225:315:\n1);
}
\end{tikzpicture}
\ - \ \begin{tikzpicture}[baseline={([yshift=-1mm]current bounding box.center)},scale=0.7]
{
\bulk{$u-v$}{(0,0)}
\utri{$u$}{(1,1)}
\bulk{$u+v$}{(2,0)}
\utri{$v$}{(3,1)}
\draw[lstring,dashed,dash phase = -0.5pt] let \n1 = {sin(45)} in (1.5,-0.5) arc (225:315:\n1);
}
\end{tikzpicture}  \label{eq:diluteBYBEequiv}
\end{align}
as a linear combination of the basis diagrams, and note that the BYBE \eqref{eq:diluteBYBE} holds if and only if the coefficient of each basis diagram is zero. The exhaustive search for solutions to the BYBE is more complicated in the dilute ghost algebra than the ghost algebra, so we have separated it into cases. This is summarised in Figure \ref{fig:flowchart}. The configurations of bulk squares and boundary triangles corresponding to each diagram have been omitted from this proof, but are listed in Appendix \ref{app:configs}. Trigonometric simplifications are made frequently and without comment.

\begin{figure}
\hypertarget{flowchart}{}

\vspace{-4mm}
\begin{align*}
\begingroup
\hypersetup{hidelinks}
\begin{tikzpicture}
{
\node[minimum size=0,inner sep=0] (0) at (4,1.5) {};
\node (1) [draw, rectangle] at (0,0.3) {\hyperlink{c1}{$c_5$, $c_6$ not both zero}};
\node (11) [draw, rectangle, fill=red!20] at (-1,-0.7) {\hyperlink{c11}{$\varepsilon=+1$}};
\node (12) [draw, rectangle,fill=blue!80!cyan!20] at (1,-0.7) {\hyperlink{c12}{$\varepsilon=-1$}};
\node (2) [draw, rectangle] at (7,0.3) {\hyperlink{c2}{$c_5=c_6 = 0$}};
\node (21) [draw, rectangle] at (4,-2) {\hyperlink{c21}{$c_1$, $c_2$, $c_3$, $c_4$ not all zero}};
\node (211) [draw, rectangle] at (1.5,-5) {\hyperlink{c211}{$P(\sigma,\tau;-3\phi) \neq 0$}};
\node (2111) [draw, rectangle, fill=green!20] at (-0.5,-8) {\hyperlink{c2111}{$c_1c_3 \neq c_2c_4$}};
\node (2112) [draw, rectangle] at (5.15,-8) {\hyperlink{c2112}{$c_1c_3 = c_2c_4\vphantom{\neq}$}};
\node (21121) [draw, rectangle] at (1.75,-10.5) {\hyperlink{c21121}{$P(\sigma,\tau;-\phi) \neq 0$}};
\node (211211) [draw, rectangle,fill=yellow!30] at (0,-11.8) {\hyperlink{c211211}{$c_3$ coefficient $\neq 0$}};
\node (211212) [draw, rectangle] at (3.5,-11.8) {\hyperlink{c211212}{$c_3$ coefficient $=0$}};
\node (2112121) [draw, rectangle,fill=yellow!30] at (2.2,-12.9) {\hyperlink{c2112121}{$Q(\sigma, \tau) \neq 0$}};
\node (2112122) [draw, rectangle, fill=gray!25] at (4.8,-12.9) {\hyperlink{c2112122}{$Q(\sigma,\tau)=0 \vphantom{\neq}$}};
\node (21122) [draw, rectangle] at (9.95,-10.5) {\hyperlink{c21122}{$P(\sigma,\tau;-\phi) = 0 \vphantom{\neq}$}};
\node (211221) [draw, rectangle, fill=yellow!30] at (8.4,-11.6) {\hyperlink{c211221}{$\sigma = -\tau \tan\!\left(\frac{\phi}{2}\right)$}};
\node (211222) [draw, rectangle,fill=yellow!30] at (11.5,-11.6) {\hyperlink{c211222}{$\sigma = \tau \cot\!\left(\frac{\phi}{2}\right)$}};
\node (212) [draw, rectangle] at (8.35,-5) {\hyperlink{c212}{$P(\sigma,\tau;-3\phi) = 0 \vphantom{\neq}$}};
\node (2121) [draw, rectangle, fill=yellow!30] at (6.7,-6.1) {\hyperlink{c2121}{$\sigma = -\tau \tan\!\left(\frac{3\phi}{2}\right)$}};
\node (2122) [draw, rectangle, fill=yellow!30] at (10,-6.1) {\hyperlink{c2122}{$\sigma = \tau \cot\!\left(\frac{3\phi}{2}\right)$}};
\node (22) [draw, rectangle] at (11.05,-2) {\hyperlink{c22}{$c_1=c_2=c_3=c_4=0$}};
\node (221) [draw,rectangle, fill=red!20] at (9.4,-3) {\hyperlink{c221}{$\sigma = -\tau \tan\!\left(\frac{3\phi}{2}\right)$}};
\node (222) [draw, rectangle, fill=blue!80!cyan!20] at (12.7,-3) {\hyperlink{c222}{$\sigma = \tau \cot\!\left(\frac{3\phi}{2}\right)$}};
\node (I) [fill=red!20, anchor=west] at (11.7,1.5) {\makebox[20mm][l]{Solution I}};
\node (II) [fill=blue!80!cyan!20, anchor=west] at (11.7,1) {\makebox[20mm][l]{Solution II}};
\node (III) [fill=green!20, anchor=west] at (11.7,0.5) {\makebox[20mm][l]{Solution III}};
\node (IV) [fill=yellow!30, anchor=west] at (11.7,0) {\makebox[20mm][l]{Solution IV}};
\node (V) [fill=gray!25, anchor=west] at (11.7,-0.5) {\makebox[20mm][l]{Solution V}};
\draw (0) to (1);
\draw (0) to (2);
\draw (1) to (11);            
\draw (1) to (12);
\draw (2) to (21);
\draw (2) to (22);
\draw (21) to (211);
\draw (21) to (212);
\draw (22) to (221);
\draw (22) to (222);
\draw (211) to (2111);
\draw (211) to (2112);
\draw (212) to (2121);
\draw (212) to (2122);
\draw (2112) to (21121);
\draw (2112) to (21122);
\draw (21121) to (211211);
\draw (21121) to (211212);
\draw (21122) to (211221);
\draw (21122) to (211222);
\draw (211212) to (2112121);
\draw (211212) to (2112122);
}
\end{tikzpicture}
\endgroup
\end{align*}
\caption{Flowchart showing the cases considered when solving the BYBE for the dilute ghost algebra. This is a binary tree, with cases numbered by the sequence of left (1) and right (2) steps to reach them from the top. For example, the case ``$c_1c_3=c_2c_4$'' is numbered 2.1.1.2. Each vertex of the tree is also a hyperlink to the corresponding part of the proof. Note that $P(\sigma,\tau; \theta) = 2\sigma \tau \cos(\theta) + \left(\sigma^2 - \tau^2\right) \sin(\theta)$, and $Q(\sigma, \tau)$ is a quartic polynomial in $\sigma$ and $\tau$, defined in \eqref{eq:Qdef}.}\label{fig:flowchart}
\end{figure}
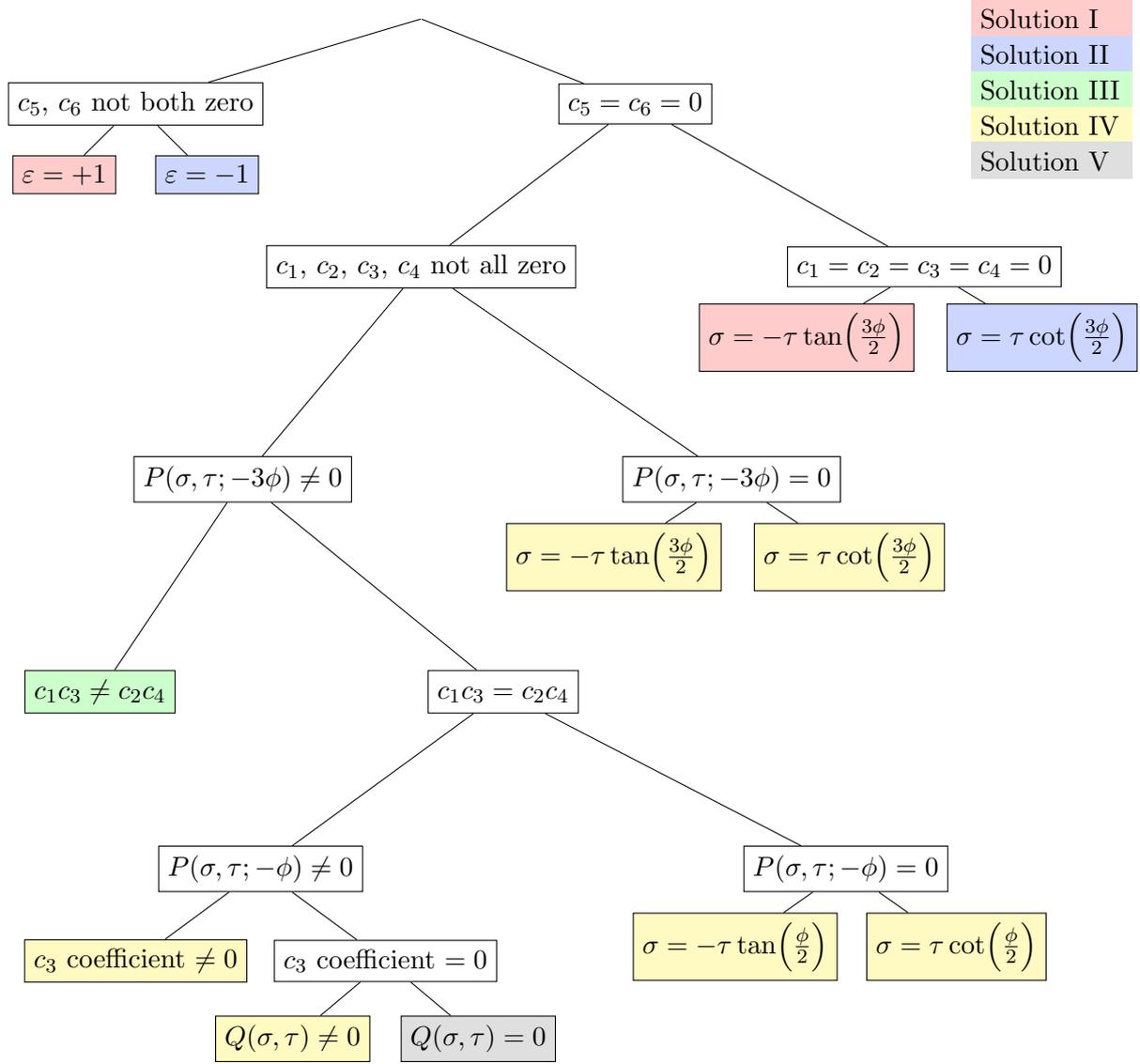

Expanding \eqref{eq:diluteBYBEequiv}, there are 110 diagrams whose coefficients are not trivially zero. We begin by considering the diagrams with four strings connected to the boundary. These have coefficients of the form
\begin{align}
\left(b_i(v)b_j(u) -b_i(u) b_j(v)  \right) \sin (u+v)\sin (u-v-3 \phi) \sin (u-v-2 \phi)\sin (u+v-\phi)
\end{align}
for $i,j \in \{1,2,3,4\}$. Analogously to the $\Gh_n$ case, this implies $b_k(u) = c_k b(u)$ for some $c_k \in \C$ and nonzero $b:\C\to\C$, for each $k \in \{1,2,3,4\}$. Applying this to all coefficients, the next four simplest coefficients are for the diagrams
\begin{align}
\begin{tikzpicture}[baseline={([yshift=-1mm]current bounding box.center)},xscale=0.5,yscale=0.5]
{
\draw[dotted] (0,0.5)--(3,0.5);
\draw [very thick](0,-1.5) -- (0,0.5);
\draw [very thick](3,-1.5)--(3,0.5);
\draw (0,0) to[out=0,in=-90] (1,0.5);
\draw (3,0) to[out=180,in=270] (2,0.5);
\draw (0,-1)--(3,-1);
}
\end{tikzpicture}\; , \qquad
\begin{tikzpicture}[baseline={([yshift=-1mm]current bounding box.center)},xscale=0.5,yscale=0.5]
{
\draw[dotted] (0,0.5)--(3,0.5);
\draw [very thick](0,-1.5) -- (0,0.5);
\draw [very thick](3,-1.5)--(3,0.5);
\draw (0,0) to[out=0,in=-90] (1,0.5);
\draw (3,0) to[out=180,in=270] (2,0.5);
\draw (0,-1)--(3,-1);
\filldraw (0.5,0.5) circle (0.08);
\filldraw (1.5,0.5) circle (0.08);
}
\end{tikzpicture}\; ,\qquad
\begin{tikzpicture}[baseline={([yshift=-1mm]current bounding box.center)},xscale=0.5,yscale=0.5]
{
\draw[dotted] (0,0.5)--(3,0.5);
\draw [very thick](0,-1.5) -- (0,0.5);
\draw [very thick](3,-1.5)--(3,0.5);
\draw (0,0) to[out=0,in=-90] (1,0.5);
\draw (3,0) to[out=180,in=270] (2,0.5);
\draw (0,-1)--(3,-1);
\filldraw (0.5,0.5) circle (0.08);
\filldraw (2.5,0.5) circle (0.08);
}
\end{tikzpicture}\; ,\qquad
\begin{tikzpicture}[baseline={([yshift=-1mm]current bounding box.center)},xscale=0.5,yscale=0.5]
{
\draw[dotted] (0,0.5)--(3,0.5);
\draw [very thick](0,-1.5) -- (0,0.5);
\draw [very thick](3,-1.5)--(3,0.5);
\draw (0,0) to[out=0,in=-90] (1,0.5);
\draw (3,0) to[out=180,in=270] (2,0.5);
\draw (0,-1)--(3,-1);
\filldraw (2.5,0.5) circle (0.08);
\filldraw (1.5,0.5) circle (0.08);
}
\end{tikzpicture}\; ,
\end{align}
and these become
\begin{align}
(b_i(u) b_j(v) -b_i(v) b_j(u)) \sin (2 \phi) \sin (u-v-3 \phi) \sin (u+v-3 \phi) \sin (u-v-2 \phi)
\end{align}
for $i \in \{5,6\}$ and $j\in\{7,8\}$. The coefficient of
\begin{align}
    \begin{tikzpicture}[baseline={([yshift=-1mm]current bounding box.center)},xscale=0.5,yscale=0.5]
{
\draw[dotted] (0,0.5)--(3,0.5);
\draw [very thick](0,-1.5) -- (0,0.5);
\draw [very thick](3,-1.5)--(3,0.5);
\draw (0,-1) to[out=0,in=-90] (1,0.5);
\draw (3,-1) to[out=180,in=270] (2,0.5);
\filldraw (0.5,0.5) circle (0.08);
\filldraw (2.5,0.5) circle (0.08);
\filldraw[fill=white] (0,0) circle (0.12);
\filldraw[fill=white] (3,0) circle (0.12);
}
\end{tikzpicture}
\end{align}
becomes
\begin{align}
&\Big(\Big(b_5(u) b_6(v)-b_7(u) b_8(v)\Big)\sin (u-v)  +\Big(b_6(v) b_8(u)-b_6(u) b_8(v)\Big)  \sin (u+v-\phi)\Big)  \nonumber \\
&\hspace{5mm} \times\sin(u+v)\sin (u-v-3 \phi)\sin(2\phi). \label{eq:M12}
\end{align}
One can check that setting these equal to zero implies that we can write $b_k(u) = c_k g(u)$, for some nonzero $g: \C \to \C$ and $k\in \{5,6,7,8\}$. Applying this, we find that \eqref{eq:M12} is zero if and only if $c_5c_6-c_7c_8=0$. From the coefficients of
\begin{align}
\begin{tikzpicture}[baseline={([yshift=-1mm]current bounding box.center)},xscale=0.5,yscale=0.5]
{
\draw[dotted] (0,0.5)--(3,0.5);
\draw [very thick](0,-1.5) -- (0,0.5);
\draw [very thick](3,-1.5)--(3,0.5);
\draw (0,-1) to[out=0,in=-90] (1,0.5);
\draw (3,-1) to[out=180,in=270] (2,0.5);
\filldraw (0.5,0.5) circle (0.08);
\filldraw (1.5,0.5) circle (0.08);
\filldraw[fill=white] (0,0) circle (0.12);
\filldraw[fill=white] (3,0) circle (0.12);
}
\end{tikzpicture}\; , \qquad
\begin{tikzpicture}[baseline={([yshift=-1mm]current bounding box.center)},xscale=0.5,yscale=0.5]
{
\draw[dotted] (0,0.5)--(3,0.5);
\draw [very thick](0,-1.5) -- (0,0.5);
\draw [very thick](3,-1.5)--(3,0.5);
\draw (0,-1) to[out=0,in=-90] (1,0.5);
\draw (3,-1) to[out=180,in=270] (2,0.5);
\filldraw (1.5,0.5) circle (0.08);
\filldraw (2.5,0.5) circle (0.08);
\filldraw[fill=white] (0,0) circle (0.12);
\filldraw[fill=white] (3,0) circle (0.12);
}
\end{tikzpicture}
\end{align}
we must have $c_5^2 - c_8^2 = 0$ and $c_6^2 -c_7^2 = 0$. It follows that $c_8 = \varepsilon\, c_5$ and $c_7 = \varepsilon\, c_6$, with $\varepsilon = \pm 1$.

Next, from the coefficients of 
\begin{align}
\begin{tikzpicture}[baseline={([yshift=-1mm]current bounding box.center)},xscale=0.5,yscale=0.5]
{
\draw[dotted] (0,0.5)--(3,0.5);
\draw [very thick](0,-1.5) -- (0,0.5);
\draw [very thick](3,-1.5)--(3,0.5);
\draw (0,0) to[out=0,in=-90] (0.75,0.5);
\draw (0,-1) to[out=0,in=-90] (1.5,0.5);
\draw (3,-1) to[out=180,in=270] (2.25,0.5);
\filldraw (1.125,0.5) circle (0.08);
\filldraw (1.875,0.5) circle (0.08);
\filldraw (0.375,0.5) circle (0.08);
\filldraw[fill=white] (3,0) circle (0.12);
}
\end{tikzpicture}\; , \qquad
\begin{tikzpicture}[baseline={([yshift=-1mm]current bounding box.center)},xscale=0.5,yscale=0.5]
{
\draw[dotted] (0,0.5)--(3,0.5);
\draw [very thick](0,-1.5) -- (0,0.5);
\draw [very thick](3,-1.5)--(3,0.5);
\draw (0,0) to[out=0,in=-90] (0.75,0.5);
\draw (0,-1) to[out=0,in=-90] (1.5,0.5);
\draw (3,-1) to[out=180,in=270] (2.25,0.5);
\filldraw (1.125,0.5) circle (0.08);
\filldraw (1.875,0.5) circle (0.08);
\filldraw (2.625,0.5) circle (0.08);
\filldraw[fill=white] (3,0) circle (0.12);
}
\end{tikzpicture}\; ,
\end{align}
we find
\begin{align}
0&= c_2 c_5 \Big(b(u) g(v) \sin (2 v-\phi) - b(v) g(u) (\varepsilon  \sin (u-v)+\sin (u+v-\phi))\!\Big), \label{eq:M26}\\
0&= c_4 c_6 \Big(b(u) g(v) \sin (2 v-\phi) - b(v) g(u) (\varepsilon  \sin (u-v)+\sin (u+v-\phi))\!\Big), \label{eq:M20}
\end{align}
so $c_2c_5 = c_4c_6 = 0$, or the expression in the larger brackets is zero.

Now, suppose $c_2=0$. Then the coefficients of
\begin{align}
\begin{tikzpicture}[baseline={([yshift=-1mm]current bounding box.center)},xscale=0.5,yscale=0.5]
{
\draw[dotted] (0,0.5)--(3,0.5);
\draw [very thick](0,-1.5) -- (0,0.5);
\draw [very thick](3,-1.5)--(3,0.5);
\draw (0,0) to[out=0,in=-90] (0.75,0.5);
\draw (0,-1) to[out=0,in=-90] (1.5,0.5);
\draw (3,-1) to[out=180,in=270] (2.25,0.5);
\filldraw (0.375,0.5) circle (0.08);
\filldraw (1.125,0.5) circle (0.08);
\filldraw (2.625,0.5) circle (0.08);
\filldraw[fill=white] (3,0) circle (0.12);
}
\end{tikzpicture} 
\; , \qquad
\begin{tikzpicture}[baseline={([yshift=-1mm]current bounding box.center)},xscale=0.5,yscale=0.5]
{
\draw[dotted] (0,0.5)--(3,0.5);
\draw [very thick](0,-1.5) -- (0,0.5);
\draw [very thick](3,-1.5)--(3,0.5);
\draw (0,0) to[out=0,in=-90] (1,0.5);
\draw (3,-1) to[out=180,in=270] (2,0.5);
\filldraw (1.5,0.5) circle (0.08);
\filldraw (0.5,0.5) circle (0.08);
\filldraw[fill=white] (0,-1) circle (0.12);
\filldraw[fill=white] (3,0) circle (0.12);
}
\end{tikzpicture}
\end{align}
become
\begin{align}
c_3 c_5 \, b(u)g(v)\sin(u-v)\sin(u+v)\sin(u-v-3\phi)\sin(u+v-3\phi) \label{eq:M16a}
\end{align}
and
\begin{align}
&2\cos(\phi) \sin (u-v) \sin (u-v-3\phi) \sin(u+v-3\phi) \nonumber \\
&\hspace{5mm}\times \Big(c_1 c_3\, \alpha_3\, b(u) b(v) \sin(\phi) + \varepsilon\, c_5^2\,  g(u) g(v) \sin (u+v+\phi)\Big), \label{eq:M25a}
\end{align}
respectively. For both of these to be zero, we need $c_5=0$. Hence $c_2=0$ implies $c_5=0$.

Similarly, using the coefficients of
\begin{align}
\begin{tikzpicture}[baseline={([yshift=-1mm]current bounding box.center)},xscale=0.5,yscale=0.5]
{
\draw[dotted] (0,0.5)--(3,0.5);
\draw [very thick](0,-1.5) -- (0,0.5);
\draw [very thick](3,-1.5)--(3,0.5);
\draw (0,0) to[out=0,in=-90] (0.75,0.5);
\draw (0,-1) to[out=0,in=-90] (1.5,0.5);
\draw (3,-1) to[out=180,in=270] (2.25,0.5);
\filldraw (1.125,0.5) circle (0.08);
\filldraw[fill=white] (3,0) circle (0.12);
}
\end{tikzpicture} 
\; , \qquad
\begin{tikzpicture}[baseline={([yshift=-1mm]current bounding box.center)},xscale=0.5,yscale=0.5]
{
\draw[dotted] (0,0.5)--(3,0.5);
\draw [very thick](0,-1.5) -- (0,0.5);
\draw [very thick](3,-1.5)--(3,0.5);
\draw (0,0) to[out=0,in=-90] (1,0.5);
\draw (3,-1) to[out=180,in=270] (2,0.5);
\filldraw (1.5,0.5) circle (0.08);
\filldraw (2.5,0.5) circle (0.08);
\filldraw[fill=white] (0,-1) circle (0.12);
\filldraw[fill=white] (3,0) circle (0.12);
}
\end{tikzpicture}
\; ,
\end{align}
we find that $c_4=0$ implies $c_6=0$. Hence we have $c_5=c_6=0$, or
\begin{align}
b(u) g(v) \sin (2 v-\phi) - b(v) g(u) \Big(\varepsilon  \sin (u-v)+\sin (u+v-\phi)\Big)=0. \label{eq:brackets}
\end{align}

\hypertarget{c1}{} \phantom{1}

\noindent\textbf{Case 1:} $c_5$, $c_6$ not both zero \hfill \hyperlink{flowchart}{Link to flowchart: $\uparrow$}

\hypertarget{c11}{} \phantom{11} 

\vspace{-3mm}

\noindent\textbf{Case 1.1:} $\varepsilon=1$ \hfill \hyperlink{flowchart}{$\uparrow$}

\noindent Rearranging \eqref{eq:brackets}, we find
\begin{align}
\frac{g(u)}{b(u)} \sin\!\left(u-\frac{\phi}{2}\right) &= \frac{g(v)}{b(v)} \sin\!\left(v-\frac{\phi}{2}\right).
\end{align}
Both sides must equal some constant $z \in \C$, so
\begin{align}
g(u) &= z\, b(u) \csc\!\left(u-\frac{\phi}{2}\right).
\end{align}
Noting that $c_5$, $c_6$ and $z$ only ever appear as $c_5 z$ and $c_6 z$, and that $z$ is nonzero because $g$ is nonzero, we may absorb $z$ into $c_5$ and $c_6$, equivalent to setting $z=1$ above.

Applying this, setting the coefficients of
\begin{align}
\begin{tikzpicture}[baseline={([yshift=-1mm]current bounding box.center)},xscale=0.5,yscale=0.5]
{
\draw[dotted] (0,0.5)--(3,0.5);
\draw [very thick](0,-1.5) -- (0,0.5);
\draw [very thick](3,-1.5)--(3,0.5);
\draw (0,0) to[out=0,in=-90] (0.75,0.5);
\draw (0,-1) to[out=0,in=-90] (1.5,0.5);
\draw (3,-1) to[out=180,in=270] (2.25,0.5);
\filldraw (0.375,0.5) circle (0.08);
\filldraw[fill=white] (3,0) circle (0.12);
}
\end{tikzpicture}
\; , \qquad
\begin{tikzpicture}[baseline={([yshift=-1mm]current bounding box.center)},xscale=0.5,yscale=0.5]
{
\draw[dotted] (0,0.5)--(3,0.5);
\draw [very thick](0,-1.5) -- (0,0.5);
\draw [very thick](3,-1.5)--(3,0.5);
\draw (0,0) to[out=0,in=-90] (0.75,0.5);
\draw (0,-1) to[out=0,in=-90] (1.5,0.5);
\draw (3,-1) to[out=180,in=270] (2.25,0.5);
\filldraw (2.625,0.5) circle (0.08);
\filldraw[fill=white] (3,0) circle (0.12);
}
\end{tikzpicture}
\; , \qquad
\begin{tikzpicture}[baseline={([yshift=-1mm]current bounding box.center)},xscale=0.5,yscale=0.5]
{
\draw[dotted] (0,0.5)--(3,0.5);
\draw [very thick](0,-1.5) -- (0,0.5);
\draw [very thick](3,-1.5)--(3,0.5);
\draw (0,0) to[out=0,in=-90] (0.75,0.5);
\draw (0,-1) to[out=0,in=-90] (1.5,0.5);
\draw (3,-1) to[out=180,in=270] (2.25,0.5);
\filldraw (1.875,0.5) circle (0.08);
\filldraw[fill=white] (3,0) circle (0.12);
}
\end{tikzpicture}
\; , \qquad
\begin{tikzpicture}[baseline={([yshift=-1mm]current bounding box.center)},xscale=0.5,yscale=0.5]
{
\draw[dotted] (0,0.5)--(3,0.5);
\draw [very thick](0,-1.5) -- (0,0.5);
\draw [very thick](3,-1.5)--(3,0.5);
\draw (0,0) to[out=0,in=-90] (0.75,0.5);
\draw (0,-1) to[out=0,in=-90] (1.5,0.5);
\draw (3,-1) to[out=180,in=270] (2.25,0.5);
\filldraw (1.125,0.5) circle (0.08);
\filldraw[fill=white] (3,0) circle (0.12);
}
\end{tikzpicture} 
\end{align}
to zero implies
\begin{align}
0 &= (c_1-c_3)c_5, \label{eq:M27}\\
0 &= (c_1-c_3)c_6, \label{eq:M19}\\
0 &= c_1c_5-c_2c_6, \label{eq:M29}\\
0 &= c_1c_6-c_4c_5. \label{eq:M30}
\end{align}
Since $c_5$ and $c_6$ are not both zero, \eqref{eq:M27} and \eqref{eq:M19} imply $c_1=c_3$. From \eqref{eq:M29}, we find $c_2=c_5\, y_1$ and $c_1=c_6\, y_1$, for some $y_1 \in \C$, and from \eqref{eq:M30}, we find $c_1= c_5\, y_2$ and $c_4=c_6\, y_2$, for some $y_2 \in \C$. Hence $c_1 = c_6\, y_1 = c_5\, y_2$, so $y_1= c_5\, \eta$ and $y_2 = c_6\, \eta$, for some $\eta \in \C$. This gives
\begin{align}
c_1=c_3 = c_5c_6\, \eta, \qquad c_2 = c_5^2\, \eta, \qquad c_4 = c_6^2\, \eta.
\end{align}
If $\eta=0$, then $c_2=c_4=0$. We showed just before \hyperlink{c1}{Case 1} that this implies $c_5=c_6=0$, but this contradicts our Case 1 assumption that $c_5$ and $c_6$ are not both zero. Hence $\eta \neq 0$.

From the coefficients of
\begin{align}
\begin{tikzpicture}[baseline={([yshift=-1mm]current bounding box.center)},xscale=0.5,yscale=0.5]
{
\draw[dotted] (0,0.5)--(3,0.5);
\draw [very thick](0,-1.5) -- (0,0.5);
\draw [very thick](3,-1.5)--(3,0.5);
\draw (0,0) to[out=0,in=-90] (0.75,0.5);
\draw (0,-1)--(3,-1);
\filldraw (0.325,0.5) circle (0.08);
\filldraw[fill=white] (3,0) circle (0.12);
}
\end{tikzpicture}
\; , \qquad \begin{tikzpicture}[baseline={([yshift=-1mm]current bounding box.center)},xscale=0.5,yscale=0.5]
{
\draw[dotted] (0,0.5)--(3,0.5);
\draw [very thick](0,-1.5) -- (0,0.5);
\draw [very thick](3,-1.5)--(3,0.5);
\draw (0,0) to[out=0,in=-90] (0.75,0.5);
\draw (0,-1)--(3,-1);
\filldraw (1.125,0.5) circle (0.08);
\filldraw[fill=white] (3,0) circle (0.12);
}
\end{tikzpicture}
\; , 
\end{align}
we find
\begin{align}
0&=-c_i \sin (u-v-3 \phi) \sin (u+v-3 \phi) \sin(2\phi)\nonumber\\
&\quad\ \  \times \bigg( b(u)b(v)\left(c_5 c_6 (\alpha_1+\alpha_2) + \left(c_5^2+ c_6^2\right)\alpha_3\right)\eta   \left(\cot\!\left(v-\frac{\phi}{2}\right)-\cot\!\left(u-\frac{\phi}{2}\right)\!\right) \nonumber \\
&\hspace{20mm}\times \sin\! \left(u-\frac{5 \phi}{2}\right)  \sin (u+v-2 \phi)\csc(2\phi)\nonumber \\
&\hspace{15mm}+b(u)\csc\!\left(u-\frac{\phi}{2}\right)\big(a_1(v)\sin (u+v-2 \phi)-a_2(v)\sin (u-v-2 \phi)\big)\nonumber \\
&\hspace{15mm}-b(v)\csc\!\left(v-\frac{\phi}{2}\right)\left(a_1(u) \sin (2 (u-\phi)) + a_2(u) \sin(2 \phi)\right) \!\!  \bigg) \label{eq:M102103}
\end{align}
for $i = 5$ and $i=6$, respectively, and from the coefficients of 
\begin{align}
\begin{tikzpicture}[baseline={([yshift=-1mm]current bounding box.center)},xscale=0.5,yscale=0.5]
{
\draw[dotted] (0,0.5)--(3,0.5);
\draw [very thick](0,-1.5) -- (0,0.5);
\draw [very thick](3,-1.5)--(3,0.5);
\draw (0,0) to[out=0,in=-90] (0.75,0.5);
\draw (0,-1)--(3,0);
\filldraw (0.325,0.5) circle (0.08);
\filldraw[fill=white] (3,-1) circle (0.12);
}
\end{tikzpicture}
\; , \qquad \begin{tikzpicture}[baseline={([yshift=-1mm]current bounding box.center)},xscale=0.5,yscale=0.5]
{
\draw[dotted] (0,0.5)--(3,0.5);
\draw [very thick](0,-1.5) -- (0,0.5);
\draw [very thick](3,-1.5)--(3,0.5);
\draw (0,0) to[out=0,in=-90] (0.75,0.5);
\draw (0,-1)--(3,0);
\filldraw (1.125,0.5) circle (0.08);
\filldraw[fill=white] (3,-1) circle (0.12);
}
\end{tikzpicture}
\; ,
\end{align}
we have
\begin{align}
0&= -c_i \sin (u-v-3 \phi) \sin (u+v-3 \phi)  \nonumber \\
&\qquad \times \bigg(b(u)b(v)\left(c_5 c_6 (\alpha_1+\alpha_2) + \left(c_5^2+c_6^2\right)\alpha_3\right) \eta \csc\!\left(u-\frac{\phi}{2}\right)  \sin (u-v) \sin (u+v-2 \phi)     \nonumber \\
&\hspace{15mm}+b(u) \csc\!\left(u-\frac{\phi}{2}\right)a_1(v) \sin (2 v)\sin(2\phi) \nonumber \\
&\hspace{15mm}-b(v) \csc\!\left(v-\frac{\phi}{2}\right) (a_1(u) \sin (u+v)+a_2(u) \sin (u-v))\sin(2\phi)\!\bigg) \label{eq:M107108}
\end{align}
for $i=5$ and $i=6$, respectively. We can rearrange each of these equations to find expressions for
\begin{align}
c_i\, b(u)b(v) \left(c_5c_6\left(\alpha_1+\alpha_2\right) + \left(c_5^2+c_6^2\right)\alpha_3\right) \eta ,
\end{align}
and then equate those expressions. Since $c_5$ and $c_6$ are not both zero, choose $i \in \{5,6\}$ such that $c_i$ is nonzero, and recall that $b$ and $\eta$ are nonzero. We can thus rearrange the resulting equation to find
\begin{align}
&\quad \frac{a_1(u)}{b(u)} \sin\! \left(u+\frac{\phi}{2}\right) \sin\! \left(u-\frac{\phi}{2}\right)+\frac{a_2(u)}{b(u)} \sin^2\!\left(u-\frac{\phi}{2}\right) \nonumber \\
&=\frac{a_1(v)}{b(v)} \sin\! \left(v+\frac{\phi}{2}\right) \sin\! \left(v-\frac{\phi}{2}\right)+\frac{a_2(v)}{b(v)} \sin ^2\!\left(v-\frac{\phi}{2}\right).
\end{align}
The left and right sides are functions of $u$ and $v$, respectively, so we may set both equal to some $\rho \in \C$. Hence we have
\begin{align}
a_1(u) &= \csc\!\left(u+\frac{\phi}{2}\right) \left(\rho\, b(u) \csc\!\left(u-\frac{\phi}{2}\right) - a_2(u)\sin\!\left(u-\frac{\phi}{2}\right)\!\right).
\end{align}
Substituting this back into \eqref{eq:M107108} and rearranging gives
\begin{align}
&\quad \frac{a_2(u)}{b(u)} \sin (2 u) \sin\! \left(u-\frac{\phi}{2}\right) \csc\!\left(u+\frac{\phi}{2}\right)-\rho  \cot\! \left(u+\frac{\phi}{2}\right) \nonumber \\
&\hspace{10mm}-\frac{\eta}{2}   \csc(2 \phi) \cos(2( u- \phi)) \left(c_5 c_6 (\alpha_1+\alpha_2) + \left(c_5^2+c_6^2\right)\alpha_3\right)\nonumber \\
&=\frac{a_2(v)}{b(v)} \sin (2 v) \sin\! \left(v-\frac{\phi}{2}\right) \csc\!\left(v+\frac{\phi}{2}\right)-\rho  \cot\! \left(v+\frac{\phi}{2} \right) \nonumber \\
&\hspace{10mm} -\frac{\eta}{2}  \csc(2 \phi) \cos(2 (v- \phi) ) \left(c_5 c_6 (\alpha_1+\alpha_2) + \left(c_5^2+c_6^2\right)\alpha_3\right).
\end{align}
We can then set both sides equal to some $\kappa \in \C$, and find that
\begin{align}
a_2(u)&=b(u) \csc(2 u) \csc\!\left(u-\frac{\phi}{2}\right) \nonumber \\
&\hspace{5mm}\times \bigg(\frac{\eta}{2}   \csc(2 \phi) \sin\! \left(u+\frac{\phi}{2}\right) \cos(2(u-\phi)) \left(c_5 c_6 (\alpha_1+\alpha_2) + \left(c_5^2+c_6^2\right)\alpha_3\right) \nonumber \\
&\hspace{15mm} +\,\kappa  \sin\! \left(u+\frac{\phi}{2}\right)+\rho  \cos\!\left(u+\frac{\phi}{2}\right)\!\!\bigg).
\end{align}

Setting the coefficients of
\begin{align}
\begin{tikzpicture}[baseline={([yshift=-1mm]current bounding box.center)},xscale=0.5,yscale=0.5]
{
\draw[dotted] (0,0.5)--(3,0.5);
\draw [very thick](0,-1.5) -- (0,0.5);
\draw [very thick](3,-1.5)--(3,0.5);
\draw (0,0) to[out=0,in=-90] (0.75,0.5);
\filldraw (0.325,0.5) circle (0.08);
\filldraw[fill=white] (0,-1) circle (0.12);
\filldraw[fill=white] (3,0) circle (0.12);
\filldraw[fill=white] (3,-1) circle (0.12);
}
\end{tikzpicture}
\; ,  \qquad 
\begin{tikzpicture}[baseline={([yshift=-1mm]current bounding box.center)},xscale=0.5,yscale=0.5]
{
\draw[dotted] (0,0.5)--(3,0.5);
\draw [very thick](0,-1.5) -- (0,0.5);
\draw [very thick](3,-1.5)--(3,0.5);
\draw (0,0) to[out=0,in=-90] (0.75,0.5);
\filldraw (1.125,0.5) circle (0.08);
\filldraw[fill=white] (0,-1) circle (0.12);
\filldraw[fill=white] (3,0) circle (0.12);
\filldraw[fill=white] (3,-1) circle (0.12);
}
\end{tikzpicture}
\; 
\end{align}
to zero, we find
\begin{align}
0 &=  c_i  \Big(\eta  \left(c_5 c_6(\alpha_1+\alpha_2) + \left(c_5^2+c_6^2\right)\alpha_3\right)+4 \kappa  \sin (\phi)-4 \rho  \cos(\phi)\Big)
\end{align}
for $i=5$ and $i=6$, respectively. Since $c_5$ and $c_6$ are not both zero,
\begin{align}
\kappa = \rho\cot\phi-\frac{\eta}{4\sin\phi}  \left(c_5 c_6(\alpha_1+\alpha_2) + \left(c_5^2+c_6^2\right)\alpha_3\right).
\end{align}

Applying this, the coefficients of 
\begin{align}
\begin{tikzpicture}[baseline={([yshift=-1mm]current bounding box.center)},xscale=0.5,yscale=0.5]
{
\draw[dotted] (0,0.5)--(3,0.5);
\draw [very thick](0,-1.5) -- (0,0.5);
\draw [very thick](3,-1.5)--(3,0.5);
\draw (0,0) to[out=0,in=-90] (0.75,0.5);
\draw (0,-1) to[out=0,in=-90] (1.5,0.5);
\draw (3,0) arc (90:270:0.5);
\filldraw (0.325,0.5) circle (0.08);
\filldraw (1.125,0.5) circle (0.08);
}
\end{tikzpicture}
\; ,  \qquad 
\begin{tikzpicture}[baseline={([yshift=-1mm]current bounding box.center)},xscale=0.5,yscale=0.5]
{
\draw[dotted] (0,0.5)--(3,0.5);
\draw [very thick](0,-1.5) -- (0,0.5);
\draw [very thick](3,-1.5)--(3,0.5);
\draw (0,0) to[out=0,in=-90] (0.75,0.5);
\draw (0,-1) to[out=0,in=-90] (1.5,0.5);
\draw (3,0) arc (90:270:0.5);
\filldraw (1.875,0.5) circle (0.08);
\filldraw (1.125,0.5) circle (0.08);
}
\end{tikzpicture}
\; 
\end{align}
become
\begin{align}
2\, c_i^2\, b(u) b(v)\, (\eta  \rho -1)\, \sin (2 \phi) \sin (u-v) \cos\!\left(v+\frac{\phi}{2}\right) \csc\!\left(v-\frac{\phi}{2}\right) \sin (u+v-3 \phi)
\end{align}
for $i=5$ and $i=6$, respectively. Since $c_5$ and $c_6$ are not both zero, setting this to zero gives $\rho = \frac{1}{\eta}$. Applying this, the coefficients of all diagrams in the expansion of \eqref{eq:diluteBYBEequiv} become zero. Rescaling by a factor of $\eta\, \sin\!\left(u-\frac{\phi}{2}\right)$, we observe that $c_5$, $c_6$ and $\eta$ only appear in this solution as $c_5\eta$ and $c_6\eta$, so we will set $\nu = c_5\eta$ and $\mu=c_6\eta$. This gives \hyperlink{solI}{Solution I}, for $b$ nonzero, and $\mu$ and $\nu$ not both zero.

\hypertarget{c12}{} \phantom{12}

\noindent\textbf{Case 1.2:} $\varepsilon=-1$ \hfill \hyperlink{flowchart}{$\uparrow$}

\nopagebreak
\noindent This case is analogous to \hyperlink{c11}{Case 1.1}, where $\varepsilon=1$, and gives \hyperlink{solII}{Solution II}, for $b$ nonzero, and $\mu$ and $\nu$ not both zero. 

\hypertarget{c2}{} \phantom{2}

\noindent\textbf{Case 2:} $c_5=c_6=0$ \hfill \hyperlink{flowchart}{$\uparrow$}

\noindent Recall that $b_k(u) = c_k b(u)$ for $k \in \{1,2,3,4\}$, and that
\begin{align}
b_5(u) &= \varepsilon\, b_8(u) = c_5\, g(u), \\
b_6(u) &= \varepsilon\, b_7(u) = c_6\, g(u), 
\end{align}
where $b$ and $g$ are nonzero, $\varepsilon = \pm 1$ and $c_i \in \C$ for $1 \leq i \leq 8$. With $c_5=c_6=0$, then, $b_5=b_6=b_7=b_8=0$.

Setting the coefficient of
\begin{align}
\begin{tikzpicture}[baseline={([yshift=-1mm]current bounding box.center)},xscale=0.5,yscale=0.5]
{
\draw[dotted] (0,0.5)--(3,0.5);
\draw [very thick](0,-1.5) -- (0,0.5);
\draw [very thick](3,-1.5)--(3,0.5);
\draw (0,0) -- (3,-1);
\filldraw[fill=white] (0,-1) circle (0.12);
\filldraw[fill=white] (3,0) circle (0.12);
}
\end{tikzpicture}
\end{align}
to zero, we find
\begin{align}
\big(a_1(u) +a_2(u)\big)\sin (u)\big(a_1(v)-a_2(v)\big) \cos(v) =  \big(a_1(v)+a_2(v)\big)\sin (v)\big(a_1(u)-a_2(u) \big)\cos(u).
\end{align}
This is equivalent to
\begin{align}
f_1(u)f_2(v) = f_1(v)f_2(u),
\end{align}
with $f_1(x) = (a_1(x)+a_2(x))\sin(x)$ and $f_2(x)=\left(a_1(x)-a_2(x)\right)\cos(x)$. It follows that $f_1$ and $f_2$ are both scalar multiples of some nonzero function $h$. Setting
\begin{align}
f_1(u) &= \sigma\, h(u) \sin(2u), \\
f_2(u) &= \tau\, h(u) \sin(2u),
\end{align}
we find
\begin{align}
a_1(u) &= h(u) \left(\sigma \cos(u) + \tau\sin(u) \right), \\
a_2(u) &= h(u) \left(\sigma \cos(u) - \tau \sin(u) \right).
\end{align}

\hypertarget{c21}{} \phantom{21}

\noindent\textbf{Case 2.1:} $c_1$, $c_2$, $c_3$, $c_4$ not all zero \hfill \hyperlink{flowchart}{$\uparrow$}

\nopagebreak
\noindent Setting the coefficient of 
\begin{align}
\begin{tikzpicture}[baseline={([yshift=-1mm]current bounding box.center)},xscale=0.5,yscale=0.5]
{
\draw[dotted] (0,0.5)--(3,0.5);
\draw [very thick](0,-1.5) -- (0,0.5);
\draw [very thick](3,-1.5)--(3,0.5);
\draw (0,0) arc (90:-90:0.5);
\filldraw[fill=white] (3,-1) circle (0.12);
\filldraw[fill=white] (3,0) circle (0.12);
}
\end{tikzpicture}
\end{align}
equal to zero, we find
\begin{align}
h(u) \sin (2 u)\, P(\sigma,\tau;-3\phi) &= b(u)(c_1\alpha_1 + c_3\alpha_2 + (c_2+c_4)\alpha_3) (\sigma \cos(u-\phi)+\tau \sin (u-\phi)) , \label{eq:3phi21}
\end{align}
where we recall $P(\sigma,\tau;\theta) = 2\sigma\tau \cos(\theta) + \left(\sigma^2-\tau^2\right)\sin(\theta)$. If $P(\sigma,\tau;-3\phi) \neq 0$, this can be rearranged to find $h$ in terms of $b$.

\hypertarget{c211}{} \phantom{211}

\noindent\textbf{Case 2.1.1:} $P(\sigma,\tau;-3\phi) \neq 0$ \hfill \hyperlink{flowchart}{$\uparrow$}

\nopagebreak
\noindent Rearranging \eqref{eq:3phi21} for $h(u)$ yields
\begin{align}
h(u) &= \frac{b(u)(c_1 \alpha_1 + c_3 \alpha_2 + (c_2+c_4)\alpha_3 ) (\sigma \cos(u-\phi)+\tau \sin (u-\phi)) }{\sin (2 u) P(\sigma,\tau;-3\phi)}.
\end{align}
Next, we consider the diagrams
\begin{align}
\begin{tikzpicture}[baseline={([yshift=-1mm]current bounding box.center)},xscale=0.5,yscale=0.5]
{
\draw[dotted] (0,0.5)--(3,0.5);
\draw [very thick](0,-1.5) -- (0,0.5);
\draw [very thick](3,-1.5)--(3,0.5);
\draw (0,0) to[out=0,in=-90] (0.75,0.5);
\draw (0,-1) to[out=0,in=-90] (1.5,0.5);
\filldraw[fill=white] (3,-1) circle (0.12);
\filldraw[fill=white] (3,0) circle (0.12);
}
\end{tikzpicture}
\; , \qquad
\begin{tikzpicture}[baseline={([yshift=-1mm]current bounding box.center)},xscale=0.5,yscale=0.5]
{
\draw[dotted] (0,0.5)--(3,0.5);
\draw [very thick](0,-1.5) -- (0,0.5);
\draw [very thick](3,-1.5)--(3,0.5);
\draw (0,0) to[out=0,in=-90] (0.75,0.5);
\draw (0,-1) to[out=0,in=-90] (1.5,0.5);
\filldraw (0.325,0.5) circle (0.08);
\filldraw (1.125,0.5) circle (0.08);
\filldraw[fill=white] (3,-1) circle (0.12);
\filldraw[fill=white] (3,0) circle (0.12);
}
\end{tikzpicture}
\; , \qquad
\begin{tikzpicture}[baseline={([yshift=-1mm]current bounding box.center)},xscale=0.5,yscale=0.5]
{
\draw[dotted] (0,0.5)--(3,0.5);
\draw [very thick](0,-1.5) -- (0,0.5);
\draw [very thick](3,-1.5)--(3,0.5);
\draw (0,0) to[out=0,in=-90] (0.75,0.5);
\draw (0,-1) to[out=0,in=-90] (1.5,0.5);
\filldraw (0.325,0.5) circle (0.08);
\filldraw (1.875,0.5) circle (0.08);
\filldraw[fill=white] (3,-1) circle (0.12);
\filldraw[fill=white] (3,0) circle (0.12);
}
\end{tikzpicture}
\; , \qquad
\begin{tikzpicture}[baseline={([yshift=-1mm]current bounding box.center)},xscale=0.5,yscale=0.5]
{
\draw[dotted] (0,0.5)--(3,0.5);
\draw [very thick](0,-1.5) -- (0,0.5);
\draw [very thick](3,-1.5)--(3,0.5);
\draw (0,0) to[out=0,in=-90] (0.75,0.5);
\draw (0,-1) to[out=0,in=-90] (1.5,0.5);
\filldraw (1.875,0.5) circle (0.08);
\filldraw (1.125,0.5) circle (0.08);
\filldraw[fill=white] (3,-1) circle (0.12);
\filldraw[fill=white] (3,0) circle (0.12);
}
\end{tikzpicture}
\; . \label{eq:diag67910}
\end{align}
For $c_1$, $c_2$, $c_3$ or $c_4$ nonzero, setting the coefficient of each of these diagrams to zero implies
\begin{align}
0&= (c_1\alpha_1 + c_3\alpha_2+(c_2+c_4)\alpha_3)\, P(\sigma,\tau;\phi) + \left(\frac{c_2c_4}{c_1} \alpha_1 + c_1 \alpha_2 + (c_2+c_4)\alpha_3\right)\! P(\sigma,\tau;-3\phi), \label{eq:211c6} \\
0&= (c_1\alpha_1 + c_3\alpha_2+(c_2+c_4)\alpha_3)\, P(\sigma,\tau;\phi) + \left(c_3 \alpha_1 + c_1 \alpha_2 + \left(c_2+\frac{c_1c_3}{c_2}\right)\!\alpha_3\right)\! P(\sigma,\tau;-3\phi), \label{eq:211c7}\\
0&= (c_1\alpha_1 + c_3\alpha_2+(c_2+c_4)\alpha_3)\, P(\sigma,\tau;\phi) + \left(c_3 \alpha_1 + \frac{c_2c_4}{c_3} \alpha_2 + (c_2+c_4)\alpha_3\right)\! P(\sigma,\tau;-3\phi), \label{eq:211c8} \\
0&= (c_1\alpha_1 + c_3\alpha_2+(c_2+c_4)\alpha_3)\, P(\sigma,\tau;\phi) + \left(c_3 \alpha_1 + c_1 \alpha_2 + \left(\frac{c_1c_3}{c_4}+c_4\right)\!\alpha_3\right)\!P(\sigma,\tau;-3\phi), \label{eq:211c9}
\end{align}
respectively. Note that these are all the same equation if $c_1c_3 = c_2c_4$.

\hypertarget{c2111}{} \phantom{2111}

\noindent\textbf{Case 2.1.1.1:} $c_1c_3 \neq c_2c_4$ \hfill \hyperlink{flowchart}{$\uparrow$}

\noindent If $c_1c_3 \neq c_2c_4$, then we cannot have $c_1c_3=0$ and $c_2c_4=0$, so both $c_1$ and $c_3$ are nonzero, or both $c_2$ and $c_4$ are nonzero.

If $c_1$ and $c_3$ are nonzero, then both \eqref{eq:211c6} and \eqref{eq:211c8} hold. Taking the difference of these equations, and using that $P(\sigma,\tau;-3\phi) \neq 0$ and $c_1c_3 \neq c_2c_4$, we find $c_3 = c_1 \frac{\alpha_2}{\alpha_1}$. Multiplying the coefficient of the second diagram in \eqref{eq:diag67910} by $c_4$, and of the fourth diagram by $c_2$, then taking the difference, we find
\begin{align}
0 &= \frac{\alpha_3}{2\alpha_1}(c_4-c_2) \left(c_2c_4 \alpha_1 - c_1^2\, \alpha_2\right)b(u)b(v) \left(\cos(2u+2v-5\phi) - \cos(\phi) \right)\sin(u-v)\sin(2\phi).
\end{align}
Observing that $c_2c_4\alpha_1 - c_1^2\, \alpha_2 = \alpha_1\left(c_2c_4 - c_1c_3\right) \neq 0$, it follows that $c_2=c_4$.

If instead $c_2$ and $c_4$ are nonzero, then \eqref{eq:211c7} and \eqref{eq:211c9} hold. Taking their difference, and using that $P(\sigma,\tau;-3\phi) \neq 0$ and $c_1c_3 \neq c_2c_4$, we find $c_2 = c_4$. Multiplying the coefficient of the first diagram in \eqref{eq:diag67910} by $c_3$, and of the third diagram by $c_1$, then taking the difference, we find
\begin{align}
0 &= \frac{1}{2}(c_3\alpha_1 - c_1 \alpha_2)\left(c_1c_3-c_2^2\right) b(u) b(v)    (\cos(2 u+2 v-5 \phi)-\cos(\phi)) \sin (u-v)\sin (2 \phi) .
\end{align}
Since $c_1c_3 - c_2^2 = c_1c_3 - c_2c_4 \neq 0$, it follows that $c_3 = c_1 \frac{\alpha_2}{\alpha_1}$.

Hence in both cases, $c_3 = c_1 \frac{\alpha_2}{\alpha_1}$ and $c_2=c_4$. Multiplying the coefficient of the first diagram in \eqref{eq:diag67910} by $c_2$, and of the second diagram by $c_1$, then taking the difference, we find
\begin{align}
0 &= \frac{1}{2\alpha_1} \left(c_2^2\,\alpha_1 -c_1^2\,\alpha_2\right)\left(c_2\alpha_1 + c_1\alpha_3\right)  b(u) b(v)    (\cos(2 u+2 v-5 \phi)-\cos(\phi)) \sin (u-v)\sin (2 \phi) .
\end{align}
Since $c_2^2\,\alpha_1 - c_1^2\,\alpha_2 = \alpha_1(c_2c_4 - c_1c_3)\neq 0$, it follows that $c_2 = -c_1 \frac{\alpha_3}{\alpha_1}$. Hence $c_1 \neq 0$, since otherwise $c_1=c_2=c_3=c_4=0$, so \eqref{eq:211c6} holds. Substituting $c_2=c_4 = -c_1 \frac{\alpha_3}{\alpha_1}$ and $c_3 = c_1 \frac{\alpha_2}{\alpha_1}$ into \eqref{eq:211c6} and dividing by $\frac{c_1}{\alpha_1} \neq 0$ yields
\begin{align}
0 &= A_1\, P(\sigma,\tau;\phi) + A_2\, P(\sigma,\tau;-3\phi).
\end{align}
If $\sigma$ and $\tau$ satisfy this quadratic relation, then all diagram coefficients are zero. After rescaling, this gives \hyperlink{solIII}{Solution III}.

\hypertarget{c2112}{} \phantom{2112}

\noindent\textbf{Case 2.1.1.2:} $c_1c_3 = c_2c_4$ \hfill \hyperlink{flowchart}{$\uparrow$}

\noindent If $c_1c_3 = c_2c_4$, equations \eqref{eq:211c6}--\eqref{eq:211c9} all become
\begin{align}
0 &=(c_1\alpha_1 + c_3\alpha_2+(c_2+c_4)\alpha_3) P(\sigma,\tau;\phi) + \left(c_3 \alpha_1 + c_1 \alpha_2 + (c_2+c_4)\alpha_3\right)P(\sigma,\tau;-3\phi). \label{eq:2112}
\end{align}
Noting that $P(\sigma,\tau;\phi)+P(\sigma,\tau;-3\phi) = 2\cos(2\phi)P(\sigma,\tau;-\phi)$, we can rearrange \eqref{eq:2112} to find $c_4$ in terms of $c_1$, $c_2$ and $c_3$, if $P(\sigma,\tau;-\phi) \neq 0$.

\hypertarget{c21121}{} \phantom{21121}

\noindent\textbf{Case 2.1.1.2.1:} $P(\sigma,\tau;-\phi) \neq 0$ \hfill \hyperlink{flowchart}{$\uparrow$}

\noindent Rearranging \eqref{eq:2112} for $c_4$ gives
\begin{align}
c_4 &= -\frac{(c_1\alpha_1+c_3\alpha_2 + c_2\alpha_3)P(\sigma,\tau;\phi) + (c_3\alpha_1 + c_1 \alpha_2 + c_2\alpha_3)P(\sigma,\tau;-3\phi)}{2\alpha_3 \cos(2\phi) P(\sigma,\tau;-\phi)}. \label{eq:21121c9}
\end{align}
Substituting this into $c_1c_3 = c_2c_4$, we find
\begin{align}
&\quad \ \ c_3 \left((c_2 \alpha_2 + c_1\alpha_3) P(\sigma,\tau; \phi) + (c_2 \alpha_1 + c_1\alpha_3)P(\sigma,\tau;-3\phi) \right) \nonumber \\
&= -c_2\left((c_1\alpha_1 + c_2 \alpha_3)P(\sigma,\tau;\phi) + (c_1\alpha_2+c_2\alpha_3)P(\sigma,\tau;-3\phi) \right). \label{eq:21121c8}
\end{align}
If the coefficient of $c_3$ here is nonzero, this can be rearranged to find $c_3$ in terms of $c_1$ and $c_2$.

\hypertarget{c211211}{} \phantom{211211}

\noindent\textbf{Case 2.1.1.2.1.1:} $c_3$ coefficient $\neq 0$ \hfill \hyperlink{flowchart}{$\uparrow$}

\noindent Rearranging \eqref{eq:21121c8} for $c_3$ gives
\begin{align}
c_3 &= - \frac{c_2\left((c_1\alpha_1 + c_2 \alpha_3)P(\sigma,\tau;\phi) + (c_1\alpha_2+c_2\alpha_3)P(\sigma,\tau;-3\phi) \right)}{(c_2 \alpha_2 + c_1\alpha_3) P(\sigma,\tau; \phi) + (c_2 \alpha_1 + c_1\alpha_3)P(\sigma,\tau;-3\phi)}.
\end{align}
Substituting this into the expression \eqref{eq:21121c9} for $c_4$ gives
\begin{align}
c_4 &= - \frac{c_1\left((c_1\alpha_1 + c_2 \alpha_3)P(\sigma,\tau;\phi) + (c_1\alpha_2+c_2\alpha_3)P(\sigma,\tau;-3\phi) \right)}{(c_2 \alpha_2 + c_1\alpha_3) P(\sigma,\tau; \phi) + (c_2 \alpha_1 + c_1\alpha_3)P(\sigma,\tau;-3\phi)}.
\end{align}

Now, recall from \hyperlink{c2}{Case 2} that
\begin{align}
a_1(u) &= h(u)\left(\sigma \cos(u) + \tau \sin(u) \right), \\
a_2(u) &= h(u) \left(\sigma \cos(u) - \tau\sin(u) \right), \\
b_k(u) &= c_k(u), &k&\in \{1,2,3,4\}, \\
b_5(u) &= b_6(u) = b_7(u) = b_8(u) = 0,
\end{align}
and from \hyperlink{c211}{Case 2.1.1} that
\begin{align}
h(u) &= \frac{b(u)\left(c_1\alpha_1 + c_3 \alpha_2 +(c_2+c_4)\alpha_3\right)\left(\sigma\cos(u-\phi) + \tau \sin (u-\phi)\right)}{\sin(2u) P(\sigma,\tau;-3\phi)}.
\end{align}
Substituting our expressions for $c_3$ and $c_4$ into these sets all diagram coefficients to zero, so this is a solution to the BYBE. Rescaling this to remove denominators and setting $c_1 = \mu$ and $c_2 = \nu$, this becomes \hyperlink{solIV}{Solution IV}, with $P(\sigma,\tau;-3\phi) \neq 0$, $P(\sigma,\tau; -\phi) \neq 0$, and
\begin{align}
(\nu \alpha_2 + \mu \alpha_3) P(\sigma,\tau; \phi) + (\nu \alpha_1 + \mu\alpha_3)P(\sigma,\tau;-3\phi) \neq 0.
\end{align}

\hypertarget{c211212}{} \phantom{211212}

\noindent\textbf{Case 2.1.1.2.1.2:} $c_3$ coefficient $=0$ \hfill \hyperlink{flowchart}{$\uparrow$}

\noindent If the coefficient of $c_3$ in \eqref{eq:21121c8} is zero, then
\begin{align}
c_1 &=- c_2 \frac{\alpha_1 P(\sigma,\tau;-3\phi) + \alpha_2P(\sigma,\tau; \phi)}{2\alpha_3\, \cos(2\phi)  P(\sigma,\tau; -\phi)},
\end{align}
where we have used that $P(\sigma,\tau; \phi)+P(\sigma,\tau; -3\phi) = 2\cos(2\phi)P(\sigma,\tau; -\phi)$. Note that we have already assumed $P(\sigma,\tau; -\phi) \neq 0$ in \hyperlink{c21121}{Case 2.1.1.2.1}. Substituting this and the expression \eqref{eq:21121c9} into our solution ansatz gives
\begin{align}
a_1(u) &= -\frac{b(u)(\alpha_1-\alpha_2)\left(\sigma \cos(u) + \tau\sin(u) \right) \left(\sigma\cos(u-\phi) + \tau\sin(u-\phi)\right)}{4 \alpha_3 \sin(2u)\cos^2(2\phi) P(\sigma,\tau;-\phi)^2} \nonumber \\
&\hspace{15mm} \times \Big(c_2\big(\alpha_1 P(\sigma,\tau; -3\phi) + \alpha_2P(\sigma,\tau; \phi) \big) + 2\,c_3\,\alpha_3 \cos(2\phi)  P(\sigma,\tau;-\phi)\Big),\\
a_2(u) &=  -\frac{b(u)(\alpha_1-\alpha_2)\left(\sigma \cos(u) - \tau\sin(u) \right) \left(\sigma\cos(u-\phi) + \tau\sin(u-\phi)\right)}{4 \alpha_3 \sin(2u)\cos^2(2\phi) P(\sigma,\tau;-\phi)^2} \nonumber \\
&\hspace{15mm} \times \Big(c_2\big(\alpha_1 P(\sigma,\tau; -3\phi) + \alpha_2P(\sigma,\tau; \phi) \big) + 2\,c_3\,\alpha_3 \cos(2\phi)  P(\sigma,\tau;-\phi)\Big), \label{eq:211212a}\\
b_1(u) &= - c_2 \, b(u)\, \frac{\alpha_1 P(\sigma,\tau;-3\phi) + \alpha_2P(\sigma,\tau; \phi)}{2\alpha_3 \cos(2\phi) P(\sigma,\tau; -\phi)^2}, \\
b_2(u) &= c_2\, b(u), \\
b_3(u) &= c_3\, b(u), \\
b_4(u) &= \frac{-b(u)}{2 \alpha_3\cos(2\phi) P(\sigma,\tau; -\phi)^2} \left(c_3 \big(\alpha_1 P(\sigma,\tau;-3\phi) + \alpha_2P(\sigma,\tau; \phi)\big) -  \frac{c_2\, Q(\sigma, \tau) }{2\alpha_3 \cos(2\phi) }\right), \label{eq:211212b9} \\
b_5(u) &= b_6(u) = b_7(u) = b_8(u) = 0, 
\end{align}
where we define
\begin{align}
Q(\sigma, \tau) := A_1 P(\sigma,\tau; \phi) P(\sigma,\tau;-3\phi) + A_2 \left(P(\sigma,\tau;\phi)^2 + P(\sigma,\tau;-3\phi)^2\right). \label{eq:Qdef}
\end{align}
This is a quartic polynomial in $\sigma$ and $\tau$. The coefficient of the diagram
\begin{align}
\begin{tikzpicture}[baseline={([yshift=-1mm]current bounding box.center)},xscale=0.5,yscale=0.5]
{
\draw[dotted] (0,0.5)--(3,0.5);
\draw [very thick](0,-1.5) -- (0,0.5);
\draw [very thick](3,-1.5)--(3,0.5);
\draw (0,0) to[out=0,in=-90] (0.75,0.5);
\draw (0,-1) to[out=0,in=-90] (1.5,0.5);
\filldraw (0.325,0.5) circle (0.08);
\filldraw (1.125,0.5) circle (0.08);
\filldraw[fill=white] (3,-1) circle (0.12);
\filldraw[fill=white] (3,0) circle (0.12);
}
\end{tikzpicture}
\end{align}
then becomes
\begin{align}
&\frac{c_2^2\, b(u)b(v) \sin(u-v)\tan(2\phi) \left(\cos(2u+2v-5\phi) - \cos(\phi)\right)}{8\alpha_3\, \cos(2\phi) P(\sigma,\tau;-\phi)^2} \ Q(\sigma, \tau),
\end{align}
and this must equal zero. Next we consider whether $Q(\sigma,\tau)$ is zero.

\hypertarget{c2112121}{} \phantom{2112121}

\noindent\textbf{Case 2.1.1.2.1.2.1:} $Q(\sigma,\tau) \neq 0$ \hfill \hyperlink{flowchart}{$\uparrow$}

\nopagebreak
\noindent If $Q(\sigma,\tau) \neq 0$, then we must have $c_2 = 0$. Substituting this into the ansatz \eqref{eq:211212a}--\eqref{eq:211212b9} sets all diagram coefficients to zero, so this is a solution. Rescaling by a factor of
\begin{align}
\frac{Q(\sigma,\tau)}{2 c_3 \alpha_3\, \cos(2\phi) P(\sigma,\tau;-\phi)},
\end{align}
this becomes \hyperlink{solIV}{Solution IV}, with $P(\sigma,\tau; -3\phi) \neq 0$, $P(\sigma,\tau; -\phi) \neq 0$, and
\begin{align}
(\nu \alpha_2 + \mu \alpha_3) P(\sigma,\tau; \phi) + (\nu \alpha_1 + \mu\alpha_3)P(\sigma,\tau;-3\phi) = 0.
\end{align}

\hypertarget{c2112122}{} \phantom{2112122}

\noindent\textbf{Case 2.1.1.2.1.2.2:} $Q(\sigma,\tau)=0$ \hfill \hyperlink{flowchart}{$\uparrow$}

\noindent Rescaling by a factor of $-\alpha_3 P(\sigma,\tau; \phi)^2$, this becomes \hyperlink{solV}{Solution V}, where we note that the coefficient of $c_2$ in $b_4$ is now zero.

\hypertarget{c21122}{} \phantom{21122}

\noindent\textbf{Case 2.1.1.2.2:} $P(\sigma,\tau;-\phi) = 0$ \hfill \hyperlink{flowchart}{$\uparrow$}

\noindent If $P(\sigma,\tau;-\phi) = 0$, then $\sigma = -\tau \tan\! \left(\frac{\phi}{2}\right)$ or $\sigma = \tau \cot\! \left(\frac{\phi}{2}\right)$.

\hypertarget{c211221}{} \phantom{211221}

\noindent\textbf{Case 2.1.1.2.2.1:} $\sigma = -\tau \tan\left(\frac{\phi}{2}\right)$ \hfill \hyperlink{flowchart}{$\uparrow$}

\noindent Substituting $\sigma = -\tau \tan \left(\frac{\phi}{2}\right)$ into \eqref{eq:2112} from \hyperlink{c2112}{Case 2.1.1.2} gives
\begin{align}
0 &= 2(c_1-c_3) (\alpha_1-\alpha_2) \,\tau^2 \tan\!\left(\frac{\phi}{2}\right) \cos(\phi).
\end{align}
Since $\tau = 0$ would contradict $P(\sigma,\tau;-3\phi) \neq 0$ (\hyperlink{c211}{Case 2.1.1}), it follows that $c_1=c_3$. We also have $c_1 c_3 = c_2c_4$ (\hyperlink{c2112}{Case 2.1.1.2}), so
\begin{align}
c_1 &= c_3 = \mu \nu \, \rho, &c_2 &= \nu^2\, \rho, &c_4 &= \mu^2\, \rho,
\end{align}
for some $\mu, \nu, \rho \in \C$. We note that $\rho \neq 0$, since $c_1$, $c_2$, $c_3$ and $c_4$ are not all zero.

This gives the solution
\begin{align}
a_1(u) &= \frac{-\rho\, b(u)}{\sin(2u)\sin(2\phi)}\left(\mu \nu (\alpha_1+\alpha_2) + \left(\mu^2 + \nu^2\right) \alpha_3\right) \cos\!\left(u-\frac{\phi}{2}\right) \cos\!\left(u-\frac{3\phi}{2}\right), \\
a_2(u) &= \frac{-\rho\, b(u)}{\sin(2u)\sin(2\phi)}\left(\mu \nu (\alpha_1+\alpha_2) + \left(\mu^2 + \nu^2\right) \alpha_3 \right) \cos\!\left(u+\frac{\phi}{2}\right) \cos\!\left(u-\frac{3\phi}{2}\right), \\
b_1(u) &= b_3(u) =\mu \nu \, \rho\, b(u) , \\
b_2(u) &= \nu^2\, \rho\, b(u) , \\
b_4(u) &= \mu^2\, \rho\, b(u), \\
b_5(u) &= b_6(u)=b_7(u)=b_8(u)=0, 
\end{align}
which is \hyperlink{solIV}{Solution IV} with $\sigma = -\tau \tan\!\left(\frac{\phi}{2}\right) \neq 0$, multiplied by $\frac{1}{4(\alpha_1-\alpha_2)}\rho\cot\!\left(\frac{\phi}{2}\right)\sec(\phi) \neq 0$.

\hypertarget{c211222}{} \phantom{211222}

\noindent\textbf{Case 2.1.1.2.2.2:} $\sigma = \tau \cot\left(\frac{\phi}{2}\right)$ \hfill \hyperlink{flowchart}{$\uparrow$}

\nopagebreak
\noindent This is analogous to \hyperlink{c211221}{Case 2.1.1.2.2.1}, and gives a nonzero scalar multiple of \hyperlink{solIV}{Solution IV} with $\sigma = \tau \cot\!\left(\frac{\phi}{2}\right) \neq 0$.

\hypertarget{c212}{} \phantom{212}

\noindent\textbf{Case 2.1.2:} $P(\sigma,\tau;-3\phi)=0$ \hfill \hyperlink{flowchart}{$\uparrow$}

\nopagebreak
\noindent If $P(\sigma,\tau; -3\phi) = 0$, then $\sigma = -\tau \tan\!\left(\frac{3\phi}{2}\right)$ or $\sigma = \tau \cot\!\left(\frac{3\phi}{2}\right)$.

\hypertarget{c2121}{} \phantom{2121}

\noindent\textbf{Case 2.1.2.1:} $\sigma = -\tau \tan\left(\frac{3\phi}{2}\right)$ \hfill \hyperlink{flowchart}{$\uparrow$}

\noindent Recalling that $c_1$, $c_2$, $c_3$ and $c_4$ are not all zero, consider the case where $c_1 \neq 0$. Then we can set the coefficient of the diagram
\begin{align}
\begin{tikzpicture}[baseline={([yshift=-1mm]current bounding box.center)},xscale=0.5,yscale=0.5]
{
\draw[dotted] (0,0.5)--(3,0.5);
\draw [very thick](0,-1.5) -- (0,0.5);
\draw [very thick](3,-1.5)--(3,0.5);
\draw (0,0) to[out=0,in=-90] (1,0.5);
\draw (3,-1) to[out=180,in=270] (2,0.5);
\filldraw[fill=white] (0,-1) circle (0.12);
\filldraw[fill=white] (3,0) circle (0.12);
}
\end{tikzpicture}
\label{eq:2121M28}
\end{align}
equal to zero, and rearrange to find
\begin{align}
&\quad \left(\frac{c_2c_4}{c_1} \alpha_1 + c_1 \alpha_2 + (c_2+c_4) \alpha_3 \right) \cot\!\left(u+\frac{3\phi}{2}\right) + \frac{\tau\, h(u)}{b(u)} \sec\! \left(\frac{3\phi}{2}\right) \csc\!\left(u+\frac{3\phi}{2}\right) \sin(2u) \nonumber \\
&= \left(\frac{c_2c_4}{c_1} \alpha_1 + c_1 \alpha_2 + (c_2+c_4) \alpha_3 \right) \cot\!\left(v+\frac{3\phi}{2}\right) + \frac{\tau\, h(v)}{b(v)} \sec\! \left(\frac{3\phi}{2}\right) \csc\!\left(v+\frac{3\phi}{2}\right) \sin(2v).
\end{align}
Since the left- and right-hand sides are functions of $u$ and $v$, respectively, it follows that both sides are equal to some constant $\kappa \in \C$. Thus
\begin{align}
\tau\, h(u) &= \frac{b(u)}{\sin(2u)} \cos\!\left(\frac{3\phi}{2}\right) \left(\! \kappa \sin\!\left(u+ \frac{3\phi}{2}\right) - \left(\frac{c_2c_4}{c_1} \alpha_1 + c_1 \alpha_2 + (c_2+c_4) \alpha_3 \!\right) \cos\!\left(u+\frac{3\phi}{2}\right)\!\right). \label{eq:2121tauh}
\end{align}
Substituting this into the coefficients of
\begin{align}
\begin{tikzpicture}[baseline={([yshift=-1mm]current bounding box.center)},xscale=0.5,yscale=0.5]
{
\draw[dotted] (0,0.5)--(3,0.5);
\draw [very thick](0,-1.5) -- (0,0.5);
\draw [very thick](3,-1.5)--(3,0.5);
\draw (0,0) to[out=0,in=-90] (1,0.5);
\draw (3,-1) to[out=180,in=270] (2,0.5);
\filldraw (0.5,0.5) circle (0.08);
\filldraw (1.5,0.5) circle (0.08);
\filldraw[fill=white] (0,-1) circle (0.12);
\filldraw[fill=white] (3,0) circle (0.12);
}
\end{tikzpicture}
\; ,\qquad
\begin{tikzpicture}[baseline={([yshift=-1mm]current bounding box.center)},xscale=0.5,yscale=0.5]
{
\draw[dotted] (0,0.5)--(3,0.5);
\draw [very thick](0,-1.5) -- (0,0.5);
\draw [very thick](3,-1.5)--(3,0.5);
\draw (0,0) to[out=0,in=-90] (1,0.5);
\draw (3,-1) to[out=180,in=270] (2,0.5);
\filldraw (0.5,0.5) circle (0.08);
\filldraw (2.5,0.5) circle (0.08);
\filldraw[fill=white] (0,-1) circle (0.12);
\filldraw[fill=white] (3,0) circle (0.12);
}
\end{tikzpicture}
\; , \qquad
\begin{tikzpicture}[baseline={([yshift=-1mm]current bounding box.center)},xscale=0.5,yscale=0.5]
{
\draw[dotted] (0,0.5)--(3,0.5);
\draw [very thick](0,-1.5) -- (0,0.5);
\draw [very thick](3,-1.5)--(3,0.5);
\draw (0,0) to[out=0,in=-90] (1,0.5);
\draw (3,-1) to[out=180,in=270] (2,0.5);
\filldraw (1.5,0.5) circle (0.08);
\filldraw (2.5,0.5) circle (0.08);
\filldraw[fill=white] (0,-1) circle (0.12);
\filldraw[fill=white] (3,0) circle (0.12);
}
\end{tikzpicture}
\label{eq:M251518}
\end{align}
and setting them equal to zero yields
\begin{align}
0 &= \frac{1}{2c_1} (c_1c_3-c_2c_4) (c_2 \alpha_1 + c_1 \alpha_3)\, b(u)b(v) \left(\cos(2v) - \cos(2(u-3\phi)) \right) \sin(u-v) \sin(2\phi), \\
0 &= \frac{-1}{2c_1} (c_1c_3-c_2c_4) (c_3 \alpha_1 + c_1 \alpha_2)\, b(u)b(v) \left(\cos(2v) - \cos(2(u-3\phi)) \right) \sin(u-v) \sin(2\phi), \\
0 &= \frac{1}{2c_1} (c_1c_3-c_2c_4) (c_4 \alpha_1 + c_1 \alpha_3)\, b(u)b(v) \left(\cos(2v) - \cos(2(u-3\phi)) \right) \sin(u-v) \sin(2\phi).
\end{align}
Hence
\begin{align}
c_1 c_3 = c_2 c_4, \label{eq:2121c6c8c7c9}
\end{align}
or
\begin{align}
c_1 &= -\alpha_1 \rho, &c_2 &= c_4 = \alpha_3 \rho, &c_3 &= -\alpha_2 \rho, \label{eq:2121rho}
\end{align}
for some $\rho \in \C$. Note that $\rho \neq 0$ because $c_1 \neq 0$. In the first case, we find
\begin{align}
\tau\, h(u) &= \frac{b(u)}{\sin(2u)} \cos\!\left(\frac{3\phi}{2}\right) \left(\! \kappa \sin\!\left(u+ \frac{3\phi}{2}\right) - \left(c_3 \alpha_1 + c_1 \alpha_2 + (c_2+c_4) \alpha_3 \right) \cos\!\left(u+\frac{3\phi}{2}\right)\!\right), \label{eq:tauhfirst}
\end{align}
with $c_1c_3=c_2c_4$, and in the second,
\begin{align}
\tau\, h(u) &= \frac{b(u)}{\sin(2u)} \cos\!\left(\frac{3\phi}{2}\right) \left(\! \kappa \sin\!\left(u+ \frac{3\phi}{2}\right) +A_2\, \rho  \cos\!\left(u+\frac{3\phi}{2}\right)\!\right). \label{eq:tauhsecond}
\end{align}

For the cases $c_2 \neq 0$, $c_3 \neq 0$ and $c_4 \neq 0$, we instead use the coefficients of the three diagrams in \eqref{eq:M251518}, respectively, to find expressions similar to \eqref{eq:2121tauh}. Substituting these into the other three diagrams (including \eqref{eq:2121M28}), we similarly find that \eqref{eq:2121c6c8c7c9} or \eqref{eq:2121rho} must hold, and get the same expressions \eqref{eq:tauhfirst} and \eqref{eq:tauhsecond} for $\tau\, h(u)$, respectively.

If \eqref{eq:2121rho} holds, the coefficient of
\begin{align}
\begin{tikzpicture}[baseline={([yshift=-1mm]current bounding box.center)},xscale=0.5,yscale=0.5]
{
\draw[dotted] (0,0.5)--(3,0.5);
\draw [very thick](0,-1.5) -- (0,0.5);
\draw [very thick](3,-1.5)--(3,0.5);
\draw (0,0) arc (90:-90:0.5);
\filldraw[fill=white] (3,-1) circle (0.12);
\filldraw[fill=white] (3,0) circle (0.12);
}
\end{tikzpicture}
\end{align}
becomes
\begin{align}
\frac{A_1\, \rho}{2} \, b(u)b(v) (\cos(2u)-\cos(2v)) \sin\!\left(u-\frac{5\phi}{2}\right) \sin(2\phi) \left(\!A_2\, \rho \cos\!\left(v+\frac{3\phi}{2}\right) + \kappa \sin\!\left(v+\frac{3\phi}{2}\right)\!\right).
\end{align}
Since $\rho \neq 0$, and $\cos\!\left(v+\frac{3\phi}{2}\right)$ and $\sin\!\left(v+\frac{3\phi}{2}\right)$ are linearly independent, this expression is nonzero. Hence we cannot have a solution with \eqref{eq:2121rho}, and so we must have $c_1c_3=c_2c_4$. This means we can write
\begin{align}
c_1 &= m_1m_2, 
&c_2 &= m_1m_3,
&c_3 &= m_3 m_4,
&c_4 &= m_2 m_4,
\end{align}
for some $m_1,m_2,m_3,m_4 \in \C$. Since $c_1$, $c_2$, $c_3$ and $c_4$ are not all zero, $m_2$ and $m_3$ are not both zero, and $m_1$ and $m_4$ are not both zero.

The coefficients of the diagrams
\begin{align}
\begin{tikzpicture}[baseline={([yshift=-1mm]current bounding box.center)},xscale=0.5,yscale=0.5]
{
\draw[dotted] (0,0.5)--(3,0.5);
\draw [very thick](0,-1.5) -- (0,0.5);
\draw [very thick](3,-1.5)--(3,0.5);
\draw (0,0) to[out=0,in=-90] (0.75,0.5);
\draw (0,-1) to[out=0,in=-90] (1.5,0.5);
\filldraw[fill=white] (3,-1) circle (0.12);
\filldraw[fill=white] (3,0) circle (0.12);
}
\end{tikzpicture}
\; , \qquad
\begin{tikzpicture}[baseline={([yshift=-1mm]current bounding box.center)},xscale=0.5,yscale=0.5]
{
\draw[dotted] (0,0.5)--(3,0.5);
\draw [very thick](0,-1.5) -- (0,0.5);
\draw [very thick](3,-1.5)--(3,0.5);
\draw (0,0) to[out=0,in=-90] (0.75,0.5);
\draw (0,-1) to[out=0,in=-90] (1.5,0.5);
\filldraw (0.325,0.5) circle (0.08);
\filldraw (1.125,0.5) circle (0.08);
\filldraw[fill=white] (3,-1) circle (0.12);
\filldraw[fill=white] (3,0) circle (0.12);
}
\end{tikzpicture}
\; , \qquad
\begin{tikzpicture}[baseline={([yshift=-1mm]current bounding box.center)},xscale=0.5,yscale=0.5]
{
\draw[dotted] (0,0.5)--(3,0.5);
\draw [very thick](0,-1.5) -- (0,0.5);
\draw [very thick](3,-1.5)--(3,0.5);
\draw (0,0) to[out=0,in=-90] (0.75,0.5);
\draw (0,-1) to[out=0,in=-90] (1.5,0.5);
\filldraw (0.325,0.5) circle (0.08);
\filldraw (1.875,0.5) circle (0.08);
\filldraw[fill=white] (3,-1) circle (0.12);
\filldraw[fill=white] (3,0) circle (0.12);
}
\end{tikzpicture}
\; , \qquad
\begin{tikzpicture}[baseline={([yshift=-1mm]current bounding box.center)},xscale=0.5,yscale=0.5]
{
\draw[dotted] (0,0.5)--(3,0.5);
\draw [very thick](0,-1.5) -- (0,0.5);
\draw [very thick](3,-1.5)--(3,0.5);
\draw (0,0) to[out=0,in=-90] (0.75,0.5);
\draw (0,-1) to[out=0,in=-90] (1.5,0.5);
\filldraw (1.875,0.5) circle (0.08);
\filldraw (1.125,0.5) circle (0.08);
\filldraw[fill=white] (3,-1) circle (0.12);
\filldraw[fill=white] (3,0) circle (0.12);
}
\end{tikzpicture}
 \label{eq:2121M67910}
\end{align}
become
\begin{align}
&c_i\, b(u)b(v)(\cos(2u)-\cos(2v)) \sin(2\phi) \cos\!\left(u-\frac{5\phi}{2}\right)\sin\!\left(v+\frac{3\phi}{2}\right) \nonumber \\
&\hspace{5mm} \times \bigg( \kappa \sin(4\phi) - (m_3 m_4 \alpha_1 + m_1m_2\alpha_2 + (m_1m_3+m_2 m_4)\alpha_3) \cos(4\phi)  \nonumber \\
&\hspace{15mm}- \frac{1}{2}(m_1m_2 \alpha_1 + m_3 m_4\alpha_2 + (m_1m_3+m_2 m_4)\alpha_3) \left( \tan\!\left(u-\frac{5\phi}{2}\right) \cot\!\left(v+\frac{3\phi}{2}\right)+1 \right)\!\! \bigg),
\end{align}
for $i=1,2,3,4$, respectively. Since $c_i$ is nonzero for some $i \in \{1,2,3,4\}$, setting these coefficients to zero implies the last set of brackets is zero. Since the constant function $1$ and $\tan\!\left(u-\frac{5\phi}{2}\right) \cot\!\left(v+\frac{3\phi}{2}\right)+1$ are linearly independent, the coefficients of both of these functions must be zero. Hence we have
\begin{align}
\kappa = (m_3 m_4 \alpha_1 + m_1m_2\alpha_2 + (m_1m_3+m_2 m_4)\alpha_3) \cot(4\phi), \label{eq:2121kappa}
\end{align}
and 
\begin{align}
0 &= m_1m_2 \alpha_1 + m_3 m_4\alpha_2 + (m_1m_3+m_2 m_4)\alpha_3. \label{eq:2121const}
\end{align}
Since $m_2$ and $m_3$ are not both zero, \eqref{eq:2121const} implies $m_4\alpha_2 + m_1\alpha_3 = -m_2\, z$ and $m_1\alpha_1 + m_4 \alpha_3 = m_3 \, z$, for some $z \in \C$. Hence
\begin{align}
m_1 &= \frac{(m_3 \alpha_2 + m_2 \alpha_3)z}{A_2}, &m_4 &= - \frac{(m_2\alpha_1 + m_3\alpha_3)z}{A_2}.
\end{align}
Applying this and \eqref{eq:2121kappa} sets all diagram coefficients to zero. Noting that $z \neq 0$, because $m_1$ and $m_4$ are not both zero, we can rescale this solution by a factor of $A_2/z$ and set $m_2 = \mu$ and $m_3 = \nu$ to produce \hyperlink{solIV}{Solution IV}, with $\sigma = -\tau \tan\!\left(\frac{3\phi}{2}\right)$.

\hypertarget{c2122}{} \phantom{2122}

\noindent\textbf{Case 2.1.2.2:} $\sigma = \tau \cot\left(\frac{3\phi}{2}\right)$ \hfill \hyperlink{flowchart}{$\uparrow$}

\noindent This case is analogous to \hyperlink{c2121}{Case 2.1.2.1}, and yields a nonzero scalar multiple of \hyperlink{solIV}{Solution IV} with $\sigma = \tau \cot\!\left(\frac{3\phi}{2}\right)$.

\hypertarget{c22}{} \phantom{22}

\noindent\textbf{Case 2.2:} $c_1=c_2=c_3=c_4=0$ \hfill \hyperlink{flowchart}{$\uparrow$}

\nopagebreak
\noindent Recall from Case \hyperlink{c2}{2} that 
\begin{align}
a_1(u) &= h(u) \left(\sigma \cos(u) + \tau\sin(u) \right), \\
a_2(u) &= h(u) \left(\sigma \cos(u) - \tau \sin(u) \right), \\
b_k(u) &= c_kb(u), &k \in \{1,2,3,4\}, \\
b_5(u) &= b_6(u) = b_7(u)=b_8(u) = 0, 
\end{align}
where $b$ and $h$ are nonzero, and $\sigma, \tau, c_k \in \C$ for each $k \in \{1,2,3,4\}$. With $c_1=c_2=c_3=c_4=0$, the coefficient of
\begin{align}
\begin{tikzpicture}[baseline={([yshift=-1mm]current bounding box.center)},xscale=0.5,yscale=0.5]
{
\draw[dotted] (0,0.5)--(3,0.5);
\draw [very thick](0,-1.5) -- (0,0.5);
\draw [very thick](3,-1.5)--(3,0.5);
\draw (0,0) arc (90:-90:0.5);
\filldraw[fill=white] (3,-1) circle (0.12);
\filldraw[fill=white] (3,0) circle (0.12);
}
\end{tikzpicture}
\end{align}
becomes
\begin{align}
\frac{1}{2} h(u) h(v) \sin (2 u) \sin (2 v) \sin (2 \phi) (\cos(2 u)-\cos(2 v)) P(\sigma,\tau;-3\phi).
\end{align}
For this to be zero for all $u,v$, we must have $P(\sigma,\tau;-3\phi)=0$, and therefore $\sigma = - \tau \tan\left(\frac{3\phi}{2}\right)$ or $\sigma = \tau\cot\left(\frac{3\phi}{2}\right)$. After some rescaling, these yield the solutions in Cases \hyperlink{c221}{2.2.1} and \hyperlink{c222}{2.2.2}, respectively. These coincide with the $\dTL_n$ BYBE solutions found by Yung and Batchelor in \cite{BatchelorYung}.

\hypertarget{c221}{} \phantom{221}
\vspace{-5mm}

\noindent\textbf{Case 2.2.1:} $\sigma = - \tau \tan\left(\frac{3\phi}{2}\right)$ \hfill \hyperlink{flowchart}{$\uparrow$}

\noindent This gives the solution
\vspace{-2mm}
\begin{align}
a_1(u) &= h(u)\sin\left(u-\frac{3\phi}{2}\right), \\
a_2(u) &= -h(u) \sin\left(u+ \frac{3\phi}{2}\right), \\
b_1(u)&=b_2(u)=b_3(u)=b_4(u)=b_5(u)=b_6(u)=b_7(u)=b_8(u) = 0.
\end{align}
Up to scaling, this is \hyperlink{solI}{Solution I} with $\mu=\nu=0$. Note that allowing $h=0$ would give the zero solution, and that Solution I becomes the zero solution when $b=0$, so Solution I has now been fully accounted for.

\hypertarget{c222}{} \phantom{222}

\noindent\textbf{Case 2.2.2:} $\sigma = \tau \cot\left(\frac{3\phi}{2}\right)$ \hfill \hyperlink{flowchart}{$\uparrow$}

\noindent This gives the solution
\vspace{-2mm}
\begin{align}
a_1(u) &= h(u)\cos\left(u-\frac{3\phi}{2}\right), \\
a_2(u) &= h(u) \cos\left(u+ \frac{3\phi}{2}\right), \\
b_1(u)&=b_2(u)=b_3(u)=b_4(u)=b_5(u)=b_6(u)=b_7(u)=b_8(u) = 0.
\end{align}
Up to scaling, this is \hyperlink{solII}{Solution II} with $\mu=\nu=0$. Note that allowing $h=0$ would give the zero solution, and that Solution II becomes the zero solution when $b=0$, so Solution II has now been fully accounted for.
\end{proof}

\subsubsection{Transfer tangle}\label{sss:dilutetransfer}
As we did for the fully-packed model, we construct the transfer tangle
\begin{align}
    T(u) = \begin{tikzpicture}[baseline={([yshift=-1mm]current bounding box.center)},scale=1]
{
\draw[lstring,dashed,dash phase=1pt] (0.5,-1)--(0.5,-0.5);
\draw[lstring,dashed,dash phase=1pt] (0.5,-5)--(0.5,-5.5);
\draw[lstring,dashed,dash phase=1pt] (1.5,-1)--(1.5,-0.5);
\draw[lstring,dashed,dash phase=1pt] (1.5,-5)--(1.5,-5.5);
\facetop{(0,0)}{$3\phi -u$}
\faceop{(0,-1)}{$u$}
\faceop{(1,-1)}{$3\phi - u$}
\faceop{(0,-2)}{$u$}
\faceop{(1,-2)}{$3\phi-u$}
\vdotsq{(0,-3)}
\vdotsq{(1,-3)}
\faceop{(0,-4)}{$u$}
\faceop{(1,-4)}{$3\phi - u$}
\facebot{(0,-6)}{$u$}
}
\end{tikzpicture}
\end{align}
out of bulk and boundary face operators that satisfy the YBE, the local inversion relation, crossing symmetry and the BYBEs. We similarly have $T(u)T(v) = T(v)T(u)$ for all $u,v \in \C$, as in \cite[\S 2.4]{BehrendPearceOBrien}. Note that the boundary face operators are linked to the bulk face operators with dashed identity strings, instead of the solid strings used in the fully-packed case.

\section{Discussion}
In this paper, we constructed the ghost algebra $\Gh_n$ and the dilute ghost algebra $\dGh_n$. These are two-boundary generalisations of the TL and dilute TL algebras, respectively, with three important properties. First, the algebras are associative. Second, their diagram bases include diagrams with an odd number of strings connected to one or both boundaries. Third, associated with each boundary, they have a unital or non-unital subalgebra, respectively, that is isomorphic to the one-boundary TL algebra with two distinct boundary parameters. 

In Section \ref{s:intro}, we considered the algebra obtained by allowing diagrams with an odd number of strings connected to each boundary, and applying the multiplication rules from $\BBTL_n$. It was shown that such an algebra cannot have all three of our desired properties simultaneously. This was resolved by introducing bookkeeping devices called ghosts on the boundaries of our diagrams. The parity of the string endpoints on each boundary is now determined by counting ghosts as well as strings, and boundary arcs leave behind a ghost at each endpoint when removed. Associativity is then ensured by requiring the sum of the number of strings and ghosts on each boundary to be even. 

We then introduced lattice loop models associated with the ghost algebra and the dilute ghost algebra. The bulk of the lattices is the same as the TL and dilute TL lattice models, respectively, so we used existing bulk face operators from these models that are known to satisfy the YBE, local inversion relations, and have crossing symmetry. The boundary face operators, however, needed to incorporate the new ghosts. Given the bulk face operators, we classified all boundary face operators satisfying the BYBEs for generic values of the algebra parameters. From these face operators, we built commuting families of transfer tangles, using the construction from \cite{BehrendPearceOBrien}. Some of our BYBE solutions for the ghost and dilute ghost algebras have multiple degrees of freedom, and potentially nonzero coefficients on all possible boundary triangles. It would thus be interesting to compute the Hamiltonians arising from these transfer tangles, and study their spectra, to learn whether these new boundary conditions lead to new physical behaviour.

The ghost algebra and the dilute ghost algebra are constructed to be cellular with respect to the anti-involution given by reflecting basis diagrams about a vertical line. Graham and Lehrer's paper \cite{GL} introduces the definition of cellular algebras, and also proves a number of results about their representation theory, provided they are finite-dimensional and defined over a field. In particular, the classification of finite-dimensional irreducible modules amounts to determining whether the discriminants of certain bilinear forms are zero. The bilinear forms in question are defined on \textit{cell modules}; for $\Gh_n$ and $\dGh_n$, these modules are spanned by the $\Gh_n$- and $\dGh_n$-half-diagrams, respectively, with a fixed number of defects, as defined in Appendix \ref{app:dims}. The action is defined similarly to the usual diagram multiplication. Computing these discriminants for all $n$ and all defect numbers is highly non-trivial, given how fast the dimensions of these modules grow, but the author hopes to achieve it in future work.

Finally, in Appendix \ref{app:nc}, we considered generalisations of the ghost algebra and the dilute ghost algebra which are no longer cellular with respect to the reflection anti-involution. These are still associative and have parity-dependent boundary parameters, but can be further generalised. Indeed, the only purpose of the ghosts in our diagrams is to keep track of which strings should be numbered even or odd during multiplication. Thus, each string endpoint on each boundary could be labelled odd or even, and then we could remove the ghosts entirely. Similar algebras can then be constructed with any number of labels that may be applied to each boundary connection, and parameters assigned to boundary arcs with each possible pairing of labels. One could then find necessary relations on the parameters for these algebras to be cellular with respect to different anti-involutions, or determine whether their more general boundary conditions lead to even more interesting structure in the corresponding lattice models.

Another natural extension of this work would be to find a set of generators and relations for the ghost algebra and the dilute ghost algebra. We believe that the ghost algebra is generated by the usual TL generators $e_i$, $i=1, \dots, n-1$, and the boundary diagrams
\begin{align}
\begin{tikzpicture}[baseline={([yshift=-1mm]current bounding box.center)},scale=0.45]
{
\draw[dotted] (0,0.5)--(3,0.5);
\draw[dotted] (0,-5.5)--(3,-5.5);
\draw [very thick] (0,-5.5)--(0,0.5);
\draw [very thick] (3,-5.5)--(3,0.5);
\draw (0,0) to[out=0,in=-90] (1,0.5);
\draw (3,0) to[out=180,in=270] (2,0.5);
\draw (0,-1)--(3,-1);
\draw (0,-5)--(3,-5);
\node at (1.5,-2.8) [anchor=center] {$\vdots$};
}
\end{tikzpicture}\; ,
&&
\begin{tikzpicture}[baseline={([yshift=-1mm]current bounding box.center)},xscale=0.45,yscale=-0.45]
{
\draw[dotted] (0,0.5)--(3,0.5);
\draw[dotted] (0,-5.5)--(3,-5.5);
\draw [very thick] (0,-5.5)--(0,0.5);
\draw [very thick] (3,-5.5)--(3,0.5);
\draw (0,0) to[out=0,in=-90] (1,0.5);
\draw (3,0) to[out=180,in=270] (2,0.5);
\draw (0,-1)--(3,-1);
\draw (0,-5)--(3,-5);
\node at (1.5,-3.3) [anchor=center] {$\vdots$};
}
\end{tikzpicture}\; ,
&&
\begin{tikzpicture}[baseline={([yshift=-1mm]current bounding box.center)},xscale=0.45,yscale=0.45]
{
\draw[dotted] (0,0.5)--(3,0.5);
\draw[dotted] (0,-5.5)--(3,-5.5);
\draw [very thick] (0,-5.5)--(0,0.5);
\draw [very thick] (3,-5.5)--(3,0.5);
\draw (0,0) to[out=0,in=-90] (1,0.5);
\draw (0,-1) to[out=0,in=180] (3,0);
\draw (0,-5) to[out=0,in=180] (3,-4);
\draw (3,-5) to[out=180,in=90] (2,-5.5);
\node at (1.5,-2.3) [anchor=center] {$\vdots$};
\filldraw (2,0.5) circle (0.08);
\filldraw (1,-5.5) circle (0.08);
}
\end{tikzpicture}\; ,
&&
\begin{tikzpicture}[baseline={([yshift=-1mm]current bounding box.center)},xscale=-0.45,yscale=0.45]
{
\draw[dotted] (0,0.5)--(3,0.5);
\draw[dotted] (0,-5.5)--(3,-5.5);
\draw [very thick] (0,-5.5)--(0,0.5);
\draw [very thick] (3,-5.5)--(3,0.5);
\draw (0,0) to[out=0,in=-90] (1,0.5);
\draw (0,-1) to[out=0,in=180] (3,0);
\draw (0,-5) to[out=0,in=180] (3,-4);
\draw (3,-5) to[out=180,in=90] (2,-5.5);
\node at (1.5,-2.3) [anchor=center] {$\vdots$};
\filldraw (2,0.5) circle (0.08);
\filldraw (1,-5.5) circle (0.08);
}
\end{tikzpicture}\; ,
\end{align}
with each of the four possible arrangements of ghosts. Some of the techniques used to find a presentation for the symplectic blob algebra in \cite{SympBlobPres} could be used for the ghost algebra, though the generators of the last two types above present an additional complication. A presentation for the dilute ghost algebra would be much more tedious, given the difficulty of obtaining a presentation for the dilute TL algebra alone.

\subsection*{Acknowledgements}
The author is supported by the Australian Government Research Training Program Stipend and Tuition Fee Offset. The author would like to thank J\o{}rgen Rasmussen and Jon Links for their helpful feedback on drafts of this paper.

\appendix
\section{Dimension proofs} \label{app:dims}
In this section, we find the dimensions of the algebras $\Gh_n$, $\dGh_n$, $\Gho_n$ and $\dGho_n$ introduced in this paper. Table \ref{tab:ghostdims} lists these dimensions for small $n$. 

We first introduce some terminology. A string attached to both sides of the rectangle in a diagram is called a \textit{throughline}. Suppose we cut all of the throughlines in a $\Gh_n$-diagram and discard the right-hand side, along with any strings connected to it. If we ensure that both boundaries still have an even number of strings and ghosts by adding or removing a ghost on the far right as necessary, the result is called a \textit{$\Gh_n$-half-diagram}. For example, cutting the $\Gh_6$-diagrams 
\begin{align}
\begin{tikzpicture}[baseline={([yshift=-1mm]current bounding box.center)},xscale=0.5,yscale=0.5]
{
\draw[dotted] (0,0.5)--(3,0.5);
\draw[dotted] (0,-5.5)--(3,-5.5);
\draw [very thick](0,-5.5) -- (0,0.5);
\draw [very thick](3,-5.5)--(3,0.5);
\draw (0,0) arc (-90:0:0.5);
\draw (0,-1) arc (90:-90:0.5);
\draw (0,-3) to[out=0,in=-90] (1.3,0.5);
\draw (0,-4) to[out=0,in=180] (3,-1);
\draw (0,-5) to[out=0,in=180] (3,-2);
\draw (3,0) arc (270:180:0.5);
\draw (3,-3) to[out=180,in=90] (1.7,-5.5);
\draw (3,-4) arc (90:270:0.5);
\filldraw (0.9,0.5) circle (0.08);
\filldraw (0.85,-5.5) circle (0.08);
}
\end{tikzpicture}\; ,
\quad\ \ 
\begin{tikzpicture}[baseline={([yshift=-1mm]current bounding box.center)},xscale=0.5,yscale=0.5]
{
\draw[dotted] (0,0.5)--(3,0.5);
\draw[dotted] (0,-5.5)--(3,-5.5);
\draw [very thick](0,-5.5) -- (0,0.5);
\draw [very thick](3,-5.5)--(3,0.5);
\draw (0,0) to[out=0,in=180] (3,-4);
\draw (0,-1) to[out=0,in=0] (0,-4);
\draw (0,-2) arc (90:-90:0.5);
\draw (0,-5) -- (3,-5);
\draw (3,0) arc (90:270:0.5);
\draw (3,-2) arc (90:270:0.5);
}
\end{tikzpicture}\; ,
\quad\ \ 
\begin{tikzpicture}[baseline={([yshift=-1mm]current bounding box.center)},xscale=0.5,yscale=0.5]
{
\draw[dotted] (0,0.5)--(3,0.5);
\draw[dotted] (0,-5.5)--(3,-5.5);
\draw [very thick](0,-5.5) -- (0,0.5);
\draw [very thick](3,-5.5)--(3,0.5);
\draw (0,0) arc (90:0:1.65 and 5.5);
\draw (0,-1) arc (90:0:1.375 and 4.5);
\draw (0,-2) arc (90:0:1.1 and 3.5);
\draw (0,-3) arc (90:0:0.825 and 2.5);
\draw (0,-4) arc (90:0:0.55 and 1.5);
\draw (0,-5) arc (90:0:0.27 and 0.5);
\draw (3,-5) arc (270:180:1.65 and 5.5);
\draw (3,-4) arc (270:180:1.375 and 4.5);
\draw (3,-3) arc (270:180:1.1 and 3.5);
\draw (3,-2) arc (270:180:0.825 and 2.5);
\draw (3,-1) arc (270:180:0.55 and 1.5);
\draw (3,-0) arc (270:180:0.27 and 0.5);
}
\end{tikzpicture}\; ,
\quad\ \ 
\begin{tikzpicture}[baseline={([yshift=-1mm]current bounding box.center)},xscale=0.5,yscale=0.5]
{
\draw[dotted] (0,0.5)--(3,0.5);
\draw[dotted] (0,-5.5)--(3,-5.5);
\draw [very thick](0,-5.5) -- (0,0.5);
\draw [very thick](3,-5.5)--(3,0.5);
\draw (0,0) arc (-90:0:0.8 and 0.5);
\draw (0,-1) to[out=0,in=180] (3,0);
\draw (0,-2) to[out=0,in=180] (3,-1);
\draw (0,-3) to[out=0,in=180] (3,-2);
\draw (0,-4) to[out=0,in=180] (3,-3);
\draw (0,-5) to[out=0,in=180] (3,-4);
\draw (3,-5) arc (90:180:0.8 and 0.5);
\filldraw (1.9,0.5) circle (0.08);
\filldraw (1.1,-5.5) circle (0.08);
}
\end{tikzpicture}
\end{align}
in half produces the $\Gh_6$-half-diagrams
\begin{align}
\begin{tikzpicture}[baseline={([yshift=-1mm]current bounding box.center)},xscale=0.5,yscale=0.5]
{
\draw[dotted] (0,0.5)--(2,0.5);
\draw[dotted] (0,-5.5)--(2,-5.5);
\draw [very thick](0,-5.5) -- (0,0.5);
\draw (0,0) arc (-90:0:0.5);
\draw (0,-1) arc (90:-90:0.5);
\draw (0,-3) to[out=0,in=-90] (1.3,0.5);
\draw (0,-4) -- (2,-4);
\draw (0,-5) -- (2,-5);
\filldraw (0.9,0.5) circle (0.08);
\filldraw (1.7,0.5) circle (0.08);
}
\end{tikzpicture}\; ,
\qquad
\begin{tikzpicture}[baseline={([yshift=-1mm]current bounding box.center)},xscale=0.5,yscale=0.5]
{
\draw[dotted] (0,0.5)--(2,0.5);
\draw[dotted] (0,-5.5)--(2,-5.5);
\draw [very thick](0,-5.5) -- (0,0.5);
\draw (0,0) -- (2,0);
\draw (0,-1) to[out=0,in=0] (0,-4);
\draw (0,-2) arc (90:-90:0.5);
\draw (0,-5) -- (2,-5);
}
\end{tikzpicture}\; ,
\qquad
\begin{tikzpicture}[baseline={([yshift=-1mm]current bounding box.center)},xscale=0.5,yscale=0.5]
{
\draw[dotted] (0,0.5)--(2,0.5);
\draw[dotted] (0,-5.5)--(2,-5.5);
\draw [very thick](0,-5.5) -- (0,0.5);
\draw (0,0) arc (90:0:1.65 and 5.5);
\draw (0,-1) arc (90:0:1.375 and 4.5);
\draw (0,-2) arc (90:0:1.1 and 3.5);
\draw (0,-3) arc (90:0:0.825 and 2.5);
\draw (0,-4) arc (90:0:0.55 and 1.5);
\draw (0,-5) arc (90:0:0.27 and 0.5);
}
\end{tikzpicture}\; ,
\qquad
\begin{tikzpicture}[baseline={([yshift=-1mm]current bounding box.center)},xscale=0.5,yscale=0.5]
{
\draw[dotted] (0,0.5)--(2,0.5);
\draw[dotted] (0,-5.5)--(2,-5.5);
\draw [very thick](0,-5.5) -- (0,0.5);
\draw (0,0) arc (-90:0:0.8 and 0.5);
\draw (0,-1) -- (2,-1);
\draw (0,-2) -- (2,-2);
\draw (0,-3) -- (2,-3);
\draw (0,-4) -- (2,-4);
\draw (0,-5) -- (2,-5);
\filldraw (1.4,0.5) circle (0.08);
}
\end{tikzpicture}\; ,
\label{eq:eghalfdiags}
\end{align}
respectively. The cut throughlines in these half-diagrams are called \textit{defects}. A string connecting a node to another node is called a \textit{link}, and a string connecting a node to either boundary is called a \textit{boundary link}. Note that with these definitions, a boundary link is not a link.

We could also make a $\Gh_n$-half-diagram by cutting the throughlines and discarding the left half of the rectangle and its strings instead, reflecting the result across a vertical line, and adding or removing ghosts in the rightmost domains as necessary. Hence each $\Gh_n$-diagram with $d$ throughlines can be cut to produce a pair of $\Gh_n$-half-diagrams with $d$ defects. It is clear that the resulting map, from the set of $\Gh_n$-diagrams with $d$ throughlines, to the set of pairs of $\Gh_n$-half-diagrams with $d$ defects, is injective.

Any pair of $\Gh_n$-half-diagrams with $d$ defects can be put together to make a $\Gh_n$-diagram with $d$ throughlines. Given $\Gh_n$-half-diagrams $x$ and $y$, each with $d$ defects, reflect $y$ about a vertical line, and connect its defects to those of $x$ from top to bottom. At each boundary, the rightmost domain of $x$ has been joined to the leftmost domain of the reflected $y$, so the number of ghosts in the resulting domain is the sum of the number of ghosts in each of those domains, simplified modulo 2. Denote the result by $\out{x}{y}$. This gives an injective map $\out{\cdot}{\cdot}$ from the set of pairs of $\Gh_n$-half-diagrams with $d$ defects, to the set of $\Gh_n$-diagrams with $d$ throughlines, which is the inverse of the above map, and so these sets are in bijection.

For example, with $(n,d) = (6,3)$, we could have
\begin{align}
x\,=\ \begin{tikzpicture}[baseline={([yshift=-1mm]current bounding box.center)},xscale=0.5,yscale=0.5]
{
\draw[dotted] (0,0.5)--(2,0.5);
\draw[dotted] (0,-5.5)--(2,-5.5);
\draw [very thick](0,-5.5) -- (0,0.5);
\draw (0,0) arc (-90:0:0.6 and 0.5);
\draw (0,-1) to[out=0,in=-90] (1.2,0.5);
\draw (0,-2) -- (2,-2);
\draw (0,-3) -- (2,-3);
\draw (0,-4) -- (2,-4);
\draw (0,-5) arc (90:0:0.6 and 0.5);
\filldraw (0.3,0.5) circle (0.08);
\filldraw (1.6,0.5) circle (0.08);
\filldraw (1.3,-5.5) circle (0.08);
}
\end{tikzpicture}\; ,
\qquad \qquad
y\,=\ \begin{tikzpicture}[baseline={([yshift=-1mm]current bounding box.center)},xscale=0.5,yscale=0.5]
{
\draw[dotted] (0,0.5)--(2,0.5);
\draw[dotted] (0,-5.5)--(2,-5.5);
\draw [very thick](0,-5.5) -- (0,0.5);
\draw (0,0) arc (-90:0:0.6 and 0.5);
\draw (0,-1) -- (2,-1);
\draw (0,-2) -- (2,-2);
\draw (0,-3) arc (90:-90:0.5);
\draw (0,-5) -- (2,-5);
\filldraw (1.3,0.5) circle (0.08);
}
\end{tikzpicture}\; ,
\qquad \qquad 
\out{x}{y}\, =\ \begin{tikzpicture}[baseline={([yshift=-1mm]current bounding box.center)},xscale=0.5,yscale=0.5]
{
\draw[dotted] (0,0.5)--(3,0.5);
\draw[dotted] (0,-5.5)--(3,-5.5);
\draw [very thick](0,-5.5) -- (0,0.5);
\draw [very thick](3,-5.5) -- (3,0.5);
\draw (0,0) arc (-90:0:0.6 and 0.5);
\draw (0,-1) to[out=0,in=-90] (1.2,0.5);
\draw (0,-2) to[out=0,in=180] (3,-1);
\draw (0,-3) to[out=0,in=180] (3,-2);
\draw (0,-4) to[out=0,in=180] (3,-5);
\draw (0,-5) arc (90:0:0.6 and 0.5);
\draw (3,0) arc (270:180:0.6 and 0.5);
\draw (3,-3) arc (90:270:0.5);
\filldraw (0.3,0.5) circle (0.08);
\filldraw (1.6,0.5) circle (0.08);
\filldraw (2,0.5) circle (0.08);
\filldraw (1.8,-5.5) circle (0.08);
}
\end{tikzpicture}
\ =\ \begin{tikzpicture}[baseline={([yshift=-1mm]current bounding box.center)},xscale=0.5,yscale=0.5]
{
\draw[dotted] (0,0.5)--(3,0.5);
\draw[dotted] (0,-5.5)--(3,-5.5);
\draw [very thick](0,-5.5) -- (0,0.5);
\draw [very thick](3,-5.5) -- (3,0.5);
\draw (0,0) arc (-90:0:0.6 and 0.5);
\draw (0,-1) to[out=0,in=-90] (1.2,0.5);
\draw (0,-2) to[out=0,in=180] (3,-1);
\draw (0,-3) to[out=0,in=180] (3,-2);
\draw (0,-4) to[out=0,in=180] (3,-5);
\draw (0,-5) arc (90:0:0.6 and 0.5);
\draw (3,0) arc (270:180:0.6 and 0.5);
\draw (3,-3) arc (90:270:0.5);
\filldraw (0.3,0.5) circle (0.08);
\filldraw (1.8,-5.5) circle (0.08);
}
\end{tikzpicture}\; .
\label{eq:cutglueex}
\end{align}

The idea of cutting diagrams with $d$ throughlines in half to produce half-diagrams with $d$ defects can also be used for $\TL_n$, $\BTL_n$, $\Gho_n$, $\dGh_n$ and $\dGho_n$, and in each case, the map from pairs of half-diagrams with $d$ defects to diagrams with $d$ throughlines is injective.

\begin{proposition}\label{prop:ghostdim}
For each $n \in \N$, 
\begin{align}
\dim\Gh_n = \sum_{d=0}^n \left(\sum_{j=0}^\floor{\frac{n-d}{2}} 2^{n-2j-d}(n-2j-d+1)\left(\binom{n}{j}-\binom{n}{j-1}\right) \right)^2. \label{eq:ghostdim}
\end{align}
\end{proposition}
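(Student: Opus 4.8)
The plan is to reduce everything to counting $\Gh_n$-half-diagrams and then invoke the cut-and-glue bijection set up just before the statement. Write $\G{n}{d}$ for the number of $\Gh_n$-half-diagrams with $d$ defects. The bijection identifies $\Gh_n$-diagrams having $d$ throughlines with ordered pairs $\out{x}{y}$ of such half-diagrams, so the number of $\Gh_n$-diagrams with $d$ throughlines is $\G{n}{d}^2$; summing over $d$ from $0$ to $n$ gives $\dim \Gh_n = \sum_{d=0}^n \G{n}{d}^2$. It therefore suffices to show that $\G{n}{d}$ equals the inner sum in \eqref{eq:ghostdim}.

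To count $\G{n}{d}$ I would stratify the half-diagrams by the number $j$ of links. Each of the $n$ nodes is then exactly one of: an endpoint of one of the $j$ links, one of the $d$ defects, or a boundary link; hence the number of boundary links is $m := n - 2j - d$, which forces $0 \le j \le \floor{\frac{n-d}{2}}$. I would factor the data of a half-diagram into three pieces whose counts multiply: (i) the non-crossing partial matching given by the $j$ links, together with the choice of which nodes are left unmatched; (ii) the assignment of each unmatched node to ``defect'', ``top boundary link'' or ``bottom boundary link''; and (iii) the placement of ghosts.

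For piece (i), all outward-bound strings (defects and boundary links) in a half-diagram terminate on the single connected arc formed by the top, right and bottom edges of the rectangle. Consequently the admissible link structures are precisely the non-crossing partial matchings of $n$ points with $j$ arcs in which every unmatched point can reach this one exit region; these are the standard $\TL_n$ link states and are counted by the ballot number $\binom{n}{j}-\binom{n}{j-1}$. For piece (ii) — the step I expect to demand the most care — I would carry out a planarity analysis by placing all string endpoints on a circle in their cyclic boundary order and applying the interleaving criterion for chords. This shows that a top boundary link at node $i$ and a defect at node $i'$ avoid crossing only when $i<i'$, and similarly that every defect must sit above every bottom boundary link and every top boundary link above every bottom one. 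Hence, reading the $m+d$ unmatched nodes from top to bottom, the labels must appear as a block of top links, then the $d$ defects, then a block of bottom links; such a labeling is determined by the number $a\in\{0,1,\dots,m\}$ of top links, giving exactly $m+1 = n-2j-d+1$ labelings, each realizable with any admissible matching because the exit arc is connected.

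For piece (iii), a half-diagram with $a$ top boundary links carries $a$ attachment points on the top boundary, hence $a+1$ domains; the requirement that the number of ghosts plus strings be even on the top boundary leaves $2^{a}$ admissible ghost configurations there, and likewise $2^{b}$ on the bottom with $b=m-a$, for a total of $2^{a+b}=2^{m}=2^{n-2j-d}$, independent of the split. Multiplying the three counts (noting that the ghost count is the same for all $m+1$ labelings) and summing over $j$ yields
\begin{align*}
\G{n}{d} = \sum_{j=0}^{\floor{\frac{n-d}{2}}} 2^{n-2j-d}(n-2j-d+1)\left(\binom{n}{j}-\binom{n}{j-1}\right),
\end{align*}
which combined with $\dim \Gh_n = \sum_{d=0}^n \G{n}{d}^2$ establishes \eqref{eq:ghostdim}.
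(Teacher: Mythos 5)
Your proposal is correct and follows essentially the same route as the paper's proof: both reduce to counting $\Gh_n$-half-diagrams with $d$ defects via the cut-and-glue bijection, stratify by the number $j$ of links using the ballot-number count $\binom{n}{j}-\binom{n}{j-1}$ of $\TL_n$-half-diagrams, observe that the $n-2j-d$ boundary links split as a top block above the defects and a bottom block below (giving the factor $n-2j-d+1$), and count ghost placements as $2^{n-2j-d}$ after the parity constraint fixes one domain per boundary. The only difference is that you make the planarity argument forcing the top-block/defects/bottom-block ordering explicit, where the paper leaves it implicit.
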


\begin{proof}
Recall (e.g. from \cite{RSA}) that the number of distinct $\TL_n$-half-diagrams (i.e. $\Gh_n$-half-diagrams with no boundary links) with $j$ links is equal to
\begin{align}
\binom{n}{j} - \binom{n}{j-1}, \label{eq:TLstddim}
\end{align}
where $0 \leq j \leq \floor{\frac{n-d}{2}}$. The number of defects in such a half-diagram is then $n-2j$. We can then convert this $\TL_n$-half-diagram to a $\Gh_n$-half-diagram with $d$ defects as follows. Fix $k$ such that $0 \leq k \leq n-2j-d$, and connect the top $k$ defects to the top boundary, and the bottom $n-2j-d-k$ defects to the bottom boundary. We then have a total of $n-2j-d$ boundary links, which means we have $n-2j-d+2$ domains. Each domain can be empty or have a ghost, but we must have an even total number of strings and ghosts on each boundary. Hence we can freely choose whether to put a ghost in all but one of the domains on each boundary, but the last one is determined by parity. This means there are $2^{n-2j-d}$ ways to populate the $n-2j-d$ free domains with ghosts, and thus $2^{n-2j-d}\left(\binom{n}{j}-\binom{n}{j-1}\right)$ $\Gh_n$-half-diagrams with $d$ defects and $n-2j$ boundary links, of which $k$ are top boundary links.

Observing that this is independent of $k$ and summing over $0 \leq k \leq n-2j-d$ for the number of top boundary links, it follows that the number of $\Gh_n$-half diagrams with $d$ defects and $j$ links is
\begin{align}
2^{n-2j-d}(n-2j-d+1)\left(\binom{n}{j}-\binom{n}{j-1}\right). \label{eq:ghostdimj}
\end{align}
Summing over the number of links, $j$, with $0 \leq j \leq \floor{\frac{n-d}{2}}$, we find that number of $\Gh_n$-half-diagrams with $d$ defects is
\begin{align}
\sum_{j=0}^\floor{\frac{n-d}{2}}2^{n-2j-d}(n-2j-d+1)\left(\binom{n}{j}-\binom{n}{j-1}\right).\label{eq:ghoststddimprop}
\end{align}
Then, since we know each $\Gh_n$-diagram with $d$ throughlines can be constructed from a unique pair of $\Gh_n$-half-diagrams with $d$ defects, we can square this to find the number of pairs of such half-diagrams, and sum over $d$ to find the number of $\Gh_n$-diagrams.
\end{proof}

\begin{proposition} \label{prop:dimdghost}
For $n \in \N$,
\begin{align}
   \dim\dGh_n = \sum_{d=0}^n \left(\sum_{v=0}^{n-d} \binom{n}{v} \sum_{j=0}^\floor{\frac{n-v-d}{2}} 2^{n-v-2j-d}\left(n-v-2j-d+1\right) \left(\binom{n-v}{j} - \binom{n-v}{j-1}\right)\right)^2.
\end{align}
\end{proposition}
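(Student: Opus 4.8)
The plan is to mirror the proof of Proposition \ref{prop:ghostdim}: first reduce the dimension to a count of half-diagrams via the cut-and-glue bijection, then count $\dGh_n$-half-diagrams with $d$ defects by stripping off the empty nodes. For the first step I would invoke the cut-and-glue correspondence, which the surrounding text notes applies verbatim to $\dGh_n$: cutting the $d$ throughlines of a $\dGh_n$-diagram produces an ordered pair of $\dGh_n$-half-diagrams each with $d$ defects, and the gluing map $\out{\cdot}{\cdot}$ is a two-sided inverse, so these sets are in bijection. Consequently, writing $N_d$ for the number of $\dGh_n$-half-diagrams with $d$ defects, the number of $\dGh_n$-diagrams with exactly $d$ throughlines is $N_d^2$, and summing over $0 \le d \le n$ gives $\dim \dGh_n = \sum_{d=0}^n N_d^2$. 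It then remains only to compute $N_d$.

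To count $N_d$ I would separate the empty nodes from the rest. In a $\dGh_n$-half-diagram each of the $n$ left-hand nodes is either empty or occupied (by a link, a boundary link, or a defect), where only occupied nodes carry strings, and an empty node touches neither the boundaries nor any other node. For fixed $v$, I claim the $\dGh_n$-half-diagrams with exactly $v$ empty nodes and $d$ defects are in bijection with pairs consisting of a $v$-element subset $S \subseteq \{1,\dots,n\}$ (the empty positions) together with a $\Gh_{n-v}$-half-diagram with $d$ defects drawn on the remaining $n-v$ nodes read off in order: forgetting $S$ and relabelling the occupied nodes $1,\dots,n-v$ is the forward map, and reinserting empty nodes at the positions of $S$ is its inverse. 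Crucially, since empty nodes attach to no boundary, the domains along the top and bottom boundaries, and hence the admissible ghost configurations, depend only on the occupied nodes, so the occupied part is an unconstrained $\Gh_{n-v}$-half-diagram. There are $\binom{n}{v}$ choices for $S$, and by the intermediate count \eqref{eq:ghoststddimprop} from the proof of Proposition \ref{prop:ghostdim} with $n$ replaced by $n-v$, the number of $\Gh_{n-v}$-half-diagrams with $d$ defects is $\sum_{j=0}^{\floor{(n-v-d)/2}} 2^{n-v-2j-d}(n-v-2j-d+1)\left(\binom{n-v}{j}-\binom{n-v}{j-1}\right)$.

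Summing over the admissible range $0 \le v \le n-d$, which is forced because hosting $d$ defects requires at least $d$ occupied nodes, yields $N_d$, and substituting into $\dim \dGh_n = \sum_d N_d^2$ gives the stated formula. The main obstacle is making the empty-node/occupied-node decomposition airtight: one must verify that reinserting empty nodes never forces two previously non-crossing strings to cross and never disturbs the parity bookkeeping on the boundaries, so that the correspondence with pairs $(S,\ \Gh_{n-v}\text{-half-diagram})$ is a genuine bijection, and that the partition by the value of $v$ is disjoint so no configuration is counted twice. Once this structural claim is established, the remainder is simply the assembly of the three nested sums.
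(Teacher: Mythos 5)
Your proposal is correct and follows essentially the same route as the paper's proof: reduce to counting $\dGh_n$-half-diagrams with $d$ defects via the cut-and-glue bijection, then count those by choosing the $v$ empty nodes in $\binom{n}{v}$ ways and placing a $\Gh_{n-v}$-half-diagram with $d$ defects on the remaining nodes, using the intermediate count from Proposition \ref{prop:ghostdim}. The paper treats the empty-node/occupied-node decomposition as immediate rather than flagging it as an obstacle, but otherwise the arguments coincide.
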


\begin{proof}
From Proposition \ref{prop:ghostdim}, the number of $\Gh_n$-half-diagrams with $d$ defects is
\begin{align}
    \sum_{j=0}^\floor{\frac{n-d}{2}} 2^{n-2j-d}\left(n-2j-d+1\right) \left(\binom{n}{j} - \binom{n}{j-1}\right).
\end{align}
Each $\dGh_n$-half-diagram with $d$ defects can be constructed by selecting $v$ of $n$ nodes to be empty, and drawing a $\Gh_n$-half-diagram with $d$ defects on the remaining $n-v$ nodes. There are $\binom{n}{v}$ ways to choose the empty nodes, for $0\leq v \leq n-d$, and thus
\begin{align}
    \sum_{v=0}^{n-d} \binom{n}{v} \sum_{j=0}^\floor{\frac{n-v-d}{2}} 2^{n-v-2j-d}\left(n-v-2j-d+1\right) \left(\binom{n-v}{j} - \binom{n-v}{j-1}\right)
\end{align}
distinct $\dGh_n$-half-diagrams with $d$ defects.

Each $\dGh_n$-diagram with $d$ throughlines can be constructed from a unique pair of $\dGh_n$-half-diagrams with $d$ defects, so the number of $\dGh_n$-diagrams is
\begin{align}
    \sum_{d=0}^n \left(\sum_{v=0}^{n-d} \binom{n}{v} \sum_{j=0}^\floor{\frac{n-v-d}{2}} 2^{n-v-2j-d}\left(n-v-2j-d+1\right) \left(\binom{n-v}{j} - \binom{n-v}{j-1}\right)\right)^2.
\end{align}
\end{proof}

\begin{proposition}\label{prop:dimobghost}
For $n \in \N$,
\begin{align}
    \dim\Gho_n = \sum_{d=0}^n \left(\sum_{j=0}^\floor{\frac{n-d}{2}} 2^{n-2j-d} \left(\binom{n}{j} - \binom{n}{j-1}\right)\right)^2.
\end{align}
\end{proposition}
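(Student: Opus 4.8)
The plan is to count $\Gho_n$-half-diagrams with $d$ defects, exactly as in the proof of Proposition \ref{prop:ghostdim}, and then invoke the cutting-and-gluing bijection to assemble full $\Gho_n$-diagrams from pairs of such half-diagrams. The only structural difference from the $\Gh_n$ case is that a $\Gho_n$-diagram has no strings attached to the bottom boundary, so in a $\Gho_n$-half-diagram every boundary link must attach to the \emph{top} boundary, and the entire bottom boundary forms a single ghost-free domain. This is what collapses the factor $(n-2j-d+1)$ present in the $\Gh_n$ formula.

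First I would recall that the number of $\TL_n$-half-diagrams (equivalently, $\Gho_n$-half-diagrams with no boundary links) having $j$ links is $\binom{n}{j} - \binom{n}{j-1}$, and that such a half-diagram has $n-2j$ defects. To build a $\Gho_n$-half-diagram with $d$ defects from one of these, I would attach all $n-2j-d$ of the surplus defects to the top boundary; unlike the $\Gh_n$ case there is no choice of how to split them between the two boundaries, since the bottom admits no connections. This produces $n-2j-d$ top boundary links, hence $n-2j-d+1$ domains on the top boundary together with a single domain on the bottom.

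Next I would count the ghost configurations. On the top boundary the evenness condition leaves all but one of the $n-2j-d+1$ domains free, giving $2^{n-2j-d}$ choices; on the bottom boundary the single domain must carry an even number of ghosts, i.e.\ none modulo $2$, contributing a factor of $1$. Multiplying, the number of $\Gho_n$-half-diagrams with $d$ defects and $j$ links is $2^{n-2j-d}\!\left(\binom{n}{j} - \binom{n}{j-1}\right)$. Summing over $0 \le j \le \floor{\frac{n-d}{2}}$ then realises the inner sum of the statement as the total number of $\Gho_n$-half-diagrams with $d$ defects.

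Finally, I would invoke the bijection (asserted for $\Gho_n$ in the paragraph preceding Proposition \ref{prop:ghostdim}) between $\Gho_n$-diagrams with $d$ throughlines and pairs of $\Gho_n$-half-diagrams with $d$ defects: squaring the half-diagram count and summing over $0 \le d \le n$ yields $\dim \Gho_n$. The step needing the most care is verifying that this bijection restricts cleanly to the one-boundary setting — specifically that gluing two half-diagrams whose bottom boundaries are empty, ghost-free domains again produces a diagram with no bottom connection and no bottom ghost, since the domains merge and their ghost counts add as $0+0 = 0$ modulo $2$. This confirms that the image is exactly the set of $\Gho_n$-diagrams and that no spurious bottom structure is introduced, so the injective cut map and the gluing map are mutually inverse as required.
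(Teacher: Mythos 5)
Your proposal is correct and follows essentially the same route as the paper, which simply notes that the argument of Proposition \ref{prop:ghostdim} applies with the factor $(n-2j-d+1)$ removed since all surplus defects must attach to the top boundary. Your additional checks — that the single bottom domain forces zero ghosts modulo $2$ and that the cut-and-glue bijection restricts cleanly to the one-boundary setting — are sound elaborations of details the paper leaves implicit.
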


\begin{proof}
The argument is analogous to Proposition \ref{prop:ghostdim}, except that defects in the $\TL_n$-half-diagrams can now only be connected to the top boundary. This means the factor of $(n-2j-d+1)$, associated with choosing how many defects connect to each boundary, disappears.
\end{proof}

\begin{proposition}\label{prop:dimdobghost}
For $n \in \N$,
\begin{align}
    \dim\dGho_n = \sum_{d=0}^n \left(\sum_{v=0}^{n-d} \binom{n}{v} \sum_{j=0}^\floor{\frac{n-v-d}{2}} 2^{n-v-2j-d} \left(\binom{n-v}{j} - \binom{n-v}{j-1}\right)\right)^2.
\end{align}
\end{proposition}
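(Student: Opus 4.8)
The plan is to follow the same two-step template used in Propositions \ref{prop:dimdghost} and \ref{prop:dimobghost}, combining the dilute modification (selecting empty nodes) with the one-boundary modification (forbidding bottom boundary links). The central quantity is again the number of half-diagrams with a fixed number of defects $d$; once that is counted, I square it and sum over $d$, using the cut-and-glue bijection to pass from half-diagrams back to full diagrams.

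First I would recall from Proposition \ref{prop:dimobghost} that the number of $\Gho_n$-half-diagrams with $d$ defects is $\sum_{j=0}^{\floor{(n-d)/2}} 2^{n-2j-d}\big(\binom{n}{j}-\binom{n}{j-1}\big)$. Here the factor $(n-2j-d+1)$ that appears in the two-boundary count \eqref{eq:ghostdimj} is absent, because in the one-boundary setting every boundary link must attach to the top boundary, so there is no freedom in distributing the $n-2j-d$ boundary-bound defects between two boundaries; only the ghost-placement factor $2^{n-2j-d}$ survives.

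Next, following the dilution argument of Proposition \ref{prop:dimdghost} verbatim, I would build each $\dGho_n$-half-diagram with $d$ defects by choosing which $v$ of the $n$ nodes are empty and then drawing a $\Gho_{n-v}$-half-diagram with $d$ defects on the remaining $n-v$ nodes. There are $\binom{n}{v}$ choices of empty nodes, with $0 \le v \le n-d$ to leave room for the $d$ defects, and this is a bijection onto the set of $\dGho_n$-half-diagrams with $d$ defects. Substituting $n \mapsto n-v$ into the $\Gho$-half-diagram count above and summing over $v$ then produces the inner double sum in the statement, namely $\sum_{v=0}^{n-d}\binom{n}{v}\sum_{j=0}^{\floor{(n-v-d)/2}} 2^{n-v-2j-d}\big(\binom{n-v}{j}-\binom{n-v}{j-1}\big)$, as the number of $\dGho_n$-half-diagrams with $d$ defects.

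Finally, the cut-and-glue bijection described just before Proposition \ref{prop:ghostdim} applies to $\dGho_n$ (as noted there, it holds for all the listed algebras), so each $\dGho_n$-diagram with $d$ throughlines corresponds to a unique pair of $\dGho_n$-half-diagrams with $d$ defects. Squaring the half-diagram count and summing over $0 \le d \le n$ yields the claimed formula for $\dim\dGho_n$. I do not expect a genuine obstacle: the only point requiring a line of care is that the empty-node-then-draw construction does not overcount, i.e.\ that a given $\dGho_n$-half-diagram recovers its set of empty nodes and hence its underlying $\Gho_{n-v}$-half-diagram uniquely. This is immediate, since the empty nodes are intrinsic data of the half-diagram, so the construction is genuinely bijective and no double counting occurs.
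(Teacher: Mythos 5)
Your proposal is correct and follows essentially the same route as the paper: the paper's proof likewise builds each $\dGho_n$-half-diagram by choosing $v$ empty nodes and placing a $\Gho_{n-v}$-half-diagram with $d$ defects on the remaining nodes, then squares the half-diagram count and sums over $d$. The additional remarks you include (why the $(n-2j-d+1)$ factor disappears, and why the empty-node construction is bijective) are consistent with the paper's reasoning.
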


\begin{proof}
The argument is analogous to Proposition \ref{prop:dimdghost}, except we use $\Gho_{n-v}$-half-diagrams with $d$ defects, instead of $\Gh_{n-v}$-half-diagrams with $d$ defects. The number of such half-diagrams is the sum over $j$ in the expression in Proposition \ref{prop:dimobghost}, but with $n-v$ in place of $n$.
\end{proof}

\begin{table}[H]
\centering
\caption{Table of dimensions of $\Gho_n$, $\Gh_n$, $\dGho_n$ and $\dGh_n$ for small $n$.}\label{tab:ghostdims}
\vspace{2mm}
\begin{tabular}{|l|r|r|r|r|}
\hline
$n$  & $\dim\Gho_n$  & $\dim\Gh_n$    & $\dim\dGho_n$   & $\dim\dGh_n$     \\ \hline
1  & 5        & 17         & 10          & 26            \\ \hline
2  & 30       & 186        & 117         & 521           \\ \hline
3  & 185      & 1,813       & 1,407        & 9,355          \\ \hline
4  & 1,150     & 16,102      & 17,083       & 156,947        \\ \hline
5  & 7,170     & 135,866     & 208,284      & 2,514,932       \\ \hline
6  & 44,760    & 1,099,276    & 2,544,751     & 38,968,815      \\ \hline
7  & 279,585   & 8,639,133    & 31,125,138    & 588,475,298     \\ \hline
8  & 1,746,870  & 66,258,526   & 380,928,795   & 8,706,799,523    \\ \hline
9  & 10,916,150 & 498,701,470  & 4,663,705,782  & 126,690,947,758  \\ \hline
10 & 68,219,860 & 3,693,607,300 & 57,109,857,519 & 1,818,028,127,339 \\ \hline
\end{tabular}
\end{table}

\section{Associativity} \label{app:assoc}
In this section, we prove that the dilute ghost algebra $\dGh_n$ is associative. Since $\dGho_n$, $\Gh_n$ and $\Gho_n$ are (unital, non-unital and non-unital) subalgebras of $\dGh_n$, it follows that they are also associative. We start by introducing some terminology to describe diagram multiplication more formally, and then prove associativity in Theorem \ref{thm:assoc} and Corollary \ref{cor:assoc}.

A \textit{$\dGh_n$-pseudo-diagram} consists of finitely many non-crossing strings drawn within the usual two-boundary rectangle of $2n$ nodes, and finitely many ghosts on each boundary, subject to the following requirements. Each node must have at most one string endpoint attached to it. Each string endpoint can be attached to a node or a boundary, or be left unattached. Strings may also form loops. Ghosts cannot be drawn on the ends of strings, and the number of ghosts plus the number of string endpoints attached to each boundary must be even. Two $\dGh_n$-pseudo-diagrams are considered equal if their strings are equivalent under a continuous planar deformation that preserves the rectangle, up to sliding but not reordering ghosts and string endpoints, and their corresponding domains have the same number of ghosts, modulo 2.

For example,
\begin{align}
\begin{tikzpicture}[baseline={([yshift=-1mm]current bounding box.center)},xscale=0.6,yscale=0.6]
{
\draw[very thick] (0,0.5)--(0,-5.5);
\draw[very thick] (3,0.5)--(3,-5.5);
\draw[dotted] (0,0.5)--(3,0.5);
\draw[dotted] (0,-5.5)--(3,-5.5);
\draw (0,-1) to[out=0,in=-90] (0.6,0.5);
\draw (0,-2) to[out=0,in=180] (3,-1);
\draw (0,-3) arc (90:-90:1);
\draw (3,-2) arc (90:270:1.5);
\draw (1.2,0.5) arc(180:360:0.6);
\draw (0.6,-5.5) arc(180:0:0.6);
\draw (1,-2.5) circle (0.4);
\draw (2.25,-3.5) circle (0.4);
\draw (1.25,-0.75) circle (0.4);
\filldraw[fill=white] (0,0) circle (0.12);
\filldraw[fill=white] (0,-4) circle (0.12);
\filldraw[fill=white] (3,0) circle (0.12);
\filldraw[fill=white] (3,-3) circle (0.12);
\filldraw[fill=white] (3,-4) circle (0.12);
\filldraw (1.8,0.5) circle (0.09);
\filldraw (1.2,-5.5) circle (0.09);
\filldraw (2.4,-5.5) circle (0.09);
}
\end{tikzpicture}
\; , \qquad
\begin{tikzpicture}[baseline={([yshift=-1mm]current bounding box.center)},xscale=0.6,yscale=0.6]
{
\draw[very thick] (0,0.5)--(0,-5.5);
\draw[very thick] (3,0.5)--(3,-5.5);
\draw[dotted] (0,0.5)--(3,0.5);
\draw[dotted] (0,-5.5)--(3,-5.5);
\draw (0,-1) to[out=0,in=-90] (0.6,0.5);
\draw (0,-2) to[out=0,in=180] (3,-1);
\draw (0,-3) arc (90:-90:1);
\draw (3,-2) arc (90:270:1.5);
\draw (1.2,0.5) arc(180:360:0.6);
\draw (0.6,-5.5) arc(180:0:0.6);
\draw (1,-2.5) circle (0.4);
\draw (1.25,-0.75) circle (0.55);
\draw (1.25,-0.75) circle (0.3);
\filldraw[fill=white] (0,0) circle (0.12);
\filldraw[fill=white] (0,-4) circle (0.12);
\filldraw[fill=white] (3,0) circle (0.12);
\filldraw[fill=white] (3,-3) circle (0.12);
\filldraw[fill=white] (3,-4) circle (0.12);
\filldraw (1.8,0.5) circle (0.09);
\filldraw (1.2,-5.5) circle (0.09);
\filldraw (2.4,-5.5) circle (0.09);
}
\end{tikzpicture}
\; , \qquad
\begin{tikzpicture}[baseline={([yshift=-1mm]current bounding box.center)},xscale=0.6,yscale=0.6]
{
\draw[white] (1,-5.5) circle (0.09);
\draw[very thick] (0,0.5)--(0,-5.5);
\draw[very thick] (3,0.5)--(3,-5.5);
\draw[dotted] (0,0.5)--(3,0.5);
\draw[dotted] (0,-5.5)--(3,-5.5);
\draw (0,0) to[out=0,in=180] (3,-1);
\draw (0,-1) to[out=0,in=180] (3,-3);
\draw (0,-3) arc (90:-90:0.5);
\draw (0,-5) to[out=0,in=90] (0.6,-5.5);
\draw (3,0) to[out=180,in=270] (2,0.5);
\draw (3,-2) to[out=180,in=330] (1.5,-1.2);
\draw (3,-4) to[out=180,in=90] (1.8,-5.5);
\draw (3,-5) to[out=180,in=90] (2.4,-5.5);
\draw (1.2,-5.5) to[out=90,in=300] (0.8,-4.5);
\draw (0.8,-2.5) to[out=-40,in=160] (2,-3.7);
\filldraw[fill=white] (0,-2) circle (0.12);
\filldraw (1,0.5) circle (0.09);
}
\end{tikzpicture}
\; , \qquad
\begin{tikzpicture}[baseline={([yshift=-1mm]current bounding box.center)},xscale=0.6,yscale=0.6]
{
\draw[very thick] (0,0.5)--(0,-5.5);
\draw[very thick] (3,0.5)--(3,-5.5);
\draw[dotted] (0,0.5)--(3,0.5);
\draw[dotted] (0,-5.5)--(3,-5.5);
\draw (0,0) arc(90:-90:0.5);
\draw (0,-3) arc (90:-90:0.5);
\draw (3,0) to[out=180,in=180] (3,-3);
\draw (3,-1) arc(90:270:0.5);
\draw (3,-4) arc(90:270:0.5);
\draw (1.2,0.5)--(1.2,-5.5);
\draw (1.5,0.5)--(1.5,-5.5);
\draw (1.8,0.5)--(1.8,-5.5);
\filldraw[fill=white] (0,-2) circle (0.12);
\filldraw[fill=white] (0,-5) circle (0.12);
\filldraw (2.4,0.5) circle (0.09);
\filldraw (0.6,-5.5) circle (0.09);
}
\end{tikzpicture}
\end{align}
are all $\dGh_6$-pseudo-diagrams. Note that the first two are not equal because planar deformations do not allow us to pull a loop over other strings.

Since the number of ghosts in each domain is only considered modulo 2, each $\dGh_n$-pseudo-diagram can be drawn with at most one ghost in each domain. Further, we can number the string endpoints and ghosts along each boundary, left-to-right, starting from 1. If two drawings of $\dGh_n$-pseudo-diagrams are equal, then each string endpoint on the boundary might be assigned a different number in each drawing, but the number of string endpoints to its left is the same, and the number of ghosts in each domain to its left is the same modulo 2. Hence each string endpoint on the boundary has a well-defined parity.

Let $\circ$ be the binary operation of concatenation on $\dGh_n$-pseudo-diagrams, where the overlapping middle line from the two rectangles is removed. It is clear that this is associative. For example, if
\begin{align}
x&=\begin{tikzpicture}[baseline={([yshift=-1mm]current bounding box.center)},xscale=0.6,yscale=0.6]
{
\draw[white] (1,0.5) circle (0.09);
\draw[very thick] (0,0.5)--(0,-5.5);
\draw[very thick] (3,0.5)--(3,-5.5);
\draw[dotted] (0,0.5)--(3,0.5);
\draw[dotted] (0,-5.5)--(3,-5.5);
\draw (0,0) to[out=0,in=-90] (0.6,0.5);
\draw (0,-1) to[out=0,in=-90] (1.2,0.5);
\draw (0,-3) arc (90:-90:1);
\draw (3,-1) to[out=180,in=270] (2.4,0.5);
\draw (3,-2) to[out=180,in=270] (1.8,0.5);
\draw (3,-5) to[out=180,in=90] (2.4,-5.5);
\draw (1.2,-1.7) circle (0.4);
\draw (2.1,-3.5) circle (0.4);
\draw (2.1,-2.7) arc(90:270:0.8);
\draw (3,-3) to[out=180,in=0] (2.1,-2.7);
\draw (3,-4) to[out=180,in=0] (2.1,-4.3);
\draw (0.6,-5.5) arc(180:0:0.3);
\filldraw[fill=white] (0,-2) circle (0.12);
\filldraw[fill=white] (0,-4) circle (0.12);
\filldraw[fill=white] (3,0) circle (0.12);
\filldraw (1.8,-5.5) circle (0.09);
}
\end{tikzpicture}
\; , &y&=
\begin{tikzpicture}[baseline={([yshift=-1mm]current bounding box.center)},xscale=0.6,yscale=0.6]
{
\draw[white] (1,-5.5) circle (0.09);
\draw[very thick] (0,0.5)--(0,-5.5);
\draw[very thick] (3,0.5)--(3,-5.5);
\draw[dotted] (0,0.5)--(3,0.5);
\draw[dotted] (0,-5.5)--(3,-5.5);
\draw (0,-1) to[out=0,in=90] (1.5,-5.5);
\draw (0,-3) arc (90:-90:0.5);
\draw (0,-5) to[out=0,in=90] (0.6,-5.5);
\draw (3,0) arc(90:270:0.5);
\draw (3,-2) to[out=180,in=180] (3,-5);
\draw (3,-3) arc(90:270:0.5);
\draw (1.2,0.5) arc(180:360:0.3);
\draw (2.4,0.5)to[out=-90,in=90] (1.5,-1);
\filldraw[fill=white] (0,0) circle (0.12);
\filldraw[fill=white] (0,-2) circle (0.12);
\filldraw (0.6,0.5) circle (0.09);
}
\end{tikzpicture}
\; ,
\end{align}
then
\begin{align}
x\circ y = \begin{tikzpicture}[baseline={([yshift=-1mm]current bounding box.center)},xscale=0.5714,yscale=0.6]
{
\draw[very thick] (0,0.5)--(0,-5.5);
\draw[very thick] (3.15,0.5)--(3.15,-5.5);
\draw[dotted] (0,0.5)--(3.15,0.5);
\draw[dotted] (0,-5.5)--(3.15,-5.5);
\draw (0,0) to[out=0,in=-90] (0.35,0.5);
\draw (0,-1) to[out=0,in=-90] (0.7,0.5);
\draw (0,-3) to[out=0,in=0] (0,-5);
\draw (0.8,-2) circle (0.42 and 0.4);
\draw (1.35,-4.6) circle (0.315 and 0.3);
\draw (1.35,-4.6) circle (0.63 and 0.6);
\draw (0.45,-5.5) arc(180:0:0.225);
\filldraw[fill=white] (0,-2) circle (0.126 and 0.12);
\filldraw[fill=white] (0,-4) circle (0.126 and 0.12);
\filldraw (1.35,-5.5) circle (0.0945 and 0.09);
\draw (3.15,0) arc(90:270:0.525 and 0.5);
\draw (3.15,-2) to[out=180,in=180] (3.15,-5);
\draw (3.15,-3) arc(90:270:0.525 and 0.5);
\draw (2.1,0.5) arc(180:360:0.175);
\draw (2.8,0.5) ..controls (2.8,0.2) and (2.1,-0.5).. (2.1,-1.8);
\filldraw (1.75,0.5) circle (0.0945 and 0.09);
\draw (1.8,-5.5) arc(180:0:0.225);
\draw (1.4,0.5) ..controls (1.4,-4) and (2.7,-5.1).. (2.7,-5.5);
\draw (1.05,0.5)--(1.05,-1);
}
\end{tikzpicture}
\; .
\end{align}
Note that $\circ$ is concatenation only, so even though a string from $x$ meets an empty node from $y$, $x \circ y$ is nonzero. 

Define a function $\chi$ on $\dGh_n$-pseudo-diagrams by
\begin{align}
\chi(x) := \prod_{\rho \,\in \{\beta,\alpha_1,\alpha_2,\alpha_3,\gamma_{12},\gamma_3,\delta_1,\delta_2,\delta_3\}} \rho^{\#(\rho,x)},
\end{align}
where $\#(\rho,x)$ is the number of strings in $x$ that would be assigned the parameter $\rho$ during multiplication in the dilute ghost algebra, as listed in Table \ref{tab:params}. For example, with $x$ and $y$ as above,
\begin{align}
\chi(x)&=\beta^2\delta_1, &\chi(y) &= \alpha_2, &\chi(x\circ y) &= \beta^3 \alpha_2\gamma_3 \delta_1\delta_2.
\end{align}

Define a function $\zeta$ on $\dGh_n$-pseudo-diagrams by $\zeta(x) = 1$ if each string endpoint in $x$ is attached to a node or boundary, and $\zeta(x)=0$ otherwise. Note that loops do not have endpoints, so $\zeta(x)=1$ does not imply that $x$ has no loops. Indeed, with $x$ and $y$ as above, $\zeta(x) = 1$, $\zeta(y)=0$ and $\zeta(x\circ y) = 0$.

For each $\dGh_n$-pseudo-diagram $x$, let $\e{x}$ be the $\dGh_n$-diagram obtained by removing all strings from $x$ except those that are attached to a node at one end, and a node or boundary at the other. A ghost must be left at the position of each removed string endpoint on the boundary, though we may choose to draw the resulting diagram with at most one ghost in each domain. With $x$ and $y$ as above,
\begin{align}
\e{x} &= \begin{tikzpicture}[baseline={([yshift=-1mm]current bounding box.center)},xscale=0.6,yscale=0.6]
{
\draw[white] (1,0.5) circle (0.09);
\draw[very thick] (0,0.5)--(0,-5.5);
\draw[very thick] (3,0.5)--(3,-5.5);
\draw[dotted] (0,0.5)--(3,0.5);
\draw[dotted] (0,-5.5)--(3,-5.5);
\draw (0,0) to[out=0,in=-90] (0.6,0.5);
\draw (0,-1) to[out=0,in=-90] (1.2,0.5);
\draw (0,-3) arc (90:-90:1);
\draw (3,-1) to[out=180,in=270] (2.4,0.5);
\draw (3,-2) to[out=180,in=270] (1.8,0.5);
\draw (3,-5) to[out=180,in=90] (2.4,-5.5);
\draw (3,-3) arc(90:270:0.5);
\filldraw[fill=white] (0,-2) circle (0.12);
\filldraw[fill=white] (0,-4) circle (0.12);
\filldraw[fill=white] (3,0) circle (0.12);
\filldraw (1.8,-5.5) circle (0.09);
}
\end{tikzpicture}
\; , &\e{y}&=\begin{tikzpicture}[baseline={([yshift=-1mm]current bounding box.center)},xscale=0.6,yscale=0.6]
{
\draw[very thick] (0,0.5)--(0,-5.5);
\draw[very thick] (3,0.5)--(3,-5.5);
\draw[dotted] (0,0.5)--(3,0.5);
\draw[dotted] (0,-5.5)--(3,-5.5);
\draw (0,-1) to[out=0,in=90] (1.5,-5.5);
\draw (0,-3) arc (90:-90:0.5);
\draw (0,-5) to[out=0,in=90] (0.6,-5.5);
\draw (3,0) arc(90:270:0.5);
\draw (3,-2) to[out=180,in=180] (3,-5);
\draw (3,-3) arc(90:270:0.5);
\filldraw[fill=white] (0,0) circle (0.12);
\filldraw[fill=white] (0,-2) circle (0.12);
}
\end{tikzpicture}
\; , &\e{x\circ y} &=\begin{tikzpicture}[baseline={([yshift=-1mm]current bounding box.center)},xscale=0.6,yscale=0.6]
{
\draw[very thick] (0,0.5)--(0,-5.5);
\draw[very thick] (3,0.5)--(3,-5.5);
\draw[dotted] (0,0.5)--(3,0.5);
\draw[dotted] (0,-5.5)--(3,-5.5);
\draw (0,0) to[out=0,in=-90] (0.6,0.5);
\draw (0,-1) to[out=0,in=-90] (1.2,0.5);
\draw (0,-3) to[out=0,in=0] (0,-5);
\filldraw[fill=white] (0,-2) circle (0.12);
\filldraw[fill=white] (0,-4) circle (0.12);
\draw (3,0) arc(90:270:0.5);
\draw (3,-2) to[out=180,in=180] (3,-5);
\draw (3,-3) arc(90:270:0.5);
}
\end{tikzpicture}
\; .
\end{align}

Then for $\dGh_n$-diagrams $x$ and $y$, we have
\begin{align}
xy = \chi(x\circ y)\, \zeta(x\circ y)\e{x\circ y}. \label{eq:assoccirc}
\end{align}

\begin{theorem}\label{thm:assoc}
The dilute ghost algebra $\dGh_n$ is associative.
\end{theorem}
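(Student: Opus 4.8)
The plan is to prove $(xy)z = x(yz)$ for arbitrary basis diagrams $x,y,z$, from which associativity follows by bilinearity. The whole strategy rests on the factorisation \eqref{eq:assoccirc}: since $\circ$ is associative on $\dGh_n$-pseudo-diagrams, I want to show that each of the three ingredients $\chi$, $\zeta$ and $\e{\,\cdot\,}$ appearing in \eqref{eq:assoccirc} depends only on the triple concatenation $x\circ y\circ z$, and not on the order in which the two products are formed. Applying \eqref{eq:assoccirc} twice gives
\[(xy)z = \chi(x\circ y)\,\zeta(x\circ y)\,\chi(\e{x\circ y}\circ z)\,\zeta(\e{x\circ y}\circ z)\,\e{\e{x\circ y}\circ z},\]
and symmetrically for $x(yz)$ with $y\circ z$ reduced first. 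It therefore suffices to establish three reduction lemmas, each together with its left–right mirror: (A) $\e{\e{x\circ y}\circ z} = \e{x\circ y\circ z}$; (B) $\zeta(x\circ y)\,\zeta(\e{x\circ y}\circ z) = \zeta(x\circ y\circ z)$; and (C) $\chi(x\circ y)\,\chi(\e{x\circ y}\circ z) = \chi(x\circ y\circ z)$ on the locus where $\zeta(x\circ y\circ z)=1$ (where $\zeta$ vanishes, both products are zero by (B), so $\chi$ is irrelevant). Granting these, both $(xy)z$ and $x(yz)$ equal $\chi(x\circ y\circ z)\,\zeta(x\circ y\circ z)\,\e{x\circ y\circ z}$, and the theorem follows.

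Lemmas (A) and (B) are essentially topological bookkeeping. For (A), the strings deleted in passing from $x\circ y$ to $\e{x\circ y}$ are exactly the loops, the boundary arcs, and, when $\zeta(x\circ y)=1$, nothing else; none of these can be reconnected to a through-string by subsequent concatenation with $z$, so the connectivity recorded by $\e{\,\cdot\,}$ is unchanged whether one reduces the left interface before or after attaching $z$. The only surviving data that must be checked are the ghosts left behind, and since $\e{\,\cdot\,}$ remembers ghost counts only modulo $2$ per domain, it is enough to track parities. For (B), I would observe that $\zeta(x\circ y\circ z)=0$ precisely when some string endpoint fails to attach at one of the two interior interfaces; a failure at the $x$–$y$ interface is detected by $\zeta(x\circ y)=0$, and a failure at the $y$–$z$ interface (after the first reduction) by $\zeta(\e{x\circ y}\circ z)=0$, since $\e{x\circ y}$ preserves which right-hand nodes carry string endpoints.

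The genuine obstacle is Lemma (C): the boundary parameters read off during multiplication must be independent of the bracketing. The danger is that the parity of a boundary-arc endpoint is computed by counting string endpoints \emph{and} ghosts to its left, and reducing $x\circ y$ first both removes endpoints and inserts ghosts. The key invariant, which is exactly what the ghost convention is engineered to guarantee, is that deleting a boundary arc and depositing a ghost at each former endpoint changes neither the number of marked points (endpoints-or-ghosts) strictly to the left of any fixed boundary position, nor its parity: a top–top or bottom–bottom arc trades two endpoints for two ghosts, while a top–bottom $\gamma$-arc trades one endpoint for one ghost on each boundary. Consequently every boundary arc of $x\circ y\circ z$ receives the same parameter whether it is removed in the first reduction or only after $z$ is attached, so the multiset of parameters, and hence the product $\chi$, agrees in both orders. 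I would make this precise by setting up a bijection between the boundary arcs of $x\circ y\circ z$ and the disjoint union of the arcs removed at the two stages, then verifying that corresponding arcs receive equal parameters using the parity invariant above. Combining (C) with (A) and (B) yields the symmetric normal form, establishing associativity of $\dGh_n$; the analogous statement for the subalgebras $\dGho_n$, $\Gh_n$ and $\Gho_n$ is then immediate, as each is closed under the same multiplication.
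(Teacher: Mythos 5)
Your proposal is correct and follows essentially the same route as the paper's proof: both factor diagram multiplication as $\chi\cdot\zeta\cdot\e{\,\cdot\,}$ applied to the concatenation, reduce $(xy)z$ and $x(yz)$ to the common normal form $\chi(x\circ y\circ z)\,\zeta(x\circ y\circ z)\,\e{x\circ y\circ z}$ via $\e{\e{x\circ y}\circ z}=\e{x\circ y\circ z}$ together with the two multiplicativity identities for $\chi$ and $\zeta$, and justify the $\chi$ identity by the same parity invariant (each removed arc endpoint is replaced by a ghost in place, and the evenness condition on each boundary makes endpoint parity well defined and stable under concatenation). The inheritance of associativity by $\Gh_n$, $\Gho_n$ and $\dGho_n$ as subalgebras also matches the paper's corollary, so there is nothing to correct.
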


\begin{proof}
It suffices to show that multiplication of $\dGh_n$-diagrams is associative. Let $x$, $y$ and $z$ be $\dGh_n$ diagrams. From \eqref{eq:assoccirc}, we have
\begin{align}
(xy)z &= \chi(x\circ y)\, \zeta(x\circ y)\e{x\circ y} z \\
&= \chi(x\circ y)\chi(\e{x\circ y} \circ z) \,\zeta(x\circ y)\zeta(\e{x\circ y} \circ z) \e{\e{x\circ y}\circ z} \\
&= \chi(x\circ y)\chi(\e{x\circ y} \circ z) \,\zeta(x\circ y)\zeta(\e{x\circ y} \circ z) \e{x\circ y\circ z},
\end{align}
where in the last line we have used the straightforward equality $\e{\e{x\circ y}\circ z}=\e{x\circ y\circ z}$. Analogously,
\begin{align}
x(yz) &= \chi(y\circ z)\chi(x\circ\e{y\circ z})\, \zeta (y \circ z) \zeta(x\circ\e{y\circ z}) \e{x\circ y \circ z}.
\end{align}
We will proceed by showing that
\begin{align}
\chi(x\circ y)\chi(\e{x\circ y} \circ z)\ =\ \chi(x\circ y \circ z) \ =\  \chi(y\circ z)\chi(x\circ\e{y\circ z}) \label{eq:chi}
\end{align}
and
\begin{align}
\zeta(x\circ y)\zeta(\e{x\circ y} \circ z) \ =\ \zeta(x\circ y \circ z) \ =\ \zeta (y \circ z) \zeta(x\circ\e{y\circ z}). \label{eq:zeta}
\end{align}

To show \eqref{eq:chi}, recall that $\chi$ is determined by the number of loops and each kind of boundary arc in its $\dGh_n$-pseudo-diagram argument. Since $x$, $y$ and $z$ are $\dGh_n$-diagrams, not just pseudo-diagrams, they have none of these features independently. There are three ways such objects can be formed from $x$, $y$ and $z$ in these products:
\begin{enumerate}
    \item from strings in $x$ and $y$, but not $z$,
    \item from strings in $y$ and $z$, but not $x$,
    \item from strings in $x$, $y$ and $z$.
\end{enumerate}
Each of these objects contributes a factor of its associated parameter to $\chi(x\circ y \circ z)$. Each object of type 1 contributes a factor of its associated parameter to $\chi(x\circ y)$ only, while each object of type 2 or 3 contributes to $\chi(\e{x\circ y}\circ z)$. Indeed, any objects of type 1 are removed by $\e{\cdot}$ before the concatenation of $\e{x\circ y}$ and $z$ occurs, and no objects of type 2 or 3 can be formed until this happens, since $z$ is not yet included. It is most important to note here that the parameter assigned to each object is the same in each case. For loops this is obvious, but for boundary arcs we need to check that the parity of each endpoint is unchanged. Indeed, within a $\dGh_n$-pseudo-diagram, the parity of each string endpoint on the boundary is well-defined. Since each $\dGh_n$-pseudo-diagram has an even number of ghosts plus string endpoints on each boundary, concatenation changes the numbering by an even number, and so does not affect parity. Therefore $\chi(\e{x\circ y}\circ z) = \chi(x\circ y \circ z)$. Similarly, each object of type 2 contributes a factor of its associated parameter to $\chi(y\circ z)$, while each object of type 1 or 3 contributes to $\chi(x\circ\e{y\circ z})$, so $\chi(x\circ\e{y\circ z}) = \chi(x\circ y \circ z)$.

The argument for \eqref{eq:zeta} is analogous, since $\zeta$ is determined by the number of strings with at least one unattached endpoint. The only notable difference is that the three types of objects for $\zeta$ are formed from the strings or empty nodes of the relevant diagrams, instead of just the strings.
\end{proof}

\begin{corollary}\label{cor:assoc}
The ghost algebra $\Gh_n$, the one-boundary ghost algebra $\Gho_n$, and the one-boundary dilute ghost algebra $\dGho_n$ are associative.
\end{corollary}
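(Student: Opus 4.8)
The plan is to deduce associativity of $\Gh_n$, $\Gho_n$ and $\dGho_n$ directly from Theorem \ref{thm:assoc}, using the elementary fact that associativity is inherited by subalgebras and preserved under algebra isomorphism. First I would record the two formal observations that drive everything: if $A$ is an associative algebra and $B \subseteq A$ is a vector subspace closed under the multiplication of $A$, then for all $x,y,z \in B$ the products $(xy)z$ and $x(yz)$ computed in $B$ coincide with those computed in $A$, where they agree; hence $B$ is associative. Likewise, if $\psi\colon B \to B'$ is an algebra isomorphism and $B$ is associative, then $B'$ is associative, since $\psi$ intertwines both bracketings. It therefore suffices to exhibit each of the three algebras as, or as isomorphic to, a subalgebra of the associative algebra $\dGh_n$.

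For $\Gh_n$, I would invoke the identification from Section \ref{s:relations}: the $\dGh_n$-diagrams with no empty nodes span a non-unital subalgebra of $\dGh_n$ isomorphic to $\Gh_n$. Closure holds because concatenating two such diagrams produces no empty nodes and never forces the product to zero, and the induced products agree with $\Gh_n$-multiplication since the loop and boundary-arc rules of Table \ref{tab:params} are identical in both algebras. Associativity of $\Gh_n$ then follows from the two observations above. For $\dGho_n$, the $\dGh_n$-diagrams with no strings attached to the bottom boundary span a subalgebra of $\dGh_n$ isomorphic to $\dGho_n$, again by Section \ref{s:relations}, so $\dGho_n$ is associative by the same reasoning.

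For $\Gho_n$, I would observe that the $\Gh_n$-diagrams with no bottom boundary connections span a subalgebra of $\Gh_n$ isomorphic to $\Gho_n$; since $\Gh_n$ has just been shown associative, so is $\Gho_n$. Equivalently, one may realise $\Gho_n$ in a single step as the subalgebra of $\dGh_n$ spanned by those diagrams having neither empty nodes nor bottom boundary strings, bypassing $\Gh_n$ entirely. Either route gives the result.

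The hard part will be nonexistent in the sense of genuine computation: all the substantive content lives in Theorem \ref{thm:assoc}. The only point demanding any care is confirming that each of the three diagram sets is genuinely closed under multiplication and that the restricted operation reproduces the native multiplication of the corresponding algebra — but these closure and isomorphism statements are already established in Section \ref{s:relations}, so the corollary collapses to the two formal observations recorded above.
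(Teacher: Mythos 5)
Your proposal is correct and coincides with the paper's own one-line argument: Corollary \ref{cor:assoc} is proved there by observing that $\Gh_n$, $\Gho_n$ and $\dGho_n$ are (non-unital, non-unital and unital, respectively) subalgebras of $\dGh_n$ and hence inherit associativity from Theorem \ref{thm:assoc}. The extra care you take over closure and the agreement of the restricted multiplication is sound but does not change the route.
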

\begin{proof}
Since $\Gh_n$, $\Gho_n$ and $\dGho_n$ are non-unital, non-unital and unital subalgebras of $\dGh_n$, respectively, they inherit associativity from $\dGh_n$.
\end{proof}

\section{Cellularity}\label{app:cellular}
In this section, we show that $\Gho_n$, $\Gh_n$, $\dGho_n$ and $\dGh_n$ are cellular algebras, as defined by Graham and Lehrer in \cite[\S 1]{GL}. We reproduce the definition for convenience. We use the terms half-diagram, throughline and defect defined in Appendix \ref{app:dims}, as well as the injective map from pairs of half-diagrams to diagrams. We also use the notation $\chi$, $\zeta$, $\circ$ and $\e{\cdot}$ from Appendix \ref{app:assoc}.

Let $A$ be an associative unital algebra over a commutative ring $R$ with identity. Then $A$ is a \textit{cellular algebra} if it has a \textit{cell datum} $(\Lambda,M,C,\ast)$, where
\begin{enumerate}
\item $\Lambda$ is a partially ordered set, and for each $\lambda \in \Lambda$, $M(\lambda)$ is a finite set such that $C: \bigsqcup_{\lambda\in\Lambda} M(\lambda) \times M(\lambda) \to A$ is an injective map with image an $R$-basis of $A$. 
\item If $\lambda \in \Lambda$ and $S,T \in M(\lambda)$, write $C_{S,T}^\lambda := C(S,T)$. Then $\ast$ is an $R$-linear anti-involution of $A$ such that $\left(C_{S,T}^\lambda\right)^\ast = C_{T,S}^\lambda$.
\item If $\lambda \in \Lambda$ and $S,T \in M(\lambda)$, then for any element $a \in A$, we have 
\begin{align}
aC_{S,T}^\lambda \equiv \sum_{S' \in M(\lambda)} r_a(S',S) C^\lambda_{S',T} \qquad \mod A(<\lambda), \label{eq:cellr}
\end{align}
where $r_a(S',S) \in R$ is independent of $T$, and where $A(<\lambda)$ is the $R$-submodule of $A$ generated by $\left\lbrace \left. C^\mu_{S^\mu,T^\mu} \ \right\rvert\  \mu < \lambda;\ S^\mu, T^\mu \in M(\mu)\right\rbrace$.
\end{enumerate}

\begin{theorem}\label{thm:cellular}
The algebras $\Gho_n$, $\Gh_n$, $\dGho_n$ and $\dGh_n$ are cellular.
\end{theorem}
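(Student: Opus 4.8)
The plan is to construct an explicit cell datum $(\Lambda, M, C, \ast)$ for each of the four algebras, building directly on the half-diagram machinery already set up in Appendix \ref{app:dims}. For the poset $\Lambda$, I would take the set of possible defect numbers $\{0, 1, \dots, n\}$, ordered so that \emph{larger} defect number is \emph{lower} in the order (i.e. $\lambda < \mu$ when $\lambda$ has more defects than $\mu$); this is the standard convention for Temperley–Lieb-type algebras, since multiplication can only destroy throughlines, never create them. For each $d \in \Lambda$, the set $M(d)$ is the set of half-diagrams with $d$ defects, as defined in Appendix \ref{app:dims}. The map $C$ sends a pair $(x,y)$ of half-diagrams with $d$ defects to the diagram $\out{x}{y}$ obtained by the cut-and-glue construction; by the bijection established in Appendix \ref{app:dims} between pairs of half-diagrams with $d$ defects and diagrams with $d$ throughlines, $C$ is injective with image exactly the diagram basis, giving axiom (C1).

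For the anti-involution $\ast$, I would take reflection of basis diagrams about a vertical line. This is an anti-involution precisely because reflecting a concatenation $x \circ y$ swaps the order of the factors, and the parity-dependent parameter assignments in Table \ref{tab:params} were deliberately arranged (via the shared parameters $\alpha_3, \delta_3, \gamma_3, \gamma_{12}$ on arcs of different parities) to be invariant under this reflection; this is the point foreshadowed in Section \ref{s:ghostalg}. Concretely I would check that $\chi(x) = \chi(x^\ast)$ and $\zeta(x) = \zeta(x^\ast)$ for every pseudo-diagram, so that $\ast$ respects multiplication up to reversing order. Since $\out{x}{y}^\ast = \out{y}{x}$ follows directly from the cut-and-glue definition (reflection interchanges the two halves), axiom (C2), namely $(C^d_{x,y})^\ast = C^d_{y,x}$, is then immediate.

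The main work, and the expected obstacle, is axiom (C3): showing that for any diagram $a$ and any half-diagram $y$ with $d$ defects, the product $a\cdot \out{x}{y}$ equals $\sum_{x'} r_a(x',x)\,\out{x'}{y}$ modulo the span of diagrams with \emph{strictly more} defects, where the coefficients $r_a(x',x)$ do not depend on $y$. The key structural fact is that left-multiplication by $a$ acts only on the left half of $\out{x}{y}$: when we concatenate $a$ with $\out{x}{y}$, any string reaching the right-hand nodes either connects through to $y$'s defects (preserving the throughline count) or gets capped off (reducing it, landing in $A(<d)$). Provided no defects are destroyed, the resulting action on the left half-diagram $x$ is independent of what $y$ is, because $y$ only supplies the endpoints of the surviving throughlines and contributes its own boundary/loop factors symmetrically. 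I would make this precise using the $\chi$, $\zeta$, $\e{\cdot}$ and $\circ$ formalism of Appendix \ref{app:assoc}: the parameter factors and the empty-node annihilation arising from the interaction of $a$ with the left half are computed entirely within the left portion, so they factor out as $r_a(x',x)$. The delicate points will be (i) verifying that the ghost-parity bookkeeping on the glued middle domains does not introduce $y$-dependence into the coefficients—this uses the evenness condition, which guarantees concatenation shifts endpoint numbering only by even amounts and hence preserves parities, exactly as in the associativity proof—and (ii) in the dilute cases $\dGho_n$ and $\dGh_n$, confirming that the zero-product rule (a string meeting an empty node) is likewise governed only by the left half, so that the coefficient $r_a(x',x)$ is still well-defined and $y$-independent. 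Once (C3) is established for the generic two-boundary algebra $\dGh_n$, the cases $\Gho_n$, $\Gh_n$ and $\dGho_n$ follow by the same argument restricted to the appropriate sub-bases (no bottom boundary, no empty nodes, or both), since each is spanned by a subset of half-diagram pairs closed under the relevant cutting construction.
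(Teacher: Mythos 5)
Your construction is essentially identical to the paper's: the same poset of defect numbers, the same sets $M(\lambda)$ of half-diagrams, the same gluing map $C$ built on the bijection from Appendix \ref{app:dims}, the same reflection anti-involution justified by the deliberate parameter-sharing in Table \ref{tab:params}, and the same observation for axiom (C3) that the surviving coefficient $\chi\cdot\zeta$ of the concatenation depends only on the two adjacent half-diagrams and that the left half of the product is independent of $T$. The one thing you must fix is the partial order on $\Lambda$: you declare that larger defect number is \emph{lower}, but your own justification (``multiplication can only destroy throughlines'') forces the opposite convention. A product $a\cdot\out{x}{y}$ is a linear combination of diagrams with \emph{at most} $d$ throughlines, so the junk terms have \emph{fewer} defects, and for these to lie in $A(<d)$ you need fewer defects to mean lower in the order --- i.e.\ the usual order on $\{0,1,\dots,n\}$, which is what the paper uses. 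Indeed, two sentences after stating your convention you assert that capping off a defect produces a term ``landing in $A(<d)$'', which is only true under the standard convention and contradicts your declared one; likewise your phrase ``modulo the span of diagrams with strictly more defects'' should read ``strictly fewer''. With the order flipped, the rest of your argument --- the parity preservation from the evenness condition, the $y$-independence of the coefficients, and the $\zeta$ bookkeeping in the dilute cases --- goes through exactly as in the paper.
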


\begin{proof}
Let $R = \C$ and let $A \in \left\lbrace\Gho_n, \Gh_n, \dGho_n, \dGh_n\right\rbrace$. Let $\Lambda = \{0,1, \dots, n\}$ with the usual partial order $\leq$. For each $\lambda \in \Lambda$, let $M(\lambda)$ be the set of all $A$-half-diagrams with $\lambda$ defects. For $\lambda \in \Lambda$ and $S,T \in M(\lambda)$, let $C^\lambda_{S,T}$ be the $A$-diagram with $\lambda$ defects obtained by reflecting $T$ about a vertical line and linking its defects to those of $S$.

Then from Appendix \ref{app:dims}, $C: \bigsqcup_{\lambda \in \Lambda} M(\lambda) \times M(\lambda) \to A$ is injective, and its image is the set of all $A$-diagrams, which is indeed a $\C$-basis for $A$.

Define $\ast:A\to A$ as the linear map that takes each $A$-diagram to its reflection about a vertical line. This is clearly bijective. To see that this is an anti-involution, consider the strings and ghosts on each boundary of each $A$-diagram. During multiplication these are numbered, and boundary arcs assigned parameters according to the parities of their endpoints. Recalling that the sum of the number of strings and ghosts on each boundary is even, the parity assigned to each string or ghost is swapped under the reflection $\ast$, but so is the left-to-right order. Hence, $\ast$ maps each type of boundary arc to itself, except those that have both ends on the same boundary with the same parity, and top-to-bottom boundary arcs. These exceptional boundary arcs are assigned the same parameter as their images, and this ensures that $\ast$ is an anti-involution. It follows from the definition of $C$ that $\left(C_{S,T}^\lambda\right)^\ast = C_{T,S}^\lambda$ for all $S,T \in M(\lambda)$ and $\lambda \in \Lambda$.

Finally, for part 3, consider the product of two $A$-diagrams $C_{U,V}^\nu$ and $C_{S,T}^\lambda$, where $\nu,\lambda \in \Lambda$, $U,V \in M(\nu)$ and $S,T \in M(\lambda)$. Note that $C_{U,V}^\nu C_{S,T}^\lambda$ is a scalar multiple of an $A$-diagram with at most $\min\{\lambda,\nu\}$ throughlines. Thus if $\nu < \lambda$, then $C_{U,V}^\nu C_{S,T}^\lambda \equiv 0 \mod A(<\lambda)$, so we can take $r_{C_{U,V}^\nu}(S,S')=0$ for all $S' \in M(\lambda)$ in this case. If $\nu \geq \lambda$, it is possible that the $A$-diagram $\e{C_{U,V}^\nu \circ C_{S,T}^\lambda}$, obtained by concatenating these diagrams and removing strings not connected to either side, has fewer than $\lambda$ throughlines. Since $C_{U,V}^\nu C_{S,T}^\lambda$ is a scalar multiple of this diagram, it is congruent to $0 \mod A(<\lambda)$, and thus we can take $r_{C_{U,V}^\nu}(S,S')=0$ for all $S' \in M(\lambda)$ in this case as well. If $\nu \geq \lambda$ and $\e{C_{U,V}^\nu \circ C_{S,T}^\lambda}$ has $\lambda$ throughlines, then since 
\begin{align}
C_{U,V}^\nu C_{S,T}^\lambda = \chi\left(C_{U,V}^\nu \circ C_{S,T}^\lambda\right)\zeta\left(C_{U,V}^\nu \circ C_{S,T}^\lambda\right)\e{C_{U,V}^\nu \circ C_{S,T}^\lambda},
\end{align}
we can take
\begin{align}
r_{C_{U,V}^\nu}(S,S') = \begin{cases}
\chi\left(C_{U,V}^\nu \circ C_{S,T}^\lambda\right)\zeta\left(C_{U,V}^\nu \circ C_{S,T}^\lambda\right), & \text{if }S' = \mathcal{L}\left(\e{C_{U,V}^\nu \circ C_{S,T}^\lambda} \right), \\
0, &\text{otherwise},
\end{cases}
\end{align}
where $\mathcal{L}(x)$ is the left $A$-half-diagram obtained by cutting the $A$-diagram $x$ in half along a vertical line.

Observe that the boundary arcs and loops formed in the concatenation $C_{U,V}^\nu \circ C_{S,T}^\lambda$ can only use strings from the $A$-half-diagrams $V$ and $S$, so $\chi\left(C_{U,V}^\nu \circ C_{S,T}^\lambda \right)$ is a function of $V$ and $S$ only. A string meets an empty node in this concatenation if and only if a string from $V$ meets an empty node of $S$ or vice versa, so $\zeta\left(C_{U,V}^\nu \circ C_{S,T}^\lambda \right)$ is also a function of $V$ and $S$ only. Note also that $\mathcal{L}\left(C_{U,V}^\nu C_{S,T}^\lambda \right)$ does not depend on $T$ because the only strings from $T$ that may be connected to the left side of the resulting diagram are its throughlines, and these are cut to produce $\mathcal{L}\left(C_{U,V}^\nu C_{S,T}^\lambda \right)$. Hence in all cases, $r_{C_{U,V}^\nu}$ is independent of $T$.

For each $a \in A$, we can write
\begin{align}
a = \sum_{\nu \in \Lambda}\ \  \sum_{U,V \in M(\nu)} f^\nu_{U,V}C_{U,V}^\nu,
\end{align}
for some $f^\nu_{U,V} \in \C$, and then
\begin{align}
r_a(S',S) = \sum_{\nu \in \Lambda}\ \  \sum_{U,V \in M(\nu)} f^\nu_{U,V}r_{C_{U,V}^\nu}(S',S).
\end{align}
Since $r_{C_{U,V}^\nu}$ does not depend on $T$, it follows that $r_a$ does not depend on $T$. Thus part 3 is satisfied, and $A$ is a cellular algebra.
\end{proof}

We note that the algebras $\Gho_n$, $\Gh_n$, $\dGho_n$ and $\dGh_n$ are finite-dimensional over the field $\C$, so the representation theory results from \cite[\S 3]{GL} apply to them.

\section{Generalisations} \label{app:nc}

In Appendix \ref{app:cellular}, we showed that the algebras $\Gho_n$, $\Gh_n$, $\dGho_n$ and $\dGh_n$ are cellular, using the anti-involution given by reflecting their basis diagrams about a vertical line. The algebras $\TL_n$, $\dTL_n$, $\BTL_n$ and $\BBTL_n$ are also cellular with respect to this reflection, though $\BBTL_n$ is infinite-dimensional, so the results from \cite[\S 3]{GL} do not hold. However, the proof of cellularity relies on some boundary arcs of different types being assigned the same parameter; see the arcs assigned $\alpha_3$, $\delta_3$, $\gamma_{12}$ and $\gamma_3$ in Table \ref{tab:params}. At the cost of cellularity with respect to reflection and half-diagrams, we can construct \textit{generalised} versions of $\Gh_n$ and $\dGh_n$, denoted by $\ncGh_n$ and $\ncdGh_n$, by using the same diagram bases and multiplication rules, but distinct parameters for each boundary arc type, as given in Table \ref{tab:ncparams}. The \textit{generalised one-boundary ghost algebra} $\ncGho_n$ and the \textit{generalised one-boundary dilute ghost algebra} $\ncdGho_n$ are the subalgebras of $\ncGh_n$ and $\ncdGh_n$, respectively, spanned by diagrams with no strings connected to the bottom boundary. The associativity of the generalised algebras can be proven as in Theorem \ref{thm:assoc} and Corollary \ref{cor:assoc}, with the obvious modifications to the function $\chi$.

\begin{table}[H]
\centering
\caption{Table of boundary arcs and their associated parameters in $\ncGh_n$ and $\ncdGh_n$.}\label{tab:ncparams}
\vspace{2mm}
\begin{tabular}{|c|c||c|c||c|c|}
\hline
Parameter & Boundary arc & Parameter & Boundary arc & Parameter & Boundary arc\\
\hline
$\alpha_1$ & $\begin{tikzpicture}[baseline={([yshift=-1mm]current bounding box.center)},xscale=0.5,yscale=0.5]
{
\draw[dotted] (0,0.5)--(2,0.5);
\draw (0.2,0.5) arc (-180:0:0.8);
\node at (0.2,0.5) [anchor=south] {\scriptsize odd};
\node at (1.8,0.5) [anchor=south] {\scriptsize even};
\node at (1,-0.5) {};
}
\end{tikzpicture}$
&$\delta_1$ &$\begin{tikzpicture}[baseline={([yshift=-1mm]current bounding box.center)},xscale=0.5,yscale=0.5]
{
\draw[dotted] (0,0)--(2,0);
\draw (0.2,0) arc (180:0:0.8);
\node at (0.2,-0.9) [anchor=south] {\scriptsize odd};
\node at (1.8,-0.9) [anchor=south] {\scriptsize even};
\node at (1,1) {};
}
\end{tikzpicture}$
&$\gamma_{1}$ & $\begin{tikzpicture}[baseline={([yshift=5mm]current bounding box.south)},xscale=0.5,yscale=0.5]
{
\draw[dotted] (0,0)--(2,0);
\draw[dotted] (0,1)--(2,1);
\draw (1,1)--(1,0);
\node at (1,1) [anchor=south] {\scriptsize odd};
\node at (1,-0.9) [anchor=south] {\scriptsize even};
}
\end{tikzpicture}$\\
\hline
$\alpha_2$ &$\begin{tikzpicture}[baseline={([yshift=-1mm]current bounding box.center)},xscale=0.5,yscale=0.5]
{
\draw[dotted] (0,0.5)--(2,0.5);
\draw (0.2,0.5) arc (-180:0:0.8);
\node at (0.2,0.5) [anchor=south] {\scriptsize even};
\node at (1.8,0.5) [anchor=south] {\scriptsize odd};
\node at (1,-0.5) {};
}
\end{tikzpicture}$ 
&$\delta_2$& $\begin{tikzpicture}[baseline={([yshift=-1mm]current bounding box.center)},xscale=0.5,yscale=0.5]
{
\draw[dotted] (0,0)--(2,0);
\draw (0.2,0) arc (180:0:0.8);
\node at (0.2,-0.9) [anchor=south] {\scriptsize even};
\node at (1.8,-0.9) [anchor=south] {\scriptsize odd};
\node at (1,1) {};
}
\end{tikzpicture}$
&$\gamma_{2}$ & $\begin{tikzpicture}[baseline={([yshift=5mm]current bounding box.south)},xscale=0.5,yscale=0.5]
{
\draw[dotted] (0,0)--(2,0);
\draw[dotted] (0,1)--(2,1);
\draw (1,1)--(1,0);
\node at (1,1) [anchor=south] {\scriptsize even};
\node at (1,-0.9) [anchor=south] {\scriptsize odd};
}
\end{tikzpicture}$\\
\hline
$\alpha_3$ & $\begin{tikzpicture}[baseline={([yshift=4mm]current bounding box.south)},xscale=0.5,yscale=0.5]
{
\draw[dotted] (0,0.5)--(2,0.5);
\draw (0.2,0.5) arc (-180:0:0.8);
\node at (0.2,0.5) [anchor=south] {\scriptsize odd};
\node at (1.8,0.5) [anchor=south] {\scriptsize odd};
\node at (1,-0.5) {};
}
\end{tikzpicture}$
&$\delta_3$ &$\begin{tikzpicture}[baseline={([yshift=-1mm]current bounding box.center)},xscale=0.5,yscale=0.5]
{
\draw[dotted] (0,0)--(2,0);
\draw (0.2,0) arc (180:0:0.8);
\node at (0.2,-0.9) [anchor=south] {\scriptsize odd};
\node at (1.8,-0.9) [anchor=south] {\scriptsize odd};
\node at (1,1) {};
}
\end{tikzpicture}$
&$\gamma_3$ & $\begin{tikzpicture}[baseline={([yshift=5mm]current bounding box.south)},xscale=0.5,yscale=0.5]
{
\draw[dotted] (0,0)--(2,0);
\draw[dotted] (0,1)--(2,1);
\draw (1,1)--(1,0);
\node at (1,1) [anchor=south] {\scriptsize odd};
\node at (1,-0.9) [anchor=south] {\scriptsize odd};
}
\end{tikzpicture}$\\
\hline
$\alpha_4$ & $\begin{tikzpicture}[baseline={([yshift=4mm]current bounding box.south)},xscale=0.5,yscale=0.5]
{
\draw[dotted] (0,0.5)--(2,0.5);
\draw (0.2,0.5) arc (-180:0:0.8);
\node at (0.2,0.5) [anchor=south] {\scriptsize even};
\node at (1.8,0.5) [anchor=south] {\scriptsize even};
\node at (1,-0.5) {};
}
\end{tikzpicture}$
&$\delta_4$ &$\begin{tikzpicture}[baseline={([yshift=-1mm]current bounding box.center)},xscale=0.5,yscale=0.5]
{
\draw[dotted] (0,0)--(2,0);
\draw (0.2,0) arc (180:0:0.8);
\node at (0.2,-0.9) [anchor=south] {\scriptsize even};
\node at (1.8,-0.9) [anchor=south] {\scriptsize even};
\node at (1,1) {};
}
\end{tikzpicture}$
&$\gamma_4$ & $\begin{tikzpicture}[baseline={([yshift=5mm]current bounding box.south)},xscale=0.5,yscale=0.5]
{
\draw[dotted] (0,0)--(2,0);
\draw[dotted] (0,1)--(2,1);
\draw (1,1)--(1,0);
\node at (1,1) [anchor=south] {\scriptsize even};
\node at (1,-0.9) [anchor=south] {\scriptsize even};
}
\end{tikzpicture}$\\
\hline
\end{tabular}
\end{table}

One can recover the original algebras from their generalised counterparts by setting $\alpha_4=\alpha_3$, $\delta_4=\delta_3$, $\gamma_1=\gamma_2=\gamma_{12}$ and $\gamma_4=\gamma_3$.

The solutions to the top boundary BYBE for $\ncGh_n$ and $\ncdGh_n$ (and therefore $\ncGho_n$ and $\ncdGho_n$) are exactly the same as the solutions for $\Gh_n$ and $\dGh_n$, except:
\begin{itemize}
    \item $A_1 = \alpha_1^2 + \alpha_2^2 - 2\alpha_3\alpha_4$ and $A_2 = \alpha_1\alpha_2 - \alpha_3\alpha_4$, wherever these appear.
    \item In $\Gh_n$ Solution III, we still have $b_2(u) = \alpha_3 b(u)$ but now $b_4(u) = \alpha_4 b(u)$.
    \item In $\dGh_n$ Solutions I, II and IV, the factor $\left(\mu\nu(\alpha_1 + \alpha_2) + \left(\mu^2 + \nu^2\right) \alpha_3\right)$ in $a_1(u)$ and $a_2(u)$ becomes $\left(\mu\nu(\alpha_1 + \alpha_2) + \mu^2 \alpha_3 + \nu^2 \alpha_4\right)$.
    \item In $\dGh_n$ Solution III, $b_2(u) = -\alpha_3 b(u) P(\sigma,\tau; -3\phi)$ but now $b_4(u) = -\alpha_4 b(u) P(\sigma,\tau; -3\phi)$.
\end{itemize}
The proofs are analogous to the proofs of Theorems \ref{thm:denseBYBE} and \ref{thm:diluteBYBE}, and the bottom boundary BYBE solutions are found by reflecting the top boundary BYBE solutions about a horizontal line, and swapping the parameters $\alpha_i \leftrightarrow \delta_i$.

\section{Configurations for Theorem \ref{thm:diluteBYBE}}\label{app:configs}
This section lists the configurations of bulk squares and boundary triangles corresponding to each diagram in the proof of Theorem \ref{thm:diluteBYBE}, in order of appearance. Diagrams with the same strings but a different arrangement of ghosts have been omitted; the configurations are also the same up to the placement of ghosts. Similarly, where configurations differ only by ghosts, only one is drawn, and the number of configurations with such strings is noted.
\begin{align*}
&\begin{tikzpicture}[baseline={([yshift=-1mm]current bounding box.center)},scale=0.55]
{
\draw[dotted] (0,0.5)--(3,0.5);
\draw [very thick](0,-1.5) -- (0,0.5);
\draw [very thick](3,-1.5)--(3,0.5);
\draw (0,0) to[out=0,in=-90] (0.6,0.5);
\draw (0,-1) to[out=0,in=-90] (1.2,0.5);
\draw (3,0) to[out=180,in=270] (2.4,0.5);
\draw (3,-1) to[out=180,in=270] (1.8,0.5);
}
\end{tikzpicture} \; :
&&\begin{tikzpicture}[baseline={([yshift=-1mm]current bounding box.center)},scale=0.55]
{
\idsq{(1,-1)}
\batri{(2,0)}
\esq{(3,-1)}
\batri{(4,0)}
\draw[lstring] let \n1 = {sin(45)} in (2.5,-1.5) arc (225:315:\n1);
\foreach \x in {2.5,3.5}
{\foreach \y in {-0.5,-1.5}
{\fill[lgh] (\x,\y) circle (1.6 pt);}}
\fill[lgh] (4.5,-0.5) circle (1.6 pt);
}
\end{tikzpicture}
\qquad
\begin{tikzpicture}[baseline={([yshift=-1mm]current bounding box.center)},scale=0.55]
{
\batri{(0,0)}
\esq{(1,-1)}
\batri{(2,0)}
\idsq{(3,-1)}
\draw[lstring] let \n1 = {sin(45)} in (2.5,-1.5) arc (225:315:\n1);
\foreach \x in {2.5,3.5}
{\foreach \y in {-0.5,-1.5}
{\fill[lgh] (\x,\y) circle (1.6 pt);}}
\fill[lgh] (1.5,-0.5) circle (1.6 pt);
}
\end{tikzpicture}
\\[9mm]
&\begin{tikzpicture}[baseline={([yshift=-1mm]current bounding box.center)},scale=0.55]
{
\draw[dotted] (0,0.5)--(3,0.5);
\draw [very thick](0,-1.5) -- (0,0.5);
\draw [very thick](3,-1.5)--(3,0.5);
\draw (0,0) to[out=0,in=-90] (0.6,0.5);
\draw (0,-1) -- (3,-1);
\draw (3,0) to[out=180,in=270] (2.4,0.5);
}
\end{tikzpicture} \; :
&&\begin{tikzpicture}[baseline={([yshift=-1mm]current bounding box.center)},scale=0.55]
{
\idsq{(1,-1)}
\batri{(2,0)}
\idsq{(3,-1)}
\aatri{(4,0)}
\draw[lstring] let \n1 = {sin(45)} in (2.5,-1.5) arc (225:315:\n1);
\foreach \x in {2.5,3.5}
{\foreach \y in {-0.5,-1.5}
{\fill[lgh] (\x,\y) circle (1.6 pt);}}
\fill[lgh] (4.5,-0.5) circle (1.6 pt);
}
\end{tikzpicture}
\quad
\begin{tikzpicture}[baseline={([yshift=-1mm]current bounding box.center)},scale=0.55]
{
\idsq{(1,-1)}
\bftri{(2,0)}
\bsq{(3,-1)}
\bhtri{(4,0)}
\draw[lstring] let \n1 = {sin(45)} in (2.5,-1.5) arc (225:315:\n1);
\foreach \x in {2.5,3.5}
{\foreach \y in {-0.5,-1.5}
{\fill[lgh] (\x,\y) circle (1.6 pt);}}
\fill[lgh] (4.5,-0.5) circle (1.6 pt);
}
\end{tikzpicture}
\quad
\left[
\begin{tikzpicture}[baseline={([yshift=-1mm]current bounding box.center)},scale=0.55]
{
\idsq{(1,-1)}
\batri{(2,0)}
\idsq{(3,-1)}
\batri{(4,0)}
\draw[lstring] let \n1 = {sin(45)} in (2.5,-1.5) arc (225:315:\n1);
\foreach \x in {2.5,3.5}
{\foreach \y in {-0.5,-1.5}
{\fill[lgh] (\x,\y) circle (1.6 pt);}}
\fill[lgh] (4.5,-0.5) circle (1.6 pt);
}
\end{tikzpicture}
\right]_{\times 4}
\begin{tikzpicture}[baseline={([yshift=-1mm]current bounding box.center)},scale=0.55]
{
\idsq{(1,-1)}
\aatri{(2,0)}
\idsq{(3,-1)}
\batri{(4,0)}
\draw[lstring] let \n1 = {sin(45)} in (2.5,-1.5) arc (225:315:\n1);
\foreach \x in {2.5,3.5}
{\foreach \y in {-0.5,-1.5}
{\fill[lgh] (\x,\y) circle (1.6 pt);}}
\fill[lgh] (4.5,-0.5) circle (1.6 pt);
}
\end{tikzpicture}
\\[2mm]
& & &
\begin{tikzpicture}[baseline={([yshift=-1mm]current bounding box.center)},scale=0.55]
{
\batri{(0,0)}
\idsq{(1,-1)}
\aatri{(2,0)}
\idsq{(3,-1)}
\draw[lstring] let \n1 = {sin(45)} in (2.5,-1.5) arc (225:315:\n1);
\foreach \x in {2.5,3.5}
{\foreach \y in {-0.5,-1.5}
{\fill[lgh] (\x,\y) circle (1.6 pt);}}
\fill[lgh] (1.5,-0.5) circle (1.6 pt);
}
\end{tikzpicture}
\quad
\begin{tikzpicture}[baseline={([yshift=-1mm]current bounding box.center)},scale=0.55]
{
\bftri{(0,0)}
\bsq{(1,-1)}
\bhtri{(2,0)}
\idsq{(3,-1)}
\draw[lstring] let \n1 = {sin(45)} in (2.5,-1.5) arc (225:315:\n1);
\foreach \x in {2.5,3.5}
{\foreach \y in {-0.5,-1.5}
{\fill[lgh] (\x,\y) circle (1.6 pt);}}
\fill[lgh] (1.5,-0.5) circle (1.6 pt);
}
\end{tikzpicture}
\quad
\left[
\begin{tikzpicture}[baseline={([yshift=-1mm]current bounding box.center)},scale=0.55]
{
\batri{(0,0)}
\idsq{(1,-1)}
\batri{(2,0)}
\idsq{(3,-1)}
\draw[lstring] let \n1 = {sin(45)} in (2.5,-1.5) arc (225:315:\n1);
\foreach \x in {2.5,3.5}
{\foreach \y in {-0.5,-1.5}
{\fill[lgh] (\x,\y) circle (1.6 pt);}}
\fill[lgh] (1.5,-0.5) circle (1.6 pt);
}
\end{tikzpicture}
\right]_{\times 4}
\begin{tikzpicture}[baseline={([yshift=-1mm]current bounding box.center)},scale=0.55]
{
\aatri{(0,0)}
\idsq{(1,-1)}
\batri{(2,0)}
\idsq{(3,-1)}
\draw[lstring] let \n1 = {sin(45)} in (2.5,-1.5) arc (225:315:\n1);
\foreach \x in {2.5,3.5}
{\foreach \y in {-0.5,-1.5}
{\fill[lgh] (\x,\y) circle (1.6 pt);}}
\fill[lgh] (1.5,-0.5) circle (1.6 pt);
}
\end{tikzpicture}
\\[9mm]
&\begin{tikzpicture}[baseline={([yshift=-1mm]current bounding box.center)},scale=0.55]
{
\draw[dotted] (0,0.5)--(3,0.5);
\draw [very thick](0,-1.5) -- (0,0.5);
\draw [very thick](3,-1.5)--(3,0.5);
\draw (0,-1) to[out=0,in=-90] (1.2,0.5);
\draw (3,-1) to[out=180,in=270] (1.8,0.5);
\filldraw (0.6,0.5) circle (0.08);
\filldraw (2.4,0.5) circle (0.08);
\filldraw[fill=white] (0,0) circle (0.12);
\filldraw[fill=white] (3,0) circle (0.12);
}
\end{tikzpicture} \; :
& & 
\begin{tikzpicture}[baseline={([yshift=-1mm]current bounding box.center)},scale=0.55]
{
\usq{(1,-1)}
\bctri{(2,0)}
\dsq{(3,-1)}
\emtri{(4,0)}
\foreach \x in {2.5,3.5}
{\foreach \y in {-0.5,-1.5}
{\fill[lgh] (\x,\y) circle (1.6 pt);}}
\fill[lgh] (4.5,-0.5) circle (1.6 pt);
}
\end{tikzpicture}
\quad
\begin{tikzpicture}[baseline={([yshift=-1mm]current bounding box.center)},scale=0.55]
{
\usq{(1,-1)}
\betri{(2,0)}
\rsq{(3,-1)}
\bftri{(4,0)}
\foreach \x in {2.5,3.5}
{\foreach \y in {-0.5,-1.5}
{\fill[lgh] (\x,\y) circle (1.6 pt);}}
\fill[lgh] (4.5,-0.5) circle (1.6 pt);
}
\end{tikzpicture}
\quad
\begin{tikzpicture}[baseline={([yshift=-1mm]current bounding box.center)},scale=0.55]
{
\bsq{(1,-1)}
\bhtri{(2,0)}
\esq{(3,-1)}
\bftri{(4,0)}
\draw[lstring] let \n1 = {sin(45)} in (2.5,-1.5) arc (225:315:\n1);
\foreach \x in {2.5,3.5}
{\foreach \y in {-0.5,-1.5}
{\fill[lgh] (\x,\y) circle (1.6 pt);}}
\fill[lgh] (4.5,-0.5) circle (1.6 pt);
}
\end{tikzpicture}
\\[2mm]
& & & 
\begin{tikzpicture}[baseline={([yshift=-1mm]current bounding box.center)},scale=0.55]
{
\bhtri{(0,0)}
\esq{(1,-1)}
\bftri{(2,0)}
\bsq{(3,-1)}
\draw[lstring] let \n1 = {sin(45)} in (2.5,-1.5) arc (225:315:\n1);
\foreach \x in {2.5,3.5}
{\foreach \y in {-0.5,-1.5}
{\fill[lgh] (\x,\y) circle (1.6 pt);}}
\fill[lgh] (1.5,-0.5) circle (1.6 pt);
}
\end{tikzpicture}
\quad
\begin{tikzpicture}[baseline={([yshift=-1mm]current bounding box.center)},scale=0.55]
{
\bhtri{(0,0)}
\lsq{(1,-1)}
\bgtri{(2,0)}
\dsq{(3,-1)}
\foreach \x in {2.5,3.5}
{\foreach \y in {-0.5,-1.5}
{\fill[lgh] (\x,\y) circle (1.6 pt);}}
\fill[lgh] (1.5,-0.5) circle (1.6 pt);
}
\end{tikzpicture}
\quad
\begin{tikzpicture}[baseline={([yshift=-1mm]current bounding box.center)},scale=0.55]
{
\emtri{(0,0)}
\usq{(1,-1)}
\bctri{(2,0)}
\dsq{(3,-1)}
\foreach \x in {2.5,3.5}
{\foreach \y in {-0.5,-1.5}
{\fill[lgh] (\x,\y) circle (1.6 pt);}}
\fill[lgh] (1.5,-0.5) circle (1.6 pt);
}
\end{tikzpicture}
\\[9mm]
&\begin{tikzpicture}[baseline={([yshift=-1mm]current bounding box.center)},scale=0.55]
{
\draw[dotted] (0,0.5)--(3,0.5);
\draw [very thick](0,-1.5) -- (0,0.5);
\draw [very thick](3,-1.5)--(3,0.5);
\draw (0,0) to[out=0,in=-90] (0.75,0.5);
\draw (0,-1) to[out=0,in=-90] (1.5,0.5);
\draw (3,-1) to[out=180,in=270] (2.25,0.5);
\filldraw (0.375,0.5) circle (0.08);
\filldraw (1.125,0.5) circle (0.08);
\filldraw (1.875,0.5) circle (0.08);
\filldraw[fill=white] (3,0) circle (0.12);
}
\end{tikzpicture} \; :
& & 
\begin{tikzpicture}[baseline={([yshift=-1mm]current bounding box.center)},scale=0.55]
{
\idsq{(-1,-1)}
\bbtri{(0,0)}
\esq{(1,-1)}
\betri{(2,0)}
\draw[lstring] let \n1 = {sin(45)} in (0.5,-1.5) arc (225:315:\n1);
\foreach \x in {0.5,1.5}
{\foreach \y in {-0.5,-1.5}
{\fill[lgh] (\x,\y) circle (1.6 pt);}}
\fill[lgh] (2.5,-0.5) circle (1.6 pt);
}
\end{tikzpicture}
\qquad
\begin{tikzpicture}[baseline={([yshift=-1mm]current bounding box.center)},scale=0.55]
{
\bbtri{(0,0)}
\esq{(1,-1)}
\betri{(2,0)}
\bsq{(3,-1)}
\draw[lstring] let \n1 = {sin(45)} in (2.5,-1.5) arc (225:315:\n1);
\foreach \x in {2.5,3.5}
{\foreach \y in {-0.5,-1.5}
{\fill[lgh] (\x,\y) circle (1.6 pt);}}
\fill[lgh] (1.5,-0.5) circle (1.6 pt);
}
\end{tikzpicture}
\quad
\begin{tikzpicture}[baseline={([yshift=-1mm]current bounding box.center)},scale=0.55]
{
\bbtri{(0,0)}
\lsq{(1,-1)}
\bhtri{(2,0)}
\dsq{(3,-1)}
\foreach \x in {2.5,3.5}
{\foreach \y in {-0.5,-1.5}
{\fill[lgh] (\x,\y) circle (1.6 pt);}}
\fill[lgh] (1.5,-0.5) circle (1.6 pt);
}
\end{tikzpicture}
\quad
\begin{tikzpicture}[baseline={([yshift=-1mm]current bounding box.center)},scale=0.55]
{
\betri{(0,0)}
\usq{(1,-1)}
\bbtri{(2,0)}
\dsq{(3,-1)}
\foreach \x in {2.5,3.5}
{\foreach \y in {-0.5,-1.5}
{\fill[lgh] (\x,\y) circle (1.6 pt);}}
\fill[lgh] (1.5,-0.5) circle (1.6 pt);
}
\end{tikzpicture}
\end{align*}

\begin{align*}
&\begin{tikzpicture}[baseline={([yshift=-1mm]current bounding box.center)},scale=0.55]
{
\draw[dotted] (0,0.5)--(3,0.5);
\draw [very thick](0,-1.5) -- (0,0.5);
\draw [very thick](3,-1.5)--(3,0.5);
\draw (0,0) to[out=0,in=-90] (1.2,0.5);
\draw (3,-1) to[out=180,in=270] (2.4,0.5);
\filldraw (0.6,0.5) circle (0.08);
\filldraw (1.8,0.5) circle (0.08);
\filldraw[fill=white] (0,-1) circle (0.12);
\filldraw[fill=white] (3,0) circle (0.12);
}
\end{tikzpicture} \; :
& & 
\begin{tikzpicture}[baseline={([yshift=-1mm]current bounding box.center)},scale=0.55]
{
\tsq{(-1,-1)}
\bbtri{(0,0)}
\dsq{(1,-1)}
\emtri{(2,0)}
\foreach \x in {0.5,1.5}
{\foreach \y in {-0.5,-1.5}
{\fill[lgh] (\x,\y) circle (1.6 pt);}}
\fill[lgh] (2.5,-0.5) circle (1.6 pt);
}
\end{tikzpicture}
\quad
\begin{tikzpicture}[baseline={([yshift=-1mm]current bounding box.center)},scale=0.55]
{
\tsq{(-1,-1)}
\betri{(0,0)}
\rsq{(1,-1)}
\betri{(2,0)}
\foreach \x in {0.5,1.5}
{\foreach \y in {-0.5,-1.5}
{\fill[lgh] (\x,\y) circle (1.6 pt);}}
\fill[lgh] (2.5,-0.5) circle (1.6 pt);
}
\end{tikzpicture}
\quad
\begin{tikzpicture}[baseline={([yshift=-1mm]current bounding box.center)},scale=0.55]
{
\dsq{(-1,-1)}
\bhtri{(0,0)}
\esq{(1,-1)}
\betri{(2,0)}
\draw[lstring] let \n1 = {sin(45)} in (0.5,-1.5) arc (225:315:\n1);
\foreach \x in {0.5,1.5}
{\foreach \y in {-0.5,-1.5}
{\fill[lgh] (\x,\y) circle (1.6 pt);}}
\fill[lgh] (2.5,-0.5) circle (1.6 pt);
}
\end{tikzpicture}
\\[2mm]
& & &
\begin{tikzpicture}[baseline={([yshift=-1mm]current bounding box.center)},scale=0.55]
{
\betri{(0,0)}
\rsq{(1,-1)}
\betri{(2,0)}
\bsq{(3,-1)}
\draw[lstring] let \n1 = {sin(45)} in (2.5,-1.5) arc (225:315:\n1);
\foreach \x in {2.5,3.5}
{\foreach \y in {-0.5,-1.5}
{\fill[lgh] (\x,\y) circle (1.6 pt);}}
\fill[lgh] (1.5,-0.5) circle (1.6 pt);
}
\end{tikzpicture}
\quad 
\begin{tikzpicture}[baseline={([yshift=-1mm]current bounding box.center)},scale=0.55]
{
\betri{(0,0)}
\emsq{(1,-1)}
\bhtri{(2,0)}
\dsq{(3,-1)}
\foreach \x in {2.5,3.5}
{\foreach \y in {-0.5,-1.5}
{\fill[lgh] (\x,\y) circle (1.6 pt);}}
\fill[lgh] (1.5,-0.5) circle (1.6 pt);
}
\end{tikzpicture}
\quad 
\left[
\begin{tikzpicture}[baseline={([yshift=-1mm]current bounding box.center)},scale=0.55]
{
\bbtri{(0,0)}
\tsq{(1,-1)}
\batri{(2,0)}
\dsq{(3,-1)}
\foreach \x in {2.5,3.5}
{\foreach \y in {-0.5,-1.5}
{\fill[lgh] (\x,\y) circle (1.6 pt);}}
\fill[lgh] (1.5,-0.5) circle (1.6 pt);
}
\end{tikzpicture} \right]_{\times 4}
\begin{tikzpicture}[baseline={([yshift=-1mm]current bounding box.center)},scale=0.55]
{
\bbtri{(0,0)}
\dsq{(1,-1)}
\emtri{(2,0)}
\bsq{(3,-1)}
\draw[lstring] let \n1 = {sin(45)} in (2.5,-1.5) arc (225:315:\n1);
\foreach \x in {2.5,3.5}
{\foreach \y in {-0.5,-1.5}
{\fill[lgh] (\x,\y) circle (1.6 pt);}}
\fill[lgh] (1.5,-0.5) circle (1.6 pt);
}
\end{tikzpicture}
\\[2mm]
& & &
\begin{tikzpicture}[baseline={([yshift=-1mm]current bounding box.center)},scale=0.55]
{
\bbtri{(0,0)}
\tsq{(1,-1)}
\aatri{(2,0)}
\dsq{(3,-1)}
\foreach \x in {2.5,3.5}
{\foreach \y in {-0.5,-1.5}
{\fill[lgh] (\x,\y) circle (1.6 pt);}}
\fill[lgh] (1.5,-0.5) circle (1.6 pt);
}
\end{tikzpicture}
\quad 
\begin{tikzpicture}[baseline={([yshift=-1mm]current bounding box.center)},scale=0.55]
{
\aatri{(0,0)}
\tsq{(1,-1)}
\bbtri{(2,0)}
\dsq{(3,-1)}
\foreach \x in {2.5,3.5}
{\foreach \y in {-0.5,-1.5}
{\fill[lgh] (\x,\y) circle (1.6 pt);}}
\fill[lgh] (1.5,-0.5) circle (1.6 pt);
}
\end{tikzpicture}
\\[9mm]
&\begin{tikzpicture}[baseline={([yshift=-1mm]current bounding box.center)},scale=0.55]
{
\draw[dotted] (0,0.5)--(3,0.5);
\draw [very thick](0,-1.5) -- (0,0.5);
\draw [very thick](3,-1.5)--(3,0.5);
\draw (0,0) to[out=0,in=-90] (1,0.5);
\draw (0,-1) -- (3,-1);
\filldraw (0.5,0.5) circle (0.08);
\filldraw[fill=white] (3,0) circle (0.12);
}
\end{tikzpicture} \; :
& & 
\begin{tikzpicture}[baseline={([yshift=-1mm]current bounding box.center)},scale=0.55]
{
\idsq{(-1,-1)}
\betri{(0,0)}
\bsq{(1,-1)}
\emtri{(2,0)}
\draw[lstring] let \n1 = {sin(45)} in (0.5,-1.5) arc (225:315:\n1);
\foreach \x in {0.5,1.5}
{\foreach \y in {-0.5,-1.5}
{\fill[lgh] (\x,\y) circle (1.6 pt);}}
\fill[lgh] (2.5,-0.5) circle (1.6 pt);
}
\end{tikzpicture}
\quad
\left[
\begin{tikzpicture}[baseline={([yshift=-1mm]current bounding box.center)},scale=0.55]
{
\idsq{(-1,-1)}
\bbtri{(0,0)}
\idsq{(1,-1)}
\bftri{(2,0)}
\draw[lstring] let \n1 = {sin(45)} in (0.5,-1.5) arc (225:315:\n1);
\foreach \x in {0.5,1.5}
{\foreach \y in {-0.5,-1.5}
{\fill[lgh] (\x,\y) circle (1.6 pt);}}
\fill[lgh] (2.5,-0.5) circle (1.6 pt);
}
\end{tikzpicture}
\right]_{\times 4}
\begin{tikzpicture}[baseline={([yshift=-1mm]current bounding box.center)},scale=0.55]
{
\idsq{(-1,-1)}
\aatri{(0,0)}
\idsq{(1,-1)}
\betri{(2,0)}
\draw[lstring] let \n1 = {sin(45)} in (0.5,-1.5) arc (225:315:\n1);
\foreach \x in {0.5,1.5}
{\foreach \y in {-0.5,-1.5}
{\fill[lgh] (\x,\y) circle (1.6 pt);}}
\fill[lgh] (2.5,-0.5) circle (1.6 pt);
}
\end{tikzpicture}
\\[2mm]
& & &
\begin{tikzpicture}[baseline={([yshift=-1mm]current bounding box.center)},scale=0.55]
{
\betri{(0,0)}
\bsq{(1,-1)}
\emtri{(2,0)}
\bsq{(3,-1)}
\draw[lstring] let \n1 = {sin(45)} in (2.5,-1.5) arc (225:315:\n1);
\foreach \x in {2.5,3.5}
{\foreach \y in {-0.5,-1.5}
{\fill[lgh] (\x,\y) circle (1.6 pt);}}
\fill[lgh] (1.5,-0.5) circle (1.6 pt);
}
\end{tikzpicture}
\quad
\left[
\begin{tikzpicture}[baseline={([yshift=-1mm]current bounding box.center)},scale=0.55]
{
\bbtri{(0,0)}
\idsq{(1,-1)}
\bftri{(2,0)}
\bsq{(3,-1)}
\draw[lstring] let \n1 = {sin(45)} in (2.5,-1.5) arc (225:315:\n1);
\foreach \x in {2.5,3.5}
{\foreach \y in {-0.5,-1.5}
{\fill[lgh] (\x,\y) circle (1.6 pt);}}
\fill[lgh] (1.5,-0.5) circle (1.6 pt);
}
\end{tikzpicture}
\right]_{\times 4}
\begin{tikzpicture}[baseline={([yshift=-1mm]current bounding box.center)},scale=0.55]
{
\betri{(0,0)}
\usq{(1,-1)}
\aatri{(2,0)}
\dsq{(3,-1)}
\foreach \x in {2.5,3.5}
{\foreach \y in {-0.5,-1.5}
{\fill[lgh] (\x,\y) circle (1.6 pt);}}
\fill[lgh] (1.5,-0.5) circle (1.6 pt);
}
\end{tikzpicture}
\quad
\begin{tikzpicture}[baseline={([yshift=-1mm]current bounding box.center)},scale=0.55]
{
\aatri{(0,0)}
\idsq{(1,-1)}
\betri{(2,0)}
\bsq{(3,-1)}
\draw[lstring] let \n1 = {sin(45)} in (2.5,-1.5) arc (225:315:\n1);
\foreach \x in {2.5,3.5}
{\foreach \y in {-0.5,-1.5}
{\fill[lgh] (\x,\y) circle (1.6 pt);}}
\fill[lgh] (1.5,-0.5) circle (1.6 pt);
}
\end{tikzpicture}
\\[9mm]
&\begin{tikzpicture}[baseline={([yshift=-1mm]current bounding box.center)},scale=0.55]
{
\draw[dotted] (0,0.5)--(3,0.5);
\draw [very thick](0,-1.5) -- (0,0.5);
\draw [very thick](3,-1.5)--(3,0.5);
\draw (0,0) to[out=0,in=-90] (1,0.5);
\draw (0,-1) -- (3,0);
\filldraw (0.5,0.5) circle (0.08);
\filldraw[fill=white] (3,-1) circle (0.12);
}
\end{tikzpicture} \; :
& & 
\begin{tikzpicture}[baseline={([yshift=-1mm]current bounding box.center)},scale=0.55]
{
\idsq{(-1,-1)}
\betri{(0,0)}
\usq{(1,-1)}
\aatri{(2,0)}
\draw[lstring] let \n1 = {sin(45)} in (0.5,-1.5) arc (225:315:\n1);
\foreach \x in {0.5,1.5}
{\foreach \y in {-0.5,-1.5}
{\fill[lgh] (\x,\y) circle (1.6 pt);}}
\fill[lgh] (2.5,-0.5) circle (1.6 pt);
}
\end{tikzpicture}
\\[2mm]
& & &
\begin{tikzpicture}[baseline={([yshift=-1mm]current bounding box.center)},scale=0.55]
{
\betri{(0,0)}
\usq{(1,-1)}
\aatri{(2,0)}
\tsq{(3,-1)}
\foreach \x in {2.5,3.5}
{\foreach \y in {-0.5,-1.5}
{\fill[lgh] (\x,\y) circle (1.6 pt);}}
\fill[lgh] (1.5,-0.5) circle (1.6 pt);
}
\end{tikzpicture}
\quad
\begin{tikzpicture}[baseline={([yshift=-1mm]current bounding box.center)},scale=0.55]
{
\betri{(0,0)}
\bsq{(1,-1)}
\emtri{(2,0)}
\usq{(3,-1)}
\draw[lstring] let \n1 = {sin(45)} in (2.5,-1.5) arc (225:315:\n1);
\foreach \x in {2.5,3.5}
{\foreach \y in {-0.5,-1.5}
{\fill[lgh] (\x,\y) circle (1.6 pt);}}
\fill[lgh] (1.5,-0.5) circle (1.6 pt);
}
\end{tikzpicture}
\quad
\left[
\begin{tikzpicture}[baseline={([yshift=-1mm]current bounding box.center)},scale=0.55]
{
\bbtri{(0,0)}
\idsq{(1,-1)}
\bftri{(2,0)}
\usq{(3,-1)}
\draw[lstring] let \n1 = {sin(45)} in (2.5,-1.5) arc (225:315:\n1);
\foreach \x in {2.5,3.5}
{\foreach \y in {-0.5,-1.5}
{\fill[lgh] (\x,\y) circle (1.6 pt);}}
\fill[lgh] (1.5,-0.5) circle (1.6 pt);
}
\end{tikzpicture}
\right]_{\times 4}
\begin{tikzpicture}[baseline={([yshift=-1mm]current bounding box.center)},scale=0.55]
{
\aatri{(0,0)}
\idsq{(1,-1)}
\betri{(2,0)}
\usq{(3,-1)}
\draw[lstring] let \n1 = {sin(45)} in (2.5,-1.5) arc (225:315:\n1);
\foreach \x in {2.5,3.5}
{\foreach \y in {-0.5,-1.5}
{\fill[lgh] (\x,\y) circle (1.6 pt);}}
\fill[lgh] (1.5,-0.5) circle (1.6 pt);
}
\end{tikzpicture}
\\[9mm]
&\begin{tikzpicture}[baseline={([yshift=-1mm]current bounding box.center)},scale=0.55]
{
\draw[dotted] (0,0.5)--(3,0.5);
\draw [very thick](0,-1.5) -- (0,0.5);
\draw [very thick](3,-1.5)--(3,0.5);
\draw (0,0) to[out=0,in=-90] (1,0.5);
\filldraw (0.5,0.5) circle (0.08);
\filldraw[fill=white] (0,-1) circle (0.12);
\filldraw[fill=white] (3,0) circle (0.12);
\filldraw[fill=white] (3,-1) circle (0.12);
}
\end{tikzpicture} \; :
& & 
\begin{tikzpicture}[baseline={([yshift=-1mm]current bounding box.center)},scale=0.55]
{
\tsq{(-1,-1)}
\betri{(0,0)}
\emsq{(1,-1)}
\emtri{(2,0)}
\foreach \x in {0.5,1.5}
{\foreach \y in {-0.5,-1.5}
{\fill[lgh] (\x,\y) circle (1.6 pt);}}
\fill[lgh] (2.5,-0.5) circle (1.6 pt);
}
\end{tikzpicture}
\quad
\left[
\begin{tikzpicture}[baseline={([yshift=-1mm]current bounding box.center)},scale=0.55]
{
\tsq{(-1,-1)}
\bbtri{(0,0)}
\tsq{(1,-1)}
\bftri{(2,0)}
\foreach \x in {0.5,1.5}
{\foreach \y in {-0.5,-1.5}
{\fill[lgh] (\x,\y) circle (1.6 pt);}}
\fill[lgh] (2.5,-0.5) circle (1.6 pt);
}
\end{tikzpicture}
\right]_{\times 4}
\begin{tikzpicture}[baseline={([yshift=-1mm]current bounding box.center)},scale=0.55]
{
\tsq{(-1,-1)}
\aatri{(0,0)}
\tsq{(1,-1)}
\betri{(2,0)}
\foreach \x in {0.5,1.5}
{\foreach \y in {-0.5,-1.5}
{\fill[lgh] (\x,\y) circle (1.6 pt);}}
\fill[lgh] (2.5,-0.5) circle (1.6 pt);
}
\end{tikzpicture}
\quad
\begin{tikzpicture}[baseline={([yshift=-1mm]current bounding box.center)},scale=0.55]
{
\dsq{(-1,-1)}
\bhtri{(0,0)}
\lsq{(1,-1)}
\emtri{(2,0)}
\draw[lstring] let \n1 = {sin(45)} in (0.5,-1.5) arc (225:315:\n1);
\foreach \x in {0.5,1.5}
{\foreach \y in {-0.5,-1.5}
{\fill[lgh] (\x,\y) circle (1.6 pt);}}
\fill[lgh] (2.5,-0.5) circle (1.6 pt);
}
\end{tikzpicture}
\\[2mm]
& & &
\begin{tikzpicture}[baseline={([yshift=-1mm]current bounding box.center)},scale=0.55]
{
\dsq{(-1,-1)}
\emtri{(0,0)}
\usq{(1,-1)}
\betri{(2,0)}
\draw[lstring] let \n1 = {sin(45)} in (0.5,-1.5) arc (225:315:\n1);
\foreach \x in {0.5,1.5}
{\foreach \y in {-0.5,-1.5}
{\fill[lgh] (\x,\y) circle (1.6 pt);}}
\fill[lgh] (2.5,-0.5) circle (1.6 pt);
}
\end{tikzpicture}
\quad
\begin{tikzpicture}[baseline={([yshift=-1mm]current bounding box.center)},scale=0.55]
{
\betri{(0,0)}
\emsq{(1,-1)}
\emtri{(2,0)}
\emsq{(3,-1)}
\foreach \x in {2.5,3.5}
{\foreach \y in {-0.5,-1.5}
{\fill[lgh] (\x,\y) circle (1.6 pt);}}
\fill[lgh] (1.5,-0.5) circle (1.6 pt);
}
\end{tikzpicture}
\ 
\left[
\begin{tikzpicture}[baseline={([yshift=-1mm]current bounding box.center)},scale=0.55]
{
\bbtri{(0,0)}
\tsq{(1,-1)}
\bftri{(2,0)}
\emsq{(3,-1)}
\foreach \x in {2.5,3.5}
{\foreach \y in {-0.5,-1.5}
{\fill[lgh] (\x,\y) circle (1.6 pt);}}
\fill[lgh] (1.5,-0.5) circle (1.6 pt);
}
\end{tikzpicture}
\right]_{\times 4}
\left[
\begin{tikzpicture}[baseline={([yshift=-1mm]current bounding box.center)},scale=0.55]
{
\bbtri{(0,0)}
\dsq{(1,-1)}
\bhtri{(2,0)}
\lsq{(3,-1)}
\draw[lstring] let \n1 = {sin(45)} in (2.5,-1.5) arc (225:315:\n1);
\foreach \x in {2.5,3.5}
{\foreach \y in {-0.5,-1.5}
{\fill[lgh] (\x,\y) circle (1.6 pt);}}
\fill[lgh] (1.5,-0.5) circle (1.6 pt);
}
\end{tikzpicture}
\right]_{\times 4}
\\[2mm]
& & &
\left[
\begin{tikzpicture}[baseline={([yshift=-1mm]current bounding box.center)},scale=0.55]
{
\betri{(0,0)}
\rsq{(1,-1)}
\batri{(2,0)}
\lsq{(3,-1)}
\draw[lstring] let \n1 = {sin(45)} in (2.5,-1.5) arc (225:315:\n1);
\foreach \x in {2.5,3.5}
{\foreach \y in {-0.5,-1.5}
{\fill[lgh] (\x,\y) circle (1.6 pt);}}
\fill[lgh] (1.5,-0.5) circle (1.6 pt);
}
\end{tikzpicture}
\right]_{\times 4}
\begin{tikzpicture}[baseline={([yshift=-1mm]current bounding box.center)},scale=0.55]
{
\betri{(0,0)}
\rsq{(1,-1)}
\aatri{(2,0)}
\lsq{(3,-1)}
\draw[lstring] let \n1 = {sin(45)} in (2.5,-1.5) arc (225:315:\n1);
\foreach \x in {2.5,3.5}
{\foreach \y in {-0.5,-1.5}
{\fill[lgh] (\x,\y) circle (1.6 pt);}}
\fill[lgh] (1.5,-0.5) circle (1.6 pt);
}
\end{tikzpicture}
\quad
\begin{tikzpicture}[baseline={([yshift=-1mm]current bounding box.center)},scale=0.55]
{
\aatri{(0,0)}
\tsq{(1,-1)}
\betri{(2,0)}
\emsq{(3,-1)}
\foreach \x in {2.5,3.5}
{\foreach \y in {-0.5,-1.5}
{\fill[lgh] (\x,\y) circle (1.6 pt);}}
\fill[lgh] (1.5,-0.5) circle (1.6 pt);
}
\end{tikzpicture}
\quad
\begin{tikzpicture}[baseline={([yshift=-1mm]current bounding box.center)},scale=0.55]
{
\aatri{(0,0)}
\dsq{(1,-1)}
\bhtri{(2,0)}
\lsq{(3,-1)}
\draw[lstring] let \n1 = {sin(45)} in (2.5,-1.5) arc (225:315:\n1);
\foreach \x in {2.5,3.5}
{\foreach \y in {-0.5,-1.5}
{\fill[lgh] (\x,\y) circle (1.6 pt);}}
\fill[lgh] (1.5,-0.5) circle (1.6 pt);
}
\end{tikzpicture}
\\[9mm]
&\begin{tikzpicture}[baseline={([yshift=-1mm]current bounding box.center)},scale=0.55]
{
\draw[dotted] (0,0.5)--(3,0.5);
\draw [very thick](0,-1.5) -- (0,0.5);
\draw [very thick](3,-1.5)--(3,0.5);
\draw (0,0) to[out=0,in=-90] (0.75,0.5);
\draw (0,-1) to[out=0,in=-90] (1.5,0.5);
\draw (3,0) arc (90:270:0.5);
\filldraw (0.375,0.5) circle (0.08);
\filldraw (1.125,0.5) circle (0.08);
}
\end{tikzpicture} \; :
& & 
\begin{tikzpicture}[baseline={([yshift=-1mm]current bounding box.center)},scale=0.55]
{
\idsq{(-1,-1)}
\bbtri{(0,0)}
\esq{(1,-1)}
\aatri{(2,0)}
\draw[lstring] let \n1 = {sin(45)} in (0.5,-1.5) arc (225:315:\n1);
\foreach \x in {0.5,1.5}
{\foreach \y in {-0.5,-1.5}
{\fill[lgh] (\x,\y) circle (1.6 pt);}}
\fill[lgh] (2.5,-0.5) circle (1.6 pt);
}
\end{tikzpicture}
\quad
\begin{tikzpicture}[baseline={([yshift=-1mm]current bounding box.center)},scale=0.55]
{
\bbtri{(0,0)}
\lsq{(1,-1)}
\emtri{(2,0)}
\rsq{(3,-1)}
\foreach \x in {2.5,3.5}
{\foreach \y in {-0.5,-1.5}
{\fill[lgh] (\x,\y) circle (1.6 pt);}}
\fill[lgh] (1.5,-0.5) circle (1.6 pt);
}
\end{tikzpicture}
\quad
\begin{tikzpicture}[baseline={([yshift=-1mm]current bounding box.center)},scale=0.55]
{
\bbtri{(0,0)}
\esq{(1,-1)}
\aatri{(2,0)}
\esq{(3,-1)}
\draw[lstring] let \n1 = {sin(45)} in (2.5,-1.5) arc (225:315:\n1);
\foreach \x in {2.5,3.5}
{\foreach \y in {-0.5,-1.5}
{\fill[lgh] (\x,\y) circle (1.6 pt);}}
\fill[lgh] (1.5,-0.5) circle (1.6 pt);
}
\end{tikzpicture}
\quad\left[
\begin{tikzpicture}[baseline={([yshift=-1mm]current bounding box.center)},scale=0.55]
{
\bbtri{(0,0)}
\esq{(1,-1)}
\batri{(2,0)}
\esq{(3,-1)}
\draw[lstring] let \n1 = {sin(45)} in (2.5,-1.5) arc (225:315:\n1);
\foreach \x in {2.5,3.5}
{\foreach \y in {-0.5,-1.5}
{\fill[lgh] (\x,\y) circle (1.6 pt);}}
\fill[lgh] (1.5,-0.5) circle (1.6 pt);
}
\end{tikzpicture}
\right]_{\times 4}
\\[2mm]
& & &
\begin{tikzpicture}[baseline={([yshift=-1mm]current bounding box.center)},scale=0.55]
{
\betri{(0,0)}
\usq{(1,-1)}
\betri{(2,0)}
\rsq{(3,-1)}
\foreach \x in {2.5,3.5}
{\foreach \y in {-0.5,-1.5}
{\fill[lgh] (\x,\y) circle (1.6 pt);}}
\fill[lgh] (1.5,-0.5) circle (1.6 pt);
}
\end{tikzpicture}
\quad
\begin{tikzpicture}[baseline={([yshift=-1mm]current bounding box.center)},scale=0.55]
{
\betri{(0,0)}
\bsq{(1,-1)}
\bhtri{(2,0)}
\esq{(3,-1)}
\draw[lstring] let \n1 = {sin(45)} in (2.5,-1.5) arc (225:315:\n1);
\foreach \x in {2.5,3.5}
{\foreach \y in {-0.5,-1.5}
{\fill[lgh] (\x,\y) circle (1.6 pt);}}
\fill[lgh] (1.5,-0.5) circle (1.6 pt);
}
\end{tikzpicture}
\quad
\left[
\begin{tikzpicture}[baseline={([yshift=-1mm]current bounding box.center)},scale=0.55]
{
\bbtri{(0,0)}
\idsq{(1,-1)}
\batri{(2,0)}
\esq{(3,-1)}
\draw[lstring] let \n1 = {sin(45)} in (2.5,-1.5) arc (225:315:\n1);
\foreach \x in {2.5,3.5}
{\foreach \y in {-0.5,-1.5}
{\fill[lgh] (\x,\y) circle (1.6 pt);}}
\fill[lgh] (1.5,-0.5) circle (1.6 pt);
}
\end{tikzpicture}
\right]_{\times 4}
\begin{tikzpicture}[baseline={([yshift=-1mm]current bounding box.center)},scale=0.55]
{
\aatri{(0,0)}
\idsq{(1,-1)}
\bbtri{(2,0)}
\esq{(3,-1)}
\draw[lstring] let \n1 = {sin(45)} in (2.5,-1.5) arc (225:315:\n1);
\foreach \x in {2.5,3.5}
{\foreach \y in {-0.5,-1.5}
{\fill[lgh] (\x,\y) circle (1.6 pt);}}
\fill[lgh] (1.5,-0.5) circle (1.6 pt);
}
\end{tikzpicture}
\\[2mm]
& & & 
\begin{tikzpicture}[baseline={([yshift=-1mm]current bounding box.center)},scale=0.55]
{
\bbtri{(0,0)}
\idsq{(1,-1)}
\aatri{(2,0)}
\esq{(3,-1)}
\draw[lstring] let \n1 = {sin(45)} in (2.5,-1.5) arc (225:315:\n1);
\foreach \x in {2.5,3.5}
{\foreach \y in {-0.5,-1.5}
{\fill[lgh] (\x,\y) circle (1.6 pt);}}
\fill[lgh] (1.5,-0.5) circle (1.6 pt);
}
\end{tikzpicture}
\quad
\begin{tikzpicture}[baseline={([yshift=-1mm]current bounding box.center)},scale=0.55]
{
\bbtri{(0,0)}
\esq{(1,-1)}
\aatri{(2,0)}
\idsq{(3,-1)}
\draw[lstring] let \n1 = {sin(45)} in (2.5,-1.5) arc (225:315:\n1);
\foreach \x in {2.5,3.5}
{\foreach \y in {-0.5,-1.5}
{\fill[lgh] (\x,\y) circle (1.6 pt);}}
\fill[lgh] (1.5,-0.5) circle (1.6 pt);
}
\end{tikzpicture}
\end{align*}
\begin{align*}
&\begin{tikzpicture}[baseline={([yshift=-1mm]current bounding box.center)},scale=0.55]
{
\draw[dotted] (0,0.5)--(3,0.5);
\draw [very thick](0,-1.5) -- (0,0.5);
\draw [very thick](3,-1.5)--(3,0.5);
\draw (0,0) -- (3,-1);
\filldraw[fill=white] (3,0) circle (0.12);
\filldraw[fill=white] (0,-1) circle (0.12);
}
\end{tikzpicture} \; :
& & 
\begin{tikzpicture}[baseline={([yshift=-1mm]current bounding box.center)},scale=0.55]
{
\tsq{(-1,-1)}
\aatri{(0,0)}
\dsq{(1,-1)}
\emtri{(2,0)}
\foreach \x in {0.5,1.5}
{\foreach \y in {-0.5,-1.5}
{\fill[lgh] (\x,\y) circle (1.6 pt);}}
\fill[lgh] (2.5,-0.5) circle (1.6 pt);
}
\end{tikzpicture}
\quad 
\begin{tikzpicture}[baseline={([yshift=-1mm]current bounding box.center)},scale=0.55]
{
\dsq{(-1,-1)}
\emtri{(0,0)}
\bsq{(1,-1)}
\emtri{(2,0)}
\draw[lstring] let \n1 = {sin(45)} in (0.5,-1.5) arc (225:315:\n1);
\foreach \x in {0.5,1.5}
{\foreach \y in {-0.5,-1.5}
{\fill[lgh] (\x,\y) circle (1.6 pt);}}
\fill[lgh] (2.5,-0.5) circle (1.6 pt);
}
\end{tikzpicture}
\quad
\left[
\begin{tikzpicture}[baseline={([yshift=-1mm]current bounding box.center)},scale=0.55]
{
\dsq{(-1,-1)}
\bhtri{(0,0)}
\idsq{(1,-1)}
\bftri{(2,0)}
\draw[lstring] let \n1 = {sin(45)} in (0.5,-1.5) arc (225:315:\n1);
\foreach \x in {0.5,1.5}
{\foreach \y in {-0.5,-1.5}
{\fill[lgh] (\x,\y) circle (1.6 pt);}}
\fill[lgh] (2.5,-0.5) circle (1.6 pt);
}
\end{tikzpicture}
\right]_{\times 4}
\\[2mm]
& & &
\begin{tikzpicture}[baseline={([yshift=-1mm]current bounding box.center)},scale=0.55]
{
\aatri{(0,0)}
\tsq{(1,-1)}
\aatri{(2,0)}
\dsq{(3,-1)}
\foreach \x in {2.5,3.5}
{\foreach \y in {-0.5,-1.5}
{\fill[lgh] (\x,\y) circle (1.6 pt);}}
\fill[lgh] (1.5,-0.5) circle (1.6 pt);
}
\end{tikzpicture}
\quad
\begin{tikzpicture}[baseline={([yshift=-1mm]current bounding box.center)},scale=0.55]
{
\aatri{(0,0)}
\dsq{(1,-1)}
\emtri{(2,0)}
\bsq{(3,-1)}
\draw[lstring] let \n1 = {sin(45)} in (2.5,-1.5) arc (225:315:\n1);
\foreach \x in {2.5,3.5}
{\foreach \y in {-0.5,-1.5}
{\fill[lgh] (\x,\y) circle (1.6 pt);}}
\fill[lgh] (1.5,-0.5) circle (1.6 pt);
}
\end{tikzpicture}
\\[9mm]
&\begin{tikzpicture}[baseline={([yshift=-1mm]current bounding box.center)},scale=0.55]
{
\draw[dotted] (0,0.5)--(3,0.5);
\draw [very thick](0,-1.5) -- (0,0.5);
\draw [very thick](3,-1.5)--(3,0.5);
\draw (0,0) arc (90:-90:0.5);
\filldraw[fill=white] (3,0) circle (0.12);
\filldraw[fill=white] (3,-1) circle (0.12);
}
\end{tikzpicture} \; :
& & 
\begin{tikzpicture}[baseline={([yshift=-1mm]current bounding box.center)},scale=0.55]
{
\lsq{(-1,-1)}
\emtri{(0,0)}
\emsq{(1,-1)}
\emtri{(2,0)}
\foreach \x in {0.5,1.5}
{\foreach \y in {-0.5,-1.5}
{\fill[lgh] (\x,\y) circle (1.6 pt);}}
\fill[lgh] (2.5,-0.5) circle (1.6 pt);
}
\end{tikzpicture}
\quad 
\begin{tikzpicture}[baseline={([yshift=-1mm]current bounding box.center)},scale=0.55]
{
\esq{(-1,-1)}
\aatri{(0,0)}
\lsq{(1,-1)}
\emtri{(2,0)}
\draw[lstring] let \n1 = {sin(45)} in (0.5,-1.5) arc (225:315:\n1);
\foreach \x in {0.5,1.5}
{\foreach \y in {-0.5,-1.5}
{\fill[lgh] (\x,\y) circle (1.6 pt);}}
\fill[lgh] (2.5,-0.5) circle (1.6 pt);
}
\end{tikzpicture}
\ 
\left[\begin{tikzpicture}[baseline={([yshift=-1mm]current bounding box.center)},scale=0.55]
{
\esq{(-1,-1)}
\batri{(0,0)}
\lsq{(1,-1)}
\emtri{(2,0)}
\draw[lstring] let \n1 = {sin(45)} in (0.5,-1.5) arc (225:315:\n1);
\foreach \x in {0.5,1.5}
{\foreach \y in {-0.5,-1.5}
{\fill[lgh] (\x,\y) circle (1.6 pt);}}
\fill[lgh] (2.5,-0.5) circle (1.6 pt);
}
\end{tikzpicture}
\right]_{\times 4}
\left[
\begin{tikzpicture}[baseline={([yshift=-1mm]current bounding box.center)},scale=0.55]
{
\esq{(-1,-1)}
\bftri{(0,0)}
\usq{(1,-1)}
\bftri{(2,0)}
\draw[lstring] let \n1 = {sin(45)} in (0.5,-1.5) arc (225:315:\n1);
\foreach \x in {0.5,1.5}
{\foreach \y in {-0.5,-1.5}
{\fill[lgh] (\x,\y) circle (1.6 pt);}}
\fill[lgh] (2.5,-0.5) circle (1.6 pt);
}
\end{tikzpicture}
\right]_{\times 4}
\\[2mm]
& & &
\left[
\begin{tikzpicture}[baseline={([yshift=-1mm]current bounding box.center)},scale=0.55]
{
\lsq{(-1,-1)}
\bhtri{(0,0)}
\tsq{(1,-1)}
\bftri{(2,0)}
\foreach \x in {0.5,1.5}
{\foreach \y in {-0.5,-1.5}
{\fill[lgh] (\x,\y) circle (1.6 pt);}}
\fill[lgh] (2.5,-0.5) circle (1.6 pt);
}
\end{tikzpicture}
\right]_{\times 4}
\begin{tikzpicture}[baseline={([yshift=-1mm]current bounding box.center)},scale=0.55]
{
\idsq{(-1,-1)}
\aatri{(0,0)}
\lsq{(1,-1)}
\emtri{(2,0)}
\draw[lstring] let \n1 = {sin(45)} in (0.5,-1.5) arc (225:315:\n1);
\foreach \x in {0.5,1.5}
{\foreach \y in {-0.5,-1.5}
{\fill[lgh] (\x,\y) circle (1.6 pt);}}
\fill[lgh] (2.5,-0.5) circle (1.6 pt);
}
\end{tikzpicture}
\\[2mm]
& & & 
\begin{tikzpicture}[baseline={([yshift=-1mm]current bounding box.center)},scale=0.55]
{
\aatri{(0,0)}
\lsq{(1,-1)}
\emtri{(2,0)}
\emsq{(3,-1)}
\foreach \x in {2.5,3.5}
{\foreach \y in {-0.5,-1.5}
{\fill[lgh] (\x,\y) circle (1.6 pt);}}
\fill[lgh] (1.5,-0.5) circle (1.6 pt);
}
\end{tikzpicture}
\quad 
\begin{tikzpicture}[baseline={([yshift=-1mm]current bounding box.center)},scale=0.55]
{
\aatri{(0,0)}
\esq{(1,-1)}
\aatri{(2,0)}
\lsq{(3,-1)}
\draw[lstring] let \n1 = {sin(45)} in (2.5,-1.5) arc (225:315:\n1);
\foreach \x in {2.5,3.5}
{\foreach \y in {-0.5,-1.5}
{\fill[lgh] (\x,\y) circle (1.6 pt);}}
\fill[lgh] (1.5,-0.5) circle (1.6 pt);
}
\end{tikzpicture}
\ \left[
\begin{tikzpicture}[baseline={([yshift=-1mm]current bounding box.center)},scale=0.55]
{
\aatri{(0,0)}
\esq{(1,-1)}
\batri{(2,0)}
\lsq{(3,-1)}
\draw[lstring] let \n1 = {sin(45)} in (2.5,-1.5) arc (225:315:\n1);
\foreach \x in {2.5,3.5}
{\foreach \y in {-0.5,-1.5}
{\fill[lgh] (\x,\y) circle (1.6 pt);}}
\fill[lgh] (1.5,-0.5) circle (1.6 pt);
}
\end{tikzpicture}
\right]_{\times 4}
\ 
\begin{tikzpicture}[baseline={([yshift=-1mm]current bounding box.center)},scale=0.55]
{
\aatri{(0,0)}
\idsq{(1,-1)}
\aatri{(2,0)}
\lsq{(3,-1)}
\draw[lstring] let \n1 = {sin(45)} in (2.5,-1.5) arc (225:315:\n1);
\foreach \x in {2.5,3.5}
{\foreach \y in {-0.5,-1.5}
{\fill[lgh] (\x,\y) circle (1.6 pt);}}
\fill[lgh] (1.5,-0.5) circle (1.6 pt);
}
\end{tikzpicture}
\\[9mm]
&\begin{tikzpicture}[baseline={([yshift=-1mm]current bounding box.center)},scale=0.55]
{
\draw[dotted] (0,0.5)--(3,0.5);
\draw [very thick](0,-1.5) -- (0,0.5);
\draw [very thick](3,-1.5)--(3,0.5);
\draw (0,0) to[out=0,in=-90] (0.75,0.5);
\draw (0,-1) to[out=0,in=-90] (1.5,0.5);
\filldraw[fill=white] (3,0) circle (0.12);
\filldraw[fill=white] (3,-1) circle (0.12);
}
\end{tikzpicture} \; :
& & 
\begin{tikzpicture}[baseline={([yshift=-1mm]current bounding box.center)},scale=0.55]
{
\idsq{(-1,-1)}
\batri{(0,0)}
\lsq{(1,-1)}
\emtri{(2,0)}
\draw[lstring] let \n1 = {sin(45)} in (0.5,-1.5) arc (225:315:\n1);
\foreach \x in {0.5,1.5}
{\foreach \y in {-0.5,-1.5}
{\fill[lgh] (\x,\y) circle (1.6 pt);}}
\fill[lgh] (2.5,-0.5) circle (1.6 pt);
}
\end{tikzpicture}
\quad 
\begin{tikzpicture}[baseline={([yshift=-1mm]current bounding box.center)},scale=0.55]
{
\idsq{(-1,-1)}
\bftri{(0,0)}
\usq{(1,-1)}
\betri{(2,0)}
\draw[lstring] let \n1 = {sin(45)} in (0.5,-1.5) arc (225:315:\n1);
\foreach \x in {0.5,1.5}
{\foreach \y in {-0.5,-1.5}
{\fill[lgh] (\x,\y) circle (1.6 pt);}}
\fill[lgh] (2.5,-0.5) circle (1.6 pt);
}
\end{tikzpicture}
\\[2mm]
& & &
\begin{tikzpicture}[baseline={([yshift=-1mm]current bounding box.center)},scale=0.55]
{
\batri{(0,0)}
\lsq{(1,-1)}
\emtri{(2,0)}
\emsq{(3,-1)}
\foreach \x in {2.5,3.5}
{\foreach \y in {-0.5,-1.5}
{\fill[lgh] (\x,\y) circle (1.6 pt);}}
\fill[lgh] (1.5,-0.5) circle (1.6 pt);
}
\end{tikzpicture}
\quad
\begin{tikzpicture}[baseline={([yshift=-1mm]current bounding box.center)},scale=0.55]
{
\batri{(0,0)}
\esq{(1,-1)}
\aatri{(2,0)}
\lsq{(3,-1)}
\draw[lstring] let \n1 = {sin(45)} in (2.5,-1.5) arc (225:315:\n1);
\foreach \x in {2.5,3.5}
{\foreach \y in {-0.5,-1.5}
{\fill[lgh] (\x,\y) circle (1.6 pt);}}
\fill[lgh] (1.5,-0.5) circle (1.6 pt);
}
\end{tikzpicture}
\ 
\left[\begin{tikzpicture}[baseline={([yshift=-1mm]current bounding box.center)},scale=0.55]
{
\batri{(0,0)}
\esq{(1,-1)}
\batri{(2,0)}
\lsq{(3,-1)}
\draw[lstring] let \n1 = {sin(45)} in (2.5,-1.5) arc (225:315:\n1);
\foreach \x in {2.5,3.5}
{\foreach \y in {-0.5,-1.5}
{\fill[lgh] (\x,\y) circle (1.6 pt);}}
\fill[lgh] (1.5,-0.5) circle (1.6 pt);
}
\end{tikzpicture}
\right]_{\times 4}
\begin{tikzpicture}[baseline={([yshift=-1mm]current bounding box.center)},scale=0.55]
{
\batri{(0,0)}
\idsq{(1,-1)}
\aatri{(2,0)}
\lsq{(3,-1)}
\draw[lstring] let \n1 = {sin(45)} in (2.5,-1.5) arc (225:315:\n1);
\foreach \x in {2.5,3.5}
{\foreach \y in {-0.5,-1.5}
{\fill[lgh] (\x,\y) circle (1.6 pt);}}
\fill[lgh] (1.5,-0.5) circle (1.6 pt);
}
\end{tikzpicture}
\\[2mm]
& & &
\begin{tikzpicture}[baseline={([yshift=-1mm]current bounding box.center)},scale=0.55]
{
\bftri{(0,0)}
\usq{(1,-1)}
\betri{(2,0)}
\emsq{(3,-1)}
\foreach \x in {2.5,3.5}
{\foreach \y in {-0.5,-1.5}
{\fill[lgh] (\x,\y) circle (1.6 pt);}}
\fill[lgh] (1.5,-0.5) circle (1.6 pt);
}
\end{tikzpicture}
\quad
\begin{tikzpicture}[baseline={([yshift=-1mm]current bounding box.center)},scale=0.55]
{
\bftri{(0,0)}
\bsq{(1,-1)}
\bhtri{(2,0)}
\lsq{(3,-1)}
\draw[lstring] let \n1 = {sin(45)} in (2.5,-1.5) arc (225:315:\n1);
\foreach \x in {2.5,3.5}
{\foreach \y in {-0.5,-1.5}
{\fill[lgh] (\x,\y) circle (1.6 pt);}}
\fill[lgh] (1.5,-0.5) circle (1.6 pt);
}
\end{tikzpicture}
\ 
\left[\begin{tikzpicture}[baseline={([yshift=-1mm]current bounding box.center)},scale=0.55]
{
\batri{(0,0)}
\idsq{(1,-1)}
\batri{(2,0)}
\lsq{(3,-1)}
\draw[lstring] let \n1 = {sin(45)} in (2.5,-1.5) arc (225:315:\n1);
\foreach \x in {2.5,3.5}
{\foreach \y in {-0.5,-1.5}
{\fill[lgh] (\x,\y) circle (1.6 pt);}}
\fill[lgh] (1.5,-0.5) circle (1.6 pt);
}
\end{tikzpicture}
\right]_{\times 4}
\begin{tikzpicture}[baseline={([yshift=-1mm]current bounding box.center)},scale=0.55]
{
\aatri{(0,0)}
\idsq{(1,-1)}
\batri{(2,0)}
\lsq{(3,-1)}
\draw[lstring] let \n1 = {sin(45)} in (2.5,-1.5) arc (225:315:\n1);
\foreach \x in {2.5,3.5}
{\foreach \y in {-0.5,-1.5}
{\fill[lgh] (\x,\y) circle (1.6 pt);}}
\fill[lgh] (1.5,-0.5) circle (1.6 pt);
}
\end{tikzpicture}
\end{align*}

\vspace{1mm}
\noindent \textbf{Madeline Nurcombe} (\texttt{m.nurcombe@uq.net.au})

\noindent \textit{School of Mathematics and Physics, University of Queensland}

\noindent \textit{St Lucia, Brisbane, Queensland, 4072, Australia}


\begin{thebibliography}{99}
\setlength{\itemsep}{0.1mm}
\bibitem{BehrendPearceOBrien}
R. Behrend, P. Pearce and D. O'Brien. \textit{Interaction-round-a-face models with fixed boundary conditions: the ABF fusion hierarchy}, J. Stat. Phys. \textbf{84} (1996) 1--48, \href{https://arxiv.org/abs/hep-th/9507118}{\textsf{arXiv:hep-th/9507118}}.

\bibitem{TL}
H. Temperley and E. Lieb. 
\textit{Relations between the ‘percolation’ and ‘colouring’ problem and other graph-theoretic problems associated with regular plane lattices: some exact results for the ‘percolation’ problem}, 
Proc. Roy. Soc. London Ser. A
\textbf{322}
(1971)
251--280. 

\bibitem{Jones}
V. Jones. 
\textit{Index for subfactors},
Invent. Math.
\textbf{72}
(1983)
1--25.


\bibitem{LMM}
P. Pearce, J. Rasmussen and J. Zuber. \textit{Logarithmic minimal models,} J. Stat. Mech. P11017 (2006), \href{https://arxiv.org/abs/hep-th/0607232}{\textsf{arXiv:hep-th/0607232}}.

\bibitem{PasquierSaleur}
V. Pasquier and H. Saleur. \textit{Common structures between finite systems and conformal field theories through quantum groups}, Nucl. Phys. B \textbf{330} (1990) 523--556.






\bibitem{JonesPnom}
V. Jones.
\textit{A polynomial invariant for knots via von Neumann algebras},
Bull. Amer. Math. Soc. (N.S.)
\textbf{12}
(1985)
103--111,
\url{https://projecteuclid.org/euclid.bams/1183552338}.


\bibitem{Kauffman}
L. Kauffman.
\textit{Knots and physics},
World Scientific, Singapore, (2013). 


\bibitem{MartinSaleur}
P. Martin and H. Saleur.
\textit{The blob algebra and the periodic Temperley-Lieb algebra},
Lett. Math. Phys.
\textbf{30}
(1994)
189--206,
\href{https://arxiv.org/abs/hep-th/9302094}{\textsf{arXiv:hep-th/9302094}}. 

\bibitem{Tipunin}
P. Pearce, J. Rasmussen and I. Tipunin.
\textit{Critical dense polymers with Robin boundary conditions, half-integer Kac labels and $\Z_4$ fermions},
Nucl. Phys. B
\textbf{889}
(2014)
580--636, 
\href{https://arxiv.org/abs/1405.0550}{\textsf{arXiv:1405.0550 [hep-th]}}. 



\bibitem{MNdGB}
S. Mitra, B. Nienhuis, J. de Gier and M. Batchelor.
\textit{Exact expressions for correlations in the ground state of the dense $O(1)$
loop model},
J. Stat. Mech.
P09010
(2004),
\href{https://arxiv.org/pdf/cond-mat/0401245.pdf}{\textsf{arxiv:cond-mat/0401245}}.

\bibitem{dGN}
J. de Gier and A. Nichols.
\textit{The two-boundary Temperley-Lieb algebra},
J. Alg.
\textbf{321}
(2009)
1132--1167,
\href{https://arxiv.org/abs/math/0703338}{\textsf{arXiV:math/0703338}}.

\bibitem{TowersI}
P. Martin, R. Green and A. Parker.
\textit{Towers of recollement and bases for diagram algebras: planar diagrams and a little beyond},
J. Alg.
\textbf{316}
(2007)
392--452,
\href{https://arxiv.org/abs/math/0610971}{\textsf{arXiv:math/0610971}}.

\bibitem{NonGenRepThySympBlob}
R. Green, P. Martin and A. Parker.
\textit{On the non-generic representation theory of the symplectic blob algebra},
preprint
(2008),
\href{https://arxiv.org/abs/0807.4101}{\textsf{arXiv:0807.4101}}.

\bibitem{SympBlobBlocks}
O. King, A. Parker and P. Martin.
\textit{Decomposition matrices and blocks for the symplectic blob algebra over the complex field},
preprint
(2016),
\href{https://arxiv.org/abs/1611.06968}{\textsf{arXiv:1611.06968}}.

\bibitem{QHSympBlob}
R. Green, P. Martin and A. Parker.
\textit{On quasi-heredity and cell module homomorphisms in the symplectic blob algebra},
preprint
(2017),
\href{https://arxiv.org/abs/1707.06520}{\textsf{arXiv:1707.06520}}.

\bibitem{SympBlobPres}
R. Green, P. Martin and A. Parker. \textit{A presentation for the symplectic blob algebra}, J. Alg. Appl. \textbf{11} (2012) 1250060, \href{https://arxiv.org/abs/1808.04206}{\textsf{arXiv:1808.04206 [math.RT]}}.

\bibitem{DaughertyRam}
Z. Daugherty and A. Ram.
\textit{Calibrated representations of two boundary Temperley-Lieb algebras},
preprint
(2020),
\href{https://arxiv.org/abs/2009.02812}{\textsf{arXiv:2009.02812}}.

\bibitem{ChernyakGainutdinovSaleur}
D. Chernyak, A. Gainutdinov and H. Saleur.
\textit{$U_{\mathfrak{q}}\mathfrak{sl}_2$-invariant non-compact boundary conditions for the XXZ spin chain},
J. High Energy Phys.
\textbf{11}
(2022),
\href{https://arxiv.org/abs/2207.12772}{\textsf{arXiv:2207.12772}}.

\bibitem{MartinWoodcock}
P. Martin and D. Woodcock.
\textit{On the structure of the blob algebra},
J. Algebra
\textbf{225}
(2000)
957--988. 


\bibitem{DJS}
J. Dubail, J. Jacobsen and H. Saleur. \textit{Conformal boundary conditions in the critical $\mathcal{O}(n)$ model and dilute loop models}, Nucl. Phys. B \textbf{827} (2010) 457--502, \href{https://arxiv.org/abs/0905.1382}{\textsf{arXiv:0905.1382 [math-ph]}}.


\bibitem{deGierLeeRasmussen}
J. de Gier, A. Lee and J. Rasmussen. \textit{Discrete holomorphicity and integrability in loop models with open boundaries}, J. Stat. Mech. P02029 (2013), \href{https://arxiv.org/abs/1210.5036}{\textsf{arXiv:1210.5036 [math-ph]}}.




\bibitem{Alluqmani}
E. Alluqmani. \textit{Some representation theory of the dilute blob algebra}, PhD thesis, University of Leeds (2018).

\bibitem{Motzkin}
G. Benkart and T. Halverson. \textit{Motzkin algebras}, Eur. J. Comb. \textbf{36} (2014) 473--502, \href{https://arxiv.org/abs/1106.5277}{\textsf{arXiv:1106.5277 [math.CO]}}.



\bibitem{GL}
J. Graham and G. Lehrer.
\textit{Cellular algebras},
Invent. Math.
\textbf{123}
(1996)
1--34.




\bibitem{Nienhuis}
B. Nienhuis. \textit{Critical and multicritical $O(n)$ models}, Physica A \textbf{163} (1990) 152--157.

\bibitem{IzerginKorepin}
A. Izergin and B. Korepin. \textit{The inverse scattering method approach to the quantum Shabat-Mikhailov model}, Commun. Math. Phys. \textbf{79} (1981) 303--316.

\bibitem{BatchelorYung}
C. Yung and M. Batchelor. \textit{$O(n)$ model on the honeycomb lattice via reflection matrices: Surface critical behaviour}, Nucl. Phys. B \textbf{453} (1995) 552--580, \href{https://arxiv.org/abs/hep-th/9506074}{\textsf{arXiv:hep-th/9506074}}.



\bibitem{RSA}
D. Ridout and Y. Saint-Aubin. 
\textit{Standard modules, induction and the Temperley-Lieb algebra}, 
Adv. Theor. Math. Phys.
\textbf{18}
(2014)
957--1041,
\href{https://arxiv.org/abs/1204.4505}{\textsf{arXiv:hep-th/1204.4505 [hep-th]}}.













\end{thebibliography}
\end{document}